\newcommand{\blind}{1}
\newcommand*{\addFileDependency}[1]{
\typeout{(#1)}
%
%
\@addtofilelist{#1}
%
\IfFileExists{#1}{}{\typeout{No file #1.}}
}\makeatother
\def\T{{ \mathrm{\scriptscriptstyle T} }}
\def \hat{\widehat}
\def \tilde{\widetilde}
\def \bA{ {A} }
\def \bD{ {D} }
\def \bE{ {E} }
\def \bF{ {F} }
\def \be{ {e} }
\def \bare{ \bar{{e}} }
\def \bxi{ {\xi} }
\def \bI{ {I} }
\def \bG{ {G} }
\def \bQ{ {Q} }
\def \bR{ {R} }
\def \bS{ {S} }
\def \bV{ {V} }
\def \bX{ {X} }
\def \bx{ {x} }
\def \bY{ {Y} }
\def \balpha{{\alpha}}
\def \bbeta{{\beta}}
\def \bgamma{{\gamma}}
\def \bmu{{\mu}}
\def \bnu{{\nu}}
\def \bPsi{{\Psi}}
\def \btheta{{\theta}}
\def \bomega{{\omega}}
\def \baromega{\bar{{\omega}}}
\def \argmin{\mathrm{argmin}}
\def \mR{\mathbb{R}}
\def \hbeta{\widehat{{\beta}}}
\def \tbeta{\widetilde{{\beta}}}
\def \omegahat{\widehat{{\omega}}}
\def \homega{\widehat{{\omega}}}
\def \bDelta{{\Delta}}
\def \bSigma{{\Sigma}}
\def \barR{\bar{{R}}}
\def \hatR{\widehat{{R}}}
\def \barS{\bar{{S}}}
\def \barPsi{\bar{\bPsi}}
\def \O{O}
\def \OP{{O}_p}
\def \oP{{o}_p}
\def \E{{\mathbb{E}}}
\def \PP{pr}
\def \calS{\mathcal{S}}
\def \calD{\mathcal{D}}
\def \beps{{\varepsilon}}
\def \Var{\mathrm{var}}
\def \diag{\mathrm{diag}}
\def \trace{\mathrm{tr}}
\def \rmF{\mathrm{F}}
\def \lambdamin{\lambda_{\mathrm{min}}}
\theoremstyle{plain}
\newtheorem{assumption}{Assumption}
\newtheorem{theorem}{Theorem}
\newtheorem{lemma}[theorem]{Lemma}
\newtheorem{proposition}[theorem]{Proposition}
\theoremstyle{remark}
\newtheorem{remark}{Remark}
\newcommand{\lu}[1]{\textcolor{black}{#1}}
\newcommand{\add}[1]{\textcolor{black}{#1}}
\begin{document}


\def\spacingset#1{\renewcommand{\baselinestretch}%
{#1}\small\normalsize} \spacingset{1}
\renewcommand{\arraystretch}{0.6}


\if1\blind
{
  \title{Inference for linear functionals of high-dimensional longitudinal proteomics data using generalized estimating equations}
  \author{Lu Xia \\[0.2em]
    Department of Statistics and Probability, Michigan State University \\[0.5em]
    Ali Shojaie
    \\[0.2em]
    Department of Biostatistics, University of Washington}
  \maketitle
} \fi

\if0\blind
{
  \bigskip
  \bigskip
  \bigskip
  \begin{center}
    {\LARGE\bf Inference for linear functionals of high-dimensional longitudinal proteomics data using generalized estimating equations}
\end{center}
  \medskip
} \fi

\thispagestyle{empty}

\begin{abstract}
Regression analysis of correlated data, where multiple correlated responses are recorded on the same unit, is ubiquitous in many scientific areas.  
With the advent of new technologies, in particular high-throughput omics profiling assays, such correlated data increasingly consist of large number of variables compared with the available sample size. 
Motivated by recent longitudinal proteomics studies of COVID-19, we propose a novel inference procedure for linear functionals
of high-dimensional regression coefficients in  generalized estimating equations, which are widely used to analyze correlated data. Our estimator for this more general inferential target, obtained via constructing projected estimating equations, is shown to be asymptotically normally distributed under mild regularity conditions. We also introduce a data-driven cross-validation procedure to select the tuning parameter for estimating the projection direction, which is not addressed in the existing procedures. 
We illustrate the utility of the proposed procedure in providing confidence intervals for associations of individual proteins and severe COVID risk scores obtained based on high-dimensional proteomics data, and demonstrate its robust finite-sample performance, especially in estimation bias and confidence interval coverage, via extensive simulations.
\end{abstract}


\noindent%
{\it Keywords:}  
Confidence interval, correlated data, de-biased estimator, hypothesis testing, projected estimating equation.
\vfill

\newpage
\spacingset{1.9} 

\setcounter{page}{1}
\section{Introduction}
\label{sec:intro}

Longitudinal \emph{omics} data, including longitudinal proteomic profiling, are increasingly used to improve the understanding of molecular mechanisms of various chronic and infectious diseases and their prognosis. An important case study is the Coronavirus Disease 2019 (COVID-19), caused by infection with the newly discovered severe acute respiratory syndrome coronavirus 2 (SARS-Cov-2). A deeper understanding of the molecular mechanisms of COVID-19 would reveal biological insights into the immunopathology of severe infection, and facilitate timely development of effective therapies. Moreover, developing molecular-level prognostic tools would fulfill the urgent need of accurately identifying patients with poor prognosis early in the course of the disease to prevent severe outcomes such as intubation and death.
 
Recent studies have adopted the longitudinal proteomic profiling approach for COVID-19 \citep{filbin2021longitudinal,haljasmagi2020longitudinal} and post-acute sequelae of COVID-19 \citep{yin2023long}. For example, \citet{filbin2021longitudinal} collected longitudinal data on 305 patients with COVID-19 infection and examined the associations of COVID-19 infection and severity with over a thousand plasma proteins measured using the Olink assay 
(Olink Proteomics, Uppsala, Sweden). A cross-sectional study was later conducted to investigate COVID-19 severity \citep{feyaerts2022integrated}. These studies were used to identify molecular biomarkers and pathways and develop prognostic scores for disease severity. Principal component analysis (PCA) based on 853 measured proteins from 64 COVID-19  patients in the cross-sectional study \citep{feyaerts2022integrated} shows some separation of severe COVID-19 cases from mild and moderate cases (Figure~\ref{fig:intro}a). 

As an initial investigation, we used a na\"ive Lasso logistic regression, under working independence, to identify proteins associated with severe COVID-19 and develop risk scores for patients at high risk of infection using the longitudinal data in \citet{filbin2021longitudinal}. 
We then used the proteomic data on 64 (new) subjects in \citet{feyaerts2022integrated} to obtain risk scores for COVID-19 severity for these patients based on the previously-obtained Lasso estimates.
While these risk scores offer insight into patients' risk, they would be of limited use in clinical settings without measures of uncertainty. The quantification of uncertainty is particularly important when determining whether a patient is at risk of more severe conditions or may have poor prognosis, which is in turn crucial for making clinical decision in resource-constrained settings, or when treatments can have adverse side effects.

\begin{figure}[t!] 
\centering
\includegraphics[width=0.8\textwidth]{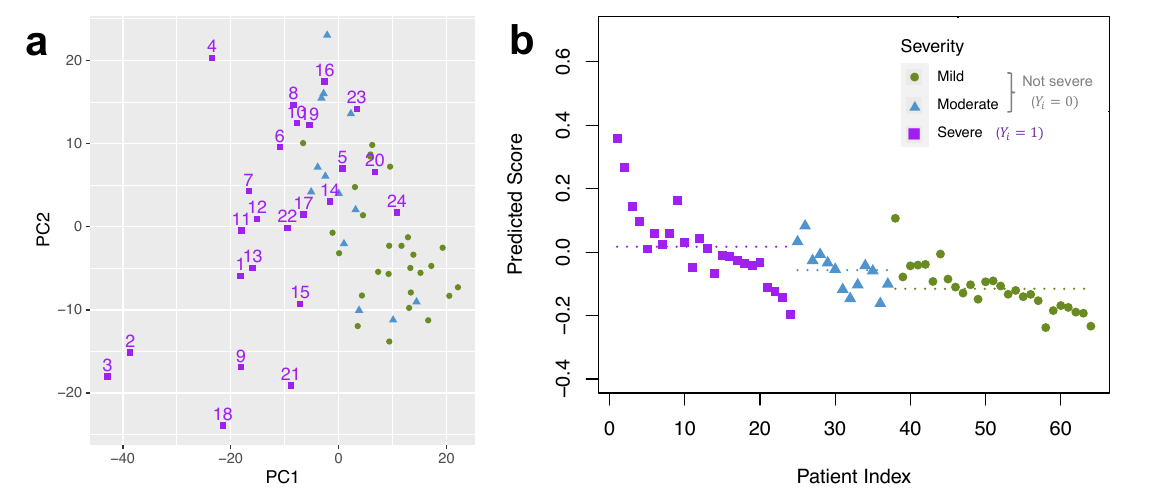}
\caption{\small (a) The first two principal components from principal component analysis (PCA) based on 853 measured proteins from 64 COVID-19 positive patients in \citet{feyaerts2022integrated}, including 24 severe (numbered), 13 moderate and 27 mild cases. (b) Predicted scores for severity versus moderate or mild for 64 COVID-19 positive patients in \citet{feyaerts2022integrated}, based on the regression coefficients fitted in a naive Lasso logistic regression under working independence using the longitudinal data in \citet{filbin2021longitudinal}. Mean scores in each group are highlighted in dotted line segments.}
\label{fig:intro}
\end{figure}

Another key limitation of the na\"ive Lasso-based scores in Figure~\ref{fig:intro}b is that they do not account for the correlation among longitudinal observations. A more suitable alternative is to use methods that 
account for the correlation between responses, including the well-received generalized estimating equations \citep{liang1986longitudinal}, quadratic inference functions \citep{qu2000improving} and mixed effects models \citep{gelman2007data}. Recognizing this need, \citet{filbin2021longitudinal} applied linear mixed models with individual proteins as outcome to identify proteins that are associated with COVID-19 infection and severity. 
However, high-throughput longitudinal proteomic data provide a unique opportunity to simultaneously assess the associations of hundreds to thousands of protein markers and develop more accurate prognostic tools, which can help identify directly relevant biomarkers and reduce potential confounding \citep{wang2011gee}. 

A common feature of longitudinal proteomic studies, including the ones discussed above, is that the number of covariates $p$ can be large compared to the sample size $n$, or far exceed $n$. 
In such settings, accurate biomarker identification requires statistical methods that can (a) reliably estimate the associations of features with outcomes of interest without incurring large biases, and (b) quantify their uncertainty in the presence of a large number of variables. 
Moreover, in many clinical applications, data from new patients are used to obtain predictive risk scores for prognosis based on previously-developed models. Uncertainty quantification in such settings is crucial in clinical decision making and requires inference---e.g., confidence intervals---for linear functionals of high-dimensional parameters, which is a challenging and understudied problem beyond linear models. 
Achieving these two goals simultaneously requires development of a unified and reliable statistical estimation and inference approach.

In this article, we consider marginal approaches for correlated data, which, compared with conditional approaches, such as linear mixed-effects models, provide inference under fewer modeling assumptions. 
In particular, we focus on generalized estimating equations (GEE) with potentially misspecified working correlation structure for responses observed within the same subject. When the number of variables is small, the GEE can achieve consistency and improved efficiency for estimating the regression coefficients compared with ignoring within-unit correlations \citep{liang1986longitudinal}.
As an alternative to GEE, the quadratic inference functions (QIF) approach \citep{qu2000improving} has gained popularity due to improved efficiency. 

Despite considerable work on regularized estimation for high-dimensional correlated data (see the ``Related literature" subsection), inference has received less attention. A notable exception, closely related to our work, is the  quadratic decorrelated inference functions (QDIF) approach for low-dimensional parameters in high-dimensional longitudinal data \citep{fang2020test}. 
The QDIF offers a flexible inference framework for correlated data. However, QIF-based approaches may be less reliable in certain settings, such as imbalanced cluster sizes \citep{khajeh2011comparison, westgate2012effect}. 
Moreover, QDIF does not provide inference for linear functionals, which is needed for inference on risk scores in our application. To overcome these limitations, in this paper we develop an inference framework for GEE with high-dimensional covariates, which we refer to as HDIGEE. In addition to its wide adoption in biomedical research, GEE offers flexibility in handling irregular observations and can thus be a viable alternative to QDIF in such settings. We also develop a statistically rigorous approach for inference on flexible linear functionals of regression parameters, which encompasses individual coefficients and predictive scoring using a set of proteins as special cases. Moreoever, the proposed framework is not only limited to longitudinal data, but provides an example solution for more general estimating equation problems.  


To develop our inference framework, termed HDIGEE, for regression parameters in GEE, we construct projected estimating equations for any linear functional of regression parameters (Section~\ref{sec:method}). The projection direction vector is estimated through a linear programming system resembling the column-wise sparse precision matrix estimation in the constrained $\ell_1$ minimization method \citep{cai2011constrained} and the direction estimation in \citet{neykov2018unified}. 
In Section~\ref{sec:prac}, we discuss implementation issues, in particular, the selection of tuning parameters, which, despite practical importance, has not been carefully studied. We design a new cross-validation procedure to satisfy the needs of tuning parameter selection while reasonably estimating the projection direction for inference purposes.  
Unlike the existing high-dimensional inference procedures for correlated data, by utilizing the least favourable sub-model approach in semi-parametric inference theory \citep{zhang2011statistical}, our theoretical analysis in Section~\ref{sec:theory} enables reliable estimation and inference, particularly valid confidence intervals, on a broader range of linear functionals of the regression parameters. 
In Section~\ref{sec:app}, we illustrate the utility of the proposed framework in our longitudinal proteomic application, where we identify biomarker of COVID-19 severity and develop inference on patient risk scores.
Extensive simulation studies in Section~\ref{sec:sim} show that the proposed method performs at least as well as the quadratic de-correlated inference function approach \citep{fang2020test}. Moreover, in settings with binary outcomes, or irregularly measured time points mimicking the riboflavin production data, HDIGEE can offer more reliable coefficient, variance estimates and confidence interval coverage than QDIF. 



\subsection*{{Related literature}}\label{sec:lit}

 {In addition to QDIF \citep{fang2020test}, our work is related to two classes of existing approaches: penalized estimation methods for correlated data and high-dimensional inference procedures.}

Penalized estimation methods for correlated data have received considerable attention over the years. \citet{pan2001akaike} and \citet{wang2009consistent} devised information criteria based on quasi-likelihood and quadratic inference functions, respectively. \citet{dziak2009overview} considered penalized QIF with the smoothly clipped absolute deviation penalty. These methods were developed for the setting with fixed numbers of covariates. \citet{wang2012penalized} proposed a penalized GEE procedure with a non-convex penalty in the setting where $p$ is at most in the same order of $n$. \citet{fan2012variable} discussed selection and estimation of both fixed and random effects in linear mixed effects models.

Among various high-dimensional inference procedures, lasso de-biasing has gained popularity in linear regression \citep{javanmard2014confidence,van2014asymptotically,zhang2014confidence} and generalized linear models \citep{van2014asymptotically}. 
\citet{ning2017general} developed an inferential approach for generic penalized M-estimators based on de-correlated score functions. \citet{neykov2018unified} proposed a unified inferential framework for high-dimensional estimating equations, which can be applied to problems such as linear regression, undirected graphical models and instrumental variable regression. In earlier work on correlated data,  \citet{wang2012penalized} established the oracle property and asymptotic distribution of penalized GEE. However, this post-selection inference procedures is conditional on the correct model being selected, which does not take into account the variability associated with model selection nor the possibility of selecting an incorrect model \citep{zhao2021defense}. 
The recent work on high-dimensional linear mixed effect models \citep{lin2020statistical, bradic2020fixed, li2021inference} provides inference on high-dimensional fixed effects through conditional models, which involve additional modeling assumptions. Moreover, these procedures only address inference on individual effects and do not provide (explicit) inference for linear functionals, which is essential for the risk scores in our application. On the other hand, earlier efforts for inference on linear functionals \citep{zhu2018linear,zhao2023estimation} have only focused on linear models, not generalized linear models or models for correlated data.

\section{Inference for Linear Functionals in High-Dimensional Generalized Estimating Equations}
\label{sec:method}
	
\subsection{Preliminaries}
	
Let $| \mathcal{S} |$ and denotes the cardinality and complement of a set $\mathcal{S}$, respectively. For a vector $\balpha = (\alpha_1, \ldots, \alpha_p)^\T \in \mR^p$ and a subset $\calS \subseteq \{ 1, 2, \ldots, p \}$, let $\balpha_{\calS}$ denote the $|\calS|$-dimensional subvector of $\balpha$ corresponding to the indices in $\calS$; moreover, let $\| \balpha \|_q = (\sum_{j=1}^p | \alpha_j |^q)^{1/q}$, $q \ge 1$, $\| \balpha \|_{\infty} = \max_{1 \le j \le p} | \alpha_j |$, and $\| \balpha \|_0 = | \{ j: \alpha_j \ne 0, j = 1, \ldots, p \} |$. For a symmetric matrix $\bA = (A_{ij}) \in \mR^{p \times p}$, let $\lambda_{\min}(\bA)$ and $\lambda_{\max}(\bA)$ denote the smallest and largest eigenvalues, respectively, $\| \bA \| = \lambda_{\max}^{1/2}(\bA^\T \bA)$ the spectral norm, $\| \bA \|_{\infty} = \max_{i, j} |A_{ij}|$ the element-wise max norm, and $\| \bA \|_{op,1} = \max_j \sum_{i} |A_{ij}|$ the induced $\ell_1$ matrix norm. 
A random variable $Z$ is sub-Gaussian if its $\psi_2$-norm satisfies $\| Z \|_{\psi_2} = \sup_{k \ge 1} k^{-1/2} (\E |Z|^k)^{1/k} ={\O}(1)$, and is sub-exponential if its $\psi_1$-norm satisfies $\| Z \|_{\psi_1} = \sup_{k \ge 1} k^{-1} (\E |Z|^k)^{1/k} ={\O}(1)$.
We denote the first- and the second-order derivatives of a function $h$ with respect to its argument by $\dot{h}$ and $\ddot{h}$. 
For two non-negative sequences $\{a_n\}$ and $\{b_n\}$, we write $a_n \lesssim  b_n$ ($a_n \gtrsim b_n$) if there is a constant $C>0$ such that $a_n \le C b_n$ ($a_n \ge C b_n$), and $a_n \asymp b_n$ if  $a_n \lesssim b_n$ and $a_n \gtrsim b_n$.
	
For the $j$th observation of the $i$th subject,  we consider an outcome $Y_{ij}$ and a $p$-dimensional vector of covariates $\bx_{ij} \in \mR^p$, $i = 1, \ldots, n$ and $j = 1, \ldots, m_i$. Though the proposed method is applicable in cases where $m_i$'s differ, we assume, without loss of generality, that $m_i = m$, $i = 1, \ldots, n$. The number of observations per subject, $m$, is assumed to be relatively small and can be viewed as a fixed integer. Observations from different subjects are assumed independent and identically distributed, and within-subject outcomes are correlated. 
	
\subsection{Generalized estimating equations {and target of inference}}\label{sec:gee}

Suppose the expectation of $Y_{ij}$ given the observed covariates $\bx_{ij}$ satisfies
$
g\{ \E(Y_{ij} | \bx_{ij}) \} = \bx_{ij}^\T \bbeta,
$
for a canonical link function $g(\cdot)$, where $\bbeta = (\beta_1, \ldots, \beta_p)^\T$ is the $p$-dimensional vector of regression coefficients. Let $\bbeta^0$ be the true regression coefficient vector. The target of inference here is a general linear functional of $\bbeta^0$, i.e. $\theta^0 = \bxi^\T \bbeta^0$, where $\bxi \in \mR^p$ is a target loading vector;  without loss of generality, we assume $\| \bxi \|_2 = 1$. 
	
Let $\mu_{ij}(\bbeta) = \mu(\bx_{ij}^\T \bbeta) = \E (Y_{ij} | \bx_{ij})$ be the expectation of $Y_{ij}$ given the covariates $\bx_{ij}$ in the posited model, where $\mu(\cdot)$ is the mean function in linear predictors $\bx_{ij}^\T \bbeta$ for responses, and $\bmu_i(\bbeta) = (\mu_{i1}(\bbeta), \ldots, \mu_{i m}(\bbeta))^\T$. The variance of $Y_{ij}$ conditional on the covariates $\bx_{ij}$ is written as $\Var (Y_{ij} | \bx_{ij}) = \sigma^2 v (\mu_{ij}(\bbeta)) = \sigma^2 v (\mu(\bx_{ij}^\T \bbeta))$, where $v(\cdot)$ is the variance function. Without loss of generality, we assume that the scale parameter $\sigma^2 = 1$ is known.  For instance, under the commonly used logit link for a binary response $Y_{ij}$, $\mu_{ij}(\bbeta) = \exp(\bx_{ij}^\T \bbeta) / \{ \exp(\bx_{ij}^\T \bbeta) + 1 \}$, and $v(\mu_{ij}(\bbeta)) = \mu_{ij}(\bbeta) \{ 1 - \mu_{ij}(\bbeta)\}$.
	
We write $\bY_i = (Y_{i1}, \ldots, Y_{i m})^\T$ as the vector of $m$ responses from the $i$th subject, and $\bX_i = (\bx_{i1}, \ldots, \bx_{i m})^\T$ as the corresponding $m \times p$ matrix of covariates. For $\bX_i \in \mR^{m \times p}$, we also write $\bX_i = (\bx_{i[1]}, \ldots, \bx_{i[p]})$, where $\bx_{i[j]}$ is the $j$th column in $\bX_i$. 
Denoting the $(m \times p)$-dimensional Jacobian matrix $\bD_i = \partial \bmu_i / \partial \bbeta^\T$, the optimal estimating equation for $\bbeta$ would be 
\[
	\displaystyle \frac{1}{n} \sum_{i=1}^n \bD_i^\T \Var(\bY_i | \bX_i)^{-1} \left\{ \bY_i - \bmu_i(\bbeta) \right\} = 0,
\]
where $\Var (\bY_i | \bX_i)$ is the $m \times m$ variance-covariance matrix of $\bY_i$. 
However, the true covariance matrix $\Var (\bY_i | \bX_i)$ is often unknown, making the optimal estimating equation intractable. We therefore follow the generalized estimating equation framework and use a working correlation matrix $\bR(\bgamma)$ for $\bY_i$, where $\bgamma$ is a finite-dimensional nuisance parameter for the working correlation matrix. The working covariance matrix for $\bY_i$ is thus 
$	
\Sigma_i = \bG_i^{1/2} (\bbeta) \bR(\bgamma) \bG_i^{1/2} (\bbeta), 
$
where $\bG_i (\bbeta) = \diag \{ G_{i1}(\bbeta), \ldots, G_{im}(\bbeta) \} = \diag \{ \Var(Y_{i1} | \bx_{i1}), \ldots, \Var(Y_{i m} | \bx_{im}) \}$.
	
Let $\hat{\bgamma}$ be the estimated nuisance parameter in the working correlation matrix. Methods for obtaining $\hat\bgamma$ under various working correlation structures are described in \citet{liang1986longitudinal}. We write $\hatR = \bR(\hat{\bgamma})$ as the estimated working correlation matrix. The generalized estimating equation for $\bbeta$, given the working covariance matrix $\bG_i^{1/2} (\bbeta) \hatR \bG_i^{1/2} (\bbeta)$, is then 
\begin{equation}\label{eq:GEEorig}
	\bPsi(\bbeta) = \displaystyle n^{-1} \sum_{i=1}^n \bX_i^\T \bG_i^{1/2}  (\bbeta) \hatR^{-1} \bG_i^{-1/2}(\bbeta) \left\{ \bY_i - \bmu_i(\bbeta) \right\}.
\end{equation}
The sensitivity matrix associated with $\bPsi(\bbeta)$ is
$
	\bS(\bbeta) = n^{-1} \sum_{i=1}^n \bX_i^\T \bG_i^{1/2}(\bbeta) \hatR^{-1} \bG_i^{1/2}(\bbeta) \bX_i,
$
which is viewed as the expectation of the Jacobian matrix of $\bPsi(\bbeta)$ conditional on $\{\bX_i\}_{i=1}^n$ and some independently estimated $\hatR$. 

\subsection{Initial estimator}

With high-dimensional covariates, it may become infeasible to directly solve the estimating equation $\bPsi(\bbeta) = 0$ for $\bbeta$. Similar to \citet{van2012quasi} and \citet{fang2020test}, we obtain an initial estimator for $\bbeta$ by minimizing an $\ell_1$ penalized quasi log-likelihood, i.e.,
\begin{equation} \label{eq:init_beta}
	\hbeta = \argmin_{\bbeta \in \mR^p} \{ \ell_n(\bbeta) + \lambda \| \bbeta \|_1 \},
\end{equation}
where
$
	\ell_n(\bbeta) = - n^{-1} \sum_{i=1}^n \sum_{j=1}^m \int_{Y_{ij}}^{\mu_{ij}(\bbeta)} (Y_{ij} - u) / v(u) du
$
is the negative quasi log-likelihood under the working independence structure. For example, with binary outcomes and the logit link $g(u) = \log \{ u/(1-u) \}$, $\ell_n(\bbeta)$ is identical to the negative log-likelihood function in logistic regression as if all outcomes $Y_{ij}$'s were independent.

\subsection{One-step updated estimator via projected estimating equations \label{subsec:one_step_estimator}}
	
Similar to \citet{zhang2011statistical}, we consider the class of one-dimensional least favourable sub-models $\{ \bbeta + \bomega \phi: |\phi| < \epsilon \}$ for $\bPsi(\bbeta)$ in \eqref{eq:GEEorig} with some sufficiently small $\epsilon > 0$, which leads to inference via projected estimating equations. In the sub-models $\{ \bbeta + \bomega \phi: |\phi| < \epsilon \}$, the direction $\bomega$ is obtained as follows. First, we solve the constrained $\ell_1$ minimization problem
\begin{equation} \label{eq:direction}
	\tilde{\bomega} = \argmin_{\bomega \in \mR^p} \left\{ \| \bomega \|_1: ~ \| \bS(\hbeta) \bomega - \bxi \|_{\infty} \le \lambda^{\prime} \right\},
\end{equation}
with a tuning parameter $\lambda^{\prime}>0$. The projection direction $\homega$ is then obtained by normalizing $\tilde{\bomega}$, i.e., $\hat{\bomega} = \tilde{\bomega} / \{ \tilde{\bomega}^\T \bS(\hbeta) \tilde{\bomega} \}$. 
	
Next, we define the projected estimating function corresponding to \eqref{eq:GEEorig} as
\[
    \Psi^P(\theta) = \hat{\bomega}^\T \bPsi \left( \hbeta + \hat{\bomega} \left(\theta - \bxi^\T \hbeta \right) \right), ~ \theta \in \mR. 
\]
One natural strategy is to solve the projected estimating equation $\Psi^P(\theta) = 0$ to obtain an estimator of $\theta^0$. However, this would theoretically require additional assumptions to guarantee the existence and uniqueness of the roots of $\Psi^P(\theta) = 0$. Meanwhile, finding the roots of $\Psi^P(\theta) = 0$ may sometimes be computationally unstable and generate solutions far away from the truth; for instance, with binary outcomes and the logit link. Hence, instead of Z-estimation, we consider the first-order Taylor expansion of $\bPsi^P (\theta)$ around $\hat{\theta}$ and resort to a one-step updated estimator that is the approximate solution to $\Psi^P(\hat{\theta}) + \dot{\Psi}^P(\hat{\theta}) (\theta - \hat{\theta}) = 0$, where $\hat{\theta} = \bxi^\T \hbeta$ and $\dot{\Psi}^P(\theta) = d \Psi^P(\theta) / d \theta$. 
The exact form of $\dot{\Psi}^P(\hat{\theta})$ is complicated and is given in the proof in the Supplementary Material \ref{supp-supp:sec:proof}.
Note that $\dot{\Psi}^P(\hat{\theta}) = - \homega^\T \bS(\hbeta) \homega + \oP(1)$; therefore, the proposed one-step updated estimator is defined as 
\begin{equation} \label{eq:def_onestep}
	\tilde{\theta} = \hat{\theta} + \left\{ \homega^\T \bS(\hbeta) \homega \right\}^{-1} \Psi^P(\hat{\theta}).
\end{equation}
In \eqref{eq:def_onestep},  $\{ \homega^\T \bS(\hbeta) \homega \}^{-1} \Psi^P(\hat{\theta})$ can be viewed as a bias correction term for the initial estimator $\hat{\theta}$.

	
\subsection{Tuning parameter selection and implementation}
\label{sec:prac}
	

Choosing appropriate tuning parameters is crucial for reliable estimation and inference using regularized estimation procedures. In our case, the tuning parameters for the working independence model in \eqref{eq:init_beta}, $\lambda$, can be readily chosen via $K$-fold cross-validation. However, selecting the second tuning parameter, $\lambda^{\prime}$, is more nuanced and has remained a challenging problem in high-dimensional inference involving estimating equations. 
For instance, \citet{neykov2018unified} did not provide  a data-driven procedure for this tuning parameter. To overcome this limitation, we propose a practical $K$-fold cross-validation procedure for $\lambda^{\prime}$.


\begin{algorithm} 
	\caption{$K$-fold cross-validation for data-driven tuning parameter selection on $\lambda^{\prime}$}	
        \label{algo:cv_lambda_prime}
        \begin{algorithmic}
			\State \textbf{Input:} A pre-specified grid of $L$ trial points, $\lambda^{\prime}_1, \ldots, \lambda^{\prime}_L$ 
			\State \textbf{Input:} $\{ \bY_i, \bX_i \}_{i=1}^n$ randomly split into $K$ equally sized folds 
			\State For  $l = 1, \ldots, L$ 
			\State \qquad Set  $\textsc{cv}_l=0$  
			\State \qquad For $k = 1, \ldots, K$ 
			\State \qquad\qquad Use the training data $\calD^{(k,\mathrm{train})}$ to estimate $\hbeta^{(k)}$,  $\homega^{(k)}$, and $\tilde{\theta}^{(k)}$ 
			\State \qquad\qquad Set $\tilde{\bbeta}^{(k)} = \hbeta^{(k)} + \homega^{(k)} (\tilde{\theta}^{(k)} - \bxi^\T \hbeta^{(k)} )$ 
			\State \qquad\qquad Let $\textsc{cv}_l \leftarrow \textsc{cv}_l + [ \{ \homega^{(k)} \}^\T \bPsi^{(k,\mathrm{test})}(\tilde{\bbeta}^{(k)}) ]^2$
			\State \textbf{Output:} The smallest $\lambda^{\prime}_{l}$ which gives  $\textsc{cv}_l$  within three standard errors above $\min_l \textsc{cv}_l$
        \end{algorithmic}
\end{algorithm}
	
As shown in Algorithm~\ref{algo:cv_lambda_prime}, the $k$th training set $\calD^{(k,\mathrm{train})}$ is utilized to compute the penalized estimator $\hbeta^{(k)}$ and the projection direction $\homega^{(k)}$, followed by the one-step updated estimator $\tilde{\theta}^{(k)}$. Unlike existing cross-validation procedures suited for loss functions, such as negative log (pseudo) likelihood, our proposed criterion relies on the squared projected estimating equation 
$ [ \{ \homega^{(k)} \}^\T \bPsi^{(k,\mathrm{test})}(\tilde{\bbeta}^{(k)}) ]^2$, 
where the observed data in $\bPsi^{(k,\mathrm{test})}$ are taken from the test set $\calD^{(k,\mathrm{test})}$. 
In particular, we choose the smallest tuning parameter $\lambda^{\prime}_l$ that results in a cross-validated criterion value which is within three standard errors above the minimum criterion value. 
In the Supplementary Material \ref{supp-supp:sec:add_sim_res}, we show that this choice performs better than an alternative based on the smallest cross-validated value. 
	
The projection direction $\tilde{\bomega}$ in \eqref{eq:direction} can be estimated by formulating the original problem into a linear programming problem, which we solve with the open source software GNU Linear Programming Kit \texttt{GLPK} and its R interface \texttt{Rglpk}.

	
\section{Theoretical Justification} 
\label{sec:theory}
	
In this section, we show that the proposed one-step updated estimator $\tilde{\btheta}$ is asymptotically normally distributed, which lays the foundation for inference using $\tilde{\btheta}$. Similar to \citet{wang2012penalized}, we suppose the estimated working correlation matrix $\hatR$ converges to some non-random but potentially unknown matrix $\barR \in \mR^{m \times m}$, recognizing that $\barR$ may be different from the true correlation matrix of $\bY_i$, $\bR_0$. Replacing $\hatR$ with $\barR$ in $\bPsi(\bbeta)$, we introduce a surrogate generalized estimating equation for theoretical purposes:
\[
	\barPsi(\bbeta) = \displaystyle \frac{1}{n} \sum_{i=1}^n  \bX_i^\T \bG_i^{1/2} (\bbeta) \barR^{-1} \bG_i^{-1/2}(\bbeta) \left\{ Y_i - \bmu_i(\bbeta) \right\}.
\]
The sensitivity and variability matrices corresponding to $\barPsi(\bbeta)$ are then given by
\[
	\begin{array}{rcl}
		\barS(\bbeta) & = & \displaystyle \frac{1}{n} \sum_{i=1}^n \bX_i^\T \bG_i^{1/2}(\bbeta) \barR^{-1} \bG_i^{1/2}(\bbeta) \bX_i, \\
		\bar{\bV}(\bbeta) & = & \displaystyle \frac{1}{n} \sum_{i=1}^n \bX_i^\T \bG_i^{1/2}(\bbeta) \barR^{-1} \bR_0 \barR^{-1} \bG_i^{1/2}(\bbeta) \bX_i, 
	\end{array}
\]
respectively. Let $\bS^0 = \E \bar{\bS}(\bbeta^0)$ and $\bV^0 = \E \bar{\bV}(\bbeta^0)$ be the corresponding population-level matrices.
For the target of interest $\theta^0 = \bxi^\T \bbeta^0$, we define the direction vector $\baromega^0 = (\bS^0)^{-1} \bxi$, and the rescaled direction $\bomega^0 = \baromega^0 / \{ (\baromega^0)^\T \bS^0 \baromega^0 \}$, which are the population-level counterparts for $\tilde{\bomega}$ and $\hat{\bomega}$, respectively. Let $\calS^{\prime} = \{ 1 \le j \le p: \omega^0_j \ne 0 \} = \{ 1 \le j \le p: \bar{\omega}^0_j \ne 0 \}$, and denote by $s^{\prime}  = | \calS^{\prime} |$ the sparsity of $\bomega^0$ and $\baromega^0$. The model sparsity is denoted $s_0 = | \{  1 \le j \le p: \beta_j^0 \ne 0 \} |$.

We next investigate the asymptotic properties of the proposed estimator by first establishing the convergence of $\hbeta$ and the direction $\homega$. To this end, we need to establish the restricted eigenvalue condition, which is shown in Supplementary Material \ref{supp-supp:sec:cond}. The results stated below are obtained under relatively mild regulatory assumptions discussed and justified in Appendix~\ref{appendix:assumptions}.
	
\begin{lemma} \label{lemma:lasso}
Under Assumptions \ref{assump:bound_covs}, \ref{assump:var_fun}, \ref{assump:bdd_eigen_covs} and \ref{assump:cor_mat}, if the tuning parameter $\lambda$ in \eqref{eq:init_beta} satisfies $\lambda \asymp \{ \log(p)/n \}^{1/2}$, then
\[
	\| \hbeta - \bbeta^0 \|_q  = \OP ( s_0^{1/q} \lambda ),  ~ q = 1, 2 ~; \quad \quad
	\displaystyle \frac{1}{n} \sum_{i=1}^n \sum_{j=1}^m \left\{ \bx_{ij}^\T ( \hbeta - \bbeta^0 ) \right\}^2   = \OP (s_0 \lambda^2).
\]
\end{lemma}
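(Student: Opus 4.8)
The plan is to follow the standard analysis of $\ell_1$-penalized convex $M$-estimation; the argument is essentially that of \citet{van2012quasi} and \citet{fang2020test}, which is why the paper omits it. Write $\bnu = \hbeta - \bbeta^0$ and $\calS_0 = \{ j : \beta_j^0 \ne 0 \}$, so $|\calS_0| = s_0$. Under the canonical link one checks that $\ell_n$ is convex, with gradient $\dot{\ell}_n(\bbeta) = -n^{-1} \sum_{i=1}^n \sum_{j=1}^m \{ Y_{ij} - \mu_{ij}(\bbeta) \} \bx_{ij}$ and Hessian $\ddot{\ell}_n(\bbeta) = n^{-1} \sum_{i=1}^n \sum_{j=1}^m v(\mu_{ij}(\bbeta))\, \bx_{ij} \bx_{ij}^\T \succeq 0$; by Assumption~\ref{assump:var_fun}, for $\bbeta$ in a neighbourhood of $\bbeta^0$ this Hessian is comparable, up to absolute constants, to the empirical Gram matrix $n^{-1} \sum_{i,j} \bx_{ij} \bx_{ij}^\T$.

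First I would record the basic inequality: optimality of $\hbeta$ in \eqref{eq:init_beta} gives $\ell_n(\hbeta) + \lambda \| \hbeta \|_1 \le \ell_n(\bbeta^0) + \lambda \| \bbeta^0 \|_1$, which, with $D(\bnu) = \ell_n(\bbeta^0 + \bnu) - \ell_n(\bbeta^0) - \dot{\ell}_n(\bbeta^0)^\T \bnu \ge 0$ the Bregman divergence of the convex $\ell_n$, rearranges to $D(\bnu) \le \| \dot{\ell}_n(\bbeta^0) \|_\infty \| \bnu \|_1 + \lambda ( \| \bbeta^0 \|_1 - \| \hbeta \|_1 )$. Next I would control the score: by Assumption~\ref{assump:bound_covs}, each coordinate of $n \dot{\ell}_n(\bbeta^0)$ is a sum of independent, mean-zero terms $\{ Y_{ij} - \mu_{ij}(\bbeta^0) \} (\bx_{ij})_k = G_{ij}^{1/2}(\bbeta^0)\, \varepsilon_{ij}(\bbeta^0)\, (\bx_{ij})_k$ with uniformly bounded $\psi_1$-norm, so a Bernstein inequality together with a union bound over the $p$ coordinates yields $\| \dot{\ell}_n(\bbeta^0) \|_\infty = \OP( \{ \log(p)/n \}^{1/2} )$. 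Taking $\lambda \asymp \{ \log(p)/n \}^{1/2}$ with a large enough constant makes the event $\mathcal{A} = \{ \| \dot{\ell}_n(\bbeta^0) \|_\infty \le \lambda/2 \}$ have probability tending to one. On $\mathcal{A}$, inserting $\| \bbeta^0 \|_1 - \| \hbeta \|_1 \le \| \bnu_{\calS_0} \|_1 - \| \bnu_{\calS_0^c} \|_1$ and using $D(\bnu) \ge 0$ yields both the cone condition $\| \bnu_{\calS_0^c} \|_1 \le 3 \| \bnu_{\calS_0} \|_1$ (hence $\| \bnu \|_1 \le 4 \sqrt{s_0}\, \| \bnu \|_2$) and the bound $D(\bnu) \le \tfrac{3}{2} \lambda \| \bnu_{\calS_0} \|_1$.

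Then I would invoke the restricted eigenvalue condition from the Remark, applied with cone parameter $\tau = 3$, which holds for $\ddot{\ell}_n(\bbeta^0)$ with probability tending to one. Using the boundedness and Lipschitz continuity of $v$ in Assumption~\ref{assump:var_fun} together with the bounded linear predictors of Assumption~\ref{assump:bound_covs}, the Hessian changes by at most a controlled amount over a ball around $\bbeta^0$, and a localization (peeling) argument upgrades the restricted eigenvalue bound on $\ddot{\ell}_n(\bbeta^0)$ to a restricted strong convexity bound $D(\bnu) \ge (\tau_0/2) \| \bnu \|_2^2$ valid for $\bnu$ in the cone of moderate size, into which $\hbeta$ is then shown to fall. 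Combining this with $D(\bnu) \le \tfrac{3}{2} \lambda \| \bnu_{\calS_0} \|_1 \le \tfrac{3}{2} \lambda \sqrt{s_0}\, \| \bnu \|_2$ gives $\| \bnu \|_2 = \OP( \sqrt{s_0}\, \lambda )$, hence $\| \bnu \|_1 = \OP( s_0 \lambda )$; and since $n^{-1} \sum_{i,j} \{ \bx_{ij}^\T \bnu \}^2 \asymp \bnu^\T \ddot{\ell}_n(\bbeta^0) \bnu \lesssim D(\bnu) \lesssim \lambda \| \bnu_{\calS_0} \|_1 = \OP( s_0 \lambda^2 )$, the prediction error bound follows.

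The step I expect to be the main obstacle is the localization: transferring the restricted eigenvalue bound on the \emph{fixed} matrix $\ddot{\ell}_n(\bbeta^0)$ into a restricted strong convexity lower bound on the Bregman divergence $D(\cdot)$, whose second-order behaviour a priori involves the Hessian at a random intermediate point between $\bbeta^0$ and $\hbeta$. This is precisely where the boundedness and Lipschitz parts of Assumptions~\ref{assump:bound_covs} and~\ref{assump:var_fun} enter, and the details are as in \citet{van2012quasi} and \citet{fang2020test}.
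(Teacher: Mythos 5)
Your proposal is correct and is essentially the argument the paper intends: the paper omits the proof of Lemma~\ref{lemma:lasso}, verifies the restricted eigenvalue condition on $\ddot{\ell}_n(\bbeta^0)$ in its supplementary Proposition on the RE condition, and defers the remaining standard steps (basic inequality, score bound of order $\{\log(p)/n\}^{1/2}$, cone condition, and the localization upgrading RE to restricted strong convexity) to \citet{van2012quasi} and \citet{fang2020test}, exactly as you outline.
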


\begin{lemma} \label{lemma:hat_omega_rate}
Suppose Assumptions \ref{assump:bound_covs}--\ref{assump:cor_mat} hold and $s^{\prime} \lambda^{\prime} \asymp s^{\prime} \| \baromega^0 \|_1 (s_0 \lambda + r_n) = o(1)$.  Then 
$
	\| \tilde{\bomega} - \baromega^0 \|_1 = \OP(s^{\prime} \lambda^{\prime}).
$
Since $\{ \tilde{\bomega}^\T \bS(\hbeta) \tilde{\bomega} \}^{-1} = \OP(1)$, it is valid to define $\hat{\bomega} = \tilde{\bomega} / \{ \tilde{\bomega}^\T \bS(\hbeta) \tilde{\bomega} \}$ with probability going to one, and
$
	\| \hat{\bomega} - \bomega^0 \|_1 = \OP(s^{\prime} \lambda^{\prime} \| \baromega^0 \|_1 ).
$
\end{lemma}
	
	
Our main result establishes the convergence in distribution of $\tilde{\theta}$ in \eqref{eq:def_onestep} to $\theta^0 = \xi^\T \beta$. 

\begin{theorem} \label{thm:OS_est_xi}
Suppose the tuning parameters satisfy $\lambda \asymp \{ \log(p)/n \}^{1/2}$ and $\lambda^{\prime} \asymp \| \baromega^0 \|_1 (s_0 \lambda + r_n)$, and that 
$\max(s_0, s^{\prime} \| \baromega^0 \|_1) \lambda^{\prime} \{ \log(p) \}^{1/2} = o(1)$. 
Then, under Assumptions \ref{assump:bound_covs}--\ref{assump:cor_mat}, we have that
$$
	{n}^{1/2}  \left. \left( \tilde{\theta} - \theta^0 \right)   \middle/  \left[ \left\{ (\bomega^0)^\T \bV^0 \bomega^0 \right\}^{1/2} \middle/ \left\{ (\bomega^0)^\T \bS^0 \bomega^0 \right\} \right] \right.
$$
converges in distribution to the standard normal distribution.
\end{theorem}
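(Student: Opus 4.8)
The plan is to start from~\eqref{eq:def_onestep}. Since $\hat\theta=\bxi^\T\hbeta$, the increment $\homega(\theta-\bxi^\T\hbeta)$ vanishes at $\theta=\hat\theta$, so $\Psi^P(\hat\theta)=\homega^\T\bPsi(\hbeta)$ and
\[
\tilde\theta-\theta^0 \;=\; \bxi^\T(\hbeta-\bbeta^0)\;+\;\bigl\{\homega^\T\bS(\hbeta)\homega\bigr\}^{-1}\homega^\T\bPsi(\hbeta).
\]
First I would replace $\bPsi$ by the surrogate $\barPsi$, absorbing $\homega^\T\{\bPsi(\hbeta)-\barPsi(\hbeta)\}$ into a remainder bounded by $\|\homega\|_1$ times a factor governed by $\|\hatR^{-1}-\barR^{-1}\|$, which is of order $r_n$ by Assumptions~\ref{assump:bound_covs} and~\ref{assump:cor_mat}; this is precisely why the rate in Assumption~\ref{assump:cor_mat} has its stated form, and the same substitution $\hatR\to\barR$ is applied inside $\bS(\hbeta)$ where convenient. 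Next I would expand $\barPsi(\hbeta)$ about $\bbeta^0$, $\barPsi(\hbeta)=\barPsi(\bbeta^0)-\barS(\bbeta^0)(\hbeta-\bbeta^0)+\bm r$, where $\bm r$ collects (i) the term obtained by differentiating the weight matrix $\bG_i^{1/2}(\bbeta)\barR^{-1}\bG_i^{-1/2}(\bbeta)$ and evaluating the derivative against the residuals $\bY_i-\bmu_i(\bbeta^0)$, which has conditional mean zero and is controlled by a uniform-over-coordinates concentration bound of order $\{\log(p)/n\}^{1/2}$, and (ii) the genuinely quadratic-in-$(\hbeta-\bbeta^0)$ terms, controlled through Lemma~\ref{lemma:lasso} (in particular $n^{-1}\sum_{i,j}\{\bx_{ij}^\T(\hbeta-\bbeta^0)\}^2=\OP(s_0\lambda^2)$), Assumptions~\ref{assump:bound_covs} and~\ref{assump:var_fun}, and the crude bound $\|\bX_i\homega\|_\infty\le K\|\homega\|_1$. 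After contracting with $\homega$, both pieces of $\bm r$ are $\OP(\|\baromega^0\|_1 s_0\lambda^2)$, hence $\oP(n^{-1/2})$ under the stated conditions, since $n^{1/2}\lambda\asymp\{\log(p)\}^{1/2}$.

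A supporting lemma I would establish along the way controls $\homega$. Because $\bS^0\baromega^0=\bxi$ exactly and $\|\bS(\hbeta)\baromega^0-\bxi\|_\infty\le\|\bS(\hbeta)-\bS^0\|_\infty\|\baromega^0\|_1=\OP\bigl((r_n+s_0\lambda)\|\baromega^0\|_1\bigr)\lesssim\lambda'$, the vector $\baromega^0$ is feasible for~\eqref{eq:direction} with probability tending to one, so $\|\tilde\bomega\|_1\le\|\baromega^0\|_1$ on that event. Then $\tilde\bomega-\baromega^0$ lies in an $\ell_1$-cone over $\calS'$ and satisfies $\|\bS(\hbeta)(\tilde\bomega-\baromega^0)\|_\infty\le2\lambda'$, and the restricted-eigenvalue condition stated in the Remark preceding Lemma~\ref{lemma:lasso}, together with Assumption~\ref{assump:bdd_eigen_covs} and a sparse-perturbation argument, yields $\|\tilde\bomega-\baromega^0\|_1=\OP(s'\lambda')$, $\|\tilde\bomega-\baromega^0\|_2=\OP((s')^{1/2}\lambda')$ and $\tilde\bomega^\T\bS(\hbeta)\tilde\bomega\to(\baromega^0)^\T\bS^0\baromega^0>0$. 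Hence the normalization $\homega=\tilde\bomega/\{\tilde\bomega^\T\bS(\hbeta)\tilde\bomega\}$ is stable, $\|\homega\|_1=\OP(\|\baromega^0\|_1)$, $\|\homega-\bomega^0\|_1=\OP(s'\lambda')$, and $\{\homega^\T\bS(\hbeta)\homega\}^{-1}=\tilde\bomega^\T\bS(\hbeta)\tilde\bomega\to(\baromega^0)^\T\bS^0\baromega^0=\{(\bomega^0)^\T\bS^0\bomega^0\}^{-1}$.

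The heart of the argument is the cancellation of the $O(\|\hbeta-\bbeta^0\|_1)$ bias. Collecting the two linear-in-$(\hbeta-\bbeta^0)$ pieces, the bias equals $\bxi^\T(\hbeta-\bbeta^0)-\{\homega^\T\bS(\hbeta)\homega\}^{-1}\homega^\T\barS(\bbeta^0)(\hbeta-\bbeta^0)$. Using the identity $\{\homega^\T\bS(\hbeta)\homega\}^{-1}\homega^\T\bS(\hbeta)=\tilde\bomega^\T\bS(\hbeta)$ and the feasibility constraint in~\eqref{eq:direction}, which gives $\tilde\bomega^\T\bS(\hbeta)=\bxi^\T+\bDelta^\T$ with $\|\bDelta\|_\infty\le\lambda'$, the bias becomes $-\bDelta^\T(\hbeta-\bbeta^0)-\{\homega^\T\bS(\hbeta)\homega\}^{-1}\homega^\T\{\barS(\bbeta^0)-\bS(\hbeta)\}(\hbeta-\bbeta^0)$. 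I would then combine $|\bDelta^\T(\hbeta-\bbeta^0)|\le\lambda'\|\hbeta-\bbeta^0\|_1$, the bound $\|\barS(\bbeta^0)-\bS(\hbeta)\|_\infty\lesssim r_n+s_0\lambda+\{\log(p)/n\}^{1/2}$ (from $\hatR$-to-$\barR$, the Lipschitz variation of the weights between $\hbeta$ and $\bbeta^0$, and concentration of $\barS(\bbeta^0)$ around $\bS^0$), and the rates $\|\hbeta-\bbeta^0\|_1=\OP(s_0\lambda)$ and $\|\homega\|_1=\OP(\|\baromega^0\|_1)$ from Lemma~\ref{lemma:lasso} and the direction lemma, to conclude that both terms are $\oP(n^{-1/2})$ exactly under $\lambda'\asymp\|\baromega^0\|_1(s_0\lambda+r_n)$ and $\max(s_0,s'\|\baromega^0\|_1)\lambda'\{\log(p)\}^{1/2}=o(1)$.

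For the leading stochastic term I would replace $\homega$ by the deterministic $\bomega^0$: since $\|\barPsi(\bbeta^0)\|_\infty=\OP(\{\log(p)/n\}^{1/2})$ by a sub-exponential maximal inequality (Assumption~\ref{assump:bound_covs}) and $\|\homega-\bomega^0\|_1=\OP(s'\lambda')$, the difference $n^{1/2}(\homega-\bomega^0)^\T\barPsi(\bbeta^0)$ is $\OP(s'\lambda'\{\log(p)\}^{1/2})=\oP(1)$. Then $n^{1/2}(\bomega^0)^\T\barPsi(\bbeta^0)=n^{-1/2}\sum_{i=1}^n(\bomega^0)^\T\bX_i^\T\bG_i^{1/2}(\bbeta^0)\barR^{-1}\bG_i^{-1/2}(\bbeta^0)\{\bY_i-\bmu_i(\bbeta^0)\}$ is an average of independent and identically distributed, mean-zero terms whose coefficient vectors have $\ell_\infty$ norm bounded by a constant (Assumptions~\ref{assump:bdd_xomega},~\ref{assump:var_fun} and~\ref{assump:cor_mat}) times the sub-exponential standardized errors, so a Lindeberg central limit theorem gives convergence to $N(0,(\bomega^0)^\T\bV^0\bomega^0)$, the limiting variance being identified via $\Var(\bY_i\mid\bX_i)=\bG_i^{1/2}(\bbeta^0)\bR_0\bG_i^{1/2}(\bbeta^0)$. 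Combining this with $\{\homega^\T\bS(\hbeta)\homega\}^{-1}\to\{(\bomega^0)^\T\bS^0\bomega^0\}^{-1}$ and Slutsky's theorem yields $n^{1/2}(\tilde\theta-\theta^0)\to N\bigl(0,\,(\bomega^0)^\T\bV^0\bomega^0\big/\{(\bomega^0)^\T\bS^0\bomega^0\}^2\bigr)$; equivalently, $n^{1/2}(\tilde\theta-\theta^0)$ divided by $\{(\bomega^0)^\T\bV^0\bomega^0\}^{1/2}\big/\{(\bomega^0)^\T\bS^0\bomega^0\}$ converges to the standard normal, which is the assertion of the theorem. The hardest part, I expect, is term (i) of the remainder $\bm r$: controlling the contribution of the $\bbeta$-dependence of the GEE weight matrix $\bG_i^{1/2}(\bbeta)\barR^{-1}\bG_i^{-1/2}(\bbeta)$ paired with the residuals --- a term with no counterpart in score-based de-biasing for generalized linear models --- which requires a careful uniform concentration argument coupled with the $\ell_1$ rate of $\hbeta$ and the working-correlation error $r_n$; a secondary difficulty is keeping the coupled $\hatR$-to-$\barR$ and $\hbeta$-to-$\bbeta^0$ errors simultaneously under control, since $\bS(\hbeta)$ enters both the direction estimate and the normalizing constant of the one-step update.
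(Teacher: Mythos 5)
Your proposal is correct in substance but organizes the argument differently from the paper. The paper's proof Taylor-expands the scalar function $\Psi^P(\theta)$ around $\theta^0$, so its leading object is $\Psi^P(\theta^0)=\homega^\T\bPsi(\tbeta)$ with $\tbeta=\hbeta+\homega\,\bxi^\T(\bbeta^0-\hbeta)$, and it must separately control a derivative term $\dot{\Psi}^P(\theta^*)$ at an intermediate point $\bbeta^{**}$; you instead never leave $\hbeta$, writing $\tilde\theta-\theta^0=\bxi^\T(\hbeta-\bbeta^0)+\{\homega^\T\bS(\hbeta)\homega\}^{-1}\homega^\T\bPsi(\hbeta)$ and doing a single vector Taylor expansion of $\bPsi(\hbeta)$ about $\bbeta^0$, with the bias killed through the identity $\{\homega^\T\bS(\hbeta)\homega\}^{-1}\homega^\T\bS(\hbeta)=\tilde{\bomega}^\T\bS(\hbeta)=\bxi^\T+\bDelta^\T$, $\|\bDelta\|_\infty\le\lambda'$. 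This is the standard de-biased-lasso bookkeeping and it buys you a cleaner remainder structure (no bounds on $|\bx_{ij}^\T(\tbeta-\bbeta^0)|$ or $|\bx_{ij}^\T(\bbeta^{**}-\hbeta)|$ are needed); the ingredients — feasibility of $\baromega^0$, the compatibility bound giving $\|\tilde{\bomega}-\baromega^0\|_1=\OP(s'\lambda')$, $\|\barPsi(\bbeta^0)\|_\infty=\OP[\{\log(p)/n\}^{1/2}]$, the Lindeberg CLT for $(\bomega^0)^\T\barPsi(\bbeta^0)$, and the control of the weight-derivative terms $\bE_n,\bF_n$ — coincide with the paper's Lemmas. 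A further genuine difference is your treatment of the weight-derivative remainder: if you bound $\max_{k,l}|n^{-1}\sum_{i,j}c_{ijkl}\,\varepsilon_{ij}(\bbeta^0)|=\OP[\{\log(p)/n\}^{1/2}]$ over coordinate pairs (where $c_{ijkl}$ depends only on $\bX_i$ and $\bbeta^0$) and then contract with $\homega$ and $\hbeta-\bbeta^0$ by H\"older, you avoid conditioning on the data-dependent $(\hbeta,\homega,\hatR)$ altogether and hence the data-splitting device the paper invokes for exactly this step — at the harmless cost of an extra $\{\log(p)\}^{1/2}$ factor, since $n^{1/2}\|\baromega^0\|_1 s_0\lambda^2\asymp\|\baromega^0\|_1 s_0\lambda\{\log(p)\}^{1/2}=o(1)$ under the stated conditions. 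Be careful, though, that your phrase ``conditional mean zero'' is only literally valid under that independence, so you should commit to the unconditional uniform bound.

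Two small corrections. First, the normalization step inflates the rate of the rescaled direction: since $\homega-\bomega^0$ picks up $\baromega^0$ times the error in the scalar $\tilde{\bomega}^\T\bS(\hbeta)\tilde{\bomega}$, the correct rate is $\|\homega-\bomega^0\|_1=\OP(s'\lambda'\|\baromega^0\|_1)$, not $\OP(s'\lambda')$; consequently $n^{1/2}(\homega-\bomega^0)^\T\barPsi(\bbeta^0)=\OP(s'\lambda'\|\baromega^0\|_1\{\log(p)\}^{1/2})$, which is still $\oP(1)$ precisely because the theorem assumes $s'\|\baromega^0\|_1\lambda'\{\log(p)\}^{1/2}=o(1)$ — this is why that factor appears in the hypothesis. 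Second, the compatibility condition you need is for $\bS(\hbeta)$, not for $\ddot{\ell}_n(\bbeta^0)$ as in the Remark before Lemma~\ref{lemma:lasso}; it is established the same way (bounded eigenvalues of $\bS^0$ plus $\|\bS(\hbeta)-\bS^0\|_\infty=\OP(s_0\lambda+r_n)$ and the cone restriction), but it is a separate statement requiring $s'(s_0\lambda+r_n)=o(1)$.
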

	
In practice, we use $\homega^\T \bV(\hbeta) \homega / \{ {n}^{1/2} \homega^\T \bS(\hbeta) \homega \}^2$ to approximate the variance of $\tilde{\theta}$. Then, for $0< \alpha < 1$, the $(1-\alpha)\times 100\%$ confidence interval for $\theta^0$ can be constructed as
\[
	\left[ \tilde{\theta} - z_{\alpha/2} \left\{ \homega^\T \bV(\hbeta) \homega \right\}^{1/2} \middle/ \left\{ {n}^{1/2} \homega^\T \bS(\hbeta) \homega \right\}, \tilde{\theta} + z_{\alpha/2} \left\{ \homega^\T \bV(\hbeta) \homega \right\}^{1/2} \middle/ \left\{ {n}^{1/2} \homega^\T \bS(\hbeta) \homega \right\}  \right],
\]
where $z_{\alpha/2}$ is the upper $(\alpha/2)$th quantile of the standard normal distribution.
	
\begin{remark}
Theorem~\ref{thm:OS_est_xi} is proved in the Supplementary Material \ref{supp-supp:sec:proof}. For technical treatment, we assume that the initial estimator $\hbeta$ and the working correlation matrix $\hat{\bR}$ are obtained independently from the proposed de-biased estimator $\tilde{\theta}$, using, e.g., data splitting \citep{ma2021global}; that is, suppose there are $2n$ observations, where the samples $\{\bY_i, \bX_i\}_{i=1}^n$ are used to estimate $\hbeta$ and $\hat{\bR}$ and the samples $\{\bY_i, \bX_i\}_{i=n+1}^{2n}$ to construct the projected estimating function and conduct inference on $\theta^0$ as discussed above. This data splitting is solely needed for technical analysis, and our numerical results in Section~\ref{sec:sim} demonstrate that the proposed method performs well without data splitting.
\end{remark}

	
\section{{Longitudinal proteomic profiling for COVID-19 severity}}
\label{sec:app}

Proteomics has proven a useful tool to reveal molecular level signatures associated with COVID-19 infection, severity, mortality and post-acute sequelae of COVID-19 \citep{filbin2021longitudinal,haljasmagi2020longitudinal,shen2020proteomic,yin2023long,feyaerts2022integrated}. Complementary to the conventional univariate approaches that examine one protein at a time without taking into account the correlations among proteins, we apply HDIGEE to the longitudinal proteomic study of COVID-19 patient plasma samples in \citet{filbin2021longitudinal} to uncover protein signatures associated with COVID-19 severity in a high-dimensional regression setting. An independent cross-sectional study by \citet{feyaerts2022integrated} measured protein abundance using the same proximity extension assay. 
The matching proteomic platforms facilitates the development and independent evaluation of risk predictors: we use the longitudinal  binary responses and covariates from \citet{filbin2021longitudinal} to obtain de-biased estimates and 95\% confidence intervals for linear functionals of the form $\bbeta^{\T} x$. 
Baseline demographics and protein measures in \citet{feyaerts2022integrated} are then input as new covariates to obtain risk scores of the form $\bbeta^{\T} x_{\mathrm{new}}$ and corresponding confidence intervals for these 64 new subjects.

The first dataset \citep{filbin2021longitudinal} includes 659 measurements from $n=305$ COVID-19 positive patients on Day 0 after enrollment, and on Day 3 and Day 7 for those who were still hospitalized. 
Severity is defined as a binary outcome based on the World Health Organization ordinal outcomes scale. Following \citet{filbin2021longitudinal}, we classify patients with outcomes A1 (died) and A2 (intubated, survived) as severe cases ($Y_{ij}=1$), and those with outcomes A3 (hospitalized on oxygen), A4 (hospitalized
without oxygen) and A5 (discharged from Emergency Department) as non-severe cases ($Y_{ij}=0$). There are 80 severe and 225 non-severe cases on Day 0, 80 severe and 135 non-severe cases on Day 3, and 72 severe and 67 non-severe cases on Day 7. 
The second dataset \citep{feyaerts2022integrated}, which is cross-sectional, consists of 64 COVID-19 positive patients (24 severe, and 40 non-severe, including 13 moderate and 27 mild).
Both studies contain $1,420$ plasma proteins with UniProt identifiers. Since highly correlated proteins are likely redundant and can cause difficulties in high-dimensional estimation \citep{buhlmann2013correlated}, we include $p=853$ proteins for subsequent analysis after applying a filter using R function ``findCorrelation" with cutoff 0.75. Protein levels are represented as normalized protein expression value (NPX) in log2 scale, and standardized to z-scores. Our analysis is adjusted for age and obesity.

We use the default 10-fold cross-validation for the initial lasso estimator with logit link  and 5-fold cross-validation for the proposed projection direction estimation as described in Section \ref{sec:prac}. The working correlation structure for within-patient outcomes is AR(1). 
Under working independence, Lasso selects 131 proteins, whereas 14 proteins are deemed significant by HDIGEE with q-value $< 0.05$. Figure~\ref{fig:app_new}a shows the estimated regression coefficients after de-biasing and the negative log$_{10}$ q-values of all proteins;  significant proteins are marked with gene names. 
A full list of gene names, Olink IDs, UniProt IDs, estimated coefficients by lasso and HDIGEE, and q-values for the 14 significant proteins can be found in the Supplementary Material \ref{supp-supp:sec:add_covid_res}.


\begin{figure}[t!] 
\centering
\includegraphics[width=0.9\textwidth]{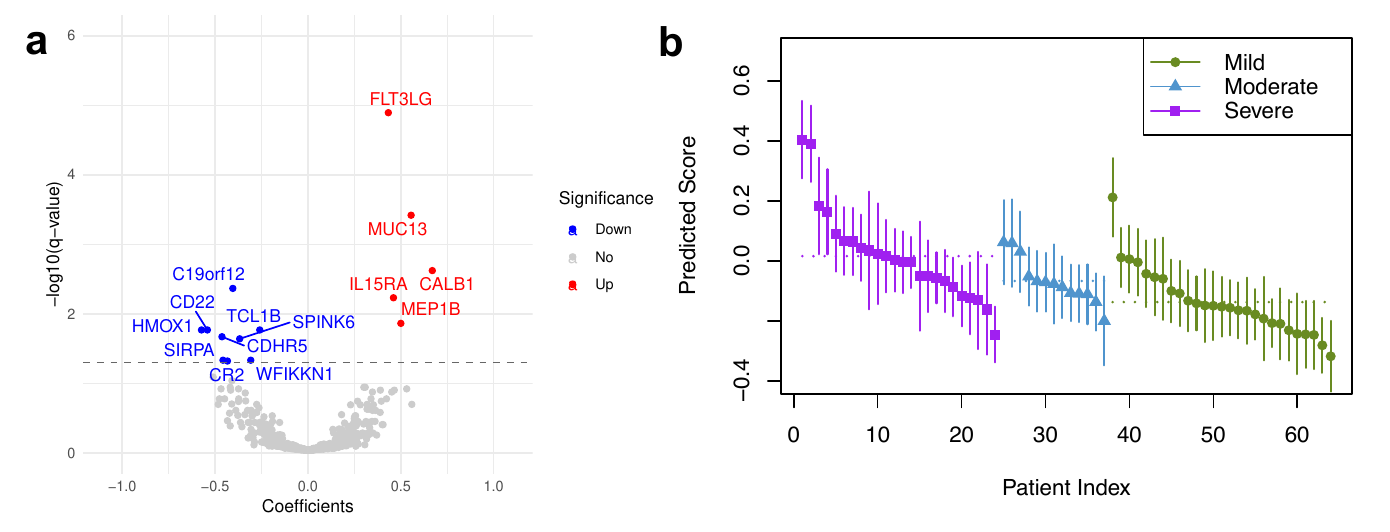}
\caption{\small (a) Estimated regression coefficients (x-axis) and negative log10 of q-values (y-axis) for all proteins by the proposed inference method. Fourteen significant proteins (q-value less than 0.05) are labeled with corresponding gene names, in red for positive associations with severity and in blue for negative associations. (b) Estimated linear predictors $\bbeta^\T x_{\mathrm{new}}$ via de-biasing with 95\% confidence intervals for each patient in the second dataset. Mean scores in each group are highlighted in dotted line segments.}
\label{fig:app_new} 
\end{figure}

By taking into account the correlations between proteins in a multivariable fashion instead of exploring marginal associations, HDIGEE identifies biologically meaningful proteins in Olink Inflammation, Neurology, Oncology and Cardiometabolic panels. For instance, Fms-related tyrosine kinase 3 ligand (FLT3LG), the most significant protein and up-regulated among the severe cases, activates hematopoietic progenitors and functions as a cytokine and growth factor that increases immune cells, and it was also previously found to be of increased levels in severe long covid patients \citep{espin2023cellular}. The significant association between COVID-19 severity and {IL15RA} (Interleukin-15 receptor subunit alpha) indicates the importance of IL15, a critical immunoregulatory cytokine that aids in T cell responses, activates natural killer cells and may be the target of novel immunotherapy \citep{kandikattu202015}. Signal-regulatory protein alpha (SIRPA) is a well-known inhibitor for phagocytosis, which interacts with transmembrane protein CD47 expressed on target cells, and previous studies suggest that increasing SIRPA's activity could be an anti-viral therapeutic target \citep{sarute2021signal}.

Next, we apply HDIGEE to the first dataset \citep{filbin2021longitudinal} again, but now with $\bxi = x_{\mathrm{new}}$ for each of the 64 patients in the second dataset \citep{feyaerts2022integrated}. Here, $x_{\mathrm{new}}$ contains information on age, obesity condition and abundance of 853 proteins, scaled by $\sqrt{p}$; we conduct inference on prognostic scores based on the linear predictors $x_{\mathrm{new}}^{\T} \bbeta$. Figure \ref{fig:app_new}b shows the estimated linear predictors via de-biasing and their 95\% confidence internals for the 64 patients with mild, moderate or severe conditions. While severe and non-severe patients cannot be perfectly separated (Figure~\ref{fig:intro}a), we observe a clear increasing trend of the predictive scores based on the estimated linear predictors from mild to severe infection. 
More importantly, the corresponding confidence intervals provide data-based evidence for clinical decisions in resource-constrained settings, or when balancing  between treatment benefits and adverse side effects.
Hence, HDIGEE is demonstrated as an effective unified method for inference from individual coefficients to predictive scores using all covariate information.

	
\section{Empirical evaluations}
\label{sec:sim}
	
\subsection{Simulation settings}
	
In this section, we evaluate the empirical performance of the proposed method via Monte Carlo simulations. The first scenario considers a fixed number of observations $m$ for all subjects with commonly used identity and logit link functions. For both models, we simulate data for $n=100$ subjects, each having $m=5$ correlated observations. The true correlation matrix $\bR_0$ is either set based on an autoregressive process of order 1, AR(1), with correlation 0.3, or unstructured:
\[  
    \footnotesize
	\bR^{\mathrm{(AR1)}}_0 = \left( \begin{array}{ccccc}
		1 & 0.3 & 0.3^2 & 0.3^3 & 0.3^4  \\
		0.3 & 1 & 0.3 & 0.3^2 & 0.3^3  \\
		0.3^2 & 0.3 & 1 & 0.3 & 0.3^2 \\
		0.3^3 & 0.3^2 & 0.3 & 1 &  0.3  \\
		0.3^4 & 0.3^3 & 0.3^2 & 0.3 & 1    \\
	\end{array} \right) 
	\mathrm{~ or ~}
	\bR^{\mathrm{(UN)}}_0 = \left( \begin{array}{ccccc}
		1 & 0.4 & 0.3 & 0.2 & 0.1  \\
		0.4 & 1 & 0.4 & 0.3 & 0.2  \\
		0.3 & 0.4 & 1 & 0.4 & 0.3 \\
		0.2 & 0.3 & 0.4 & 1 &  0.4  \\
		0.1 & 0.2 & 0.3  & 0.4 & 1    \\
	\end{array} \right).
\]
The covariates in $\bX_i$'s are simulated from a multivariate normal distribution with mean zero and an AR(1) covariance matrix $\bSigma_x$, where the $(k,l)$th element of $\bSigma_x$ equals $0.5^{|k-l|}$. We set $p=100$ or $500$, and randomly select $s_0 = 3$ or $10$ covariates to  taking values of 1 and 0.5 for continuous and binary outcomes, respectively. These coefficients are fixed throughout.  For the continuous case with $s_0=3$, $p=100$ or $500$, we  also evaluate the performance of HDIGEE when inferring different linear functionals of regression coefficients (Table \ref{tab:linear_combn_setup} in the Appendix), which has been rarely studied before for correlated data or estimating equations in general. 

In the second scenario, we mimic the scale and complexity of the longitudinal COVID-19 proteomic data of \citet{filbin2021longitudinal}, from Section~\ref{sec:app} to assess the performance of our model in that setting. 
More specifically, we use the $p = 853$ protein measures from $n=305$ subjects with 1--3 timepoints per subject. 
Binary outcomes mimicking the severity of illness are simulated using either $s_0 = 3$ or $10$ non-zero regression coefficients generated from Uniform(0.3, 0.5). The coefficients are fixed throughout the simulations. For each replication, the correlated outcomes for each patient are simulated based on the upper left corner submatrix as $R^0$ below and with corresponding dimensions to be the true correlation matrix
 \begin{equation*}
 \footnotesize
    R_0 = \left(  \begin{array}{ccc} 1.00 & 0.45 & 0.30 \\
    0.45 & 1.00 & 0.45 \\
    0.30 & 0.45 & 1.00 
        \end{array}
    \right).
 \end{equation*}

In the third scenario, we conduct additional simulations with continuous outcomes that mimic the scale and complexity of the riboflavin production data analyzed by \citet{buhlmann2014high}, which involves varying number of observations, $m_i$, from 2 to 6 per subject. The data contains $N = 111$ observations from $n = 28$ strains of \textit{B. subtilis} (a type of bacteria for producing riboflavin), with $p=267$ genes after filtering. Additional details on this simulation are given in Supplementary Material \ref{supp-supp:sec:add_sim_res}. 

The empirical performance of the proposed method (HDIGEE) is compared against a modified version of the high-dimensional inference procedure for estimating equations by \citet{neykov2018unified} (HDEE; see the Supplementary Material \ref{supp-supp:sec:hdee} for details), which was originally designed for independent observations, and the quadratic decorrelated inference function  \citep[QDIF,][]{fang2020test}. In preliminary studies, we found that solving the projected estimating equation, as used in \citet{neykov2018unified}, might cause additional numerical instability for binary outcomes. We thus adopt the one-step updated estimator for HDEE, similar to Section \ref{subsec:one_step_estimator}.
 The initial estimator $\hat\bbeta$ is obtained using the R package ``glmnet" with 10-fold cross-validation. For {HDIGEE and HDEE},  $\lambda^{\prime}$ is selected via 5-fold cross-validation following Algorithm \ref{algo:cv_lambda_prime}. For QDIF, the basis matrices are chosen according to \citet{fang2020test}. 

	\subsection{Simulation results}


For the first scenario, we summarize results of 200 simulations on the empirical bias (Bias), coverage probability of 95\% confidence interval (Cov),  model-based standard error (SE) and empirical standard error (EmpSE), averaged for all  nonzero coefficients and three arbitrarily chosen zero coefficients (see Table~\ref{tab:sim_logit_n100m5_3se}). 
For HDIGEE and HDEE, cross-validation with the proposed criterion  in Algorithm~\ref{algo:cv_lambda_prime} is preferred, as it outperforms the selection of $\lambda^{\prime}$ as the value that gives the smallest cross-validated quantity (see the Supplementary Material \ref{supp-supp:sec:add_sim_res}). 
Based on Table~\ref{tab:sim_logit_n100m5_3se}, HDIGEE and QDIF show better performance than HDEE, and both well maintain the type 1 error rates for the examined noise variables. Although QDIF exhibits slightly smaller variability, HDIGEE presents smaller estimation bias and better confidence interval coverage probability for true signals compared to QDIF. HDIGEE is particularly more promising for inference on true signals when $s_0=10$, and hence, potentially more desirable as the true model sparsity is unknown.

	
	\begin{table}[ht!]
   \small
		\caption{\small Binary outcome setting in the first scenario with $(n, m) = (100, 5)$: estimation bias (Bias), coverage of the 95\% confidence interval (Cov), model-based standard error (SE) and empirical standard error (EmpSE) for $s_0$ signal and three randomly chosen noise variables, averaged over 200 replications.}
			\begin{tabular}{llrrrrrrrr} 
   \hline
				&  & \multicolumn{4}{c}{Signal $(s_0)$} & \multicolumn{4}{c}{Noise $(3)$} \\
				Method & Item & \multicolumn{2}{c}{$p=100$} & \multicolumn{2}{c}{$p=500$} & \multicolumn{2}{c}{$p=100$} & \multicolumn{2}{c}{$p=500$} \\
				&  & $s_0=3$ & $s_0=10$ & $s_0=3$ & $s_0=10$ & $s_0=3$ & $s_0=10$ & $s_0=3$ & $s_0=10$ \\[5pt]
                \hline
				\multicolumn{10}{c}{Autoregressive} \\
				\multirow{4}{*}{HDIGEE}  &   Bias & -0.015 & -0.021 & -0.030 & -0.073 & -0.001 & 0.004 & -0.001 & 0.007 \\ 
  & Cov & 0.895 & 0.881 & 0.897 & 0.828 & 0.947 & 0.927 & 0.950 & 0.948 \\ 
  & SE & 0.116 & 0.121 & 0.115 & 0.112 & 0.116 & 0.120 & 0.115 & 0.111 \\ 
  & EmpSE & 0.133 & 0.150 & 0.135 & 0.138 & 0.119 & 0.129 & 0.117 & 0.111 \\[5pt] 
  \multirow{4}{*}{HDEE} & Bias & -0.032 & -0.061 & -0.023 & -0.064 & -0.004 & 0.017 & -0.002 & 0.007 \\ 
  & Cov & 0.847 & 0.792 & 0.892 & 0.831 & 0.895 & 0.895 & 0.933 & 0.938 \\ 
  & SE & 0.085 & 0.090 & 0.111 & 0.108 & 0.086 & 0.091 & 0.111 & 0.108 \\ 
  & EmpSE & 0.107 & 0.124 & 0.134 & 0.138 & 0.104 & 0.109 & 0.119 & 0.114 \\[5pt] 
  \multirow{4}{*}{QDIF} & Bias & -0.021 & -0.059 & -0.034 & -0.100 & -0.001 & 0.011 & 0.000 & 0.007 \\ 
  & Cov & 0.895 & 0.849 & 0.882 & 0.766 & 0.940 & 0.932 & 0.945 & 0.948 \\ 
  & SE & 0.104 & 0.108 & 0.100 & 0.097 & 0.103 & 0.107 & 0.098 & 0.097 \\ 
  & EmpSE & 0.117 & 0.131 & 0.119 & 0.115 & 0.107 & 0.112 & 0.098 & 0.093 \\[5pt] 
                \hline
				\multicolumn{10}{c}{Unstructured} \\
				\multirow{4}{*}{HDIGEE}   & Bias & -0.014 & -0.020 & -0.031 & -0.071 & -0.005 & 0.006 & 0.001 & 0.001 \\ 
  & Cov & 0.895 & 0.875 & 0.902 & 0.816 & 0.943 & 0.935 & 0.952 & 0.948 \\ 
  & SE & 0.115 & 0.120 & 0.114 & 0.111 & 0.115 & 0.119 & 0.113 & 0.110 \\ 
  & EmpSE & 0.131 & 0.150 & 0.129 & 0.141 & 0.118 & 0.131 & 0.111 & 0.111 \\[5pt] 
  \multirow{4}{*}{HDEE} & Bias & -0.028 & -0.060 & -0.019 & -0.062 & -0.006 & 0.017 & 0.001 & 0.001 \\ 
  & Cov & 0.845 & 0.786 & 0.898 & 0.818 & 0.878 & 0.878 & 0.933 & 0.933 \\ 
  & SE & 0.085 & 0.090 & 0.111 & 0.108 & 0.085 & 0.090 & 0.110 & 0.107 \\ 
  & EmpSE & 0.109 & 0.124 & 0.127 & 0.141 & 0.104 & 0.108 & 0.116 & 0.116 \\[5pt] 
  \multirow{4}{*}{QDIF} & Bias & -0.020 & -0.059 & -0.034 & -0.098 & -0.005 & 0.012 & 0.001 & 0.004 \\ 
  & Cov & 0.888 & 0.857 & 0.882 & 0.767 & 0.928 & 0.933 & 0.963 & 0.965 \\ 
  & SE & 0.102 & 0.106 & 0.099 & 0.097 & 0.101 & 0.105 & 0.097 & 0.096 \\ 
  & EmpSE & 0.117 & 0.129 & 0.113 & 0.116 & 0.108 & 0.111 & 0.095 & 0.092 \\ 
    \hline
		\end{tabular}
		\label{tab:sim_logit_n100m5_3se}
	\end{table}

\begin{table}[ht]
\small
\centering
\caption{\small Continuous outcome setting in the first scenario with $(n, m) = (100, 5)$ and $s_0 = 3$: estimation bias (Bias), coverage probability (Cov), model-based standard error (SE), and empirical standard error (EmpSE) for inference on different  linear functionals described in Table~\ref{tab:linear_combn_setup}.}
\begin{tabular}{l|rrrr|rrrr}
  \hline
  & \multicolumn{4}{c}{$p=100$} &  \multicolumn{4}{c}{$p=500$} \\
  \hline
No. & 1 & 2 & 3 & 4 & 9 & 10 & 11 & 12 \\ 
  \hline
Bias & 0.002 & -0.003 & 0.002 & -0.001 & -0.003 & -0.002 & 0.004 & 0.003 \\ 
  Cov & 0.955 & 0.940 & 0.935 & 0.920 & 0.940 & 0.975 & 0.940 & 0.945 \\ 
  SE & 0.040 & 0.036 & 0.050 & 0.052 & 0.039 & 0.031 & 0.044 & 0.048 \\ 
  EmpSE & 0.039 & 0.037 & 0.051 & 0.057 & 0.039 & 0.029 & 0.047 & 0.051 \\ 
  \hline
  No. & 5 & 6 & 7 & 8 & 13 & 14 & 15 & 16 \\ 
  \hline
  Bias & 0.008 & 0.005 & -0.007 & -0.005 & 0.004 & 0.003 & -0.004 & -0.003 \\ 
  Cov & 0.960 & 0.975 & 0.925 & 0.935 & 0.980 & 0.940 & 0.925 & 0.985 \\ 
  SE & 0.039 & 0.049 & 0.051 & 0.050 & 0.042 & 0.057 & 0.049 & 0.052 \\ 
  EmpSE & 0.038 & 0.043 & 0.049 & 0.051 & 0.038 & 0.058 & 0.050 & 0.048 \\ 
    \hline
\end{tabular}
\label{tab:res_linear_combn_3se}
\end{table}

Detailed results for the continuous outcome setting, presented in Supplementary Material~\ref{supp-supp:sec:add_sim_res}, show  similar phenomenon but smaller differences between HDIGEE and QDIF. Additionally, HDIGEE shows satisfactory performance for inference on various linear functionals of regression coefficients (Table \ref{tab:res_linear_combn_3se}). Even for cases where the true target $\theta^0 \ne 0$ involves both signals and noise variables (Combinations 3, 7, 11, 15) and where the true target $\theta^0 = 0$ involves noise variables and cancellation of signals (Combinations 4, 8, 12, 16), HDIGEE presents negligible bias and can achieve coverage probability around the nominal level.

The simulation results for the second scenario mimicking the longitudinal COVID-19 proteomic data with $s_0=3$ true signals are displayed in Figure~\ref{fig:res_sim_covidprot_s3_3se} and also include mean squared error (MSE). Again, for the three randomly chosen noise variables, both HDIGEE and QDIF perform well in estimating the coefficients and maintaining the coverage. For the more difficult task of inference on the true signals, although HDIGEE produces slightly larger biases than QDIF, its coverage probabilities are slightly better than QDIF, with smaller variability in estimating these coefficients and reduced MSE. The simulation results with $s_0=10$ true signals are shown in the Supplementary Material \ref{supp-supp:sec:add_sim_res}.

\begin{figure}[ht]
    \centering
    \includegraphics[width=0.8\textwidth]{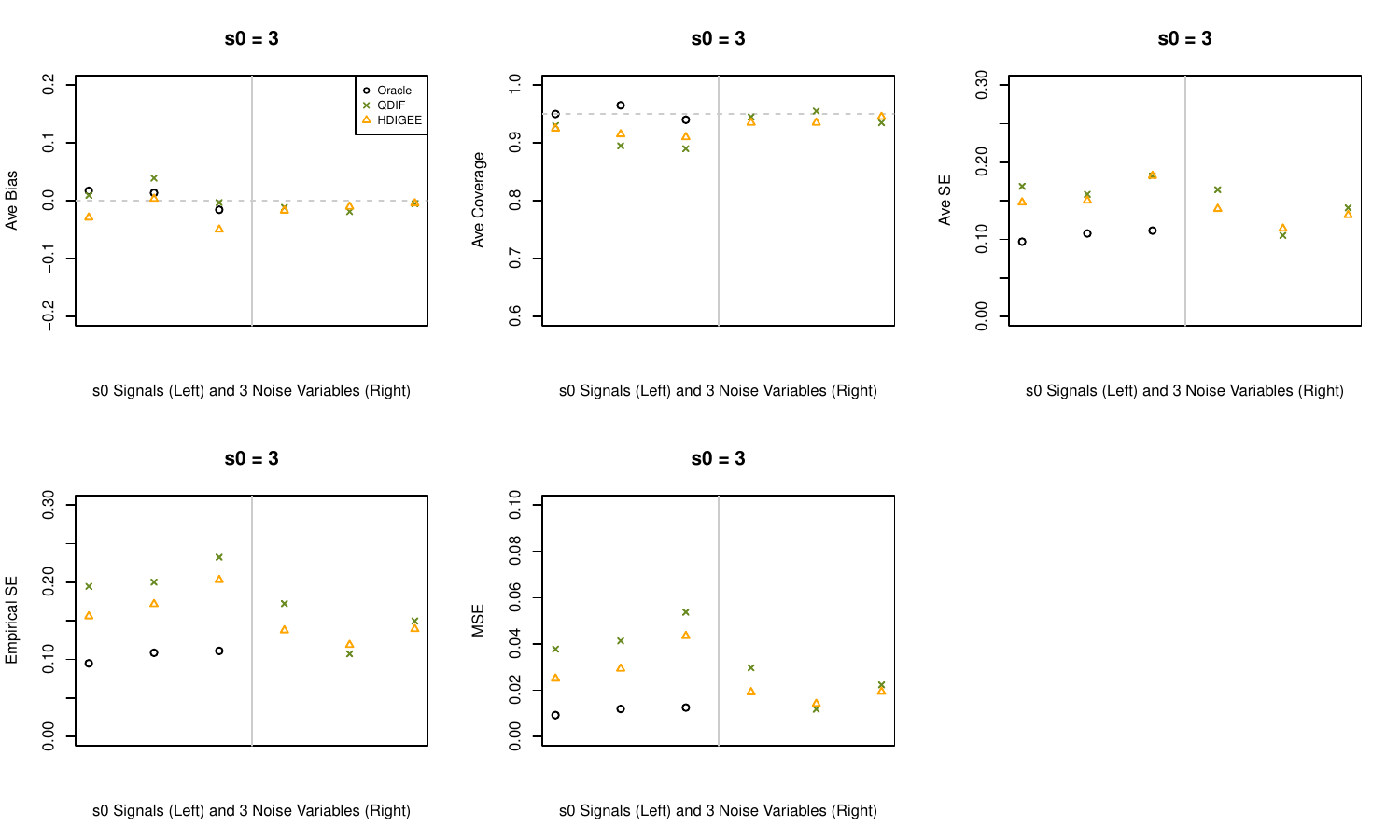}
    \caption{The second scenario mimicking  the real longitudinal COVID-19 proteomic data, with 659 measurements from $n=305$ subjects and $s_0=3$: estimation bias (Bias), coverage probability (Cov), model-based standard error (SE), empirical standard error (EmpSE) and mean squared error (MSE) for $s_0$ signals and 3 randomly chosen noise variables on the left and right of the vertical lines, respectively, average over 200 simulations}
    \label{fig:res_sim_covidprot_s3_3se}
\end{figure}

Figure~\ref{fig:sim_ribo_s3} shows the simulation results in the third setup with $s_0=3$. Comparing the best two methods from the previous simulations, HDIGEE and QDIF, we find that in this setting, QDIF results in much larger estimation bias for true signals than HDIGEE and the oracle estimator. Since QDIF produces unreasonably low standard error estimates for the truly nonzero coefficients, its coverage probabilities  are below 20\%. 
The instability of QDIF in this setting, compared to the previous two simulation scenarios, is likely due to the small sample size and unbalanced cluster size. In contrast, HDIGEE has more stable performance. The results with $s_0=10$ are shown in Supplementary Material~\ref{supp-supp:sec:add_sim_res}.

 	\begin{figure}[ht]
		\centering
  \includegraphics[width=0.8\textwidth]{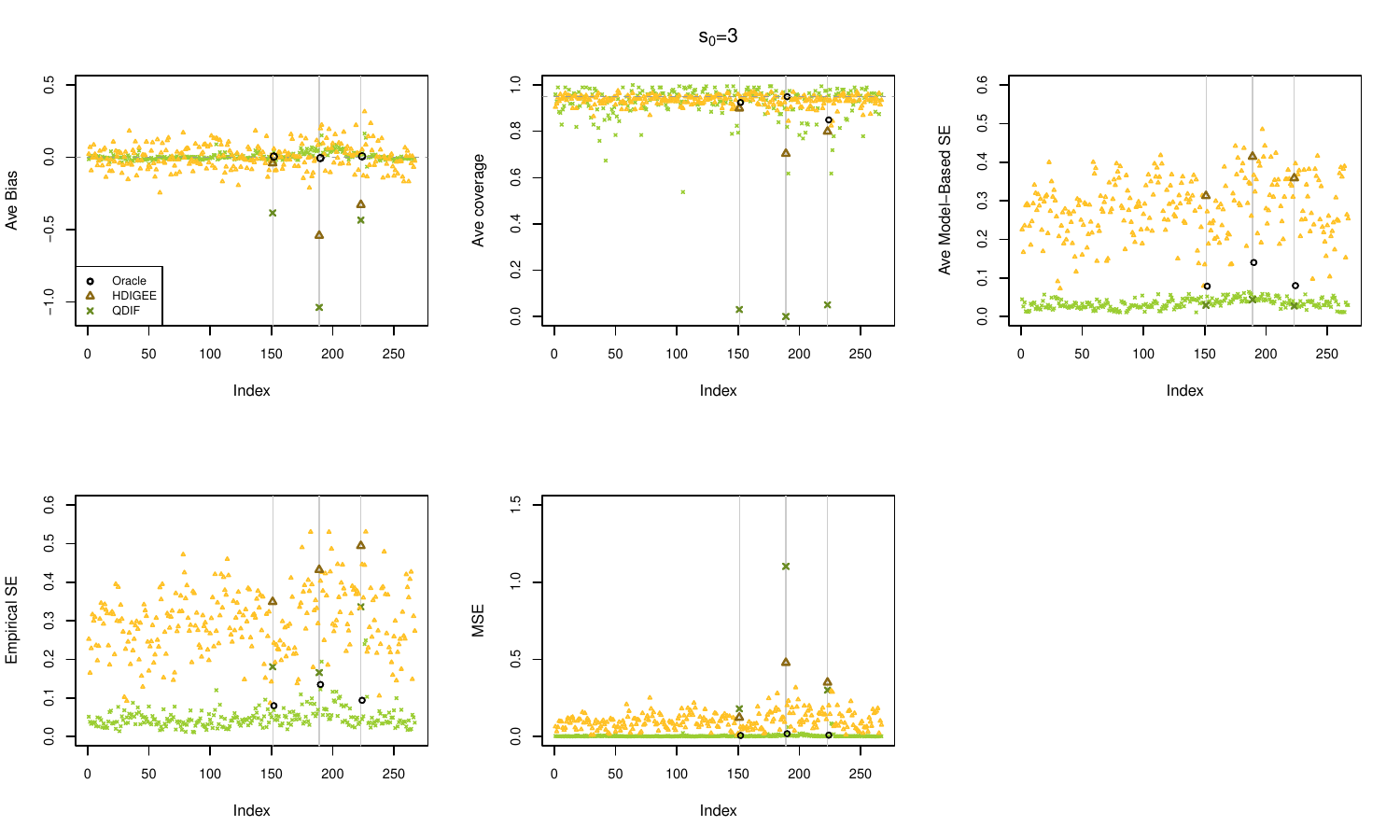}
		\caption{Regression coefficient estimates, empirical coverage probabilities and lengths for 95\% confidence intervals averaged over 200 replications, in the simulation setting derived from the real riboflavin data with $s_0=3$. Points with enlarged symbols and darker shades of colors represent the estimates for truly nonzero coefficients. 
        }
		\label{fig:sim_ribo_s3}
	\end{figure}

	
	\section{Concluding remarks}
 \label{sec:disc}
	
Motivated by the applications involving risk scores based on longitudinal \emph{omics} data, we proposed a projected estimating equation approach for inference on generalized estimating equations with high-dimensional covariates. The target of inference is a linear functional of the underlying regression coefficients $\bxi^T \bbeta^0$, which enables us to draw inference not only on the individual coefficients but also on prognostic scores. 
This feature is a key distinction of our approach from many existing approaches targeting low-dimensional parameters in high-dimensional inference and can be readily extended to other estimating equation problems.
	
Effective data-driven tuning parameter selection is essential in high-dimensional inference but is not fully addressed in the literature. While the performance of our method is not very sensitive to the choice of the tuning parameter for $\hbeta$, $\lambda$,  the second tuning parameter, $\lambda'$ in \eqref{eq:direction}, more directly governs the final de-biasing step. 
Our specialized cross-validation-based procedure in Algorithm~\ref{algo:cv_lambda_prime} provides a data-driven approach for selecting $\lambda'$. 
To further improve upon this procedure, future research can consider procedures that are more adaptive to signal strength.

In the existing high-dimensional inference literature, assumptions on the $\ell_0$ sparsity of inverse information matrix beyond linear regression models (e.g. generalized linear models and Cox proportional hazards models) have been scrutinized \citep{xia2021debiased,xia2021statistical}, as they may not hold in general settings.  The imposed condition on the sparsity of $(\bS^0)^{-1} \bxi$ in this article has certain similarity to those on the inverse information matrix. Estimating the projection direction when when $p \gg n$ is difficult without any structural conditions. Exploring relaxations of this sparsity assumption could be an important direction of future research.

\appendix
\setcounter{table}{0}
\renewcommand{\thetable}{A.\arabic{table}}
\bigskip
\begin{center}
{\large\bf APPENDIX}
\end{center}



\section{Assumptions}\label{appendix:assumptions}

Our proposed method is theoretically justified upon the following assumptions.
	
	\begin{assumption} \label{assump:bound_covs}
		The covariates $\bX_i$, as well as $\bX_i \bbeta^0$, are almost surely uniformly bounded; that is, there exists constants $K, K^{\prime} > 0$ such that $\| \bX_i \|_{\infty} \le K$, and  $\| \bX_i \bbeta^0 \|_{\infty} \le K^{\prime}$, $i = 1, \ldots, n$. Also, the standardized error terms, $\varepsilon_{ij} (\bbeta^0) = G_{ij}^{-1/2}(\bbeta^0) \left\{ Y_{ij} - \mu_{ij}(\bbeta^0)\right\}$, are sub-expotential.
	\end{assumption}
	
	\begin{assumption}  \label{assump:bdd_xomega}
		$\| \bX_i \bomega^0 \|_{\infty}$ is uniformly bounded for all $i = 1, \ldots, n$, almost surely.
	\end{assumption}
	
	\begin{assumption} \label{assump:var_fun}
		The mean and variance functions, $\mu(\zeta)$ and $v(\mu)$, are differentiable, and their derivatives are denoted by $\dot{\mu}(\zeta) = d \mu / d \zeta$ and $\dot{v}(\mu) = d v / d \mu$, respectively. There exist constants $\delta^{\prime}, \delta^{\prime\prime} > 0$ and $K_1,~ K_2, ~ c_{L} > 0$ such that 
		\[
		\begin{array}{c}
			\max_{\zeta_0 \in \{\bx_{ij}^\T \bbeta^0 \} } \sup_{ \{ \zeta: |\zeta - \zeta_0| \le \delta^{\prime}\} } \max \left\{  | \mu(\zeta) |, | \dot{\mu}(\zeta) |, 1/ | {\mu}(\zeta) | \right\} \le K_1,\\
			\max_{\mu_0 \in \{\mu(\bx_{ij}^\T \bbeta^0) \} } \sup_{ \{ \mu: |\mu - \mu_0| \le \delta^{\prime\prime} \} } \max \left\{ v(\mu), | \dot{v}(\mu) |, 1/ |{v}(\mu)| \right\} \le K_2, \\
			\max_{\mu_0 \in \{\mu(\bx_{ij}^\T \bbeta^0) \} } \sup_{  \{ ( \mu, \bar{\mu} ): |\mu - \mu_0| \le \delta^{\prime\prime}, |\bar{\mu} - \mu_0| \le \delta^{\prime\prime} \} } | v(\mu) - v(\bar{\mu}) | \le c_{L} | \mu - \bar{\mu} |.
		\end{array}
		\]
	\end{assumption}
	
	\begin{assumption} \label{assump:bdd_eigen_covs}
		There exists a constant $c > 0$ such that the minimum eigenvalue of the covariance matrix $\lambda_{\min} \left( \E(\bX_1^\T \bX_1) \right) \ge c$.
	\end{assumption}
	


\begin{assumption}\label{assump:cor_mat}
Let $r_n = \| \hatR - \barR \|$ denote the rate of convergence from the estimated correlation matrix $\hatR$ to its theoretical limit $\barR$. Then  
$
	r_n = \oP \left[ \left\{ \max \left(s_0, s^{\prime}\| \baromega^0 \|_1 \right) \left\{ \log(p) \right\}^{1/2} \| \baromega^0 \|_1 \right\}^{-1} \right].
$
Both $\barR$ and $\bR_0$, the true correlation matrix for $\bY_i$,  have their eigenvalues bounded from above and away from zero.
\end{assumption}
	
Assumption~\ref{assump:bound_covs} requires bounded covariate entries and mean responses, which is common in high-dimensional inference literature \citep{van2014asymptotically,javanmard2014confidence,zhang2014confidence,ning2017general,fang2020test}. The sub-exponential error condition in Assumption \ref{assump:bound_covs} is used to derive the  concentration inequalities in the proofs and holds for many commonly used models \citep{fang2020test}, such as linear regression with sub-Gaussian or sub-exponential errors or logistic regression. Assumption \ref{assump:bdd_xomega} is analogous to Assumption (iv) of Theorem 3.3 in \citet{van2014asymptotically}, and yet generalizes to a general loading vector $\bxi$ rather than targeting a single coefficient at a time. The mean and variance functions, $\mu(\cdot)$ and $v(\cdot)$,  as well as their derivatives, are uniformly bounded in the neighbourhood around the true linear predictors $\{ \bx_{ij}^\T \bbeta^0 \}$, as described in Assumption \ref{assump:var_fun}. In addition, as in \citet{van2014asymptotically}, the variance function $v(\cdot)$ is Lipschitz continuous. Assumption \ref{assump:bdd_eigen_covs} imposes bounded eigenvalues on the covariance matrix for the covariates $\bX_i$. Finally, Assumption \ref{assump:cor_mat} describes a sufficient rate of convergence from $\hat{\bR}$ to a fixed, but unknown and potentially misspecified $\barR$. The assumption extends the $n^{1/2}$-rate in the fixed dimension setting \citep{liang1986longitudinal} and involves both sparsities $s_0$ and $s^{\prime}$. 
In certain cases, it is possible to verify that the convergence of $\hatR$ in Assumption~\ref{assump:cor_mat} holds. One example is using a moment estimator, 
$
	\hatR =  n^{-1} \sum_{i=1}^{n} \bG_i^{-1/2} (\hbeta) \{ \bY_i - \bmu_i(\hbeta) \} \{ \bY_i - \bmu_i(\hbeta) \}^\T \bG_i^{-1/2} (\hbeta), 
$
when assuming unstructured working correlation matrix, similar to \citet{balan2005asymptotic}. 
In this case, $\hatR$ converges to $\barR = \bR_0$.  
See Supplementary Material \ref{supp-supp:sec:cond} for the detailed proof. 


\section{Additional simulation details}\label{appendix:sims}

Table~\ref{tab:linear_combn_setup} below lists 16 different cases of linear functionals to evaluate the performance of the proposed method, HDIGEE, in a continous outcome setting outlined in Section \ref{sec:sim}.1. 

\begin{table}[ht!]
   \footnotesize
    \caption{\small Setup of different  linear functionals to evaluate the performance of HDIGEE for the continuous outcome case with $(n, m) = (100, 5)$, $s_0=3$, $p=100$ or $500$}
    \begin{tabular}{llll} 
    \hline
    No. & $p=100 ~ (\beta_{6} = \beta_{45} = \beta_{82} = 1)$ & No. & $p=500 ~ (\beta_{134} = \beta_{338} = \beta_{429} = 1)$ \\
    \hline
    1 & $\xi_{3} = \xi_{4} = 1/\sqrt{2}$ & 9 & $\xi_{131} = \xi_{132} = 1/\sqrt{2}$ \\
    2 & $\xi_{100} = \xi_{101} = 1/\sqrt{2}$ & 10 & $\xi_{2} = \xi_{4} = 1/\sqrt{2}$ \\
    3 & $\xi_{2} = \xi_{6} = 1/\sqrt{2}$ & 11 & $\xi_{130} = \xi_{134} = 1/\sqrt{2}$ \\
    4 & $\xi_{6} = 1/\sqrt{2}, \xi_{45} = - 1/\sqrt{2}$ & 12 & $\xi_{134} = 1/\sqrt{2}, \xi_{338} = - 1/\sqrt{2}$ \\
    5 & $\xi_{3} = \xi_{4} = \xi_{8} = \xi_{9} = \xi_{10} =    1/\sqrt{5}$ & 13 & $\xi_{131} = \xi_{132} = \xi_{136} = \xi_{137} = \xi_{138} = 1/\sqrt{5}$ \\
    6 & $\xi_{97} = \xi_{98} = \xi_{99} = \xi_{100} = \xi_{101} =    1/\sqrt{5}$ & 14 & $\xi_{2} = \xi_{4} = \xi_{6} = \xi_{8} = \xi_{10} = 1/\sqrt{5}$ \\
    7 & $\xi_{2} = \xi_{3} = \xi_{6} = \xi_{9} = \xi_{10} =    1/\sqrt{5}$ & 15 & $\xi_{130} = \xi_{134} = \xi_{136} = \xi_{137} = \xi_{138} = 1/\sqrt{5}$ \\
    8 & $\xi_{2} = \xi_{6} = 1/\sqrt{5}, \xi_{45} = \xi_{100} = \xi_{101} = - 1/\sqrt{5}$ & 16 & $\xi_{130} = \xi_{134} = 1/\sqrt{5}, \xi_{338} = \xi_{399} = \xi_{400} = - 1/\sqrt{5}$ \\
    \hline
    \end{tabular}
    \label{tab:linear_combn_setup}
\end{table}

\bigskip
\begin{center}
{\large\bf SUPPLEMENTARY MATERIAL}
\end{center}

The supplementary material contains additional simulation results and additional details of the application to the longitudinal proteomic profiling data, the proofs for the theoretical justifications, and further discussion on the assumptions. R code for the implementation of the proposed method, HDIGEE, is made available on GitHub [GitHub link will be provided upon acceptance].






\bibliographystyle{apalike}

\bibliography{references}

\begin{thebibliography}{}

\bibitem[Balan and Schiopu-Kratina, 2005]{balan2005asymptotic}
Balan, R.~M. and Schiopu-Kratina, I. (2005).
\newblock Asymptotic results with generalized estimating equations for
  longitudinal data.
\newblock {\em Annals of Statistics}, 33(2):522--541.

\bibitem[Bradic et~al., 2020]{bradic2020fixed}
Bradic, J., Claeskens, G., and Gueuning, T. (2020).
\newblock Fixed effects testing in high-dimensional linear mixed models.
\newblock {\em Journal of the American Statistical Association},
  115(532):1835--1850.

\bibitem[B{\"u}hlmann et~al., 2014]{buhlmann2014high}
B{\"u}hlmann, P., Kalisch, M., and Meier, L. (2014).
\newblock High-dimensional statistics with a view toward applications in
  biology.
\newblock {\em Annual Review of Statistics and Its Application}, 1:255--278.

\bibitem[B{\"u}hlmann et~al., 2013]{buhlmann2013correlated}
B{\"u}hlmann, P., R{\"u}timann, P., van~de Geer, S., and Zhang, C.-H. (2013).
\newblock Correlated variables in regression: clustering and sparse estimation.
\newblock {\em Journal of Statistical Planning and Inference},
  143(11):1835--1858.

\bibitem[B{\"u}hlmann and van~de Geer, 2011]{buhlmann2011statistics}
B{\"u}hlmann, P. and van~de Geer, S. (2011).
\newblock {\em Statistics for {High-Dimensional} {D}ata: {M}ethods, {T}heory
  and {A}pplications}.
\newblock Springer, Berlin.

\bibitem[Cai et~al., 2011]{cai2011constrained}
Cai, T., Liu, W., and Luo, X. (2011).
\newblock A constrained $\ell_1$ minimization approach to sparse precision
  matrix estimation.
\newblock {\em Journal of the American Statistical Association},
  106(494):594--607.

\bibitem[Dziak et~al., 2009]{dziak2009overview}
Dziak, J.~J., Li, R., and Qu, A. (2009).
\newblock An overview on quadratic inference function approaches for
  longitudinal data.
\newblock In Fan, J., Lin, X., and Liu, J.~S., editors, {\em Frontiers of
  Statistics, Volume 1: New Developments in Biostatistics and Bioinformatics},
  pages 49--72. World Scientific, Singapore.

\bibitem[Esp{\'\i}n et~al., 2023]{espin2023cellular}
Esp{\'\i}n, E., Yang, C., Shannon, C.~P., Assadian, S., He, D., and Tebbutt,
  S.~J. (2023).
\newblock Cellular and molecular biomarkers of long {COVID}: a scoping review.
\newblock {\em EBioMedicine}, 91:104552.

\bibitem[Fan and Li, 2012]{fan2012variable}
Fan, Y. and Li, R. (2012).
\newblock Variable selection in linear mixed effects models.
\newblock {\em Annals of Statistics}, 40(4):2043--2068.

\bibitem[Fang et~al., 2020]{fang2020test}
Fang, E.~X., Ning, Y., and Li, R. (2020).
\newblock Test of significance for high-dimensional longitudinal data.
\newblock {\em Annals of Statistics}, 48(5):2622--2645.

\bibitem[Feyaerts et~al., 2022]{feyaerts2022integrated}
Feyaerts, D., H{\'e}dou, J., Gillard, J., Chen, H., Tsai, E.~S., Peterson,
  L.~S., et~al. (2022).
\newblock Integrated plasma proteomic and single-cell immune signaling network
  signatures demarcate mild, moderate, and severe {COVID-19}.
\newblock {\em Cell Reports Medicine}, 3(7):100680.

\bibitem[Filbin et~al., 2021]{filbin2021longitudinal}
Filbin, M.~R., Mehta, A., Schneider, A.~M., Kays, K.~R., Guess, J.~R., Gentili,
  M., et~al. (2021).
\newblock Longitudinal proteomic analysis of severe {COVID-19} reveals
  survival-associated signatures, tissue-specific cell death, and cell-cell
  interactions.
\newblock {\em Cell Reports Medicine}, 2(5):100287.

\bibitem[Gelman and Hill, 2007]{gelman2007data}
Gelman, A. and Hill, J. (2007).
\newblock {\em Data Analysis Using Regression and Multilevel/Hierarchical
  Models}.
\newblock Cambridge University Press, Cambridge.

\bibitem[Haljasm{\"a}gi et~al., 2020]{haljasmagi2020longitudinal}
Haljasm{\"a}gi, L., Salumets, A., Rumm, A.~P., J{\"u}rgenson, M., Krassohhina,
  E., Remm, A., et~al. (2020).
\newblock Longitudinal proteomic profiling reveals increased early inflammation
  and sustained apoptosis proteins in severe {COVID-19}.
\newblock {\em Scientific Reports}, 10:20533.

\bibitem[Javanmard and Montanari, 2014]{javanmard2014confidence}
Javanmard, A. and Montanari, A. (2014).
\newblock Confidence intervals and hypothesis testing for high-dimensional
  regression.
\newblock {\em Journal of Machine Learning Research}, 15(1):2869--2909.

\bibitem[Kandikattu et~al., 2020]{kandikattu202015}
Kandikattu, H.~K., Venkateshaiah, S.~U., Kumar, S., and Mishra, A. (2020).
\newblock {IL-15} immunotherapy is a viable strategy for {COVID-19}.
\newblock {\em Cytokine \& Growth Factor Reviews}, 54:24--31.

\bibitem[Khajeh-Kazemi et~al., 2011]{khajeh2011comparison}
Khajeh-Kazemi, R., Golestan, B., Mohammad, K., Mahmoudi, M., Nedjat, S., and
  Pakravan, M. (2011).
\newblock Comparison of generalized estimating equations and quadratic
  inference functions in superior versus inferior {Ahmed} glaucoma valve
  implantation.
\newblock {\em Journal of Research in Medical Sciences}, 16(3):235--244.

\bibitem[Li et~al., 2022]{li2021inference}
Li, S., Cai, T.~T., and Li, H. (2022).
\newblock Inference for high-dimensional linear mixed-effects models: A
  quasi-likelihood approach.
\newblock {\em Journal of the American Statistical Association},
  117(540):1835--1846.

\bibitem[Liang and Zeger, 1986]{liang1986longitudinal}
Liang, K.-Y. and Zeger, S.~L. (1986).
\newblock Longitudinal data analysis using generalized linear models.
\newblock {\em Biometrika}, 73(1):13--22.

\bibitem[Lin et~al., 2020]{lin2020statistical}
Lin, L., Drton, M., and Shojaie, A. (2020).
\newblock Statistical significance in high-dimensional linear mixed models.
\newblock In {\em Seattle '20: ACM-IMS Foundations of Data Science Conference,
  October 18--20, 2020, Seattle, WA}, pages 171--181. ACM, New York.

\bibitem[Ma et~al., 2021]{ma2021global}
Ma, R., Cai, T.~T., and Li, H. (2021).
\newblock Global and simultaneous hypothesis testing for high-dimensional
  logistic regression models.
\newblock {\em Journal of the American Statistical Association},
  116(534):984--998.

\bibitem[Neykov et~al., 2018]{neykov2018unified}
Neykov, M., Ning, Y., Liu, J.~S., and Liu, H. (2018).
\newblock A unified theory of confidence regions and testing for
  high-dimensional estimating equations.
\newblock {\em Statistical Science}, 33(3):427--443.

\bibitem[Ning and Liu, 2017]{ning2017general}
Ning, Y. and Liu, H. (2017).
\newblock A general theory of hypothesis tests and confidence regions for
  sparse high dimensional models.
\newblock {\em Annals of Statistics}, 45(1):158--195.

\bibitem[Pan, 2001]{pan2001akaike}
Pan, W. (2001).
\newblock Akaike's information criterion in generalized estimating equations.
\newblock {\em Biometrics}, 57(1):120--125.

\bibitem[Qu et~al., 2000]{qu2000improving}
Qu, A., Lindsay, B.~G., and Li, B. (2000).
\newblock Improving generalised estimating equations using quadratic inference
  functions.
\newblock {\em Biometrika}, 87(4):823--836.

\bibitem[Raskutti et~al., 2010]{raskutti2010restricted}
Raskutti, G., Wainwright, M.~J., and Yu, B. (2010).
\newblock Restricted eigenvalue properties for correlated {Gaussian} designs.
\newblock {\em Journal of Machine Learning Research}, 11:2241--2259.

\bibitem[Sarute et~al., 2021]{sarute2021signal}
Sarute, N., Cheng, H., Yan, Z., Salas-Briceno, K., Richner, J., Rong, L.,
  et~al. (2021).
\newblock Signal-regulatory protein alpha is an anti-viral entry factor
  targeting viruses using endocytic pathways.
\newblock {\em PLoS Pathogens}, 17(6):e1009662.

\bibitem[Shen et~al., 2020]{shen2020proteomic}
Shen, B., Yi, X., Sun, Y., Bi, X., Du, J., Zhang, C., et~al. (2020).
\newblock Proteomic and metabolomic characterization of {COVID-19} patient
  sera.
\newblock {\em Cell}, 182(1):59--72.

\bibitem[van~de Geer et~al., 2014]{van2014asymptotically}
van~de Geer, S., B{\"u}hlmann, P., Ritov, Y., and Dezeure, R. (2014).
\newblock On asymptotically optimal confidence regions and tests for
  high-dimensional models.
\newblock {\em Annals of Statistics}, 42(3):1166--1202.

\bibitem[van~de Geer and M{\"u}ller, 2012]{van2012quasi}
van~de Geer, S. and M{\"u}ller, P. (2012).
\newblock Quasi-likelihood and/or robust estimation in high dimensions.
\newblock {\em Statistical Science}, 27(4):469--480.

\bibitem[Vershynin, 2018]{vershynin2018high}
Vershynin, R. (2018).
\newblock {\em High-Dimensional Probability: {An} Introduction with
  Applications in Data Science}.
\newblock Cambridge: Cambridge University Press.

\bibitem[Wang, 2011]{wang2011gee}
Wang, L. (2011).
\newblock {GEE} analysis of clustered binary data with diverging number of
  covariates.
\newblock {\em Annals of Statistics}, 39(1):389--417.

\bibitem[Wang and Qu, 2009]{wang2009consistent}
Wang, L. and Qu, A. (2009).
\newblock Consistent model selection and data-driven smooth tests for
  longitudinal data in the estimating equations approach.
\newblock {\em Journal of the Royal Statistical Society, Series B (Statistical
  Methodology)}, 71(1):177--190.

\bibitem[Wang et~al., 2012]{wang2012penalized}
Wang, L., Zhou, J., and Qu, A. (2012).
\newblock Penalized generalized estimating equations for high-dimensional
  longitudinal data analysis.
\newblock {\em Biometrics}, 68(2):353--360.

\bibitem[Westgate and Braun, 2012]{westgate2012effect}
Westgate, P.~M. and Braun, T.~M. (2012).
\newblock The effect of cluster size imbalance and covariates on the estimation
  performance of quadratic inference functions.
\newblock {\em Statistics in Medicine}, 31(20):2209--2222.

\bibitem[Xia et~al., 2023a]{xia2021debiased}
Xia, L., Nan, B., and Li, Y. (2023a).
\newblock Debiased lasso for generalized linear models with a diverging number
  of covariates.
\newblock {\em Biometrics}, 79(1):344--357.

\bibitem[Xia et~al., 2023b]{xia2021statistical}
Xia, L., Nan, B., and Li, Y. (2023b).
\newblock Statistical inference for {Cox} proportional hazards models with a
  diverging number of covariates.
\newblock {\em Scandinavian Journal of Statistics}, 50(2):550--571.

\bibitem[Yin et~al., 2024]{yin2023long}
Yin, K., Peluso, M.~J., Luo, X., Thomas, R., Shin, M.-G., Neidleman, J., et~al.
  (2024).
\newblock Long {COVID} manifests with {T} cell dysregulation, inflammation and
  an uncoordinated adaptive immune response to {SARS-CoV-2}.
\newblock {\em Nature Immunology}, 25:218--225.

\bibitem[Yu et~al., 2021]{yu2021confidence}
Yu, Y., Bradic, J., and Samworth, R.~J. (2021).
\newblock Confidence intervals for high-dimensional {Cox} models.
\newblock {\em Statistica Sinica}, 31:243--267.

\bibitem[Zhang, 2011]{zhang2011statistical}
Zhang, C.-H. (2011).
\newblock Statistical inference for high-dimensional data.
\newblock In {\em Mathematisches Forschungsinstitut Oberwolfach: Very High
  Dimensional Semiparametric Models (Report No. 48/2011)}, pages 2772--2775.

\bibitem[Zhang and Zhang, 2014]{zhang2014confidence}
Zhang, C.-H. and Zhang, S.~S. (2014).
\newblock Confidence intervals for low dimensional parameters in high
  dimensional linear models.
\newblock {\em Journal of the Royal Statistical Society, Series B (Statistical
  Methodology)}, 76(1):217--242.

\bibitem[Zhao et~al., 2023]{zhao2023estimation}
Zhao, J., Zhou, Y., and Liu, Y. (2023).
\newblock Estimation of linear functionals in high dimensional linear models:
  from sparsity to non-sparsity.
\newblock {\em Journal of the American Statistical Association}, 00(0):1--13.

\bibitem[Zhao et~al., 2021]{zhao2021defense}
Zhao, S., Witten, D., and Shojaie, A. (2021).
\newblock In defense of the indefensible: A very naive approach to
  high-dimensional inference.
\newblock {\em Statistical science: a review journal of the Institute of
  Mathematical Statistics}, 36(4):562.

\bibitem[Zhu and Bradic, 2018]{zhu2018linear}
Zhu, Y. and Bradic, J. (2018).
\newblock Linear hypothesis testing in dense high-dimensional linear models.
\newblock {\em Journal of the American Statistical Association},
  113(524):1583--1600.

\end{thebibliography}


\newpage

\begin{center}
	\textbf{\Large Supplementary Material for ``Inference for linear functionals of high-dimensional longitudinal proteomics data using generalized estimating equations"}
\end{center}

\setcounter{section}{0}
\renewcommand{\thesection}{S.\arabic{section}}
\renewcommand{\thetheorem}{S.\arabic{theorem}}
\renewcommand{\theequation}{S.\arabic{equation}}
\renewcommand{\thetable}{S.\arabic{table}}
\renewcommand{\thefigure}{S.\arabic{figure}}



In this supplementary material, we present additional simulation results in Section \ref{supp-supp:sec:add_sim_res}, and additional results from the longitudinal proteomic profiling for COVID-19 severity in Section \ref{supp-supp:sec:add_covid_res}. Section \ref{supp-supp:sec:proof} presents all the technical proofs of lemmas and the main theorem. Some technical conditions related to the theoretical justification in the main text are further discussed in Section \ref{supp-supp:sec:cond}. Finally, we include a modified version of the high-dimensional inference method for estimating equations in the main text, with more details outlined in Section \ref{supp-supp:sec:hdee}.


\section{Additional simulation results}
\label{supp-supp:sec:add_sim_res}

\subsection{Continuous outcome setting in the first simulation scenario}

In the main text, we only present the simulation results from the binary outcome case. Here, we include the simulation results from the continuous outcome case in the first simulation scenario in Table~\ref{tab:sim_linear_n100m5_3se}.


\begin{table}[h!]
	\small
	\caption{\lu{Continuous outcome scenario with $(n, m) = (100, 5)$: estimation bias, coverage probability of the 95\% confidence interval, model-based standard error and empirical standard error for all signal and three randomly chosen noise variables, averaged over 200 replications}}
	\begin{tabular}{llrrrrrrrr} 
		\hline
		&  & \multicolumn{4}{c}{Signal $(s_0)$} & \multicolumn{4}{c}{Noise $(3)$} \\
		Method & Item & \multicolumn{2}{c}{$p=100$} & \multicolumn{2}{c}{$p=500$} & \multicolumn{2}{c}{$p=100$} & \multicolumn{2}{c}{$p=500$} \\
		&  & $s_0=3$ & $s_0=10$ & $s_0=3$ & $s_0=10$ & $s_0=3$ & $s_0=10$ & $s_0=3$ & $s_0=10$ \\[5pt]
		\hline
		\multicolumn{10}{c}{Autoregressive} \\
		\multirow{4}{*}{HDIGEE}  &   Bias & 0.000 & -0.003 & -0.009 & -0.009 & -0.003 & 0.000 & -0.002 & -0.001 \\ 
		& Cov & 0.933 & 0.921 & 0.927 & 0.920 & 0.947 & 0.928 & 0.945 & 0.948 \\ 
		& SE & 0.055 & 0.054 & 0.056 & 0.055 & 0.054 & 0.054 & 0.056 & 0.054 \\ 
		& EmpSE & 0.057 & 0.058 & 0.060 & 0.061 & 0.053 & 0.059 & 0.057 & 0.055 \\[5pt]
		\multirow{4}{*}{HDEE} & Bias & -0.002 & -0.005 & -0.003 & -0.005 & -0.002 & 0.006 & -0.002 & -0.001 \\ 
		& Cov & 0.882 & 0.866 & 0.925 & 0.909 & 0.917 & 0.858 & 0.932 & 0.937 \\ 
		& SE & 0.041 & 0.040 & 0.055 & 0.053 & 0.041 & 0.040 & 0.055 & 0.053 \\ 
		& EmpSE & 0.048 & 0.051 & 0.060 & 0.060 & 0.046 & 0.050 & 0.059 & 0.057 \\[5pt] 
		\multirow{4}{*}{QDIF} & Bias & -0.005 & -0.007 & -0.009 & -0.013 & -0.002 & 0.004 & -0.002 & 0.000 \\ 
		& Cov & 0.940 & 0.915 & 0.928 & 0.893 & 0.948 & 0.925 & 0.957 & 0.945 \\ 
		& SE & 0.049 & 0.048 & 0.048 & 0.047 & 0.048 & 0.047 & 0.047 & 0.046 \\ 
		& EmpSE & 0.051 & 0.053 & 0.054 & 0.053 & 0.047 & 0.051 & 0.048 & 0.046 \\ 
		\hline
		\multicolumn{10}{c}{Unstructured} \\
		\multirow{4}{*}{HDIGEE}   & Bias & -0.001 & -0.003 & -0.009 & -0.009 & -0.003 & 0.001 & -0.002 & 0.000 \\ 
		& Cov & 0.935 & 0.926 & 0.925 & 0.916 & 0.948 & 0.922 & 0.942 & 0.947 \\ 
		& SE & 0.053 & 0.052 & 0.054 & 0.053 & 0.051 & 0.051 & 0.054 & 0.052 \\ 
		& EmpSE & 0.055 & 0.055 & 0.057 & 0.058 & 0.051 & 0.056 & 0.055 & 0.053 \\[5pt]
		\multirow{4}{*}{HDEE} & Bias & -0.002 & -0.005 & -0.003 & -0.005 & -0.002 & 0.006 & -0.002 & -0.001 \\ 
		& Cov & 0.892 & 0.870 & 0.925 & 0.908 & 0.932 & 0.863 & 0.920 & 0.933 \\ 
		& SE & 0.040 & 0.039 & 0.053 & 0.052 & 0.039 & 0.039 & 0.053 & 0.051 \\ 
		& EmpSE & 0.047 & 0.049 & 0.058 & 0.059 & 0.044 & 0.049 & 0.057 & 0.055 \\[5pt] 
		\multirow{4}{*}{QDIF} & Bias & -0.005 & -0.007 & -0.010 & -0.013 & -0.003 & 0.004 & -0.002 & -0.001 \\ 
		& Cov & 0.937 & 0.917 & 0.933 & 0.896 & 0.950 & 0.923 & 0.960 & 0.947 \\ 
		& SE & 0.047 & 0.046 & 0.047 & 0.045 & 0.046 & 0.045 & 0.046 & 0.045 \\ 
		& EmpSE & 0.048 & 0.051 & 0.052 & 0.052 & 0.045 & 0.048 & 0.046 & 0.044 \\ 
		\hline
	\end{tabular}
	\label{tab:sim_linear_n100m5_3se}
\end{table}

\subsection{Cross-validation criteria}

We present additional simulation results under the same setting as in Section 5 of the main text with an equal number of observations $m_i = m = 5$ for each subject. The cross-validation procedure described in the main text chooses the smallest tuning parameter $\lambda^{\prime}$ value  that results in a cross-validated criterion value $\textsc{cv}_l$ within three standard errors above the minimum criterion value (as obtained over the pre-specified grid of trial points for $\lambda^{\prime}$). Another natural choice is the  tuning parameter $\lambda^{\prime}$ value that gives the smallest cross-validated criterion value, denoted by ``HDIGEE (Min)" or ``HDEE (Min)". In this subsection, we show that HDIGEE (Min) and HDEE (Min) have inferior performance compared to the proposed cross-validation, using the first simulation scenario in Section 5 as an example. 

All simulation results are displayed in Tables~\ref{tab:add_sim_linear_n100m5_ar1}-\ref{tab:add_sim_logit_n100m5_un}, corresponding to the four combinations between regression models (identity and logit links) and true correlation structures [AR(1) and unstructured] as described in the main text. It is obvious that using the minimum cross-validation criterion generates larger estimation biases and poorer coverage probabilities. 


\begin{table}[h!]
	\centering
	\caption{Continuous outcome scenario with $(n, m) = (100, 5)$ and autoregressive correlation structure: estimation bias, coverage probability and length of the 95\% confidence interval, and empirical standard error for all signal and three randomly chosen noise variables, averaged over 200 replications}
	\begin{threeparttable}
		\begin{tabular}{llrrrrrrrr}
			\hline
			&  & \multicolumn{4}{c}{Signal $(s_0)$} & \multicolumn{4}{c}{Noise $(3)$} \\
			Method & Item & \multicolumn{2}{c}{$p=100$} & \multicolumn{2}{c}{$p=500$} & \multicolumn{2}{c}{$p=100$} & \multicolumn{2}{c}{$p=500$} \\
			&  & $s_0=3$ & $s_0=10$ & $s_0=3$ & $s_0=10$ & $s_0=3$ & $s_0=10$ & $s_0=3$ & $s_0=10$ \\[5pt] \hline
			\multirow{4}{*}{HDIGEE (Min)}  & Bias & -0.005 & -0.007 & -0.016 & -0.018 & -0.003 & 0.003 & -0.003 & 0.000 \\ 
			& Cov & 0.910 & 0.895 & 0.885 & 0.872 & 0.952 & 0.920 & 0.950 & 0.948 \\ 
			& Len & 0.194 & 0.189 & 0.180 & 0.174 & 0.190 & 0.189 & 0.177 & 0.173 \\ 
			& EmpSE & 0.055 & 0.055 & 0.053 & 0.053 & 0.048 & 0.053 & 0.046 & 0.044 \\[5pt]  
			\multirow{4}{*}{HDEE (Min)}  &   Bias & -0.005 & -0.008 & -0.009 & -0.014 & 0.000 & 0.017 & -0.001 & 0.004 \\ 
			& Cov & 0.740 & 0.683 & 0.627 & 0.610 & 0.772 & 0.680 & 0.725 & 0.700 \\ 
			& Len & 0.099 & 0.095 & 0.085 & 0.082 & 0.096 & 0.094 & 0.083 & 0.081 \\ 
			& EmpSE & 0.044 & 0.047 & 0.045 & 0.046 & 0.039 & 0.042 & 0.040 & 0.038 \\[5pt]  
			\hline
		\end{tabular}
		\begin{tablenotes}
			\item HDIGEE, the proposed high-dimensional inference for generalized estimating equations; HDEE, the modified version of high-dimensional inference for estimating equations  by \citet{neykov2018unified}; Bias, estimation bias; Cov, coverage probability of 95\% confidence interval; Len, length of confidence interval; EmpSE, empirical standard error. 
		\end{tablenotes}
	\end{threeparttable}
	\label{tab:add_sim_linear_n100m5_ar1}
\end{table}


\begin{table}[h!]
	\centering
	\caption{Continuous outcome scenario with $(n, m) = (100, 5)$ and unstructured correlation: estimation bias, coverage probability and length of the 95\% confidence interval, and empirical standard error for all signal and three randomly chosen noise variables, averaged over 200 replications}
	\begin{threeparttable}
		\begin{tabular}{llrrrrrrrr}
			\hline
			&  & \multicolumn{4}{c}{Signal $(s_0)$} & \multicolumn{4}{c}{Noise $(3)$} \\
			Method & Item & \multicolumn{2}{c}{$p=100$} & \multicolumn{2}{c}{$p=500$} & \multicolumn{2}{c}{$p=100$} & \multicolumn{2}{c}{$p=500$} \\
			&  & $s_0=3$ & $s_0=10$ & $s_0=3$ & $s_0=10$ & $s_0=3$ & $s_0=10$ & $s_0=3$ & $s_0=10$ \\[5pt] \hline
			\multirow{4}{*}{HDIGEE (Min)} & Bias & -0.005 & -0.007 & -0.017 & -0.018 & -0.003 & 0.003 & -0.002 & 0.001 \\ 
			& Cov & 0.915 & 0.899 & 0.875 & 0.861 & 0.952 & 0.917 & 0.955 & 0.950 \\ 
			& Len & 0.186 & 0.182 & 0.169 & 0.167 & 0.181 & 0.181 & 0.169 & 0.165 \\ 
			& EmpSE & 0.052 & 0.053 & 0.051 & 0.051 & 0.046 & 0.051 & 0.043 & 0.042 \\[5pt]
			\multirow{4}{*}{HDEE (Min)}   & Bias & -0.005 & -0.008 & -0.009 & -0.014 & 0.000 & 0.016 & -0.001 & 0.004 \\ 
			& Cov & 0.742 & 0.677 & 0.622 & 0.604 & 0.765 & 0.665 & 0.722 & 0.698 \\ 
			& Len & 0.094 & 0.091 & 0.082 & 0.079 & 0.091 & 0.091 & 0.079 & 0.077 \\ 
			& EmpSE & 0.042 & 0.045 & 0.044 & 0.045 & 0.038 & 0.041 & 0.039 & 0.036 \\[5pt]
			\hline
		\end{tabular}
		\begin{tablenotes}
			\item HDIGEE, the proposed high-dimensional inference for generalized estimating equations; HDEE, the modified version of high-dimensional inference for estimating equations  by \citet{neykov2018unified}; Bias, estimation bias; Cov, coverage probability of 95\% confidence interval; Len, length of confidence interval; EmpSE, empirical standard error. 
		\end{tablenotes}
	\end{threeparttable}
	\label{tab:add_sim_linear_n100m5_un}
\end{table}


\begin{table}[h!]
	\centering
	\caption{Binary outcome scenario with $(n, m) = (100, 5)$ and autoregressive correlation: estimation bias, coverage probability and length of the 95\% confidence interval, and empirical standard error for all signal and three randomly chosen noise variables, averaged over 200 replications}
	\begin{threeparttable}
		\begin{tabular}{llrrrrrrrr}
			\hline
			&  & \multicolumn{4}{c}{Signal $(s_0)$} & \multicolumn{4}{c}{Noise $(3)$} \\
			Method & Item & \multicolumn{2}{c}{$p=100$} & \multicolumn{2}{c}{$p=500$} & \multicolumn{2}{c}{$p=100$} & \multicolumn{2}{c}{$p=500$} \\
			&  & $s_0=3$ & $s_0=10$ & $s_0=3$ & $s_0=10$ & $s_0=3$ & $s_0=10$ & $s_0=3$ & $s_0=10$ \\[5pt] \hline
			\multirow{4}{*}{HDIGEE (Min)} & Bias & -0.029 & -0.044 & -0.055 & -0.107 & -0.001 & 0.006 & -0.001 & 0.007 \\ 
			& Cov & 0.873 & 0.845 & 0.843 & 0.700 & 0.943 & 0.922 & 0.960 & 0.955 \\ 
			&  Len & 0.413 & 0.430 & 0.371 & 0.361 & 0.412 & 0.422 & 0.372 & 0.361 \\ 
			&  EmpSE & 0.125 & 0.140 & 0.118 & 0.118 & 0.109 & 0.118 & 0.094 & 0.091 \\[5pt]
			\multirow{4}{*}{HDEE (Min)} &   Bias & -0.040 & -0.073 & -0.044 & -0.106 & -0.002 & 0.036 & 0.001 & 0.017 \\ 
			&  Cov & 0.677 & 0.558 & 0.543 & 0.378 & 0.763 & 0.723 & 0.673 & 0.690 \\ 
			&  Len & 0.207 & 0.215 & 0.175 & 0.172 & 0.206 & 0.214 & 0.173 & 0.172 \\ 
			&  EmpSE & 0.098 & 0.111 & 0.101 & 0.101 & 0.089 & 0.094 & 0.086 & 0.078 \\[5pt] \hline
		\end{tabular}
		\begin{tablenotes}
			\item HDIGEE, the proposed high-dimensional inference for generalized estimating equations; HDEE, the modified version of high-dimensional inference for estimating equations  by \citet{neykov2018unified}; Bias, estimation bias; Cov, coverage probability of 95\% confidence interval; Len, length of confidence interval; EmpSE, empirical standard error. 
		\end{tablenotes}
	\end{threeparttable}
	\label{tab:add_sim_logit_n100m5_ar1}
\end{table}


\begin{table}[h!]
	\centering
	\caption{Binary outcome scenario with $(n, m) = (100, 5)$ and unstructured correlation: estimation bias, coverage probability and length of the 95\% confidence interval, and empirical standard error for all signal and three randomly chosen noise variables, averaged over 200 replications}
	\begin{threeparttable}
		\begin{tabular}{llrrrrrrrr}
			\hline
			&  & \multicolumn{4}{c}{Signal $(s_0)$} & \multicolumn{4}{c}{Noise $(3)$} \\
			Method & Item & \multicolumn{2}{c}{$p=100$} & \multicolumn{2}{c}{$p=500$} & \multicolumn{2}{c}{$p=100$} & \multicolumn{2}{c}{$p=500$} \\
			&  & $s_0=3$ & $s_0=10$ & $s_0=3$ & $s_0=10$ & $s_0=3$ & $s_0=10$ & $s_0=3$ & $s_0=10$ \\[5pt] \hline
			\multirow{4}{*}{HDIGEE (Min)} & Bias & -0.028 & -0.043 & -0.053 & -0.104 & -0.005 & 0.007 & 0.002 & 0.004 \\ 
			&  Cov & 0.868 & 0.837 & 0.840 & 0.695 & 0.935 & 0.925 & 0.958 & 0.947 \\ 
			&  Len & 0.406 & 0.428 & 0.369 & 0.357 & 0.406 & 0.421 & 0.366 & 0.360 \\ 
			&  EmpSE & 0.125 & 0.140 & 0.114 & 0.120 & 0.109 & 0.119 & 0.090 & 0.091 \\[5pt]
			\multirow{4}{*}{HDEE (Min)} &   Bias & -0.036 & -0.072 & -0.042 & -0.104 & -0.003 & 0.035 & 0.001 & 0.014 \\ 
			&  Cov & 0.642 & 0.560 & 0.577 & 0.387 & 0.753 & 0.713 & 0.697 & 0.692 \\ 
			&  Len & 0.206 & 0.214 & 0.175 & 0.170 & 0.203 & 0.212 & 0.171 & 0.169 \\ 
			&  EmpSE & 0.099 & 0.112 & 0.097 & 0.103 & 0.090 & 0.094 & 0.084 & 0.078 \\[5pt] \hline
		\end{tabular}
		\begin{tablenotes}
			\item HDIGEE, the proposed high-dimensional inference for generalized estimating equations; HDEE, the modified version of high-dimensional inference for estimating equations  by \citet{neykov2018unified}; Bias, estimation bias; Cov, coverage probability of 95\% confidence interval; Len, length of confidence interval; EmpSE, empirical standard error. 
		\end{tablenotes}
	\end{threeparttable}
	\label{tab:add_sim_logit_n100m5_un}
\end{table}

\subsection{The second simulation scenario mimicking the longitudinal proteomic data for COVID-19 severity}

For the second scenario of Section 5.1, we have presented the simulation results for binary outcomes with the true model size $s_0=3$. Table~\ref{fig:res_sim_covidprot_s10_3se} shows the simulation results for a denser model with $s_0 = 10$.

\begin{figure}[ht!]
	\centering
	\includegraphics[width=0.9\textwidth]{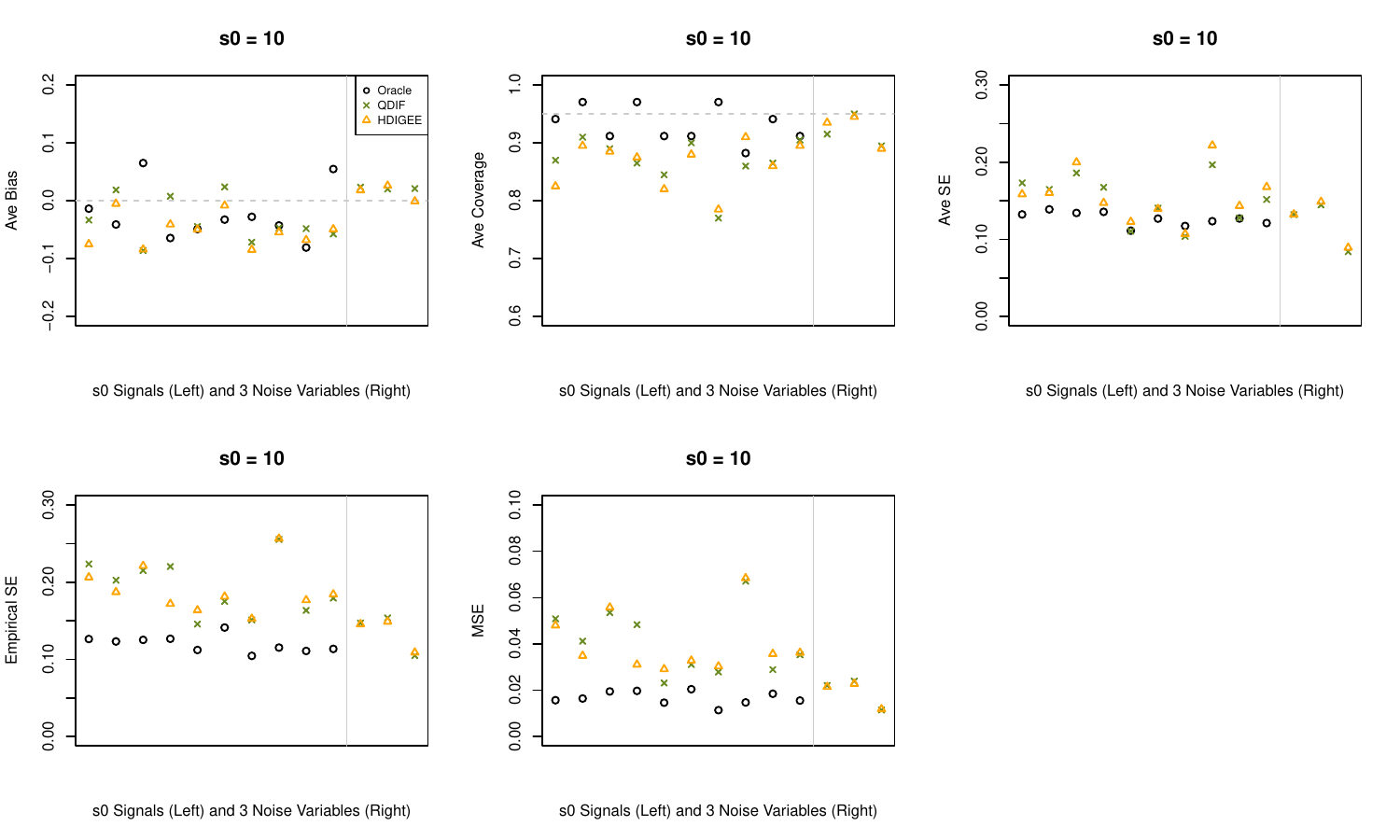}
	\caption{The second scenario mimicking  the real longitudinal COVID-19 proteomic data, with 659 measurements from $n=305$ subjects and $s_0=10$: estimation bias (Bias), coverage probability (Cov), model-based standard error (SE), empirical standard error (EmpSE) and mean squared error (MSE) for $s_0$ signals and 3 randomly chosen noise variables on the left and right of the vertical lines, respectively, average over 200 simulations}
	\label{fig:res_sim_covidprot_s10_3se}
\end{figure}

\subsection{The third simulation scenario mimicking the riboflavin production data analyzed by \citet{buhlmann2014high}}

In the third scenario of Section 5.1, we carry out additional simulations with continuous outcomes that mimic the scale and complexity of the riboflavin production data analyzed by \citet{buhlmann2014high}. The number of observations $m_i$ varies among different subjects from 2 to 6, totaling $N = 111$ observations from $n = 28$ strains of \textit{B. subtilis}. After filtering out the genes whose expression levels have coefficient of variation less than 0.1, we focus on the most variable $p = 267$ genes, reduced from the original 4,088 genes.
We set the true $\bbeta^0$ with $s_0 = 3$ or $10$ randomly chosen nonzero coefficients, out of $p=267$, with values realized from $\mathrm{Uniform}(0.5, 1.5)$. With these coefficients, we then simulate $N= 111$ continuous observations from $n=28$ groups. Since each cluster in the real data has 2 to 6 measurements, we take the upper left corner submatrix of $\bR^0$ below with corresponding dimensions to be the true correlation matrix for each cluster: 
\begin{equation*} \footnotesize
	\bR^0 = \left(  
	\begin{array}{cccccc}
		1.00 & 0.50 & 0.45 & 0.40 & 0.35 & 0.30  \\ 
		0.50 & 1.00 & 0.50 & 0.45 & 0.40 & 0.35  \\ 
		0.45 & 0.50 & 1.00 & 0.50 & 0.45 & 0.40  \\ 
		0.40 & 0.45 & 0.50 & 1.00 & 0.50 & 0.45  \\ 
		0.35 & 0.40 & 0.45 & 0.50 & 1.00 & 0.50  \\ 
		0.30 & 0.35 & 0.40 & 0.45 & 0.50 & 1.00 \\ 
	\end{array}  \right).
\end{equation*}
The estimate $\hatR$ is obtained using AR(1) working correlation.

For this simulation scenario, we have presented the results with $s_0 = 3$ in the main text. Figure~\ref{fig:sim_ribo_s10} shows the simulation results with $s_0=10$.

\begin{figure}[ht]
	\centering
	\includegraphics[width=0.9\textwidth]{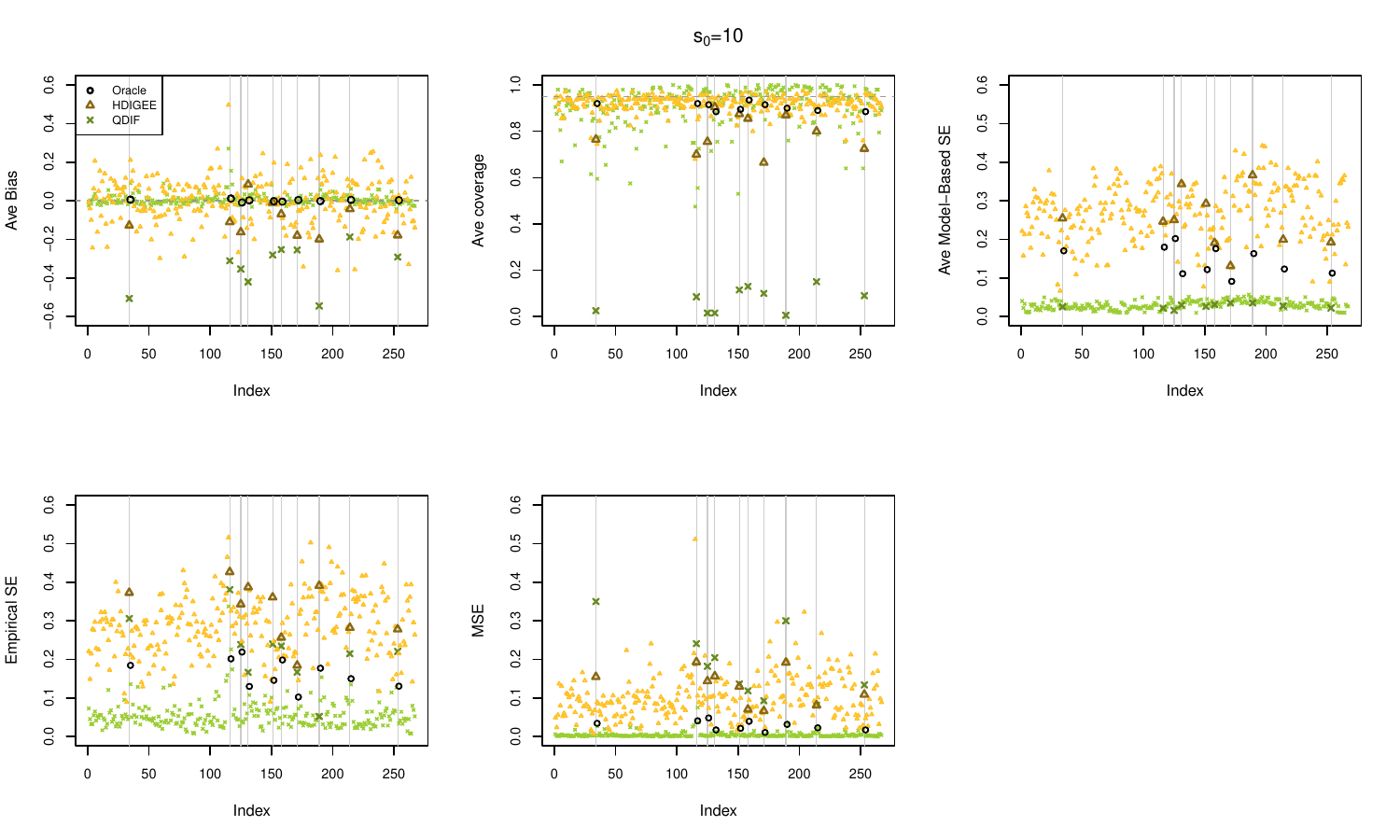}
	\caption{Regression coefficient estimates, empirical coverage probabilities and lengths for 95\% confidence intervals averaged over 200 replications, in the simulation setting derived from the real riboflavin data with $s_0=10$. Points with enlarged symbols and darker shades of colors represent the estimates for truly nonzero coefficients. 
	}
	\label{fig:sim_ribo_s10}
\end{figure}


\section{Additional results from the longitudinal proteomic analysis of COVID-19 patient plasma samples}
\label{supp-supp:sec:add_covid_res}

A full list of gene names, Olink IDs, UniProt IDs, estimated coefficients by lasso and the proposed method, model-based standard errors by the proposed method, and adjusted p-values for the significant proteins is delayed below (Table \ref{tab:covid_prot_add}).

\begin{table}[ht]
	\centering
	\caption{Additional results from the longitudinal proteomic analysis of COVID-19 patient plasma samples}
	\label{tab:covid_prot_add}
	\small
	\begin{threeparttable}
		\begin{tabular}{rllllrrrr}
			\hline
			& Assay & OlinkID & UniProt & Panel & Lasso & Beta & SE & Q-value \\ 
			\hline
			1 & FLT3LG & OID20661 & P49771 & INFLAMMATION & 0.355 & 0.433 & 0.077 & $1.28 \times 10^{-5}$ \\ 
			2 & MUC13 & OID20866 & Q9H3R2 & NEUROLOGY & 0.266 & 0.556 & 0.114 & $3.81\times 10^{-4}$ \\ 
			3 & CALB1 & OID21306 & P05937 & ONCOLOGY & 0.326 & 0.670 & 0.151 & $2.39 \times 10^{-3}$ \\ 
			4 & C19orf12 & OID20804 & Q9NSK7 & NEUROLOGY & -0.175 & -0.404 & 0.095 & $4.30 \times 10^{-3}$ \\ 
			5 & IL15RA & OID20498 & Q13261 & INFLAMMATION & 0.297 & 0.461 & 0.112 & $5.87 \times 10^{-3}$ \\ 
			6 & MEP1B & OID20168 & Q16820 & CARDIOMETABOLIC & 0.271 & 0.500 & 0.129 & $1.37 \times 10^{-2}$ \\ 
			7 & HMOX1 & OID20217 & P09601 & CARDIOMETABOLIC & -0.194 & -0.573 & 0.154 & $1.70 \times 10^{-2}$ \\ 
			8 & TCL1B & OID20058 & O95988 & CARDIOMETABOLIC & -0.179 & -0.259 & 0.070 & $1.70 \times 10^{-2}$ \\ 
			9 & CD22 & OID20637 & P20273 & INFLAMMATION & -0.155 & -0.542 & 0.143 & $1.70 \times 10^{-2}$ \\ 
			10 & CDHR5 & OID20193 & Q9HBB8 & CARDIOMETABOLIC & -0.230 & -0.463 & 0.127 & $2.13 \times 10^{-2}$ \\ 
			11 & SPINK6 & OID21450 & Q6UWN8 & ONCOLOGY & -0.394 & -0.368 & 0.102 & $2.29 \times 10^{-2}$ \\ 
			12 & SIRPA & OID20304 & P78324 & CARDIOMETABOLIC & -0.194 & -0.457 & 0.136 & $4.64 \times 10^{-2}$ \\ 
			13 & WFIKKN1 & OID20939 & Q96NZ8 & NEUROLOGY & -0.179 & -0.308 & 0.092 & $4.64 \times 10^{-2}$ \\ 
			14 & CR2 & OID20393 & P20023 & CARDIOMETABOLIC & -0.194 & -0.433 & 0.130 & $4.77 \times 10^{-2}$ \\ 
			\hline
		\end{tabular}
		\begin{tablenotes}
			\item \footnotesize{Lasso: estimated coefficients by lasso; Beta: estimated coefficients by the proposed method HDIGEE; SE: model-based standard error by the proposed method HDIGEE.}
		\end{tablenotes}
	\end{threeparttable}
\end{table}


\section{Technical proofs}
\label{supp-supp:sec:proof}

For the convenience of the readers, here we repeat the assumptions used in the main text.

\begin{assumption} \label{assump:bound_covs}
	The covariates $\bX_i$, as well as $\bX_i \bbeta^0$, are almost surely uniformly bounded; that is, there exists constants $K, K^{\prime} > 0$ such that $\| \bX_i \|_{\infty} \le K$, and  $\| \bX_i \bbeta^0 \|_{\infty} \le K^{\prime}$, $i = 1, \ldots, n$. Also, the standardized error terms, $\varepsilon_{ij} (\bbeta^0) = G_{ij}^{-1/2}(\bbeta^0) \left\{ Y_{ij} - \mu_{ij}(\bbeta^0)\right\}$, are sub-expotential.
\end{assumption}

\begin{assumption}  \label{assump:bdd_xomega}
	$\| \bX_i \bomega^0 \|_{\infty}$ is uniformly bounded for all $i = 1, \ldots, n$, almost surely.
\end{assumption}

\begin{assumption} \label{assump:var_fun}
	The mean and variance functions, $\mu(\zeta)$ and $v(\mu)$, are differentiable, and their derivatives are denoted by $\dot{\mu}(\zeta) = d \mu / d \zeta$ and $\dot{v}(\mu) = d v / d \mu$, respectively. There exist constants $\delta^{\prime}, \delta^{\prime\prime} > 0$ and $K_1,~ K_2, ~ c_{L} > 0$ such that 
	\[
	\begin{array}{c}
		\max_{\zeta_0 \in \{\bx_{ij}^\T \bbeta^0 \} } \sup_{ \{ \zeta: |\zeta - \zeta_0| \le \delta^{\prime}\} } \max \left\{  | \mu(\zeta) |, | \dot{\mu}(\zeta) |, 1/ | {\mu}(\zeta) | \right\} \le K_1,\\
		\max_{\mu_0 \in \{\mu(\bx_{ij}^\T \bbeta^0) \} } \sup_{ \{ \mu: |\mu - \mu_0| \le \delta^{\prime\prime} \} } \max \left\{ v(\mu), | \dot{v}(\mu) |, 1/ |{v}(\mu)| \right\} \le K_2, \\
		\max_{\mu_0 \in \{\mu(\bx_{ij}^\T \bbeta^0) \} } \sup_{  \{ ( \mu, \bar{\mu} ): |\mu - \mu_0| \le \delta^{\prime\prime}, |\bar{\mu} - \mu_0| \le \delta^{\prime\prime} \} } | v(\mu) - v(\bar{\mu}) | \le c_{L} | \mu - \bar{\mu} |.
	\end{array}
	\]
\end{assumption}

\begin{assumption} \label{assump:bdd_eigen_covs}
	There exists a constant $c > 0$ such that the minimum eigenvalue of the covariance matrix $\lambda_{\min} \left( \E(\bX_1^\T \bX_1) \right) \ge c$.
\end{assumption}

\begin{assumption}
	\label{assump:cor_mat}
	Let $r_n = \| \hatR - \barR \|$ denote the rate of convergence from the estimated correlation matrix $\hatR$ to its theoretical limit $\barR$. Then  
	$
	r_n = \oP \left[ \left\{ \max \left(s_0, s^{\prime}\| \baromega^0 \|_1 \right) \left\{ \log(p) \right\}^{1/2} \| \baromega^0 \|_1 \right\}^{-1} \right].
	$
	Both $\barR$ and $\bR_0$, the true correlation matrix for $\bY_i$,  have their eigenvalues bounded from above and away from zero.
\end{assumption}

\subsection{Proofs of technical lemmas}

Throughout the proofs, we will use $C$, $C^{\prime}$ and $C^{\prime\prime}$ for generic absolute constants, and their specific meanings vary from case to case.


We first provide a useful lemma for the representation of the derivative of $\bPsi(\bbeta)$, denoted as $\partial \bPsi(\bbeta) / \partial \bbeta^\T$. This is essentially the same as Lemma 3.2 of \citet{wang2011gee}, and thus we will omit the derivation.

\begin{lemma} \label{lemma:deriv_psi}
	The derivative of $\bPsi(\bbeta)$ can be rewritten as
	\[
	\displaystyle \frac{\partial \bPsi(\bbeta)}{\partial \bbeta^\T} = - \bS(\bbeta) + \bE_n(\bbeta) + \bF_n(\bbeta),
	\]
	where
	\[
	\begin{array}{rcl}
		\bS(\bbeta) & = & \displaystyle \frac{1}{n} \sum_{i=1}^n \bX_i^\T \bG_i^{1/2}(\bbeta) \hatR^{-1} \bG_i^{1/2}(\bbeta) \bX_i,  \\
		\bE_n(\bbeta) & = & \displaystyle - \frac{1}{2n} \sum_{i=1}^n \sum_{j=1}^m \dot{v}(\mu_{ij}(\bbeta)) G_{ij}^{-1/2}(\bbeta) (Y_{ij} - \mu_{ij}(\bbeta)) \bX_i^\T \bG_i^{1/2}(\bbeta) \hatR^{-1} \bare_j \bare_j^\T\bX_i, \\
		\bF_n(\bbeta) & = & \displaystyle \frac{1}{2n} \sum_{i=1}^n \sum_{j=1}^m \dot{v}(\mu_{ij}(\bbeta)) G_{ij}^{1/2}(\bbeta) \bx_{ij} \bx_{ij}^\T \bare_j^\T \hatR^{-1} \bG_i^{-1/2} (\bbeta) (\bY_i - \bmu_i(\bbeta)).
	\end{array}
	\]
	Here $\bare_j$ is a $m$-dimensional unit vector with the $j$th element being one and the others being zero. 
	
\end{lemma}


Lemma \ref{lemma:feasible_sol_omega0} studies the convergence rate of $	\| \bS(\hbeta) \baromega^0 - \bxi \|_{\infty}$. This result determines the desirable rate for the tuning parameter $\lambda^{\prime}$; with a sufficiently large $\lambda^{\prime}$, $\baromega^0$  is admissible  to the constraint in the constrained $\ell_1$ minimization problem (4).

\begin{lemma} \label{lemma:feasible_sol_omega0}
	Under Assumptions \ref{assump:bound_covs}--\ref{assump:cor_mat}, we have
	\[
	\| \bS(\hbeta) \baromega^0 - \bxi \|_{\infty} = \OP \left\{ \| \baromega^0 \|_1 \left( s_0 \lambda + r_n \right) \right\}.
	\]
	Thus, it suffices to take $\lambda^{\prime} \asymp \| \baromega^0 \|_1 ( s_0 \lambda + r_n \| )$.
\end{lemma}


\begin{proof}[Proof of Lemma \ref{lemma:feasible_sol_omega0}]
	Since $\bS(\hbeta) \baromega^0 - \bxi = \{ \bS(\hbeta) - \bS^0 \} (\bS^0)^{-1} \bxi = \{ \bS(\hbeta) - \bS^0 \}  \baromega^0$,
	\begin{equation*}
		\begin{array}{rl}
			\| \bS(\hbeta) \baromega^0 - \bxi \|_{\infty} & \le \| \bS(\hbeta) - \bS^0 \|_{\infty} \| \baromega^0 \|_1.
		\end{array}
	\end{equation*}
	We then focus on $$\| \bS(\hbeta) - \bS^0 \|_{\infty} \le \| \bS(\hbeta) - \barS(\hbeta) \|_{\infty} + \| \barS(\hbeta) - \barS(\bbeta^0) \|_{\infty} + \| \barS(\bbeta^0) - \bS^0 \|_{\infty}.$$
	
	For the first term in the upper bound of $\| \bS(\hbeta) - \bS^0 \|_{\infty}$, 
	\[
	\begin{array}{rl}
		\| \bS(\hbeta) - \barS(\hbeta) \|_{\infty} & = \max_{1 \le j, k \le p} \left| \displaystyle \frac{1}{n} \sum_{i=1}^n \bx_{i[j]}^\T \bG_i^{1/2}(\hbeta) [\hatR^{-1} - \barR^{-1}] \bG_i^{1/2} (\hbeta) \bx_{i[k]} \right|  \\
		& \le \max_{1 \le j, k \le p} \displaystyle \frac{1}{n} \sum_{i=1}^n \| \bG_i^{1/2}(\hbeta) \bx_{i[j]} \|_2 \| \hatR^{-1} - \barR^{-1} \| \| \bG_i^{1/2}(\hbeta) \bx_{i[k]} \|_2 \\
		& \le \max_{1 \le j, k \le p} \displaystyle \frac{1}{n} \sum_{i=1}^n \| \bG_i(\hbeta) \|  \| \hatR^{-1} - \barR^{-1} \| \| \bx_{i[k]} \|_2 \| \bx_{i[j]} \|_2 \\
		& = \OP(1)  \| \hatR^{-1} - \barR^{-1} \|,
	\end{array}
	\]
	where the last equality holds due to Assumptions \ref{assump:bound_covs} and \ref{assump:var_fun},   that $\bx_{i[j]}$ and $\bx_{i[k]} \in \mR^m$ are fixed dimensional vectors, and that Lemma 1 guarantees $\hbeta$ converges to $\bbeta^0$. Since $\| \hatR^{-1} - \barR^{-1} \| = \|  \hatR^{-1} ( \barR - \hatR )  \barR^{-1} \|  \le \| \hatR^{-1} \| \times \| \hatR - \barR \| \times \| \barR^{-1} \|$, $\| \hatR^{-1} - \barR^{-1} \| \le \OP(r_n)$ and $\| \bS(\hbeta) - \barS(\hbeta) \|_{\infty} = \OP(r_n)$.
	
	For the second term in the upper bound of $\| \bS(\hbeta) - \bS^0 \|_{\infty}$, 
	\begin{equation*}
		\begin{array}{rl}
			\barS(\hbeta) - \barS(\bbeta^0)  & = \displaystyle \frac{1}{n} \sum_{i=1}^n \bX_i^\T \bG_i^{1/2}(\hbeta) \barR^{-1} \bG_i^{1/2}(\hbeta) \bX_i - \displaystyle \frac{1}{n} \sum_{i=1}^n \bX_i^\T \bG_i^{1/2}(\bbeta^0) \barR^{-1} \bG_i^{1/2}(\bbeta^0) \bX_i   \\
			&  = \left\{ \displaystyle \frac{1}{n} \sum_{i=1}^n \bX_i^\T \bG_i^{1/2}(\hbeta) \barR^{-1} \bG_i^{1/2}(\hbeta) \bX_i - \displaystyle \frac{1}{n} \sum_{i=1}^n \bX_i^\T \bG_i^{1/2}(\hbeta) \barR^{-1} \bG_i^{1/2}(\bbeta^0) \bX_i \right\}  \\
			& ~ + \left\{ \displaystyle \frac{1}{n} \sum_{i=1}^n \bX_i^\T \bG_i^{1/2}(\hbeta) \barR^{-1} \bG_i^{1/2}(\bbeta^0) \bX_i - \displaystyle \frac{1}{n} \sum_{i=1}^n \bX_i^\T \bG_i^{1/2}(\bbeta^0) \barR^{-1} \bG_i^{1/2}(\bbeta^0) \bX_i \right\} \\
			& = I_{n1} + I_{n2}.
		\end{array}
	\end{equation*}
	Then,
	\begin{equation*}
		\begin{array}{rl}
			\| I_{n1} \|_{\infty} & = \| n^{-1} \sum_{i=1}^n \bX_i^\T \bG_i^{1/2}(\hbeta) \barR^{-1} [ \bG_i(\hbeta) - \bG_i^{1/2}(\bbeta^0) ] \bX_i \|_{\infty}  \\
			& = \max_{1 \le j, k \le p} \| n^{-1} \sum_{i=1}^n \bx_{i[j]}^\T \bG_i^{1/2}(\hbeta) \barR^{-1} [ \bG_i(\hbeta) - \bG_i^{1/2}(\bbeta^0) ] \bx_{i[k]} \|_{\infty} \\
			& \le \max_{1 \le j, k \le p} n^{-1} \sum_{i=1}^n \| \bG_i^{1/2}(\hbeta) \bx_{i[j]} \|_1 \| \barR^{-1} \|_{\infty} \|   [ \bG_i(\hbeta) - \bG_i^{1/2}(\bbeta^0) ] \bx_{i[k]} \|_1.
		\end{array}
	\end{equation*}
	Due to Assumptions \ref{assump:bound_covs} and \ref{assump:var_fun}, $\| \bG_i^{1/2}(\hbeta) \bx_{i[j]} \|_1 = \OP(mK)$, and $\| \{ \bG_i^{1/2}(\hbeta) - \bG_i^{1/2}(\bbeta^0) \} \bx_{i[k]} \|_1 = \OP(m K^2 \| \hbeta - \bbeta^0 \|_1)$. Since $\| \barR^{-1} \|_{\infty} = {O}(1)$, then by Lemma 1, $\| I_{n1} \|_{\infty} = \OP(s_0 \lambda)$. Similarly, we can show that $\| I_{n2} \|_{\infty} = \OP(s_0 \lambda)$. Hence, the second term $\| \barS(\hbeta) - \barS(\bbeta^0) \|_{\infty} = \OP(s_0 \lambda)$.
	
	We rewrite the third term in the upper bound of $\| \bS(\hbeta) - \bS^0 \|_{\infty}$ as
	\begin{equation*}
		\barS(\bbeta^0) - \bS^0  = \displaystyle \frac{1}{n} \sum_{i=1}^n \left[ \bX_i^\T \bG_i^{1/2}(\bbeta^0) \barR^{-1} \bG_i^{1/2}(\bbeta^0) \bX_i - \E \left\{ \bX_i^\T \bG_i^{1/2}(\bbeta^0) \barR^{-1} \bG_i^{1/2}(\bbeta^0) \bX_i \right\} \right].
	\end{equation*}
	For the $(j,k)$th element in the above matrix, since 
	\begin{align*}
		| \bx_{i[j]}^\T \bG_i^{1/2}(\bbeta^0) \barR^{-1} \bG_i^{1/2}(\bbeta^0) \bx_{i[k]} | & \le \| \bG_i^{1/2}(\bbeta^0) \bx_{i[j]} \|_1 \| \bG_i^{1/2}(\bbeta^0) \bx_{i[k]} \|_1 \| \barR^{-1} \|_{\infty} \\
		& \le (m K_1^{1/2} K)^2 \| \barR^{-1} \|_{\infty},
	\end{align*}
	we have 
	\begin{equation*}
		\begin{array}{c}
			| \bx_{i[j]}^\T \bG_i^{1/2}(\bbeta^0) \barR^{-1} \bG_i^{1/2}(\bbeta^0) \bx_{i[k]} - \E\{ \bx_{i[j]}^\T \bG_i^{1/2}(\bbeta^0) \barR^{-1} \bG_i^{1/2}(\bbeta^0) \bx_{i[k]} \} |   \le 2 m^2 K_1 K^2 \| \barR^{-1} \|_{\infty},  \\
			\E[ \bx_{i[j]}^\T \bG_i^{1/2}(\bbeta^0) \barR^{-1} \bG_i^{1/2}(\bbeta^0) \bx_{i[k]} - \E\{ \bx_{i[j]}^\T \bG_i^{1/2}(\bbeta^0) \barR^{-1} \bG_i^{1/2}(\bbeta^0) \bx_{i[k]} \} ]   = 0.
		\end{array}
	\end{equation*}
	By Hoeffding's inequality (see, e.g., Lemma 14.11 in \citealt{buhlmann2011statistics}), for any $t > 0$, 
	\[
	\begin{array}{rl}
		\PP \left( \displaystyle \left|  [\barS(\bbeta^0) - \bS^0]_{jk} \right| > t \right) &  \le 2 \exp[ - n t^2/ \{ 2 (2 m^2 K_1 K^2 \| \barR^{-1} \|_{\infty})^2 \} ] \\
		& = 2 \exp\{ - n t^2/ ( 8 m^4 K_1^2 K^4 \| \barR^{-1} \|_{\infty}^2 ) \}.
	\end{array}
	\]
	Then, $$\PP \left( \| \barS(\bbeta^0) - \bS^0 \|_{\infty} > t \right) \le \sum_{j,k} \PP \left( \displaystyle \left|  [\barS(\bbeta^0) - \bS^0]_{jk} \right| > t \right) \le 2 p^2 \exp\{ - n t^2/ ( 8 m^4 K_1^2 K^4 \| \barR^{-1} \|_{\infty}^2 ) \}.$$ 
	Since $m$ and $\| \barR^{-1} \|_{\infty}$ are bounded by absolute constants, 
	\[
	\| \barS(\bbeta^0) - \bS^0 \|_{\infty} = \OP[ \{ \log(p) / n \}^{1/2} ].
	\]
	
	Combining the three terms in $\bS(\hbeta) - \bS^0$ and taking $\lambda \asymp \{ \log(p)/n \}^{1/2}$, we have
	\begin{equation} \label{eq:hatS_to_S0_max}
		\| \bS(\hbeta) - \bS^0 \|_{\infty} = \OP(s_0 \lambda + r_n).
	\end{equation}
	Finally,
	$
	\| \bS(\hbeta) \baromega^0 - \bxi \|_{\infty} = \OP \{  \| \baromega^0 \|_1 ( s_0 \lambda + r_n ) \}.
	$
\end{proof}


Before studying the convergence rate of the projection direction $\widehat{\bomega}$ in Lemma \ref{lemma:hat_omega_rate}, we introduce the following lemma that guarantees the compatibility condition holds for $\bS(\hbeta)$ with large probability.

\begin{lemma} \label{lemma:compat_omega}
	Let 
	\[
	\kappa_D (s^{\prime}) = \inf \left\{ 
	\displaystyle \frac{(s^{\prime})^{1/2} \{ \bnu ^\T \bS(\hbeta) \bnu \}^{1/2}}{\| \bnu_{\calS^{\prime}} \|_1}: ~ \bnu \in \mR^p, \bnu \ne 0, \| \bnu_{(\calS^{\prime})^c} \|_1 \le L \| \bnu_{\calS^{\prime}} \|_1 \mathrm{~ for ~ some ~} L > 0
	\right\}.
	\]
	Assume $s^{\prime}  ( s_0 \lambda + r_n ) = o(1)$. Under Assumptions \ref{assump:bound_covs}--\ref{assump:cor_mat}, $\kappa_D(s^{\prime}) \ge \kappa$ for some constant $\kappa > 0$ with probability going to one.
\end{lemma}


\begin{proof}[Proof of Lemma \ref{lemma:compat_omega}]
	
	Since $\| \bnu_{\mathcal{S}^{\prime}} \|_1 \le (s^{\prime})^{1/2} \| \bnu_{\mathcal{S}^{\prime}} \|_2 \le (s^{\prime})^{1/2} \| \bnu \|_2$,
	\[
	\kappa_D^2(s^{\prime}) \ge \inf \displaystyle \left\{ \frac{\bnu ^\T \bS(\hbeta) \bnu}{\| \bnu \|_2^2}: ~ \bnu \in \mR^p, \bnu \ne 0, \| \bnu_{(\calS^{\prime})^c} \|_1 \le L \| \bnu_{\calS^{\prime}} \|_1 \mathrm{~ for ~ some ~} L > 0 \right\},
	\]
	and
	\[
	\begin{array}{rcl}
		| \bnu ^\T \{ \bS(\hbeta) - \bS^0 \} \bnu | & \le & \| \bnu \|_1^2 \| \bS(\hbeta) - \bS^0 \|_{\infty}  \\
		& \le & (L+1)^2 \| \bnu_{\mathcal{S}^{\prime}} \|_1^2 \| \bS(\hbeta) - \bS^0 \|_{\infty} \\
		& \le & (L+1)^2 s^{\prime} \| \bnu \|_2^2 \| \bS(\hbeta) - \bS^0 \|_{\infty}.
	\end{array}
	\]
	Because we have shown $\| \bS(\hbeta) - \bS^0 \|_{\infty} = \OP(s_0 \lambda + r_n)$ in \eqref{eq:hatS_to_S0_max}, then 
	$$| \bnu ^\T \{ \bS(\hbeta) - \bS^0 \} \bnu | / \| \bnu \|_2^2 \le (L+1)^2 s^{\prime} \| \bS(\hbeta) - \bS^0 \|_{\infty} = \OP\{ (L+1)^2 s^{\prime} (s_0 \lambda +r_n) \} = \oP(1).$$ 
	Therefore, 
	\[
	\displaystyle \frac{\bnu ^\T \bS(\hbeta) \bnu}{\| \bnu \|_2^2} = \frac{\bnu ^\T \bS^0 \bnu}{\| \bnu \|_2^2} + \frac{\bnu ^\T \{\bS(\hbeta) - \bS^0 \} \bnu}{\| \bnu \|_2^2} = \frac{\bnu ^\T \bS^0 \bnu}{\| \bnu \|_2^2} + \oP(1), 
	\]
	and with probability going to one, 
	\[
	\displaystyle \frac{\bnu ^\T \bS(\hbeta) \bnu}{\| \bnu \|_2^2} \ge \frac{1}{2} \frac{\bnu ^\T \bS^0 \bnu}{\| \bnu \|_2^2} \ge \frac{1}{2} \lambda_{\min}(\bS^0).
	\]
	It suffices to take $\kappa = \{ \lambda_{\min}(\bS^0) / 2 \}^{1/2}$, and based on Assumptions \ref{assump:bound_covs}, \ref{assump:var_fun}, \ref{assump:bdd_eigen_covs} and \ref{assump:cor_mat}, it is easy to see that the smallest eigenvalue of $\bS^0$ is bounded away from zero.
\end{proof}


Next, we study the convergence of the projection direction $\widehat{\bomega}$ for a general target $\theta^0 = \bxi^\T \bbeta^0$, where $\| \bxi \|_2 = 1$ without loss of generality. In reality, we only need $\| \bxi \|_2$ to be bounded. Recall that we solve the constrained $\ell_1$ minimization problem
\[
\tilde{\bomega} = \argmin_{\bomega \in \mR^p} \{ \| \bomega \|_1: ~ \| \bS(\hbeta) \bomega - \bxi \|_{\infty} \le \lambda^{\prime} \},
\]
and then rescale $\tilde{\bomega}$ to obtain $\hat{\bomega} = \tilde{\bomega} / \{ \tilde{\bomega}^\T \bS(\hbeta) \tilde{\bomega} \}$.

\begin{lemma} \label{lemma:hat_omega_rate}
	Let $\tilde{\bomega} = \argmin \{ \| \bomega \|_1: \| \bS(\hbeta) \bomega - \bxi \|_{\infty} \le \lambda^{\prime} \}$. Suppose Assumptions \ref{assump:bound_covs}--\ref{assump:cor_mat} hold and $s^{\prime} \lambda^{\prime} \asymp s^{\prime} \| \baromega^0 \|_1 (s_0 \lambda + r_n) = o(1)$.  Then 
	\[
	\| \tilde{\bomega} - \baromega^0 \|_1 = \OP(s^{\prime} \lambda^{\prime}).
	\]
	Since $\{ \tilde{\bomega}^\T \bS(\hbeta) \tilde{\bomega} \}^{-1} = \OP(1)$, it is valid to define $\hat{\bomega} = \tilde{\bomega} / \{ \tilde{\bomega}^\T \bS(\hbeta) \tilde{\bomega} \}$ with probability going to one, and
	\[
	\| \hat{\bomega} - \bomega^0 \|_1 = \OP(s^{\prime} \lambda^{\prime} \| \baromega^0 \|_1 ).
	\]
\end{lemma}

\begin{proof}[Proof of Lemma \ref{lemma:hat_omega_rate}]
	
	We first study the convergence rate of the non-normalized projection direction $\tilde{\bomega}$.
	By the construction of $\tilde{\bomega}$,  on the event $\mathcal{A} = \{ \| \bS(\hbeta) \baromega^0 - \bxi \|_{\infty} \le \lambda^{\prime} \}$,
	\begin{equation} \label{eq:omega_l1_ineq1}
		\| \tilde{\bomega} \|_1 \le  \| \baromega^0 \|_1.
	\end{equation}
	Denote the difference $\hat\bDelta = \tilde{\bomega} - \baromega^0$. By triangle inequality, we have $|\bar{\omega}_j^0| - |\tilde{\omega}_j| \le |\widehat{\Delta}_j| = |\tilde{\omega}_j - \bar{\omega}_j^0|$, and then
	\begin{equation} \label{eq:omega_l1_ineq2}
		\| \baromega^0 \|_1 = \sum_{j \in \calS^{\prime}} | \bar{\omega}_j^0 | \ge \| \tilde{\bomega} \|_1 = \sum_{j \in \calS^{\prime}} |\tilde{\omega}_j| + \sum_{j \in (\calS^{\prime})^c} | \tilde{\omega}_j |
		\ge \left( \sum_{j \in \calS^{\prime}} | \bar{\omega}_j^0 | -  \sum_{j \in \calS^{\prime}} | \widehat{\Delta}_j | \right) + \sum_{j \in (\calS^{\prime})^c} | \tilde{\omega}_j |,
	\end{equation}
	where the first inequality holds due to \eqref{eq:omega_l1_ineq1}. Then, \eqref{eq:omega_l1_ineq2} results in the following inequality:
	\[
	\sum_{j \in \calS^{\prime}} | \bar{\omega}_j^0 | \ge \sum_{j \in \calS^{\prime}} | \bar{\omega}_j^0 | -  \sum_{j \in \calS^{\prime}} | \widehat{\Delta}_j | + \sum_{j \in (\calS^{\prime})^c} | \tilde{\omega}_j |,
	\]
	which implies that
	\begin{equation} \label{eq:L_eq_1}
		\| \hat{\bDelta}_{\mathcal{S}^{\prime}} \|_1 =   \sum_{j \in \calS^{\prime}} | \widehat{\Delta}_j | \ge  \sum_{j \in (\calS^{\prime})^c} | \tilde{\omega}_j |  = \sum_{j \in (\calS^{\prime})^c} | \tilde{\omega}_j - \bar{\omega}_j^0 |  = \| \widehat{\bm{\Delta}}_{(\calS^{\prime})^c} \|_1.
	\end{equation}
	
	Next, we will invoke the compatility condition in Lemma \ref{lemma:compat_omega} to show the rate of $\| \hat{\bDelta} \|_1$. Consider $\hat{\bDelta}^\T \bS(\hbeta) \hat{\bDelta}$ that appears in the compatibility condition of Lemma \ref{lemma:compat_omega} with $\bnu = \hat{\bDelta}$. By definition,
	\begin{equation}
		\begin{array}{rl}
			\hat{\bDelta}^\T \bS(\hbeta) \hat{\bDelta}    & =  \hat{\bDelta}^\T \bS(\hbeta) (\tilde{\bomega} - \baromega^0) \\
			& = \hat{\bDelta}^\T ( \bS(\hbeta) \tilde{\bomega} - \bxi + \bxi - \bS(\hbeta) \baromega^0 ). \\
		\end{array}
	\end{equation}
	Then, on the event $\mathcal{A}$, $| \hat{\bDelta}^\T \bS(\hbeta) \hat{\bDelta} | \le \| \hat{\bDelta} \|_1 \{ \| \bS(\hbeta) \tilde{\bomega} - \bxi \|_{\infty} + \| \bS(\hbeta) \baromega^0 - \bxi \|_{\infty} \} \le 2 \| \hat{\bDelta} \|_1 \lambda^{\prime}$.
	
	By Lemma \ref{lemma:compat_omega}, $\PP ( \kappa_D(s^{\prime}) \ge \kappa )$ goes to one as $n \to \infty$. Due to \eqref{eq:L_eq_1}, without loss of generality, we choose $L=1$ in Lemma \ref{lemma:compat_omega}. On the event $\{ \kappa_D(s^{\prime}) \ge \kappa \}$, we have
	$
	\| \hat{\bm{\Delta}}_{\calS^{\prime}} \|_1 \lesssim (s^{\prime})^{1/2} ( \hat{\bDelta}^\T \bS(\hbeta) \hat{\bDelta} )^{1/2} \lesssim (s^{\prime} \lambda^{\prime}  \| \hat{\bDelta} \|_1 )^{1/2}.
	$
	Combined with \eqref{eq:L_eq_1}, this implies that 
	\[
	\| \hat{\bDelta} \|_1 = \| \hat{\bDelta}_{\calS^{\prime}} \|_1 + \| \hat{\bDelta}_{(\calS^{\prime})^c} \|_1 \le 2 \| \hat{\bDelta}_{\calS^{\prime}} \|_1 \lesssim (s^{\prime} \lambda^{\prime}  \| \hat{\bDelta} \|_1 )^{1/2}.
	\]
	Therefore, $\| \hat{\bDelta} \|_1 = \| \tilde{\bomega} - \baromega^0 \|_1 = \OP(s^{\prime} \lambda^{\prime})$.

	Next, we study the rate of $\| \homega - \bomega^0 \|_1$ for the normalized projection direction $\homega$. By definition,
	\begin{equation} \label{eq:diff_homega_omega0}
		\homega - \bomega^0 = \displaystyle \frac{\tilde{\bomega}}{\tilde{\bomega}^\T \bS(\hbeta) \tilde{\bomega}} - \frac{\bar{\bomega}^0}{(\bar{\bomega}^0)^\T \bS^0 \bar{\bomega}^0} = \frac{ \{ (\bar{\bomega}^0)^\T \bS^0 \bar{\bomega}^0 \} \tilde{\bomega}  -  \{ \tilde{\bomega}^\T \bS(\hbeta) \tilde{\bomega} \} \bar{\bomega}^0 }{ \{ \tilde{\bomega}^\T \bS(\hbeta) \tilde{\bomega} \} \{ (\bar{\bomega}^0)^\T \bS^0 \bar{\bomega}^0 \} }.
	\end{equation}
	We will show that the denominator of \eqref{eq:diff_homega_omega0}, i.e., $\{ \tilde{\bomega}^\T \bS(\hbeta) \tilde{\bomega} \} \{ (\bar{\bomega}^0)^\T \bS^0 \bar{\bomega}^0 \}$, is bounded away from zero with large probability, and that $\| \homega - \bomega^0 \|_1$ is in the same order as the numerator of \eqref{eq:diff_homega_omega0}, which is $\OP(s^{\prime} \lambda^{\prime} \| \bar{\bomega} ^0 \|_1)$.
	
	In the denominator of \eqref{eq:diff_homega_omega0}, $| \tilde{\bomega}^\T \bS(\hbeta) \tilde{\bomega} - (\bar{\bomega}^0)^\T \bS^0 \bar{\bomega}^0 | = \OP(s^{\prime} \lambda^{\prime}) = \oP(1)$. To see this, we have
	\[
	\begin{array}{rl}
		& | \tilde{\bomega}^\T \bS(\hbeta) \tilde{\bomega} - (\bar{\bomega}^0)^\T \bS^0 \bar{\bomega}^0 | \\
		\le & | (\tilde{\bomega} - \bar{\bomega}^0)^\T \bS(\hbeta) \tilde{\bomega} | + | (\bar{\bomega}^0)^\T \bS(\hbeta) (\tilde{\bomega} - \bar{\bomega}^0) | + | (\bar{\bomega}^0)^\T \{ \bS(\hbeta) - \bS^0 \} \bar{\bomega}^0 | \\
		\le & \| \tilde{\bomega} - \bar{\bomega}^0 \|_1 \| \bS(\hbeta) \tilde{\bomega} \|_{\infty} + \| \tilde{\bomega} - \bar{\bomega}^0 \|_1 \| \bS(\hbeta) \bar{\bomega}^0 \|_{\infty} + | (\bar{\bomega}^0)^\T \{ \bS(\hbeta) - \bS^0 \} \bar{\bomega}^0 | \\
		\le & \| \tilde{\bomega} - \bar{\bomega}^0 \|_1 (\| \bxi \|_{\infty} + \lambda^{\prime}) + \| \tilde{\bomega} - \bar{\bomega}^0 \|_1 \OP(\| \bxi \|_{\infty} + \lambda^{\prime}) + | (\bar{\bomega}^0)^\T \{ \bS(\hbeta) - \bS^0 \} \bar{\bomega}^0 |   \\
		= & \OP(s^{\prime} \lambda^{\prime}) + | (\bar{\bomega}^0)^\T \{ \bS(\hbeta) - \bS^0 \} \bar{\bomega}^0 |.
	\end{array}
	\]
	Due to Assumption \ref{assump:bdd_xomega}, we have $\| \bX_i \bar{\bomega}^0 \|_{\infty}$ is uniformly bounded for all $i = 1, \ldots, n$, and similar to showing $\| \bS(\hbeta) - \bS^0 \|_{\infty} = \OP(s_0 \lambda + r_n)$ in Lemma \ref{lemma:feasible_sol_omega0}, we see that $| (\bar{\bomega}^0)^\T \{ \bS(\hbeta) - \bS^0 \} \bar{\bomega}^0 | = \OP(s_0 \lambda + r_n) \lesssim \OP(s^{\prime} \lambda^{\prime})$. Hence, $\tilde{\bomega}^\T \bS(\hbeta) \tilde{\bomega} = (\bar{\bomega}^0)^\T \bS^0 \bar{\bomega}^0 + \oP(1)$. Since $0 < \lambda_{\max}^{-1}(\bS^0) \le (\bar{\bomega}^0)^\T \bS^0 \bar{\bomega}^0 = \bxi^\T (\bS^0)^{-1} \bxi \le \lambda_{\min}^{-1}(\bS^0)$, it is easy to see that $\tilde{\bomega}^\T \bS(\hbeta) \tilde{\bomega}$ is strictly positive and bounded away from zero with large probability, and that $\{ \tilde{\bomega}^\T \bS(\hbeta) \tilde{\bomega} \}^{-1} \{ (\bar{\bomega}^0)^\T \bS^0 \bar{\bomega}^0 \}^{-1} = \OP(1)$. We then examine the numerator of \eqref{eq:diff_homega_omega0}, where
	\[
	\begin{array}{rl}
		& \| \{ (\bar{\bomega}^0)^\T \bS^0 \bar{\bomega}^0 \} \tilde{\bomega}  -  \{ \tilde{\bomega}^\T \bS(\hbeta) \tilde{\bomega} \} \bar{\bomega}^0 \|_1 \\
		\le & \| \{ (\bar{\bomega}^0)^\T \bS^0 \bar{\bomega}^0 \} \tilde{\bomega} - \{ (\bar{\bomega}^0)^\T \bS^0 \bar{\bomega}^0 \} \bar{\bomega}^0 \|_1 + \| \bar{\bomega}^0 \|_1 | \{ (\bar{\bomega}^0)^\T \bS^0 \bar{\bomega}^0 \} -  \{ \tilde{\bomega}^\T \bS(\hbeta) \tilde{\bomega} \} | \\
		= & (\bar{\bomega}^0)^\T \bS^0 \bar{\bomega}^0 \| \tilde{\bomega} - \bar{\bomega}^0 \|_1 +  \| \bar{\bomega}^0 \|_1 | \{ (\bar{\bomega}^0)^\T \bS^0 \bar{\bomega}^0 \} -  \{ \tilde{\bomega}^\T \bS(\hbeta) \tilde{\bomega} \} | \\
		= & {O}(1) \OP(s^{\prime} \lambda^{\prime}) + \OP(s^{\prime} \lambda^{\prime}) \| \bar{\bomega}^0 \|_1 \\
		= & \OP(s^{\prime} \lambda^{\prime} \| \bar{\bomega}^0 \|_1 ).
	\end{array}
	\]
	Hence, $\| \homega - \bomega^0 \|_1 = \OP(s^{\prime} \lambda^{\prime} \| \baromega^0 \|_1)$.
\end{proof}


The following lemma determines $\| \barPsi(\bbeta^0) \|_{\infty} = \OP[ \{ \log(p)/n \}^{1/2} ]$ for the surrogate estimating function $\barPsi(\bbeta^0)$ using concentration inequalities, whose rate is useful in proving the asymptotic normality of the proposed de-biased estimator for $\theta^0$. Although the components of the standardized residuals $\beps_i (\bbeta^0) =  \bG_i^{-1/2}(\bbeta^0)(\bY_i - \bmu_i(\bbeta^0))$ are assumed as sub-exponential in Assumption \ref{assump:bound_covs}, sub-Gaussian error terms are a special case of sub-exponential errors \citep{vershynin2018high} and also widely used;  we will include a separate short discussion in the proof on this issue. 

\begin{lemma} \label{lemma:psi_infty}
	Let the standardized residuals $\beps_i (\bbeta^0) =  \bG_i^{-1/2}(\bbeta^0)(\bY_i - \bmu_i(\bbeta^0)) = (\varepsilon_{i1}(\bbeta^0), \ldots, \varepsilon_{im}(\bbeta^0))^\T $. Under Assumptions \ref{assump:bound_covs}, \ref{assump:var_fun} and \ref{assump:cor_mat},  $\| \barPsi(\bbeta^0) \|_{\infty} = \OP [ \{ \log(p)/n \}^{1/2} ]$.
\end{lemma}

\begin{proof}[Proof of Lemma \ref{lemma:psi_infty}]
	
	Recall that $\barPsi(\bbeta^0) = n^{-1} \sum_{i=1}^n \bX_i^\T \bG_i^{1/2}(\bbeta^0) \barR^{-1} \beps_i(\bbeta^0)$, and for $j= 1, \ldots, p$, the $j$th element of $\bar{\Psi}(\bbeta^0)$,  $\bar{\Psi}_j(\bbeta^0) = n^{-1} \sum_{i=1}^n \bx_{i[j]}^\T \bG_i^{1/2}(\bbeta^0) \barR^{-1} \beps_i(\bbeta^0)$. Note that the expectation \\
	$\E\{ \bx_{i[j]}^\T \bG_i^{1/2}(\bbeta^0) \barR^{-1} \beps_i(\bbeta^0) \} = 0$ due to the correct specification of the first-order moment of $Y_i$, and by Assumptions \ref{assump:bound_covs}, \ref{assump:var_fun} and \ref{assump:cor_mat}, $| \bx_{i[j]}^\T \bG_i^{1/2}(\bbeta^0) \barR^{-1} \beps_i(\bbeta^0) | \le C \| \beps_i(\bbeta^0) \|_2 $. By the definition of $\psi_q$ norm ($q = 1, 2$), 
	\[
	\| \bx_{i[j]}^\T \bG_i^{1/2}(\bbeta^0) \barR^{-1} \beps_i(\bbeta^0) \|_{\psi_q} \le C \| ~  \| \beps_i(\bbeta^0) \|_2 \|_{\psi_q} \le C \sum_{j=1}^m \| \varepsilon_{ij}(\bbeta^0) \|_{\psi_q} , ~ q=1, 2.
	\]
	So $\bx_{i[j]}^\T \bG_i^{1/2}(\bbeta^0) \barR^{-1} \beps_i(\bbeta^0)$ is sub-exponential (or sub-Gaussian) if $\varepsilon_{ij}(\bbeta^0)$'s are sub-exponential (or sub-Gaussian).
	
	{\textit{Case I. When $\varepsilon_{ij}(\bbeta^0)$'s are sub-exponential}}
	
	By Bernstein's inequality (see Theorem 2.8.2 of \citealt{vershynin2018high}), for any $t > 0$, 
	\[ 
	\begin{array}{rl}
		& \PP ( | \bar{\Psi}_j(\bbeta^0) | \ge t ) \\
		= & \PP ( | \sum_{i=1}^n \bx_{i[j]}^\T \bG_i^{1/2}(\bbeta^0) \barR^{-1} \beps_i(\bbeta^0) | \ge nt ) \\
		\le  &  2 \exp \left\{ -C \min \left( \displaystyle \frac{n^2 t^2}{\sum_{i=1}^n \| \bx_{i[j]}^\T \bG_i^{1/2}(\bbeta^0) \barR^{-1} \beps_i(\bbeta^0) \|_{\psi_1}^2 }, \frac{nt}{\max_i \| \bx_{i[j]}^\T \bG_i^{1/2}(\bbeta^0) \barR^{-1} \beps_i(\bbeta^0) \|_{\psi_1}} \right) \right\}. \\
	\end{array}
	\]
	Since $\| \bx_{i[j]}^\T \bG_i^{1/2}(\bbeta^0) \barR^{-1} \beps_i(\bbeta^0) \|_{\psi_1} \le M^{\prime\prime}$ for some $M^{\prime\prime} > 0$ and all $j= 1, \ldots, p$, when $t = C^{\prime} \{ \log(p)/n \}^{1/2}$ for some large  enough $C^{\prime}>0$, $\PP ( \| \barPsi(\bbeta^0) \|_{\infty} \ge t ) \le \sum_{j=1}^p \PP ( | \bar{\Psi}_j(\bbeta^0) | \ge t ) \le O(p^{-C}) $ as $n \to \infty$. Hence, we have $\| \barPsi(\bbeta^0) \|_{\infty} = \OP [ \{\log(p)/n\}^{1/2} ]$.
	
	{\textit{Case II. When $\varepsilon_{ij}(\bbeta^0)$'s are sub-Gaussian}}
	
	By Hoeffding's inequality (see Theorem 2.6.3 of \citealt{vershynin2018high}), for any $t > 0$,
	\[ 
	\begin{array}{rcl}
		\PP ( | \bar{\Psi}_j(\bbeta^0) | \ge t ) & = & \PP ( | \sum_{i=1}^n \bx_{i[j]}^\T \bG_i^{1/2}(\bbeta^0) \barR^{-1} \beps_i(\bbeta^0) | \ge nt ) \\
		& \le & 2 \exp\{ -C n t^2 / M_j^2 \} \\
		& \le & 2 \exp\{ -C n t^2 / (M^{\prime})^2 \},
	\end{array}
	\]
	where $M_j = \| \bx_{1[j]}^\T \bG_1^{1/2}(\bbeta^0) \barR^{-1} \beps_1(\bbeta^0) \|_{\psi_2} \le M^{\prime}$ for some $M^{\prime} > 0$. And then,
	\[
	\PP ( \| \barPsi(\bbeta^0) \|_{\infty} \ge t ) \le \sum_{j=1}^p \PP ( | \bar{\Psi}_j(\bbeta^0) | \ge t ) \le 2 p \exp\{ -C n t^2 / (M^{\prime})^2 \}.
	\]
	Let $t = C^{\prime} \{ \log(p)/n \}^{1/2}$ for some large enough constant $C^{\prime} > 0$, then $\PP ( \| \barPsi(\bbeta^0) \|_{\infty} \ge t ) \le O(p^{-C})$ for some $C>0$ and we have $\| \barPsi(\bbeta^0) \|_{\infty} = \OP [ \{ \log(p)/n \}^{1/2} ]$.
\end{proof}


\begin{lemma} \label{lemma:leading_normal}
	Assume $\lambda \asymp \{ \log(p) / n \}^{1/2}$, $\lambda^{\prime} \asymp \| \baromega^0 \|_1 (s_0 \lambda +  r_n)$, and further, $\max(s_0, s^{\prime} \| \baromega^0 \|_1) \lambda^{\prime} \{ \log(p) \}^{1/2} \to 0$. Under Assumptions \ref{assump:bound_covs}--\ref{assump:cor_mat}, 
	$
	{n}^{1/2} \Psi^P(\theta^0) / \{ (\bomega^0)^\T \bV^0 \bomega^0 \}^{1/2} 
	$
	converges to $N(0,1)$ in distribution as $n \to \infty$.
\end{lemma}


\begin{proof}[Proof of Lemma \ref{lemma:leading_normal}]
	
	For convenience, let $\tbeta = \hbeta + \omegahat (\theta^0 - \bxi^\T \hbeta) = \hbeta + \omegahat \bxi^\T (\bbeta^0 - \hbeta)$. Then ${n}^{1/2} \Psi^P(\theta^0) = {n}^{1/2} \omegahat^\T \bPsi(\tbeta)$.
	We decompose ${n}^{1/2} \omegahat^\T \bPsi(\tbeta)$ as 
	\[
	\begin{array}{rcl}
		{n}^{1/2} \Psi^P(\theta^0) & = & {n}^{1/2} \homega^\T \bPsi(\tbeta)  \\
		& = & {n}^{1/2} (\bomega^0)^\T \barPsi(\bbeta^0) + {n}^{1/2} (\bomega^0)^\T \{ \bPsi(\bbeta^0) - \barPsi(\bbeta^0) \} \\
		& & + \left[ {n}^{1/2} \{ \homega^\T \bPsi(\tbeta) - (\bomega^0)^\T \bPsi(\tbeta) \} + {n}^{1/2} (\bomega^0)^\T \{ \bPsi(\tbeta) - \bPsi(\bbeta^0) \} \right], 
	\end{array}
	\]
	and define $I_1 = {n}^{1/2} (\bomega^0)^\T \barPsi(\bbeta^0)$, $I_2 = {n}^{1/2} (\bomega^0)^\T \{ \bPsi(\bbeta^0) - \barPsi(\bbeta^0) \}$, and $I_3 =  {n}^{1/2} \{ \homega^\T \bPsi(\tbeta) - (\bomega^0)^\T \bPsi(\tbeta) \} + {n}^{1/2} (\bomega^0)^\T \{ \bPsi(\tbeta) - \bPsi(\bbeta^0) \}$.
	We next examine each of these three terms.\\[0.05cm]
	
	\noindent {\textit{Part (i). $I_1 = {n}^{1/2} (\bomega^0)^\T \barPsi(\bbeta^0)$} is asymptotically normal.}
	
	Noting that
	\[
	{n}^{1/2} (\bomega^0)^\T \barPsi(\bbeta^0) = \displaystyle n^{-1/2} \sum_{i=1}^n (\bomega^0)^\T \bX_i^\T \bG_i^{1/2}(\bbeta^0) \barR^{-1} \bG_i^{-1/2}(\bbeta^0) (\bY_i - \bmu_i(\bbeta^0)),
	\]
	where the standardized residuals $\beps_i(\bbeta^0)  = \bG_i^{-1/2}(\bbeta^0) (\bY_i - \bmu_i(\bbeta^0))$, we define $$\eta_i = n^{-1/2} (\bomega^0)^\T \bX_i^\T \bG_i^{1/2}(\bbeta^0) \barR^{-1} \beps_i(\bbeta^0),$$ $i = 1, \ldots, n$. It can be verified that
	\[
	\begin{array}{rcl}
		\E (\eta_i) & = & 0,  \\
		\Var(\eta_i) & = & \E \left\{ \displaystyle \frac{1}{n} (\bomega^0)^\T \bX_i^\T \bG_i^{1/2}(\bbeta^0) \barR^{-1} \bR_0 \barR^{-1} \bG_i^{1/2}(\bbeta^0) \bX_i \bomega^0 \right\} = \displaystyle \frac{1}{n} (\bomega^0)^\T \bV^0 \bomega^0.
	\end{array}
	\]
	Let the variance $\nu_n^2 = \Var(\sum_{i=1}^n \eta_i) = (\bomega^0)^\T \bV^0 \bomega^0$. By Assumptions \ref{assump:bound_covs}, \ref{assump:bdd_eigen_covs}, \ref{assump:var_fun} and \ref{assump:cor_mat}, the eigenvalues of $\bV^0$ and $\bS^0$ are all bounded from both below and above, and, given $\| \bxi \|_2=1$, we have
	\begin{equation} \label{eq:bdd_vn}
		\nu_n^2 \ge \displaystyle  \lambda_{\min}(\bV^0) \| \bomega^0 \|_2^2 =  \lambda_{\min}(\bV^0) \{ \bxi^\T (\bS^0)^{-1} \bxi \}^{-2} \| (\bS^0)^{-1} \bxi \|_2^2 \ge C > 0.
	\end{equation}
	We next verify that the Lindeberg condition holds, i.e.
	\begin{equation} \label{eq:lindeberg}
		\displaystyle \frac{1}{\nu_n^2} \sum_{i=1}^n \E [ \eta_i^2   1\{ |\eta_i| > \epsilon \nu_n \} ] = \frac{n}{\nu_n^2}  \E [ \eta_1^2   1\{ |\eta_1| > \epsilon \nu_n \} ] \to 0
	\end{equation}
	for any $\epsilon > 0$. 
	By Cauchy-Schwarz inequality and Assumptions \ref{assump:bdd_xomega}, \ref{assump:var_fun} and \ref{assump:cor_mat}, we have 
	\[ \begin{array}{l}
		\eta_1^2 \le \displaystyle n^{-1} \| \beps_1(\bbeta^0) \|_2^2 \cdot \| \barR^{-1} \bG_1^{1/2}(\bbeta^0) \bX_1 \bomega^0 \|_2^2 \\ 
		\qquad \qquad \le \displaystyle n^{-1} \| \beps_1(\bbeta^0) \|_2^2 \{ \|\barR^{-1}\| \times \|\bG_1^{1/2}(\bbeta^0) \| \times \| \bX_1 \bomega^0 \|_2 \}^2 \le C  \| \beps_1(\bbeta^0) \|_2^2 / n.
	\end{array}
	\]
	Since $\E \| \beps_1(\bbeta^0) \|_2^2 = \trace \E \{ \beps_1(\bbeta^0) \beps_1^\T(\bbeta^0) \} = {O}(1)$, $\| \beps_1(\bbeta^0) \|_2^2 = \OP(1)$.  Combining the above inequality and \eqref{eq:bdd_vn}, we have
	$
	| \eta_1 / \nu_n | = \OP(n^{-1/2}) ~ \mathrm{and} ~ 1\{ |\eta_1| > \epsilon \nu_n \} = \oP(1).
	$
	Also, $n \eta_1^2 / \nu_n^2  \le C^{\prime} \| \beps_1(\bbeta^0) \|_2^2$, which is integrable.
	By the dominated convergence theorem, the Lindeberg condition \eqref{eq:lindeberg} is satisfied, and the conclusion  holds that  the term $I_1$ is asymptotically normal in the sense that
	\[
	\displaystyle \frac{{n}^{1/2} (\bomega^0)^\T \barPsi(\bbeta^0)}{ \{ (\bomega^0)^\T \bV^0 \bomega^0 \}^{1/2} } = \frac{\sum_{i=1}^n \eta_i}{\nu_n}  
	\]
	converges to $N(0,1)$ in distribution as $n \to \infty$. \\[0.05cm]

	\noindent {\textit{Part (ii). $I_2 = {n}^{1/2} (\bomega^0)^\T \{ \bPsi(\bbeta^0) - \barPsi(\bbeta^0) \} = \oP(1)$.}}
	
	We write $\beps_i(\bbeta^0) = (\varepsilon_{i1}(\bbeta^0), \cdots, \varepsilon_{im}(\bbeta^0))^\T$ as in Lemma \ref{lemma:psi_infty} and $\bQ=(q_{j_1 j_2})_{m \times m}=\hatR^{-1} - \barR^{-1}$, where $q_{j_1 j_2}$ is the $(j_1, j_2)$th element of $\bQ = \hatR^{-1} - \barR^{-1}$. Then
	\[
	\begin{array}{rcl}
		I_2 & =  &  {n}^{1/2} (\bomega^0)^\T  \displaystyle \frac{1}{n} \sum_{i=1}^n \bX_i^\T \bG_i^{1/2}(\bbeta^0) \bQ \beps_i(\bbeta^0) \\
		& =  & \displaystyle n^{-1/2} \sum_{i=1}^n \sum_{j_1 = 1}^m \sum_{j_2 = 1}^m  q_{j_1 j_2} G_{i j_1}^{1/2}(\bbeta^0) \bx_{i j_1}^\T \bomega^0 \varepsilon_{i j_2}(\bbeta^0) \\
		& = & \displaystyle n^{-1/2}  \sum_{j_1 = 1}^m \sum_{j_2 = 1}^m  q_{j_1 j_2} \sum_{i=1}^n G_{i j_1}^{1/2}(\bbeta^0) \bx_{i j_1}^\T \bomega^0 \varepsilon_{i j_2}(\bbeta^0).
	\end{array}
	\]
	Since 
	$$\E \left\{ \left| \displaystyle \sum_{i=1}^n G_{i j_1}^{1/2}(\bbeta^0) \bx_{i j_1}^\T \bomega^0 \varepsilon_{i j_2}(\bbeta^0) \right|^2 \right\} = \sum_{i=1}^n \E \{ G_{i j_1}(\bbeta^0) (\bx_{i j_1}^\T \bomega^0)^2 \varepsilon_{i j_2}^2(\bbeta^0) \} = {O}(n),$$
	under Assumptions \ref{assump:bdd_xomega} and \ref{assump:var_fun}, we have $\sum_{i=1}^n G_{i j_1}^{1/2}(\bbeta^0) \bx_{i j_1}^\T \bomega^0 \varepsilon_{i j_2}(\bbeta^0) = \OP({n}^{1/2})$. Then, $$|I_2| \le \displaystyle \frac{1}{n^{1/2}} \sum_{j_1=1}^m \sum_{j_2=1}^m |q_{j_1 j_2}| \times \left| \sum_{i=1}^n G_{i j_1}^{1/2}(\bbeta^0) \bx_{i j_1}^\T \bomega^0 \varepsilon_{i j_2}(\bbeta^0) \right| = \OP(\| \hatR^{-1} - \barR^{-1} \|) = \oP(1).$$

	\noindent {\textit{Part (iii). $I_3 = {n}^{1/2} \{ \homega - (\bomega^0) \}^\T \bPsi(\tbeta) + {n}^{1/2} (\bomega^0)^\T \{ \bPsi(\tbeta) - \bPsi(\bbeta^0) \} = \oP(1)$}.}

	By Taylor expansion,
	\[
	\bPsi(\tbeta) =  \bPsi(\bbeta^0) + \displaystyle \left. \frac{\partial \bPsi}{\partial \bbeta^\T } \right|_{\bbeta^*} (\tbeta - \bbeta^0),
	\]
	where $\bbeta^{*}$ is an intermediate vector between $\tilde{\bbeta}$ and $\bbeta^0$. Then
	\[
	\begin{array}{rcl}
		I_3 & = & {n}^{1/2} \{ \homega - (\bomega^0) \}^\T \bPsi(\tbeta) + {n}^{1/2} (\bomega^0)^\T \{ \bPsi(\tbeta) - \bPsi(\bbeta^0) \}  \\
		& = & {n}^{1/2}  \{ \homega - (\bomega^0) \}^\T \left\{ \bPsi(\bbeta^0) + \displaystyle \left. \frac{\partial \bPsi}{\partial \bbeta^\T } \right|_{\bbeta^*} (\tbeta - \bbeta^0) \right\} + {n}^{1/2} (\bomega^0)^\T \displaystyle \left. \frac{\partial \bPsi}{\partial \bbeta^\T } \right|_{\bbeta^*} (\tbeta - \bbeta^0) \\
		& = & {n}^{1/2} \{ \homega - (\bomega^0) \}^\T \bPsi(\bbeta^0) + {n}^{1/2} \homega^\T  \displaystyle \left. \frac{\partial \bPsi}{\partial \bbeta^\T } \right|_{\bbeta^*} (\tbeta - \bbeta^0) \\
		& = & I_{31} + I_{32}.
	\end{array}
	\]
	To show $I_{31} = \oP(1)$, first note that
	\[
	\begin{array}{rl}
		& \| \bPsi(\bbeta^0) - \barPsi(\bbeta^0) \|_{\infty} \\
		= & \displaystyle \left\|   \frac{1}{n} \sum_{i=1}^n \bX_i^\T \bG_i^{1/2}(\bbeta^0) \bQ \beps_i(\bbeta^0) \right\|_{\infty} \\
		= & \displaystyle \max_{1 \le k \le p} \left|  \frac{1}{n} \sum_{i=1}^n \sum_{j_1=1}^m \sum_{j_2=1}^m q_{j_1 j_2} x_{i[k]j_1} G_{i j_1}^{1/2}(\bbeta^0) \varepsilon_{i j_2}(\bbeta^0) \right|  \\
		\le & \displaystyle \max_{1 \le k \le p} \sum_{j_1=1}^m \sum_{j_2=1}^m |q_{j_1 j_2}| \times \frac{1}{n} \left| \sum_{i=1}^n x_{i[k]j_1} G_{i j_1}^{1/2}(\bbeta^0) \varepsilon_{i j_2}(\bbeta^0) \right|,
	\end{array}
	\]
	where $x_{i[k]j_1}$ is the $j_1$th element of $\bx_{i[k]}$. Since $\E | \sum_{i=1}^n  x_{i[k]j_1} G_{i j_1}^{1/2}(\bbeta^0) \varepsilon_{i j_2}(\bbeta^0) |^2 = {O}(n)$ uniformly for all $k, j_1$ and $j_2$, $|  \sum_{i=1}^n x_{i[k]j_1} G_{i j_1}^{1/2}(\bbeta^0) \varepsilon_{i j_2}(\bbeta^0) | = \OP({n}^{1/2})$. Hence,
	\[
	\| \bPsi(\bbeta^0) - \barPsi(\bbeta^0) \|_{\infty} = \OP(n^{-1/2} \| \bQ \|) = \OP(n^{-1/2} \| \hatR^{-1} - \barR^{-1} \|) \le \OP(n^{-1/2} r_n),
	\]
	where the last inequality is due to the fact that $\| \hatR^{-1} - \barR^{-1} \| \le \| \hatR^{-1} \| \times \| \hatR - \barR \| \times \| \barR^{-1} \| \le \OP(1) \| \hatR - \barR \| = \OP(r_n)$.
	Then, since $\| \bPsi(\bbeta^0) - \barPsi(\bbeta^0) \|_{\infty} = \OP(n^{-1/2} r_n)$,
	\[
	\begin{array}{rcl}
		| I_{31} | & \le & {n}^{1/2} \| \homega - \bomega^0 \|_1  \| \bPsi(\bbeta^0) \|_{\infty}  \\
		& \le & {n}^{1/2} \| \homega - \bomega^0 \|_1 \cdot \{ \| \barPsi(\bbeta^0) \|_{\infty} + \| \bPsi(\bbeta^0) - \barPsi(\bbeta^0) \|_{\infty}  \} \\
		& = & {n}^{1/2} \OP(s^{\prime} \lambda^{\prime} \| \baromega^0 \|_1) \left\{ \OP [ \{ \log(p)/n \}^{1/2} ] + \OP(n^{-1/2} r_n) \right\} \\
		& = & \OP \left(s^{\prime} \lambda^{\prime} \| \baromega^0 \|_1 \{ \log(p) \}^{1/2} \right) =  \oP(1).
	\end{array}
	\]

	Next, according to Lemma \ref{lemma:deriv_psi}, $I_{32}$ can be decomposed as 
	\[ 
	\begin{array}{rcl}
		I_{32} & = &  - {n}^{1/2} \homega^\T \bS(\bbeta^*)(\tbeta - \bbeta^0) + {n}^{1/2} \homega^\T \bE_n(\bbeta^*)(\tbeta - \bbeta^0) + {n}^{1/2} \homega^\T \bF_n(\bbeta^*)(\tbeta - \bbeta^0) \\
		& = & I_{321} + I_{322} + I_{323}. 
	\end{array}
	\]
	By the definition of $\tbeta$, $\tbeta - \bbeta^0 = (\bI - \homega \bxi^\T) (\hbeta - \bbeta^0)$. Thus,
	\begin{align} 
		- I_{321}  & =  ~ {n}^{1/2} \homega^\T \bS(\bbeta^*)(\tbeta - \bbeta^0) \notag  \\
		&  =   ~ {n}^{1/2} \homega^\T \bS(\bbeta^*) (\bI - \homega \bxi^\T) (\hbeta - \bbeta^0) \notag \\
		& = ~ {n}^{1/2} \left\{ \bS(\bbeta^*) \homega - \homega^\T \bS(\bbeta^*) \homega \bxi \right\}^\T  (\hbeta - \bbeta^0) \notag \\
		& = ~ {n}^{1/2} \left[ \bS(\hbeta) \homega - \homega^\T \bS(\hbeta) \homega \bxi + \left\{ \bS(\bbeta^*) - \bS(\hbeta) \right\} \homega + \homega^\T \left\{ \bS(\hbeta) - \bS(\bbeta^*) \right\} \homega \bxi  \right]^\T (\hbeta - \bbeta^0) \notag \\
		& = ~ {n}^{1/2} \left\{ \bS(\hbeta) \homega - \homega^\T \bS(\hbeta) \homega \bxi \right\}^\T (\hbeta - \bbeta^0) + {n}^{1/2} \homega^\T \left\{ \bS(\bbeta^*) - \bS(\hbeta) \right\}^\T (\hbeta - \bbeta^0) \notag \\ 
		& \qquad + {n}^{1/2} \homega^\T \left\{ \bS(\hbeta) - \bS(\bbeta^*) \right\} \homega \bxi^\T (\hbeta - \bbeta^0). \label{eq:decomp_I321}
	\end{align}
	By the definition of $\homega$ in Lemma \ref{lemma:hat_omega_rate}, the first term in \eqref{eq:decomp_I321} satisfies
	\[
	\begin{array}{rcl}
		\left| {n}^{1/2} \{ \bS(\hbeta) \homega - \homega^\T \bS(\hbeta) \homega \bxi \}^\T (\hbeta - \bbeta^0) \right|  & = & \left|  \displaystyle \frac{{n}^{1/2}}{\tilde{\bomega}^\T \bS(\hbeta) \tilde{\bomega}} \{ \bS(\hbeta) \tilde{\bomega} - \bxi \}^\T (\hbeta - \bbeta^0) \right|  \\
		& \le   &   \displaystyle \frac{{n}^{1/2}}{\tilde{\bomega}^\T \bS(\hbeta) \tilde{\bomega}} \| \bS(\hbeta) \tilde{\bomega} - \bxi \|_{\infty} \| \hbeta - \bbeta^0 \|_{1}  \\
		& = & \OP({n}^{1/2} \lambda^{\prime} s_0 \lambda) = \oP(1), \\
	\end{array}
	\]
	where in the last equality, $\lambda \asymp \{ \log(p)/n \}^{1/2}$ and $\max(s_0, s^{\prime} \| \baromega^0 \|_1) \lambda^{\prime} \{\log(p)\}^{1/2} \to 0$ imply that $s_0 \lambda^{\prime} \{\log(p)\}^{1/2} \to 0$.
	For the second term in \eqref{eq:decomp_I321}, we have
	\begin{align} \label{eq:decomp_I321_second}
		& | {n}^{1/2} \homega^\T \{ \bS(\bbeta^*) - \bS(\hbeta) \}^\T (\hbeta - \bbeta^0) | \notag \\   
		\le & {n}^{1/2} \left| \displaystyle \frac{1}{n} \sum_{i=1}^n \homega^\T \bX_i^\T \{ \bG_i^{1/2}(\bbeta^*) - \bG_i^{1/2}(\hbeta) \} \hatR^{-1} \bG_i^{1/2}(\bbeta^*) \bX_i(\hbeta - \bbeta^0) \right|  \notag \\
		& + {n}^{1/2} \left| \displaystyle \frac{1}{n} \sum_{i=1}^n \homega^\T \bX_i^\T \bG_i^{1/2}(\hbeta) \hatR^{-1} \{ \bG_i^{1/2}(\bbeta^*) - \bG_i^{1/2}(\hbeta) \}  \bX_i(\hbeta - \bbeta^0) \right|.
	\end{align}
	By Lemma \ref{lemma:hat_omega_rate} and Assumption \ref{assump:bdd_xomega}, $\| \bX_i \homega \|_{\infty} \le \| \bX_i \|_{\infty} \| \homega - \bomega^0 \|_1 + \| \bX_i \bomega^0 \|_{\infty} = \OP(1)$.
	Since $\bbeta^*$ is an intermediate value between $\tbeta$ and $\bbeta^0$, $|\bx_{ij}^\T (\bbeta^* - \hbeta)| \le \max( |\bx_{ij}^\T(\tbeta - \hbeta)|, |\bx_{ij}^\T(\bbeta^0 - \hbeta)| )$. Because $|\bx_{ij}^\T(\tbeta - \hbeta)| = | \bx_{ij}^\T \homega \bxi^\T (\bbeta^0 - \hbeta) | \le | \bx_{ij}^\T \homega | \| \bxi \|_{2} \| \hbeta - \bbeta^0 \|_2 = \OP(\| \hbeta - \bbeta^0 \|_2)$, we have $| \bx_{ij}^\T (\bbeta^* - \hbeta) | = \OP[ \max \{ \| \hbeta - \bbeta^0 \|_2, |\bx_{ij}^\T (\bbeta^0 - \hbeta)| \} ]$.
	Moreover, since $G_{ij}(\bbeta) = v (\mu_{ij}(\bbeta)) = v(\mu(\bx_{ij}^\T \bbeta))$ is continuously differentiable in $\bx_{ij}^\T \bbeta$, by Assumption \ref{assump:var_fun}, $| G_{ij}^{1/2}(\bbeta^*) - G_{ij}^{1/2}(\hbeta) | = \OP( | \bx_{ij}^\T (\bbeta^* - \hbeta) |)$. Now, writing $\hatR^{-1}_{\cdot j}$ as the $j$th column of $\hatR^{-1}$, since  $\| \bX_i \homega \|_{\infty} = \OP(1)$ and $\| \hatR^{-1} \| = \OP(1)$,
	\begin{align}
		& \left[ G_{ij}^{1/2}(\bbeta^*) \homega^\T \bX_i^\T \{ \bG_i^{1/2}(\bbeta^*) - \bG_i^{1/2}(\hbeta) \} \hatR^{-1}_{\cdot j} \right]^2 \notag \\ 
		\le &  G_{ij}(\bbeta^*) \| \bX_i \homega \|_2^2 \| \{ \bG_i^{1/2}(\bbeta^*) - \bG_i^{1/2}(\hbeta) \} \hatR^{-1}_{\cdot j} \|_2^2 \notag \\
		\le & G_{ij}(\bbeta^*) \| \bX_i \homega \|_2^2 \sum_{k=1}^m \left\{ G_{ik}^{1/2}(\bbeta^*) - G_{ik}^{1/2}(\hbeta) \right\}^2 (\hatR^{-1})_{kj}^2 \notag \\
		= & \OP \left( \sum_{k=1}^m | \bx_{ik}^\T(\bbeta^* - \hbeta) |^2 \right) \notag \\
		\le & \OP \left[ \sum_{k=1}^m \left\{ \| \hbeta - \bbeta^0 \|_2^2 + | \bx_{ik}^\T (\bbeta^0 - \hbeta) |^2 \right\} \right]. \label{eq:tmp_linear_pred}   
	\end{align}
	Thus, by Cauchy-Schwarz inequality, for the first term in \eqref{eq:decomp_I321_second}, we have
	\[
	\begin{array}{rl}
		& {n}^{1/2} \left| \displaystyle \frac{1}{n} \sum_{i=1}^n \homega^\T \bX_i^\T \{ \bG_i^{1/2}(\bbeta^*) - \bG_i^{1/2}(\hbeta) \} \hatR^{-1} \bG_i^{1/2}(\bbeta^*) \bX_i(\hbeta - \bbeta^0) \right|  \\
		\le & \displaystyle {n}^{1/2}  \left[ \sum_{i=1}^n \sum_{j=1}^m  \frac{1}{n} \left\{ \homega^\T \bX_i^\T \{ \bG_i^{1/2}(\bbeta^*) - \bG_i^{1/2}(\hbeta) \} G_{ij}^{1/2}(\bbeta^*) \hatR^{-1}_{\cdot j} \right\}^2  \right]^{1/2} \times \\
		& \left[ \displaystyle \sum_{i=1}^n \sum_{j=1}^m \frac{1}{n} \{ \bx_{ij}^\T (\hbeta - \bbeta^0) \}^2  \right]^{1/2} \\
		= & {n}^{1/2} \OP \left( m^2 \| \hbeta - \bbeta^0 \|_2^2 + m \displaystyle \sum_{i=1}^n \sum_{j=1}^m \frac{1}{n} \{ \bx_{ij}^\T (\hbeta - \bbeta^0) \}^2  \right)^{1/2} \times \OP(s_0 \lambda^2)^{1/2} \\
		= & {n}^{1/2} \OP(s_0 \lambda^2)^{1/2} \OP(s_0 \lambda^2)^{1/2} \\
		= & \OP \left( s_0 \log(p) / {n}^{1/2}  \right) = \oP(1), \\
	\end{array}
	\]
	where in the last equality, $\max(s_0, s^{\prime} \| \baromega^0 \|_1) \lambda^{\prime} \{ \log(p) \}^{1/2} \to 0$ in the assumption of this lemma implies $s_0 \lambda \{\log(p)\}^{1/2} \asymp s_0 \log(p) / {n}^{1/2} \to 0$ with $\lambda \asymp \{ \log(p)/n \}^{1/2}$.
	Similarly, 
	\[  
	{n}^{1/2} \left| \displaystyle \frac{1}{n} \sum_{i=1}^n \homega^\T \bX_i^\T \bG_i^{1/2}(\hbeta) \hatR^{-1} \{ \bG_i^{1/2}(\bbeta^*) - \bG_i^{1/2}(\hbeta) \}  \bX_i(\hbeta - \bbeta^0) \right| = \oP(1).
	\]
	Hence, with the inequality of \eqref{eq:decomp_I321_second}, the second term in \eqref{eq:decomp_I321}, i.e. ${n}^{1/2} \homega^\T \{ \bS(\bbeta^*) - \bS(\hbeta) \} (\hbeta - \bbeta^0) = \oP(1)$.
	
	For the third term in \eqref{eq:decomp_I321}, we have
	\[
	\begin{array}{rl}
		& {n}^{1/2}  | \homega^\T \{ \bS(\hbeta) - \bS(\bbeta^*) \} \homega \bxi^\T (\hbeta - \bbeta^0) | \\
		= & {n}^{1/2} \displaystyle \left|  \frac{1}{n} \sum_{i=1}^n \homega^\T \bX_i^\T \{ \bG_{i}^{1/2}(\bbeta^*) - \bG_i^{1/2}(\hbeta) \} \hatR^{-1} \bG_i^{1/2}(\bbeta^*) \bX_i \homega \bxi^\T (\hbeta - \bbeta^0) \right|  \\
		& + {n}^{1/2} \displaystyle \left|  \frac{1}{n} \sum_{i=1}^n \homega^\T \bX_i^\T \bG_i^{1/2}(\hbeta) \hatR^{-1} \{ \bG_{i}^{1/2}(\bbeta^*) - \bG_i^{1/2}(\hbeta) \} \bX_i \homega \bxi^\T (\hbeta - \bbeta^0) \right|. \\
	\end{array}
	\]
	With an argument similar to \eqref{eq:tmp_linear_pred} for showing the second term in \eqref{eq:decomp_I321}, i.e.  ${n}^{1/2} \homega^\T \{ \bS(\bbeta^*) - \bS(\hbeta)  \}^\T (\hbeta - \bbeta^0)$, is $\oP(1)$, we have
	\begin{equation} \label{eq:third_part_I321}
		\begin{array}{rl}
			& {n}^{1/2} \displaystyle \left|  \frac{1}{n} \sum_{i=1}^n \homega^\T \bX_i^\T \{ \bG_{i}^{1/2}(\bbeta^*) - \bG_i^{1/2}(\hbeta) \} \hatR^{-1} \bG_i^{1/2}(\bbeta^*) \bX_i \homega \bxi^\T (\hbeta - \bbeta^0) \right|  \\
			\le & \displaystyle {n}^{1/2} \left[ \sum_{i=1}^n \sum_{j=1}^m  \frac{1}{n} \left\{ \homega^\T \bX_i^\T \{ \bG_i^{1/2}(\bbeta^*) - \bG_i^{1/2}(\hbeta) \} G_{ij}^{1/2}(\bbeta^*) \hatR^{-1}_{\cdot j} \right\}^2  \right]^{1/2} \times \\
			& \left[ \displaystyle \sum_{i=1}^n \sum_{j=1}^m \frac{1}{n} \{ \bx_{ij}^\T \homega \bxi^\T (\hbeta - \bbeta^0) \}^2 \right]^{1/2} \\
			\le & {n}^{1/2} \OP \left( m^2 \| \hbeta - \bbeta^0 \|_2^2 + m \displaystyle \sum_{i=1}^n \sum_{j=1}^m \frac{1}{n} \{ \bx_{ij}^\T (\hbeta - \bbeta^0) \}^2  \right)^{1/2} \cdot \OP \left(\| \hbeta - \bbeta^0 \|_2 \right) \\
			= & {n}^{1/2} \OP(s_0^{1/2} \lambda) \cdot \OP(s_0^{1/2} \lambda) \\
			= & \OP(s_0 \log(p) / {n}^{1/2}) = \oP(1),
		\end{array}
	\end{equation}
	where the second inequality above results from $| \bx_{ij}^\T \homega \bxi^\T (\hbeta - \bbeta^0) | \le | \bx_{ij}^\T \homega |  \| \bxi \|_2 \| \hbeta - \bbeta^0 \|_2 = \OP(\| \hbeta - \bbeta^0 \|_2)$.
	Also, we can use the same techniques to show that
	\[
	{n}^{1/2} \displaystyle \left|  \frac{1}{n} \sum_{i=1}^n \homega^\T \bX_i^\T \bG_i^{1/2}(\hbeta) \hatR^{-1} \{ \bG_{i}^{1/2}(\bbeta^*) - \bG_i^{1/2}(\hbeta) \} \bX_i \homega \bxi^\T (\hbeta - \bbeta^0) \right| = \oP(1).
	\]
	Hence, ${n}^{1/2}  | \homega^\T \{ \bS(\hbeta) - \bS(\bbeta^*) \} \homega \bxi^\T (\hbeta - \bbeta^0) | = \oP(1)$. This completes the proof for the statement that $I_{321} = \oP(1)$.
	
	{Next, we show that $I_{322} = {n}^{1/2} \homega^\T \bE_n(\bbeta^*)(\tbeta - \bbeta^0)  = \oP(1)$. For technical reasons, here we assume that $\hbeta$ and $\hat{\bR}$ are estimated using data that are independent of $\{\bY_i, \bX_i \}_{i=1}^n$. This may be achieved by data splitting, i.e. suppose we have $2n$ observations, and the half indexed by $i = n+1, \ldots, 2n$ are used to obtain the estimates $\hbeta$ and $\hat{\bR}^{-1}$, and the half indexed by $i = 1, \ldots, n$ are used as the observed data when constructing the projected estimating equation $\Psi^P(\theta)$ for inference.}
	
	We introduce the decomposition of $\bE_n(\bbeta)$ into four terms as follows. For any $\bbeta \in \mR^p$,
	\begin{align} \label{eq:decomp_En}
		\bE_n(\bbeta) & =  - \displaystyle \frac{1}{2n} \sum_{i=1}^n \sum_{j=1}^m \dot{v}(\mu_{ij}(\bbeta)) G_{ij}^{-1/2}(\bbeta) (y_{ij} - \mu_{ij}(\bbeta)) \bX_i^\T \bG_i^{1/2}(\bbeta) \hatR^{-1} \bare_j \bare_j^\T \bX_i  \nonumber \\
		& =  - \displaystyle \frac{1}{2n} \sum_{i=1}^n \sum_{j=1}^m \dot{v}(\mu_{ij}(\bbeta^0)) G_{ij}^{-1/2}(\bbeta^0) (y_{ij} - \mu_{ij}(\bbeta^0)) \bX_i^\T \bG_i^{1/2}(\bbeta^0) \hatR^{-1} \bare_j \bare_j^\T \bX_i  \nonumber \\
		&  - \displaystyle \frac{1}{2n} \sum_{i=1}^n \sum_{j=1}^m \dot{v}(\mu_{ij}(\bbeta^0)) G_{ij}^{-1/2}(\bbeta^0) (y_{ij} - \mu_{ij}(\bbeta^0)) \bX_i^\T \{ \bG_i^{1/2}(\bbeta) - \bG_i^{1/2}(\bbeta^0) \} \hatR^{-1} \bare_j \bare_j^\T \bX_i  \nonumber \\
		&  - \displaystyle \frac{1}{2n} \sum_{i=1}^n \sum_{j=1}^m \{ \dot{v}(\mu_{ij}(\bbeta)) G_{ij}^{-1/2}(\bbeta) - \dot{v}(\mu_{ij}(\bbeta^0)) G_{ij}^{-1/2}(\bbeta^0) \} (y_{ij} - \mu_{ij}(\bbeta^0)) \bX_i^\T \bG_i^{1/2}(\bbeta) \nonumber \\
		&  \qquad \qquad \times \hatR^{-1} \bare_j \bare_j^\T \bX_i \nonumber \\
		&  - \displaystyle \frac{1}{2n} \sum_{i=1}^n \sum_{j=1}^m \dot{v}(\mu_{ij}(\bbeta)) G_{ij}^{-1/2}(\bbeta) \{ \mu_{ij}(\bbeta^0) - \mu_{ij}(\bbeta) \} \bX_j^\T \bG_i^{1/2}(\bbeta) \hatR^{-1} \bare_j \bare_j^\T \bX_i \nonumber \\
		& =  \bE_{n1}(\bbeta^0) + \bE_{n2}(\bbeta) + \bE_{n3}(\bbeta) + \bE_{n4}(\bbeta). 
	\end{align}
	Then, in order to show $I_{322} = \oP(1)$, we will continue to show that ${n}^{1/2} \homega^\T \bE_{n1}(\bbeta^0)(\tilde{\bbeta} - \bbeta^0) = \oP(1)$ in Goal (a) below, and ${n}^{1/2} \homega^\T \bE_{nk}(\bbeta^*)(\tilde{\bbeta} - \bbeta^0) = \oP(1), ~ k=2,3,4$ in Goal (b) below.
	Recall that $\bbeta^*$ lies between $\tilde{\bbeta}$ and $\bbeta^0$, and  $\tilde{\bbeta} - \bbeta^0 = (\bI - \homega \bxi^\T) (\hbeta - \bbeta^0)$, and then
	\begin{equation} \label{eq:diff_xbeta_bar}
		\begin{array}{rcl}
			| \bx_{ij}^\T (\tilde{\bbeta} - \bbeta^0) | & = & | (\bx_{ij}^\T - \bx_{ij}^\T \homega \bxi^\T) (\hbeta - \bbeta^0) |  \\
			& \le & \| \bx_{ij} - \bx_{ij}^\T \homega \bxi \|_{\infty} \| \hbeta - \bbeta^0 \|_1 \\
			& \le & ( \| \bx_{ij} \|_{\infty} + |\bx_{ij}^\T \homega| \| \bxi \|_{\infty} ) \| \hbeta - \bbeta^0 \|_1 \\
			& \le & (K + \OP(1) \| \bxi \|_{\infty}) \| \hbeta - \bbeta^0 \|_1 \\
			& = & \OP(\| \hbeta - \bbeta^0 \|_1).
		\end{array}
	\end{equation}

	\medskip
	{
		\noindent {\textit{Goal (a): To show ${n}^{1/2} \homega^\T \bE_{n1}(\bbeta^0) (\tilde{\bbeta} - \bbeta^0) = \oP(1)$.}}
	}
	
	We first consider ${n}^{1/2} \homega^\T \bE_{n1}(\bbeta^0)(\tilde{\bbeta} - \bbeta^0)$ and rewrite
	\begin{equation} \label{eq:normal_decomp_En1_op1}
		\begin{array}{rl}
			& {n}^{1/2} \homega^\T \bE_{n1}(\bbeta^0) (\tilde{\bbeta} - \bbeta^0) \\
			= & - \displaystyle \frac{1}{2} \sum_{j=1}^m \sum_{i=1}^n \displaystyle n^{-1/2} \dot{v}(\mu_{ij}(\bbeta^0)) \homega^\T \bX_i^\T \bG_i^{1/2}(\bbeta^0) \hatR^{-1} \bare_j \bare_j^\T \bX_i (\tilde{\bbeta} - \bbeta^0) \varepsilon_{ij}(\bbeta^0),
		\end{array}
	\end{equation}
	where $\varepsilon_{ij}(\bbeta^0) = G_{ij}^{-1/2}(\bbeta^0) (y_{ij} - \mu_{ij}(\bbeta^0))$. Let $$h_{ij} = h_{ij} (\hbeta, \homega, \hatR, \bX_i) = \dot{v}(\mu_{ij}(\bbeta^0)) \homega^\T \bX_i^\T \bG_i^{1/2}(\bbeta^0) \hatR^{-1} \bare_j \bare_j^\T \bX_i (\tilde{\bbeta} - \bbeta^0),$$ and note that conditional on $(\hbeta, \homega, \hatR, \bX_i)$, $h_{ij}$ can be viewed as a constant. We observe that  the estimation for $\homega$ only depends on $\{\bX_i\}_{i=1}^n$ and $\hbeta$ through $\bS(\hbeta)$, so the conditional statements in the following Case I and Case II given $\hbeta, \homega, \hatR$ and $\{ \bX_i \}_{i=1}^n$ are equivalent to being conditional on $\hbeta, \hatR$ and $\{ \bX_i \}_{i=1}^n$. For explicit expression, we keep $\homega$ in the conditional statements.

	{\textit{Case I. When $\varepsilon_{ij}(\bbeta^0)$'s are sub-exponential}}
	
	Given $(\hbeta, \homega, \hatR, \{  \bX_i \}_{i=1}^n)$, by Bernstein's inequality (see Theorem 2.8.2 of \citet{vershynin2018high}), we have the following concentration inequality for $j= 1, \cdots, m$,
	\begin{align*}
		& \PP \left( \left| \sum_{i=1}^n \displaystyle n^{-1/2} h_{ij} \varepsilon_{ij}(\bbeta^0) \right| 
		\ge t  \mid  \hbeta, \homega, \hatR, \{  \bX_i \}_{i=1}^n  \right) \\
		\le & 2 \exp \left\{ - \displaystyle C \cdot \min \left( \frac{n t^2}{(L_j^{\prime})^2 \sum_{i=1}^n h_{ij}^2}, \frac{{n}^{1/2} t}{L_j^{\prime} \max_i |h_{ij}|} \right) \right\},
	\end{align*}
	where $C>0$ is some constant and $L_j^{\prime} = \| \varepsilon_{1j}(\bbeta^0) \|_{\psi_1}$ is finite. Because $| \bx_{ij}^\T \homega |  = \OP(1), \| \hatR^{-1} \| = \OP(1)$, and $\| \hbeta - \bbeta^0 \|_1 = \OP(s_0 \lambda)$. For any $\delta > 0$, there exists a constant $M > 0$ such that
	\[
	\PP \left[ \left\{ \| \bX_i \homega \|_2 \le M \right\} \cap \left\{ \| \hatR^{-1} \| \le M \right\} \cap \left\{ \| \hbeta - \bbeta^0 \|_1 / (s_0 \lambda) \le M \right\} \right] \ge 1 - \delta/4.
	\] 
	On the event $\mathcal{J} = \{ \| \bX_i \homega \|_2 \le M \} \cap \{ \| \hatR^{-1} \| \le M \} \cap \{ \| \hbeta - \bbeta^0 \|_1 / (s_0 \lambda) \le M \}$, due to Assumption \ref{assump:var_fun} and the fact that
	\begin{equation} \label{eq:ineq_h_ij}
		\begin{array}{l}
			|h_{ij}| \le C \| \bX_i \homega \|_2 \| \hatR^{-1} \| | \bx_{ij}^\T  (\tilde{\bbeta} - \bbeta^0) | \\
			\qquad \qquad \le C \| \bX_i \homega \|_2 \| \hatR^{-1} \| \cdot \left( \| \bx_{ij} \|_{\infty} + |\bx_{ij}^\T \homega|  \| \bxi \|_{\infty} \right) \| \hbeta - \bbeta^0 \|_1,
		\end{array}
	\end{equation}
	the two terms within the minimum function satisfy
	\[
	\begin{array}{rcl}
		\displaystyle \frac{n t^2}{(L_j^{\prime})^2 \sum_{i=1}^n h_{ij}^2} & \gtrsim & \displaystyle \frac{t^2}{ (L_j^{\prime})^2 M^6 (K + M \| \bxi \|_{\infty})^2 (s_0 \lambda)^2 }, \\
		& & \\
		\displaystyle \frac{{n}^{1/2} t}{L_j^{\prime} \max_i |h_{ij}|} & \gtrsim  & \displaystyle \frac{{n}^{1/2} t}{L_j^{\prime} M^3 (K + M \| \bxi \|_{\infty}) s_0 \lambda}.
	\end{array}
	\]
	Taking $t = C^{\prime} s_0 \lambda$ with a large enough constant $C^{\prime} > 0$, we have that on the event $\mathcal{J}$, when $n$ is large enough,
	\[
	\exp \left\{ - \displaystyle \frac{C n t^2}{(L_j^{\prime})^2 \sum_{i=1}^n h_{ij}^2}  \right\}  \le  \delta/4, \quad \mathrm{and} \quad
	\exp \left\{  - \displaystyle \frac{C {n}^{1/2} t}{L_j^{\prime} \max_i |h_{ij}|} \right\}  \le  \delta/4.
	\]
	Then, 
	\[
	\begin{array}{rl}
		& \PP \left( \left| \displaystyle \sum_{i=1}^n  n^{-1/2} h_{ij} \varepsilon_{ij}(\bbeta^0) \right| \ge t  \right) \\ 
		\le & 2 \E \exp \left\{ - \displaystyle C \cdot \min \left( \frac{n t^2}{(L_j^{\prime})^2 \sum_{i=1}^n h_{ij}^2}, \frac{{n}^{1/2} t}{L_j^{\prime} \max_i |h_{ij}|} \right) \right\}  \\
		= & 2 \E \left[ \exp \left\{ - \displaystyle C \cdot \min \left( \frac{n t^2}{(L_j^{\prime})^2 \sum_{i=1}^n h_{ij}^2}, \frac{{n}^{1/2} t}{L_j^{\prime} \max_i |h_{ij}|} \right) \right\} \cdot 1_{\mathcal{J}} \right] \\
		& \qquad + 2 \E \left[ \exp \left\{ - \displaystyle C \cdot \min \left( \frac{n t^2}{(L_j^{\prime})^2 \sum_{i=1}^n h_{ij}^2}, \frac{{n}^{1/2} t}{L_j^{\prime} \max_i |h_{ij}|} \right) \right\} \cdot 1_{\mathcal{J}^c} \right] \\
		\le & 2 \times \delta/4 \cdot \PP(\mathcal{J}) + 2 \PP(\mathcal{J}^c) \\
		\le & \delta.
	\end{array}
	\]
	Hence, $\sum_{i=1}^n \displaystyle n^{-1/2} h_{ij} \varepsilon_{ij}(\bbeta^0) = \OP(s_0 \lambda) = \oP(1)$. And since $m$ is finite, by \eqref{eq:normal_decomp_En1_op1}, we have ${n}^{1/2} \homega^\T \bE_{n1}(\bbeta^0)(\tilde{\bbeta} - \bbeta^0) = \oP(1)$. \\[0.05cm]

	{\textit{Case II. When $\varepsilon_{ij}(\bbeta^0)$'s are sub-Gaussian}}
	
	This is a special case of $\varepsilon_{ij}(\bbeta^0)$'s being sub-exponential, and since sub-Gaussianality is also very common, we briefly discuss the proof under this scenario. Given $(\hbeta, \homega, \hatR, \{  \bX_i \}_{i=1}^n)$, by the general Hoeffding's inequality (see Theorem 2.6.3 of \citet{vershynin2018high}), we have the following concentration inequality for $j= 1, \ldots, m$,
	\[
	\PP \left( \left| \sum_{i=1}^n \displaystyle n^{-1/2} h_{ij} \varepsilon_{ij}(\bbeta^0) \right| \ge t \mid \hbeta, \homega, \hatR, \{  \bX_i \}_{i=1}^n \right) \le 2 \exp\left( - \displaystyle \frac{Cnt^2}{L_j^2 \sum_{i=1}^n h_{ij}^2}  \right),
	\]
	where $C>0$ is some constant and $L_j = \| \varepsilon_{1j}(\bbeta^0) \|_{\psi_2}$ is finite. 
	On the event $\mathcal{J} = \{ \| \bX_i \homega \|_2 \le M \} \cap \{ \| \hatR^{-1} \| \le M \} \cap \{ \| \hbeta - \bbeta^0 \|_1 / (s_0 \lambda) \le M \}$, according to \eqref{eq:ineq_h_ij} and Assumption \ref{assump:var_fun},
	\[
	2 \exp\left( - \displaystyle \frac{Cnt^2}{L_j^2 \sum_{i=1}^n h_{ij}^2}  \right) \le 2 \exp \left( - \displaystyle \frac{C'nt^2}{L_j^2 n M^6 (K + M \| \bxi \|_{\infty})^2 (s_0 \lambda)^2} \right).
	\]
	Then, taking $t = C^{\prime\prime} s_0 \lambda$ with a large enough constant $C^{\prime\prime} > 0$ such that 
	\[
	\exp \left( - \displaystyle \frac{C^{\prime} t^2}{L_j^2  M^6 (K + M \| \bxi \|_{\infty})^2 (s_0 \lambda)^2} \right) \le \delta /4,
	\]
	we have
	\begin{equation*}
		\begin{array}{rl}
			& \PP \left( \left| \displaystyle \sum_{i=1}^n  n^{-1/2} h_{ij} \varepsilon_{ij}(\bbeta^0) \right| \ge t \right) \\
			\le &  \E \left\{ 2 \exp\left( - \displaystyle \frac{Cnt^2}{L_j^2 \sum_{i=1}^n h_{ij}^2}  \right) \right\}  \\
			= & \E \left\{ 2 \exp\left( - \displaystyle \frac{Cnt^2}{L_j^2 \sum_{i=1}^n h_{ij}^2}  \right)  1_{\mathcal{J}} \right\} + \E \left\{ 2 \exp\left( - \displaystyle \frac{Cnt^2}{L_j^2 \sum_{i=1}^n h_{ij}^2}  \right)  1_{\mathcal{J}^c} \right\} \\
			\le & 2 \exp \left( - \displaystyle \frac{C^{\prime} t^2}{L_j^2  M^6 (K + M \| \bxi \|_{\infty})^2 (s_0 \lambda)^2} \right) \PP(\mathcal{J}) + 2 \PP(\mathcal{J}^c) \\
			\le & 2 \times \delta/4 + 2 \times \delta/4 = \delta.
		\end{array}
	\end{equation*}
	Hence, $\sum_{i=1}^n \displaystyle n^{-1/2} h_{ij} \varepsilon_{ij}(\bbeta^0) = \OP(s_0 \lambda) = \oP(1)$. And since $m$ is finite, by \eqref{eq:normal_decomp_En1_op1}, we have ${n}^{1/2} \homega^\T \bE_{n1}(\bbeta^0)(\tilde{\bbeta} - \bbeta^0) = \oP(1)$.

	\medskip
	{
		\noindent {\textit{Goal (b): To show ${n}^{1/2} \homega^\T \bE_{nk}(\bbeta^*) (\tilde{\bbeta} - \bbeta^0) = \oP(1)$, for $k= 2,3, 4$.}}
	}
	
	We next consider ${n}^{1/2} \homega^\T \bE_{n2}(\bbeta^*)(\tilde{\bbeta} - \bbeta^0)$ and note that
	\begin{align} \label{eq:normal_decomp_En2_op1}
		& | {n}^{1/2} \homega^\T \bE_{n2}(\bbeta^*) (\tilde{\bbeta} - \bbeta^0) | \notag \\
		= & \left| \displaystyle \frac{1}{2} \sum_{i=1}^n \sum_{j=1}^m  \displaystyle n^{-1/2} \dot{v}(\mu_{ij}(\bbeta^0)) \homega^\T \bX_i^\T \{ \bG_i^{1/2}(\bbeta^*) - \bG_i^{1/2}(\bbeta^0) \} \hatR^{-1} \bare_j \bare_j^\T \bX_i (\tilde{\bbeta} - \bbeta^0) \varepsilon_{ij}(\bbeta^0) \right| \notag \\
		\le & \displaystyle \frac{1}{2 {n}^{1/2}} \sum_{i=1}^n \sum_{j=1}^m |\dot{v}(\mu_{ij}(\bbeta^0))|  \| \bX_i \homega \|_2 \| \bG_i^{1/2}(\bbeta^*) - \bG_i^{1/2}(\bbeta^0) \|  \times \notag \\ 
		& \qquad  \qquad  \qquad \| \hatR^{-1} \| \times | \bx_{ij}^\T (\tilde{\bbeta} - \bbeta^0) | \times | \varepsilon_{ij}(\bbeta^0) |.
	\end{align}
	By Assumption \ref{assump:var_fun}, $|\dot{v}(\mu_{ij}(\bbeta^0))|$ is bounded. We also have $\| \bX_i \homega \|_2 = \OP(1),  \| \hatR^{-1} \| = \OP(1)$ and $|\bx_{ij}^\T (\tilde{\bbeta} - \bbeta^0)| = \OP(\| \hbeta - \bbeta^0 \|_1) = \OP(s_0 \lambda)$. Since $\bG_i(\bbeta)$ is a diagonal matrix and $\bx_{ij}^\T\bbeta^*$ lies between $\bx_{ij}^\T\tilde{\bbeta}$ and $\bx_{ij}^\T\bbeta^0$, $\| \bG_i^{1/2}(\bbeta^*) - \bG_i^{1/2}(\bbeta^0) \| = \max_{1 \le j \le m} | G_{ij}^{1/2} (\bbeta^*) - G_{ij}^{1/2}(\bbeta^0) | = \OP( \max_{j} |\bx_{ij}^\T(\bbeta^* - \bbeta^0)|) \le \OP( \max_{j} |\bx_{ij}^\T(\tilde{\bbeta} - \bbeta^0)|) = \OP(\| \hbeta - \bbeta^0 \|_1)$, where the last equality is due to \eqref{eq:diff_xbeta_bar}.
	
	{\textit{Case I. When $\varepsilon_{ij}(\bbeta^0)$'s are sub-exponential}}
	
	Since $\E | \varepsilon_{ij}(\bbeta^0) | \le \| \varepsilon_{1j}(\bbeta^0) \|_{\psi_1} \le \max_{1 \le j \le m} \| \varepsilon_{1j}(\bbeta^0) \|_{\psi_1} = \mathcal{O}(1)$, $| \varepsilon_{ij}(\bbeta^0) | = \OP(1)$. 
	
	{\textit{Case II. When $\varepsilon_{ij}(\bbeta^0)$'s are sub-Gaussian}}
	
	Since $\E | \varepsilon_{ij}(\bbeta^0) | \le \| \varepsilon_{1j}(\bbeta^0) \|_{\psi_2} \le \max_{1 \le j \le m} \| \varepsilon_{1j}(\bbeta^0) \|_{\psi_2} = \mathcal{O}(1)$, $| \varepsilon_{ij}(\bbeta^0) | = \OP(1)$.

	Therefore, in both Case I and Case II, for \eqref{eq:normal_decomp_En2_op1}, we have
	\[  
	\left| {n}^{1/2} \homega^\T \bE_{n2}(\bbeta^*) (\tilde{\bbeta} - \bbeta^0) \right| \le \OP \left( n^{-1/2} \times nm \times \| \hbeta - \bbeta^0 \|_1^2 \right) = \OP \left( {n}^{1/2} s_0^2 \lambda^2 \right).
	\]
	By the assumption that $\max(s_0, s^{\prime} \| \baromega^0 \|_1) \lambda^{\prime} \{ \log(p) \}^{1/2} \to 0$ with $\lambda^{\prime} \asymp \| \baromega^0 \|_1 (s_0 \lambda + \| \hatR^{-1} - \barR^{-1} \|)$, we have $s_0^2 \lambda \{ \log(p) \}^{1/2} \to 0$. Since $\lambda \asymp \{ \log(p)/n \}^{1/2}$ for the penalized estimator $\hbeta$, ${n}^{1/2} s_0^2 \lambda^2 \asymp s_0^2 \lambda \{ \log(p) \}^{1/2}$. Hence, ${n}^{1/2} \homega^\T \bE_{n2}(\bbeta^*) (\tilde{\bbeta} - \bbeta^0) = \oP(1)$.
	
	Similar to showing ${n}^{1/2} \homega^\T \bE_{n2}(\bbeta^*) (\tilde{\bbeta} - \bbeta^0) = \oP(1)$, we can show that ${n}^{1/2} \homega^\T \bE_{nk}(\bbeta^*) (\tilde{\bbeta} - \bbeta^0) = \oP(1)$ for $k=3,4$. Thus, $I_{322} = \oP(1)$.
	
	For $I_{323} = {n}^{1/2} \homega^\T \bF_n(\bbeta^*)(\tilde{\bbeta} - \bbeta^0)$, we can show $I_{323} = \oP(1)$ using a decomposition of $\bF_n(\bbeta^*)$ similar to that of $\bE_n(\bbeta^*)$, so we omit the detailed proof here. This completes the proof that $I_{3} = I_{31} + I_{321} + I_{322} + I_{323} = \oP(1)$. Therefore, by Slutsky's Theorem,
	\[
	\displaystyle \frac{{n}^{1/2} \Psi^P(\theta^0)}{\{ (\bomega^0)^\T \bV^0 \bomega^0 \}^{1/2}} = \frac{{n}^{1/2} (\bomega^0)^\T \bar{\bPsi}(\bbeta^0)}{\{ (\bomega^0)^\T \bV^0 \bomega^0 \}^{1/2}} + \oP(1) 
	\]
	converges to $N(0,1)$ in distribution, as $n \to \infty$.
\end{proof}


\subsection{Proof of Theorem 1}

Finally, the detailed proof of the main result, Theorem 1, is provided below. 


\begin{proof}[Proof of Theorem 1]
	
	By the first-order Taylor expansion of $\Psi^P (\hat{\theta})$ around $\theta^0$, 
	\begin{align} \label{eq:taylor_onestep}
		{n}^{1/2} ( \tilde{\theta} - \theta^0 ) & = {n}^{1/2} \left\{  ( \hat{\theta} - \theta^0 ) + \displaystyle \frac{\Psi^P(\hat{\theta})}{\homega^\T \bS (\hbeta) \homega}  \right\} \notag \\
		& = \displaystyle \frac{{n}^{1/2} \Psi^P(\theta^0) }{\homega^\T \bS (\hbeta) \homega} + {n}^{1/2} \left\{    1 + \frac{\dot{\Psi}^P(\theta^*)}{\homega^\T \bS (\hbeta) \homega} \right\} ( \hat{\theta} - \theta^0 ),
	\end{align}
	where $\theta^*$ lies between $\hat{\theta}$ and $\theta^0$. In Lemma \ref{lemma:hat_omega_rate}, we have shown that $\tilde{\bomega}^\T \bS(\hbeta) \tilde{\bomega} = (\baromega^0)^\T \bS^0 \baromega^0 + \oP(1)$. By the definition of $\homega$, $\homega^\T \bS(\hbeta) \homega = \{ \tilde{\bomega}^\T \bS(\hbeta) \tilde{\bomega} \}^{-1} = \{ (\baromega^0)^\T \bS^0 \baromega^0 \}^{-1} + \oP(1) = (\bomega^0)^\T \bS^0 \bomega^0 + \oP(1)$. By Lemma \ref{lemma:leading_normal}, ${n}^{1/2} \Psi^P(\theta^0) / \{ (\bomega^0)^\T \bV^0 \bomega^0 \}^{1/2}$ converges to $N(0, 1)$ in distribution as $n \to \infty$. Hence, 
	\[
	{n}^{1/2} \displaystyle \frac{\Psi^P(\theta^0)}{\homega^\T \bS(\hbeta) \homega} \times \frac{(\bomega^0)^\T \bS^0 \bomega^0}{ \{ (\bomega^0)^\T \bV^0 \bomega^0 \}^{1/2} }
	\]
	should converge to $N(0, 1)$ in distribution as $n \to \infty$.

	Next, we show that ${n}^{1/2} [ 1 + \{ \homega^\T \bS (\hbeta) \homega \}^{-1} \dot{\Psi}^P(\theta^*) ] (\hat{\theta} - \theta^0) = \oP(1)$. Recall that $\theta^*$ here lies between $\hat{\theta} = \bxi^\T \hbeta$ and $\theta^0 = \bxi^\T \bbeta^0$. Then, we rewrite
	\[
	\begin{array}{rcl}
		\left| {n}^{1/2} \left[ 1 + \{ \homega^\T \bS (\hbeta) \homega \}^{-1} \dot{\Psi}^P(\theta^*) \right] (\hat{\theta} - \theta^0) \right| & = & \left| {n}^{1/2} \displaystyle \frac{\homega^\T \bS (\hbeta) \homega + \dot{\Psi}^P(\theta^*)}{\homega^\T \bS (\hbeta) \homega} \bxi^\T (\hbeta - \bbeta^0) \right|.   \\
	\end{array}
	\]
	For convenience, denote $\bbeta^{**} = \hbeta + \homega (\theta^* - \bxi^\T \hbeta)$, and then
	\[
	\begin{array}{rl}
		& \homega^\T \bS (\hbeta) \homega + \dot{\Psi}^P(\theta^*)  \\
		= & \displaystyle \homega^\T \left\{ \bS(\hbeta) +  \left. \frac{\partial \bPsi}{\partial \bbeta^\T} \right|_{\bbeta^{**}} \right\} \homega \\
		= & \homega^\T \left\{ \bS(\hbeta) - \bS(\bbeta^{**}) + \bE_n(\bbeta^{**}) +  \bF_n(\bbeta^{**}) \right\} \homega. 
	\end{array}
	\]
	Since $\theta^*$ lies between $\hat{\theta} = \bxi^\T \hbeta$ and $\theta^0 = \bxi^\T \bbeta^0$, we write $\theta^* = t \bxi^\T \hbeta + (1 - t) \bxi^\T \bbeta^0$ for some $0 \le t \le 1$. Then $\bbeta^{**} - \hbeta = - (1-t) \homega \bxi^\T (\hbeta - \bbeta^0)$, and $| \bx_{ij}^\T ( \bbeta^{**} - \hbeta ) | \le | \bx_{ij}^\T \homega | \times  | \bxi^\T (\hbeta - \bbeta^0) | \le | \bx_{ij}^\T \homega | \cdot \| \bxi \|_2 \| \hbeta - \bbeta^0 \|_2 = \OP(\| \hbeta - \bbeta^0 \|_2)$.
	Similar to showing \eqref{eq:third_part_I321} in the proof of Lemma \ref{lemma:leading_normal}, we can show that
	\[
	\begin{array}{rl}
		& {n}^{1/2}  | \homega^\T \{ \bS(\bbeta^{**}) - \bS(\hbeta)   \} \homega \bxi^\T (\hbeta - \bbeta^0) | \\
		= & {n}^{1/2} \displaystyle \left|  \frac{1}{n} \sum_{i=1}^n \homega^\T \bX_i^\T \{ \bG_{i}^{1/2}(\bbeta^{**}) - \bG_i^{1/2}(\hbeta) \} \hatR^{-1} \bG_i^{1/2}(\bbeta^{**}) \bX_i \homega \bxi^\T (\hbeta - \bbeta^0) \right|  \\
		& + {n}^{1/2} \displaystyle \left|  \frac{1}{n} \sum_{i=1}^n \homega^\T \bX_i^\T \bG_i^{1/2}(\hbeta) \hatR^{-1} \{ \bG_{i}^{1/2}(\bbeta^{**}) - \bG_i^{1/2}(\hbeta) \} \bX_i \homega \bxi^\T (\hbeta - \bbeta^0) \right| \\
		= &  \oP(1).
	\end{array}
	\]
	We continue to prove that ${n}^{1/2} \homega^\T \bE_{n}(\bbeta^{**}) \homega \bxi^\T ( \hbeta - \bbeta^0 ) = \oP(1)$. Given the decomposition of $\bE_{n}(\bbeta) = \bE_{n1}(\bbeta^0) + \sum_{k=2}^4 \bE_{nk}(\bbeta)$ in \eqref{eq:decomp_En}, we first prove ${n}^{1/2} \homega^\T \bE_{n1}(\bbeta^{0}) \homega \bxi^\T ( \hbeta - \bbeta^0 ) = \oP(1)$, followed by the arguments for ${n}^{1/2} \homega^\T \bE_{nk}(\bbeta^{**}) \homega \bxi^\T ( \hbeta - \bbeta^0 ) = \oP(1), ~ k=2,3,4$. By Assumption \ref{assump:var_fun}, $| \dot{v}(\mu_{ij}(\bbeta^0)) |$ and $\| \bG_i^{1/2}(\bbeta^0) \|$ are both bounded. By Assumption \ref{assump:bdd_xomega} and Lemma \ref{lemma:hat_omega_rate}, $\| \bX_i \homega \|_{\infty} = \OP(1)$. The standardized residual $\varepsilon_{ij}(\bbeta^0) = \OP(1)$. By Assumption \ref{assump:cor_mat}, $\| \hatR^{-1} \| = \OP(1)$. In fact, since $|\bxi^\T (\hbeta - \bbeta)| = \oP(1)$, it suffices to show $ \homega^\T \bE_{n1}(\bbeta^0) \homega = \oP(n^{-1/2}) $. The same data splitting argument, where $(\hbeta, \hatR)$ are estimated independently as shown in the proof of Lemma \ref{lemma:leading_normal}, is again invoked here. For $\homega^\T \bE_{n1}(\bbeta^0) \homega$, we can rewrite $\homega^\T  \bE_{n1}(\bbeta^0) \homega = - (2n)^{-1}  \sum_{j=1}^m \sum_{i=1}^n b_{ij} \varepsilon_{ij}(\bbeta^0)$, where $$ b_{ij} = b_{ij}(\hbeta, \hatR, \homega, \bX_i) = \dot{v}(\mu_{ij}(\bbeta^0)) \homega^\T \bX_i^\T \bG_i^{1/2}(\bbeta^0) \hatR^{-1} \bare_j \bare_j^\T \bX_i \homega.$$
	And $ | b_{ij}|  \le  | \dot{v}(\mu_{ij}(\bbeta^0)) | ~ \|  \bX_i \homega \|_2 \| \bG_i^{1/2}(\bbeta^0) \| \times \| \hatR^{-1} \| \times  | \bx_{ij}^\T \homega |.$
	
	{\textit{Case I. When $\varepsilon_{ij}(\bbeta^0)$'s are sub-exponential}}
	
	Given $(\hbeta, \homega, \hatR, \{ \bX_i \}_{i=1}^n)$, by Bernstein's inequality (see Theorem 2.8.2 of \citet{vershynin2018high}),we have the following concentration inequality for $j = 1, \ldots, m$,
	\[
	\PP \left( \left| \displaystyle \sum_{i=1}^n   b_{ij} \varepsilon_{ij}(\bbeta^0) \right| \ge t \mid \hbeta, \homega, \hatR, \{  \bX_i \}_{i=1}^n \right) \le 2 \exp \left\{ - \displaystyle C \min \left( \frac{ t^2}{(L_j^{\prime})^2 \sum_{i=1}^n b_{ij}^2}, \frac{ t}{L_j^{\prime} \max_i |b_{ij}|} \right) \right\},
	\]
	where $C>0$ is some constant and $L_j^{\prime} = \| \varepsilon_{1j}(\bbeta^0) \|_{\psi_1} = {O}(1)$. For any $\delta >0$, there exists a constant $M>0$ such that $\PP(\mathcal{J}^{\prime}) \ge 1-\delta/4$, where $\mathcal{J}^{\prime} = \{ \| \bX_i \homega \|_2 \le M, i=1, \ldots, n \} \cap \{ \| \hatR^{-1} \| \le M \}$. On the event $\mathcal{J}^{\prime} = \{ \| \bX_i \homega \|_2 \le M, i=1, \ldots, n \} \cap \{ \| \hatR^{-1} \| \le M \}$,
	\[
	\begin{array}{rcl}
		\displaystyle \frac{ t^2}{(L_j^{\prime})^2 \sum_{i=1}^n b_{ij}^2} & \gtrsim & \displaystyle \frac{t^2}{(L_j^{\prime})^2 n M^6},  \\
		\displaystyle  \frac{ t}{L_j^{\prime} \max_i |b_{ij}|} & \gtrsim & \displaystyle \frac{t}{L_j^{\prime} M^3}.
	\end{array}
	\]
	Let $t = C^{\prime} {n}^{1/2}$ with a large enough constant $C^{\prime}>0$ such that, on the event $\mathcal{J}^{\prime}$, we have 
	\[
	\exp \left\{ - \displaystyle C  \min \left( \frac{ t^2}{(L_j^{\prime})^2 \sum_{i=1}^n b_{ij}^2}, \frac{ t}{L_j^{\prime} \max_i |b_{ij}|} \right) \right\} \le \delta/4.
	\]
	Then, 
	\[
	\begin{array}{rl}
		\PP \left( \left| \displaystyle \sum_{i=1}^n   b_{ij} \varepsilon_{ij}(\bbeta^0) \right| \ge t \right)   &
		\le  2 \E \left[ 1_{\mathcal{J}^{\prime}} \exp \left\{ - \displaystyle C  \min \left( \frac{ t^2}{(L_j^{\prime})^2 \sum_{i=1}^n b_{ij}^2}, \frac{ t}{L_j^{\prime} \max_i |b_{ij}|} \right) \right\} \right] \\
		& \quad + 2 \E \left[ 1_{(\mathcal{J}^{\prime})^c} \exp \left\{ - \displaystyle C  \min \left( \frac{ t^2}{(L_j^{\prime})^2 \sum_{i=1}^n b_{ij}^2}, \frac{ t}{L_j^{\prime} \max_i |b_{ij}|} \right) \right\} \right] \\
		& \le 2 \times \delta/4 \times \PP(\mathcal{J}^{\prime}) + 2 \times \PP \{ (\mathcal{J}^{\prime})^c \} \\
		& \le \delta.
	\end{array}
	\]
	Hence, $\sum_{i=1}^n b_{ij} \varepsilon_{ij}(\bbeta^0) = \OP({n}^{1/2})$ and $\homega^\T \bE_{n1} (\bbeta^0) \homega = \OP(n^{-1/2}) = \oP(1)$.
	
	{\textit{Case II. When $\varepsilon_{ij}(\bbeta^0)$'s are sub-Gaussian}}
	
	Given $(\hbeta, \homega, \hatR, \{ \bX_i \}_{i=1}^n)$, which again is equivalent to given $(\hbeta,  \hatR, \{ \bX_i \}_{i=1}^n)$, by the general Hoeffding’s inequality (see Theorem 2.6.3 of \citet{vershynin2018high}),we have the following concentration inequality for $j = 1, \cdots, m$,
	\[
	\PP \left( \left| \sum_{i=1}^n b_{ij} \varepsilon_{ij}(\bbeta^0) \right| \ge t ~ | ~ \hbeta, \homega, \hatR, \{ \bX_i \}_{i=1}^n \right) \le 2 \exp \left( - \displaystyle \frac{C t^2}{L_j^2 \sum_{i=1}^n b_{ij}^2} \right),
	\]
	where $L_j = \| \varepsilon_{1j}(\bbeta^0) \|_{\psi_2} = {O}(1)$. For any $\delta > 0$, there exists a constant $M>0$ such that $\PP(\{ \| \bX_i \homega \|_2 \le M, i=1, \ldots, n \} \cap \{ \| \hatR^{-1} \| \le M \} ) \ge 1- \delta/4$. On the event $\mathcal{J}^{\prime} = \{ \| \bX_i \homega \|_2 \le M, i=1, \cdots, n \} \cap \{ \| \hatR^{-1} \| \le M \}$,
	\[
	\PP \left( \left| \sum_{i=1}^n b_{ij} \varepsilon_{ij}(\bbeta^0) \right| \ge t ~ | ~ \hbeta, \homega, \hatR, \{ \bX_i \}_{i=1}^n \right)  \le 2 \exp \left( - \displaystyle \frac{C t^2}{L_j^2 n M^6 } \right).
	\]
	Let $t = C^{\prime} {n}^{1/2}$ for a large enough constant $C^{\prime} > 0$ such that $\exp \{ - C t^2 / (L_j^2 n M^6) \} \le \delta/4$. Then
	\[
	\begin{array}{rl}
		& \PP \left( \left| \sum_{i=1}^n b_{ij} \varepsilon_{ij}(\bbeta^0) \right| \ge t \right)  \\
		\le &  2 \E \left\{ 1_{\mathcal{J}^{\prime}} \exp \left( \displaystyle - \frac{Ct^2}{L_j^2 \sum_{i=1}^n b_{ij}^2} \right) \right\} + 2 \E \left\{ 1_{(\mathcal{J}^{\prime})^c} \exp \left( \displaystyle - \frac{Ct^2}{L_j^2 \sum_{i=1}^n b_{ij}^2} \right) \right\} \\
		\le & 2 \times \delta/4 + 2 \PP\{(\mathcal{J}^{\prime})^c\} \\
		\le & \delta.
	\end{array}
	\]
	So $\sum_{i=1}^n b_{ij} \varepsilon_{ij}(\bbeta^0) = \OP({n}^{1/2})$ and $\homega^\T \bE_{n1} (\bbeta^0) \homega = \OP(n^{-1/2}) = \oP(1)$. 
	
	Now we prove that ${n}^{1/2} \homega^\T \bE_{n2}(\bbeta^{**}) \homega \bxi^\T (\hbeta - \bbeta^0) = \oP(1)$. By the definition of $\bbeta^{**}$, $\bbeta^{**} - \bbeta^0 = \{ \bI - (1-t) \homega \bxi^\T \} (\hbeta - \bbeta^0)$ for some $0 \le t \le 1$. By Assumption \ref{assump:var_fun}, $| G_{ij}^{1/2}(\bbeta^{**}) - G_{ij}^{1/2}(\bbeta^0) | = \OP(|\bx_{ij}^\T (\hbeta - \bbeta^0)|) \le \OP (\| \hbeta - \bbeta^0 \|_1)$. Hence,
	\[
	\begin{array}{rl}
		& | {n}^{1/2} \homega^\T \bE_{n2}(\bbeta^{**}) \homega \bxi^\T (\hbeta - \bbeta^0) | \\
		= & \left|   \displaystyle  \frac{1}{2 {n}^{1/2} } \sum_{i=1}^n \sum_{j=1}^m \dot{v}(\mu_{ij}(\bbeta^0)) \varepsilon_{ij}(\bbeta^0) \homega^\T \bX_i^\T \{ \bG_i^{1/2}(\bbeta^{**}) - \bG_i^{1/2}(\bbeta^0) \} \hatR^{-1} \bare_j \bare_j^\T \bX_i  \homega \bxi^\T (\hbeta - \bbeta^0) \right|  \\
		\le & \displaystyle \frac{1}{2 {n}^{1/2}} \sum_{i=1}^n \sum_{j=1}^m | \dot{v}(\mu_{ij}(\bbeta^0)) | ~ | \varepsilon_{ij}(\bbeta^0) | ~ \| \bX_i \homega \|_2 \| \bG_i^{1/2}(\bbeta^{**}) - \bG_i^{1/2}(\bbeta^0) \| ~ \times \\
		& \qquad \| \hatR^{-1} \| ~ | \bx_{ij}^\T \homega | ~ \| \bxi \|_{2} \| \hbeta - \bbeta^0 \|_2 \\
		\le & C {n}^{1/2} \OP( \| \hbeta - \bbeta^0 \|_1 \| \hbeta - \bbeta^0 \|_2   ) \\
		= & \OP(s_0^{3/2} \log(p) / {n}^{1/2}) = \oP(1).
	\end{array}
	\]
	Due to the assumption that $\max(s_0, s^{\prime} \| \baromega^0 \|_1) \lambda^{\prime} \{ \log(p) \}^{1/2} \to 0$ with $\lambda^{\prime} \asymp \| \baromega^0 \|_1 (s_0 \lambda + r_n \|)$,  it is implied that $s_0^{3/2} \log(p) / {n}^{1/2} \to 0$. And hence, ${n}^{1/2} \homega^\T \bE_{n2}(\bbeta^{**}) \homega \bxi^\T (\hbeta - \bbeta^0) = \oP(1).$ 
	
	Similarly, we can show ${n}^{1/2} \homega^\T \bE_{nk}(\bbeta^{**}) \homega \bxi^\T (\hbeta - \bbeta^0) = \oP(1)$ for $k= 3, 4$. So ${n}^{1/2} \homega^\T \bE_{n}(\bbeta^{**})  \homega \bxi^\T (\hbeta - \bbeta^0) = \oP(1)$. With similar arguments, it is easy to see that ${n}^{1/2} \homega^\T  \bF_{n}(\bbeta^{**})  \homega \bxi^\T (\hbeta - \bbeta^0) = \oP(1)$ and we will omit the proof here. This shows that ${n}^{1/2} [ 1 + \{ \homega^\T \bS (\hbeta) \homega \}^{-1} \dot{\Psi}^P(\theta^*) ] (\hat{\theta} - \theta^0) = \oP(1)$.
	
	By Taylor expansion \eqref{eq:taylor_onestep} and Slutsky's theorem, the conclusion of this theorem holds.
\end{proof}


\section{Further discussion on some technical conditions}
\label{supp-supp:sec:cond}

\subsection{Restricted eigenvalue condition on $\ddot{\ell}_n (\bbeta^0)$}

{Establishing the rates of convergence for the penalized quasi log-likelihood estimator $\hbeta$ is pivotal for our theory. One commonly adopted requirement for this task is the restricted eigenvalue condition, the earlier discussion on which can be found in \citet{raskutti2010restricted} and \citet{buhlmann2011statistics} for high-dimensional linear regression with random designs. In our setting of random designs, Assumptions~\ref{assump:bound_covs}, \ref{assump:var_fun}, \ref{assump:bdd_eigen_covs} and \ref{assump:cor_mat} imply that the following restricted eigenvalue condition on $\ddot{\ell}_n(\bbeta^0)  = \partial \ell^2(\bbeta) / \partial \bbeta \partial \bbeta^\T$, which is identical to Assumption~2.2 in \citet{fang2020test}, holds with probability going to one:
	For any set $\mathcal{S} \subseteq \{ 1, \ldots, p \}$ with $|\mathcal{S}| = s_0$, 
	\[
	\mathrm{RE}(\tau, \mathcal{S}, \ddot{\ell}_n(\bbeta^0)) = \inf \left\{ \displaystyle \frac{\bnu^\T \ddot{\ell}_n(\bbeta^0) \bnu}{\| \bnu_{\mathcal{S}} \|_2^2} : ~ \bnu \in \mR^p, \bnu \ne 0, \| \bnu_{\mathcal{S}^c} \|_1 \le \tau \| \bnu_{\mathcal{S}} \|_1 \right\} \ge \tau_0
	\]
	for some constants $\tau \ge 1$ and $\tau_0 > 0$. Detailed verification can be found below. Hence, the estimation and prediction error rates of the penalized quasi log-likelihood estimator $\hbeta$ can be established, as stated in Lemma 1 in the main text. We omit the proof of Lemma 1, and interested readers may refer to  \citet{van2012quasi} and \citet{fang2020test}.}

The following proposition states that under the current assumptions in Theorem 1, the restricted eigenvalue condition holds for $\ddot{\ell}_n (\bbeta^0)$ with large probability. Its proof is somewhat similar to that of Lemma 6.17 in \citet{buhlmann2011statistics}. Hence, the estimation and the prediction error rates for $\hbeta$ are legitimately established as in Lemma 1. Note that with a canonical link function $g(\cdot)$, as considered in Section 2 of the main text, $\dot{\mu}(\eta) = v(\eta)$ for $\eta \in \mR$, and then 
\[
\displaystyle \frac{\partial^2 \ell_n (\bbeta)}{\partial \bbeta \partial \bbeta^\T} = \displaystyle \frac{1}{n} \sum_{i=1}^n \sum_{j=1}^m v(\bx_{ij}^\T \bbeta) \bx_{ij} \bx_{ij}^\T.
\]

\begin{proposition} \label{prop:re_cond}
	Under Assumptions \ref{assump:bound_covs}, \ref{assump:var_fun}, \ref{assump:bdd_eigen_covs} and \ref{assump:cor_mat}, the following restricted eigenvalue condition for $\ddot{\ell}_n (\bbeta^0)$ holds with probability going to one:
	for any set $\mathcal{S} \subseteq \{ 1, \ldots, p \}$ with $|\mathcal{S}| = s_0$, 
	\[
	\mathrm{RE}(\tau, \mathcal{S}, \ddot{\ell}_n(\bbeta^0)) = \inf \left\{ \displaystyle \frac{\bnu^\T \ddot{\ell}_n(\bbeta^0)\bnu}{\| \bnu_{\mathcal{S}} \|_2^2} : ~ \bnu \in \mR^p, \bnu \ne \mathbf{0}, \| \bnu_{\mathcal{S}^c} \|_1 \le \tau \| \bnu_{\mathcal{S}} \|_1 \right\} \ge \tau_0,
	\]
	for some constant $\tau_0 > 0$, where $\ddot{\ell}_n(\bbeta) = \partial \ell^2(\bbeta) / \partial \bbeta \partial \bbeta^\T$.
\end{proposition}

\begin{proof}[Proof of Proposition \ref{prop:re_cond}]
	
	First, under Assumptions \ref{assump:bound_covs} and \ref{assump:var_fun}, it is straightforward to show that the difference between $\ddot{\ell}_n (\bbeta^0)$ and $\E \{ \sum_{j=1}^m v (\bx_{1j}^\T \bbeta^0) \bx_{1j} \bx_{1j}^\T \}$ is small. To be specific, there exists a constant $C > 0$ such that $\| \sum_{j=1}^m v ( \bx_{ij}^\T \bbeta^0 ) \bx_{ij} \bx_{ij}^\T \|_{\infty} \le m C$. Applying Hoeffding's inequality, we have that for every $k, l \in \{ 1, \ldots, p \}$ and $t > 0$,
	\begin{align*}
		& \PP \left( \left| \displaystyle \frac{1}{n} \sum_{i=1}^n \sum_{j=1}^m v (\bx_{ij}^\T \bbeta^0) x_{ijk} x_{ijl} - \E \left\{ \sum_{j=1}^m v (\bx_{1j}^\T \bbeta^0) x_{1jk} x_{1jl} \right\} \right| \ge t \right) \\
		\le & 2\exp\{ - n t^2 / (2 m^2 C^2) \}.
	\end{align*}
	Then
	\begin{align*}
		& \PP \left( \left\| \displaystyle \frac{1}{n} \sum_{i=1}^n \sum_{j=1}^m v (\bx_{ij}^\T \bbeta^0) \bx_{ij} \bx_{ij}^\T - \E \left\{ \sum_{j=1}^m v (\bx_{1j}^\T \bbeta^0) \bx_{1j} \bx_{1j}^\T \right\} \right\|_{\infty} \ge t \right)  \\
		\le & 2 p^2 \exp\{ - n t^2 / (2 m^2 C^2) \}.
	\end{align*}
	Taking $t \asymp \{ \log(p)/n \}^{1/2}$, we see that $\| \ddot{\ell}_n (\bbeta^0) - \E \{ \sum_{j=1}^m v (\bx_{1j}^\T \bbeta^0) \bx_{1j} \bx_{1j}^\T \} \|_{\infty} = \OP [ \{ \log(p) / n \}^{1/2} ]$.
	
	Second, the expectation $\E \{ \sum_{j=1}^m v (\bx_{1j}^\T \bbeta^0) \bx_{1j} \bx_{1j}^\T \}$ itself satisfies the restricted eigenvalue condition. Without loss of generality, for a positive definite matrix $\bSigma_1$, if its smallest eigenvalue $\lambda_{\mathrm{min}} (\bSigma_1) \ge C$ for some constant $C > 0$, then the restricted eigenvalue condition holds for $\bSigma_1$. To see this, note that since $\| \bnu_{\calS} \|_2 \le \| \bnu \|_2$, 
	\begin{align*}
		& \inf \left\{ \displaystyle \frac{\bnu^\T \bSigma_1 \bnu}{ \| \bnu_{\calS} \|_2^2}: \bnu \in \mR^p, \bnu \ne 0, \| \bnu_{\calS^c} \|_1 \le \tau \| \bnu_{\calS} \|_1 \right\} \\
		\ge & \inf \left\{ \displaystyle \frac{\bnu^\T \bSigma_1 \bnu}{ \| \bnu \|_2^2}: \bnu \in \mR^p, \bnu \ne 0, \| \bnu_{\calS^c} \|_1 \le \tau \| \bnu_{\calS} \|_1 \right\} \\
		\ge & \inf \left\{ \displaystyle \frac{\bnu^\T \bSigma_1 \bnu}{ \| \bnu \|_2^2}: \bnu \in \mR^p, \bnu \ne 0 \right\} \\
		= & \lambda_{\mathrm{min}} (\bSigma_1) \ge C.
	\end{align*}
	By Assumptions \ref{assump:var_fun} and \ref{assump:bdd_eigen_covs}, $$\lambdamin ( \E \{ \sum_{j=1}^m v (\bx_{1j}^\T \bbeta^0) \bx_{1j} \bx_{1j}^\T \} ) \ge K_2^{-1} \lambdamin ( \E \sum_{j=1}^m \bx_{1j} \bx_{1j}^\T ) = K_2^{-1} \lambdamin(\bX_1 \bX_1^\T ) \ge c/K_2, $$ where the constant $c>0$ is the same as in Assumption \ref{assump:bdd_eigen_covs}.  And hence, the restricted eigenvalue condition holds for the expected matrix $\E \{ \sum_{j=1}^m v (\bx_{1j}^\T \bbeta^0) \bx_{1j} \bx_{1j}^\T \}$.
	
	Finally, since $\E \{ \sum_{j=1}^m v (\bx_{1j}^\T \bbeta^0) \bx_{1j} \bx_{1j}^\T \}$ satisfies the restricted eigenvalue condition with \\
	$\mathrm{RE} (\tau, \calS, \E \{ \sum_{j=1}^m v (\bx_{1j}^\T \bbeta^0) \bx_{1j} \bx_{1j}^\T \}) \ge \tau_0 = c/K_2,$ and that $\| \ddot{\ell}_n (\bbeta^0) - \E \{ \sum_{j=1}^m v (\bx_{1j}^\T \bbeta^0) \bx_{1j} \bx_{1j}^\T \} \|_{\infty} = \OP [ \{ \log(p) / n \}^{1/2} ]$, we conclude that the restricted eigenvalue condition also holds for $\ddot{\ell}_n (\bbeta^0)$ with \\ $\mathrm{RE} (\tau, \calS, \ddot{\ell}_n (\bbeta^0)) \ge \tau_0 /2 = c/(2 K_2)$ with probability going to one. Without loss of generality, suppose a positive definite matrix $\bSigma_1$ satisfied the restricted eigenvalue condition with $\tau_0$ and another positive definite matrix $\bSigma_2$ is close to $\bSigma_1$ such that $\| \bSigma_2 - \bSigma_1 \|_{\infty} \le \tilde{\lambda}$ for some small $\tilde{\lambda} > 0$. Then for $|\calS| = s_0$ and $\bnu \in \mR^p$ such that $ \| \bnu_{\calS^c} \|_1 \le \tau \| \bnu_{\calS} \|_1$,
	\begin{equation} \label{eq:tmp_re1}
		| \bnu^\T \bSigma_2 \bnu - \bnu^\T \bSigma_1 \bnu | \le \| \bSigma_2 - \bSigma_1 \|_{\infty} \| \bnu \|_1^2 \le \tilde{\lambda} \| \bnu \|_1^2.
	\end{equation}
	Since $\| \bnu \|_1 = \| \bnu_{\calS} \|_1 + \| \bnu_{\calS^c} \|_1 \le (1 + \tau) \| \bnu_{\calS} \|_1 \le (1+ \tau) {s_0}^{1/2} \| \bnu_{\calS} \|_2$, by the definition of $\mathrm{RE}(\tau, \calS, \bSigma_1)$,
	\begin{equation} \label{eq:tmp_re2}
		\| \bnu \|_1 \le (1 + \tau) {s_0}^{1/2} \| \bnu_{\calS} \|_2 \le (1+\tau) {s_0}^{1/2} \{ \bnu^\T \bSigma_1 \bnu / \tau_0 \}^{1/2}.
	\end{equation}
	Combining \eqref{eq:tmp_re1} and \eqref{eq:tmp_re2}, we have
	\[
	\left| \displaystyle \frac{\bnu^\T \bSigma_2 \bnu}{\| \bnu_{\calS} \|_2} - \frac{\bnu^\T \bSigma_1 \bnu}{\| \bnu_{\calS} \|_2} \right| \le \frac{\tilde{\lambda} (1+\tau)^2 s_0}{\tau_0} \displaystyle \frac{\bnu^\T \bSigma_1 \bnu}{\| \bnu_{\calS} \|_2}, 
	\]
	which implies that 
	\[
	\displaystyle \frac{\bnu^\T \bSigma_2 \bnu}{\| \bnu_{\calS} \|_2^2} \ge \frac{1}{2} \frac{\bnu^\T \bSigma_1 \bnu}{\| \bnu_{\calS} \|_2^2}
	\]
	when $\tilde{\lambda} (1+\tau)^2 s_0 / \tau_0 \le 1/2$. Now with $\tilde{\lambda} \asymp \{ \log(p)/n \}^{1/2}$, $\bSigma_1 = \E \{ \sum_{j=1}^m v (\bx_{1j}^\T \bbeta^0) \bx_{1j} \bx_{1j}^\T \}$ and $\bSigma_2 = \ddot{\ell}_n (\bbeta^0)$, the final conclusion stands with probability going to one.
\end{proof}

\subsection{Estimation of the working correlation matrix $\hatR$}

In Assumption \ref{assump:cor_mat}, it is required that the estimated working correlation matrix $\hatR$ satisfies
\begin{equation} \label{eq:rate_hatR_assump5}
	\| \hatR - \barR \| = \oP[ \{ \max(s_0, s^{\prime}\| \baromega^0 \|_1 ) \{ \log(p) \}^{1/2} \| \baromega^0 \|_1 \}^{-1} ],
\end{equation}
where $\barR \in \mR^{m \times m}$ is some fixed positive definite matrix. 

One example to consider is the moment estimator for $\hatR$ under unstructured working correlation, which is similar to  \citet{balan2005asymptotic}, i.e.,
\[
\hatR = \displaystyle \frac{1}{n} \sum_{i=1}^{n} \bG_i^{-1/2} (\hbeta) (\bY_i - \bmu_i(\hbeta)) (\bY_i - \bmu_i(\hbeta))^\T \bG_i^{-1/2} (\hbeta),
\]
and in this case $\barR = \bR_0$. 

\begin{proposition} \label{prop:hatR}
	For the moment estimator $\hatR$ under unstructured working correlation as defined above, under Assumptions \ref{assump:bound_covs}, \ref{assump:var_fun}, \ref{assump:bdd_eigen_covs} and \ref{assump:cor_mat} (except the requirement \eqref{eq:rate_hatR_assump5} in Assumption \ref{assump:cor_mat}), we have
	\[
	\| \hatR - \bR_0 \| = \OP [ \{ s_0 \log(p)/n \}^{1/2} ].
	\]
\end{proposition}

By Proposition \ref{prop:hatR}, we need
\begin{align} \label{eq:rate_needed}
	& {s_0 \log(p)/n}^{1/2}  \{ \max(s_0, s^{\prime}\| \baromega^0 \|_1 ) \{ \log(p) \}^{1/2} \| \baromega^0 \|_1 \} \notag \\
	= &  \| \baromega^0 \|_1 \max (s_0^{3/2}, s_0^{1/2} s^{\prime} \| \baromega^0 \|_1)  \log(p) / {n}^{1/2} \to 0
\end{align}
in order to satisfy \eqref{eq:rate_hatR_assump5}, the rate requirement in Assumption \ref{assump:cor_mat}. Suppose $\| \baromega^0 \|_1 = {O}(1)$ in \eqref{eq:rate_needed}, then a necessary condition for \eqref{eq:rate_needed} is that $s_0^3 \{ \log(p) \}^2 /n \to 0$. This requirement is stronger than the model sparsity requirement in \citet{van2014asymptotically}, i.e. $s_0^2 \{ \log(p) \}^2 /n \to 0$, for de-sparsified lasso in generalized linear models without repeated measurements. The requirement \eqref{eq:rate_needed} resembles but is slightly weaker than Assumption 4(c) in \citet{yu2021confidence}, which studies confidence intervals for high-dimensional Cox models.

\begin{proof}[Proof of Proposition \ref{prop:hatR}]
	
	This proof follows similar arguments in Example 2 of \citet{wang2011gee}.
	First, we define an intermediate matrix
	\[
	\bR^* = \displaystyle \frac{1}{n} \sum_{i=1}^{n} \bG_i^{-1/2} (\bbeta^0) (\bY_i - \bmu_i(\bbeta^0)) (\bY_i - \bmu_i(\bbeta^0))^\T \bG_i^{-1/2} (\bbeta^0).
	\]
	From the central limit theorem, $\| \bR^* - \bR_0 \| = \OP(n^{-1/2})$. Then, the rest of the task is to characterize the rate of $\| \hatR - \bR^* \|$. We introduce an additional definition of Frobenius norm for a matrix $\bA$, $\| \bA \|_{\rmF} = (\sum_i \sum_j A_{ij}^2)^{1/2}$.
	
	Note that the element difference
	\begin{align*}
		| \hat{R}_{kj} - R^*_{kj} | & \le \left|  \displaystyle \frac{1}{n} \sum_{i=1}^n \displaystyle \frac{(Y_{ik} - \mu_{ik}(\hbeta))(Y_{ij} - \mu_{ij}(\hbeta)) - (Y_{ik} - \mu_{ik}(\bbeta^0))(Y_{ij} - \mu_{ij}(\bbeta^0))}{G_{ik}(\bbeta^0) G_{ij}(\bbeta^0)}  \right|  \\
		& \quad + \left| \displaystyle \frac{1}{n} \sum_{i=1}^n \frac{(Y_{ik} - \mu_{ik}(\hbeta))(Y_{ij} - \mu_{ij}(\hbeta))}{ \{ G_{ik}(\bbeta^0) G_{ij}(\bbeta^0) \}^{1/2} } \cdot \delta_{ijk} \right| \\
		& = I_{kj,1} + I_{kj,2},
	\end{align*}
	where $\delta_{ijk} = { \{ G_{ik}(\bbeta^0) G_{ij}(\bbeta^0) \}^{1/2} / \{ G_{ik}(\hbeta) G_{ij}(\hbeta) \}^{1/2} } - 1$. Then, by Cauchy inequality,
	\[
	\| \hatR - \bR^* \|_{\rmF}^2 \le \sum_{k=1}^m \sum_{j=1}^m ( I_{kj,1} + I_{kj,2} )^2 \le 2 \sum_{k=1}^m \sum_{j=1}^m I_{kj,1}^2 + 2 \sum_{k=1}^m \sum_{j=1}^m I_{kj,2}^2 = I_{n1} + I_{n2}.
	\]
	Now we consider $I_{n1}$. For $I_{kj,1}$, using triangular inequality, we have
	\begin{align*}
		I_{kj,1} & \le \displaystyle \frac{1}{n} \sum_{i=1}^n \frac{| ( \mu_{ik}(\bbeta^0) - \mu_{ik}(\hbeta) ) ( \mu_{ij}(\bbeta^0) - \mu_{ij}(\hbeta) ) |}{ \{ G_{ik}(\bbeta^0) G_{ij}(\bbeta^0) \}^{1/2} }
		+ \frac{1}{n} \sum_{i=1}^n \frac{| ( \mu_{ik}(\bbeta^0) - \mu_{ik}(\hbeta) ) ( Y_{ij} - \mu_{ij}(\bbeta^0) ) |}{ \{ G_{ik}(\bbeta^0) G_{ij}(\bbeta^0) \}^{1/2} } \\
		& \quad + \displaystyle \frac{1}{n} \sum_{i=1}^n \frac{| ( \mu_{ij}(\bbeta^0) - \mu_{ij}(\hbeta) ) ( Y_{ik} - \mu_{ik}(\bbeta^0) ) |}{ \{ G_{ik}(\bbeta^0) G_{ij}(\bbeta^0) \}^{1/2} }  = I_{kj,11} + I_{kj,12} + I_{kj,13}. \\
	\end{align*}
	Then $I_{n1} \le 6 \sum_{k,j} I_{kj,11}^2 + 6 \sum_{k,j} I_{kj,12}^2 + 6 \sum_{k,j} I_{kj,13}^2 = I_{n11} + I_{n12} + I_{n13}$. By Assumption \ref{assump:var_fun}, $|\mu_{ij}(\bbeta^0) - \mu_{ij}(\hbeta)| \le K_1 | \bx_{ij}^\T (\bbeta^0 - \hbeta) |$. And by Cauchy's inequality, and Assumption \ref{assump:var_fun} that $G_{ik}(\bbeta^0)$'s are lower bounded,
	\begin{align*}
		I_{kj,11}^2 & \le \left\{ \displaystyle \frac{1}{n} \sum_{i=1}^n \frac{ (\mu_{ik}(\bbeta^0) - \mu_{ik}(\hbeta))^2 }{G_{ik}(\bbeta^0)}  \right\}  \left\{ \displaystyle \frac{1}{n} \sum_{i=1}^n \frac{(\mu_{ij}(\bbeta^0) - \mu_{ij}(\hbeta))^2}{G_{ij}(\bbeta^0)}  \right\} \\
		& \le C \left\{ \displaystyle \frac{1}{n} \sum_{i=1}^n  |  \bx_{ik}^\T (\bbeta^0 - \hbeta)|^2  \right\} \left\{ \displaystyle \frac{1}{n} \sum_{i=1}^n  |  \bx_{ij}^\T (\bbeta^0 - \hbeta)|^2  \right\},
	\end{align*}
	which gives $$I_{n11} = \displaystyle  6 \sum_{k,j} I_{kj,11}^2 \le 6 C \left\{ \frac{1}{n} \sum_{i=1}^n \sum_{k=1}^m  |  \bx_{ik}^\T (\bbeta^0 - \hbeta)|^2  \right\} \left\{ \frac{1}{n} \sum_{i=1}^n \sum_{j=1}^m  |  \bx_{ij}^\T (\bbeta^0 - \hbeta)|^2  \right\}.$$ By the results in Lemma 1, $I_{n11} \le \OP [ \{ s_0 \log(p) /n \}^2 ]$. Because $\varepsilon_{ij}(\bbeta^0) = Y_{ij} - \mu_{ij}(\bbeta^0) = \OP(1)$, similarly, one can show that $I_{n12} = \OP(s_0 \log(p) / n)$ and $I_{n13} = \OP(s_0 \log(p) / n)$. And hence, $I_{n1} \le I_{n11} + I_{n12} + I_{n13} = \OP(s_0 \log(p) / n)$. 
	
	As for $I_{n2}$, by Cauchy's inequality,
	\begin{align*}
		I_{n2} \le 2 \sum_{k=1}^m \sum_{j=1}^m \left\{ \displaystyle \frac{1}{n} \sum_{i=1}^n \frac{ (Y_{ik} - \mu_{ik}(\hbeta))^2 (Y_{ij} - \mu_{ij}(\hbeta))^2 }{ G_{ik}(\bbeta^0) G_{ij}(\bbeta^0) } \right\}  \left\{ \displaystyle \frac{1}{n} \sum_{i=1}^n \delta_{ijk}^2 \right\}.
	\end{align*}
	Again, since, by the mean value theorem and Assumption \ref{assump:var_fun},  $| \mu_{ij}(\bbeta^0) - \mu_{ij}(\hbeta) | \le K_1 | \bx_{ij}^\T (\bbeta^0 - \hbeta) |$,
	\[
	| Y_{ij} - \mu_{ij}(\hbeta) |^2 \le \{ | Y_{ij} - \mu_{ij}(\bbeta^0) | + K_1 | \bx_{ij}^\T (\hbeta - \bbeta^0) | \}^2 \le 2 \varepsilon_{ij}^2(\bbeta^0) + 2 K_1^2 | \bx_{ij}^\T (\hbeta - \bbeta^0)  |^2.
	\]
	Then in the upper bound for $I_{n2}$, 
	\begin{align*}
		& \displaystyle \frac{1}{n} \sum_{i=1}^n (Y_{ik} - \mu_{ik}(\hbeta))^2 (Y_{ij} - \mu_{ij}(\hbeta))^2 \\
		\le & \displaystyle \frac{2}{n} \sum_{i=1}^n \{ \varepsilon_{ik}^2(\bbeta^0) + K_1^2 | \bx_{ik}^\T (\bbeta^0 - \hbeta) |^2 \}   \{ \varepsilon_{ij}^2(\bbeta^0) + K_1^2 | \bx_{ij}^\T (\bbeta^0 - \hbeta) |^2 \} \\
		\le  &  \displaystyle \frac{2}{n} \sum_{i=1}^n \left\{ \varepsilon_{ik}^2(\bbeta^0) \varepsilon_{ij}^2(\bbeta^0) + K_1^2 K^2 \| \hbeta - \bbeta^0 \|_1^2 \varepsilon_{ij}^2(\bbeta^0) + \right. \\
		& \quad \left. K_1^2 K^2 \| \hbeta - \bbeta^0 \|_1^2 \varepsilon_{ik}^2(\bbeta^0) + K_1^4 K^4 \| \hbeta - \bbeta^0 \|_1^4 \right\} \\
		= &  \displaystyle \frac{2}{n} \sum_{i=1}^n \left\{ \varepsilon_{ik}^2(\bbeta^0) \varepsilon_{ij}^2(\bbeta^0) \right\} + \OP(\| \hbeta - \bbeta^0 \|_1^2) = \OP(1),
	\end{align*}
	where the second inequality is due to Assumption \ref{assump:bound_covs} that $\| \bx_{ij} \|_{\infty} \le K$ and $| \bx_{ij}^\T (\hbeta - \bbeta^0 ) | \le \| \bx_{ij} \|_{\infty} \| \hbeta - \bbeta^0 \|_1 \le K \| \hbeta - \bbeta^0 \|_1$. Note that 
	\[
	\delta_{ijk} = \displaystyle \frac{ \{ G_{ik}(\bbeta^0) G_{ij}(\bbeta^0) \}^{1/2} - \{ G_{ik}(\hbeta) G_{ij}(\hbeta) \}^{1/2} }{ \{ G_{ik}(\hbeta) G_{ij} (\hbeta) \}^{1/2} }.
	\]
	By Taylor expansion, for some $\tilde{\bbeta}^{(ijk)}$ between $\hbeta$ and $\bbeta^0$, 
	\begin{align*}
		\{G_{ik}(\hbeta) G_{ij}(\hbeta) \}^{1/2} & = \{G_{ik}(\bbeta^0) G_{ij}(\bbeta^0) \}^{1/2} + \\ 
		& \quad \displaystyle \frac{  \dot{v}(\bx_{ik}^\T \tilde{\bbeta}^{(ijk)}) \bx_{ik} G_{ij}(\tilde{\bbeta}^{(ijk)}) + \dot{v}(\bx_{ij}^\T\tilde{\bbeta}^{(ijk)}) \bx_{ij} G_{ik}(\tilde{\bbeta}^{(ijk)})  }{ 2 \{G_{ik}(\tilde{\bbeta}^{(ijk)}) G_{ij}(\tilde{\bbeta}^{(ijk)}) \}^{1/2} } \sigma^2 (\hbeta - \bbeta^0).
	\end{align*}
	In the Taylor expansion above, the fact that $G_{ij}(\bbeta) = \sigma^2 v(\bx_{ij}^\T \bbeta)$ is used.
	By Assumption \ref{assump:var_fun} and Lemma 1, $\{ G_{ik}(\hbeta) G_{ij} (\hbeta)  \}^{-1/2} = \OP(1)$, and in the numerator of $\delta_{ijk}$,
	\[
	\left|  \{ G_{ik}(\bbeta^0) G_{ij}(\bbeta^0) \}^{1/2} - \{ G_{ik}(\hbeta) G_{ij}(\hbeta) \}^{1/2} \right| \le \OP(1)  \left\{ | \bx_{ik}^\T (\hbeta - \bbeta^0) | + | \bx_{ij}^\T (\hbeta - \bbeta^0) | \right\}.
	\]
	Hence, 
	\begin{align*}
		I_{n2} & \le \OP(1) n^{-1} \sum_{i=1}^n \sum_{k=1}^m \sum_{j=1}^m \delta_{ijk}^2 \\
		& \le  \OP(1) n^{-1} \sum_{i=1}^n \sum_{k=1}^m \sum_{j=1}^m \left\{ | \bx_{ik}^\T (\hbeta - \bbeta^0) |^2 + | \bx_{ij}^\T (\hbeta - \bbeta^0) |^2 \right\} \\
		& =  \OP(s_0 \log(p) / n).
	\end{align*}
	Since $m$ is a fixed integer not growing with the sample size $n$, 
	\[
	\| \hatR - \bR^* \|_{\rmF}^2 \le I_{n1} + I_{n2} = \OP(s_0 \log(p) / n) ~ ~ \mathrm{and} ~ ~ \| \hatR - \bR^* \|_{\rmF} = \OP [ \{ s_0 \log(p) / n \}^{1/2} ].
	\]
	As $\| \bR^* - \bR_0 \| = \OP(n^{-1/2})$, 
	\[
	\| \hatR - \bR_0 \| \le \| \hatR - \bR^* \| + \| \bR^* - \bR_0 \|
	\le \| \hatR - \bR^* \|_{\rmF} + \| \bR^* - \bR_0 \| = \OP [ \{ s_0 \log(p) / n \}^{1/2} ].
	\]
\end{proof}

\section{Outline of the modified inference procedure for high-dimensional estimating equations}
\label{supp-supp:sec:hdee}

In the simulation studies, the proposed method HDIGEE is compared to a modified version of the inferential method originally designed for high-dimensional estimating equations (HDEE), developed by \citet{neykov2018unified}. The original framework described by \citet{neykov2018unified} entails linear regression via Dantzig selector, instrumental variables regression, graphical models, discriminant analysis and vector autoregressive models; however, it does not include correlated data via estimating equations as an application.  In this section, we outline the modification we have made to accommodate correlated data for HDEE. 

Suppose we are interested in inference on the $j$th element of $\bbeta^0$, i.e. $\beta^0_j$. After obtaining the initial estimator $\hbeta$, a projected estimating function of $\bbeta$ along the direction $\hat{\bnu}$ is constructed as
\[
\phi(\bbeta) = \hat{\bnu}^\T \Psi (\bbeta),
\]
where the direction $\hat{\bnu}$ is defined as the solution to the following optimization problem
\[
\hat{\bnu} = \argmin \{  \| \bnu \|_1:  \| - \bS(\hbeta) \bnu - \be_j \|_{\infty} \le \eta^{\prime} \}
\] 
for some tuning parameter $\eta^{\prime} > 0$, and $\be_j$ is the unit vector with the $j$th element being one. Here, $\Psi (\bbeta)$ is identical to the definition in the main text and $- \bS(\hbeta)$ is approximately the derivative of $\Psi(\bbeta)$ evaluated at $\bbeta = \hbeta$. 

Without loss of generality, we rewrite $\hbeta = (\hat{\beta}_j, \hat{\bbeta}_{-j})$, where $\hat{\bbeta}_{-j}$ is the estimated coefficients without the $j$th element. Then, the final estimator for the parameter of interest $\beta_j^0$ is the Z-estimator $\tilde{\beta}_j$ that is the root of the equation $\phi(\beta_j, \hat{\bbeta}_{-j}) = 0$. Since \citet{neykov2018unified} used a fixed tuning parameter $\eta^{\prime}$ without discussing more realistic data-driven choices, we implemented cross-validation procedures analogous to one used for our proposed method.

\end{document}


\maketitle



\renewcommand{\thesection}{S.\arabic{section}}
\renewcommand{\thetheorem}{S.\arabic{theorem}}
\renewcommand{\theequation}{S.\arabic{equation}}
\renewcommand{\thetable}{S.\arabic{table}}
\renewcommand{\thefigure}{S.\arabic{figure}}



In this supplementary material, we present additional simulation results in Section \ref{supp:sec:add_sim_res}, and additional results from the longitudinal proteomic profiling for COVID-19 severity in Section \ref{supp:sec:add_covid_res}. Section \ref{supp:sec:proof} presents all the technical proofs of lemmas and the main theorem. Some technical conditions related to the theoretical justification in the main text are further discussed in Section \ref{supp:sec:cond}. Finally, we include a modified version of the high-dimensional inference method for estimating equations in the main text, with more details outlined in Section \ref{supp:sec:hdee}.


\section{Additional simulation results}
\label{supp:sec:add_sim_res}

\subsection{Continuous outcome setting in the first simulation scenario}

In the main text, we only present the simulation results from the binary outcome case. Here, we include the simulation results from the continuous outcome case in the first simulation scenario in Table~\ref{tab:sim_linear_n100m5_3se}.

	
	\begin{table}[h!]
   \small
		\caption{\lu{Continuous outcome scenario with $(n, m) = (100, 5)$: estimation bias, coverage probability of the 95\% confidence interval, model-based standard error and empirical standard error for all signal and three randomly chosen noise variables, averaged over 200 replications}}
			\begin{tabular}{llrrrrrrrr} 
   \hline
				&  & \multicolumn{4}{c}{Signal $(s_0)$} & \multicolumn{4}{c}{Noise $(3)$} \\
				Method & Item & \multicolumn{2}{c}{$p=100$} & \multicolumn{2}{c}{$p=500$} & \multicolumn{2}{c}{$p=100$} & \multicolumn{2}{c}{$p=500$} \\
				&  & $s_0=3$ & $s_0=10$ & $s_0=3$ & $s_0=10$ & $s_0=3$ & $s_0=10$ & $s_0=3$ & $s_0=10$ \\[5pt]
                \hline
				\multicolumn{10}{c}{Autoregressive} \\
				\multirow{4}{*}{HDIGEE}  &   Bias & 0.000 & -0.003 & -0.009 & -0.009 & -0.003 & 0.000 & -0.002 & -0.001 \\ 
  & Cov & 0.933 & 0.921 & 0.927 & 0.920 & 0.947 & 0.928 & 0.945 & 0.948 \\ 
  & SE & 0.055 & 0.054 & 0.056 & 0.055 & 0.054 & 0.054 & 0.056 & 0.054 \\ 
  & EmpSE & 0.057 & 0.058 & 0.060 & 0.061 & 0.053 & 0.059 & 0.057 & 0.055 \\[5pt]
  \multirow{4}{*}{HDEE} & Bias & -0.002 & -0.005 & -0.003 & -0.005 & -0.002 & 0.006 & -0.002 & -0.001 \\ 
  & Cov & 0.882 & 0.866 & 0.925 & 0.909 & 0.917 & 0.858 & 0.932 & 0.937 \\ 
  & SE & 0.041 & 0.040 & 0.055 & 0.053 & 0.041 & 0.040 & 0.055 & 0.053 \\ 
  & EmpSE & 0.048 & 0.051 & 0.060 & 0.060 & 0.046 & 0.050 & 0.059 & 0.057 \\[5pt] 
  \multirow{4}{*}{QDIF} & Bias & -0.005 & -0.007 & -0.009 & -0.013 & -0.002 & 0.004 & -0.002 & 0.000 \\ 
  & Cov & 0.940 & 0.915 & 0.928 & 0.893 & 0.948 & 0.925 & 0.957 & 0.945 \\ 
  & SE & 0.049 & 0.048 & 0.048 & 0.047 & 0.048 & 0.047 & 0.047 & 0.046 \\ 
  & EmpSE & 0.051 & 0.053 & 0.054 & 0.053 & 0.047 & 0.051 & 0.048 & 0.046 \\ 
                \hline
				\multicolumn{10}{c}{Unstructured} \\
				\multirow{4}{*}{HDIGEE}   & Bias & -0.001 & -0.003 & -0.009 & -0.009 & -0.003 & 0.001 & -0.002 & 0.000 \\ 
  & Cov & 0.935 & 0.926 & 0.925 & 0.916 & 0.948 & 0.922 & 0.942 & 0.947 \\ 
  & SE & 0.053 & 0.052 & 0.054 & 0.053 & 0.051 & 0.051 & 0.054 & 0.052 \\ 
  & EmpSE & 0.055 & 0.055 & 0.057 & 0.058 & 0.051 & 0.056 & 0.055 & 0.053 \\[5pt]
  \multirow{4}{*}{HDEE} & Bias & -0.002 & -0.005 & -0.003 & -0.005 & -0.002 & 0.006 & -0.002 & -0.001 \\ 
  & Cov & 0.892 & 0.870 & 0.925 & 0.908 & 0.932 & 0.863 & 0.920 & 0.933 \\ 
  & SE & 0.040 & 0.039 & 0.053 & 0.052 & 0.039 & 0.039 & 0.053 & 0.051 \\ 
  & EmpSE & 0.047 & 0.049 & 0.058 & 0.059 & 0.044 & 0.049 & 0.057 & 0.055 \\[5pt] 
  \multirow{4}{*}{QDIF} & Bias & -0.005 & -0.007 & -0.010 & -0.013 & -0.003 & 0.004 & -0.002 & -0.001 \\ 
  & Cov & 0.937 & 0.917 & 0.933 & 0.896 & 0.950 & 0.923 & 0.960 & 0.947 \\ 
  & SE & 0.047 & 0.046 & 0.047 & 0.045 & 0.046 & 0.045 & 0.046 & 0.045 \\ 
  & EmpSE & 0.048 & 0.051 & 0.052 & 0.052 & 0.045 & 0.048 & 0.046 & 0.044 \\ 
    \hline
		\end{tabular}
		\label{tab:sim_linear_n100m5_3se}
	\end{table}

\subsection{Cross-validation criteria}

We present additional simulation results under the same setting as in Section 5 of the main text with an equal number of observations $m_i = m = 5$ for each subject. The cross-validation procedure described in the main text chooses the smallest tuning parameter $\lambda^{\prime}$ value  that results in a cross-validated criterion value $\textsc{cv}_l$ within three standard errors above the minimum criterion value (as obtained over the pre-specified grid of trial points for $\lambda^{\prime}$). Another natural choice is the  tuning parameter $\lambda^{\prime}$ value that gives the smallest cross-validated criterion value, denoted by ``HDIGEE (Min)" or ``HDEE (Min)". In this subsection, we show that HDIGEE (Min) and HDEE (Min) have inferior performance compared to the proposed cross-validation, using the first simulation scenario in Section 5 as an example. 

All simulation results are displayed in Tables~\ref{tab:add_sim_linear_n100m5_ar1}-\ref{tab:add_sim_logit_n100m5_un}, corresponding to the four combinations between regression models (identity and logit links) and true correlation structures [AR(1) and unstructured] as described in the main text. It is obvious that using the minimum cross-validation criterion generates larger estimation biases and poorer coverage probabilities. 


\begin{table}[h!]
	\centering
	\caption{Continuous outcome scenario with $(n, m) = (100, 5)$ and autoregressive correlation structure: estimation bias, coverage probability and length of the 95\% confidence interval, and empirical standard error for all signal and three randomly chosen noise variables, averaged over 200 replications}
	\begin{threeparttable}
				\begin{tabular}{llrrrrrrrr}
   \hline
			&  & \multicolumn{4}{c}{Signal $(s_0)$} & \multicolumn{4}{c}{Noise $(3)$} \\
			Method & Item & \multicolumn{2}{c}{$p=100$} & \multicolumn{2}{c}{$p=500$} & \multicolumn{2}{c}{$p=100$} & \multicolumn{2}{c}{$p=500$} \\
			&  & $s_0=3$ & $s_0=10$ & $s_0=3$ & $s_0=10$ & $s_0=3$ & $s_0=10$ & $s_0=3$ & $s_0=10$ \\[5pt] \hline
			\multirow{4}{*}{HDIGEE (Min)}  & Bias & -0.005 & -0.007 & -0.016 & -0.018 & -0.003 & 0.003 & -0.003 & 0.000 \\ 
			& Cov & 0.910 & 0.895 & 0.885 & 0.872 & 0.952 & 0.920 & 0.950 & 0.948 \\ 
			& Len & 0.194 & 0.189 & 0.180 & 0.174 & 0.190 & 0.189 & 0.177 & 0.173 \\ 
			& EmpSE & 0.055 & 0.055 & 0.053 & 0.053 & 0.048 & 0.053 & 0.046 & 0.044 \\[5pt]  
			\multirow{4}{*}{HDEE (Min)}  &   Bias & -0.005 & -0.008 & -0.009 & -0.014 & 0.000 & 0.017 & -0.001 & 0.004 \\ 
			& Cov & 0.740 & 0.683 & 0.627 & 0.610 & 0.772 & 0.680 & 0.725 & 0.700 \\ 
			& Len & 0.099 & 0.095 & 0.085 & 0.082 & 0.096 & 0.094 & 0.083 & 0.081 \\ 
			& EmpSE & 0.044 & 0.047 & 0.045 & 0.046 & 0.039 & 0.042 & 0.040 & 0.038 \\[5pt]  
   \hline
	\end{tabular}
\begin{tablenotes}
\item HDIGEE, the proposed high-dimensional inference for generalized estimating equations; HDEE, the modified version of high-dimensional inference for estimating equations  by \citet{neykov2018unified}; Bias, estimation bias; Cov, coverage probability of 95\% confidence interval; Len, length of confidence interval; EmpSE, empirical standard error. 
\end{tablenotes}
	\end{threeparttable}
\label{tab:add_sim_linear_n100m5_ar1}
\end{table}


\begin{table}[h!]
	\centering
	\caption{Continuous outcome scenario with $(n, m) = (100, 5)$ and unstructured correlation: estimation bias, coverage probability and length of the 95\% confidence interval, and empirical standard error for all signal and three randomly chosen noise variables, averaged over 200 replications}
	\begin{threeparttable}
				\begin{tabular}{llrrrrrrrr}
			\hline
			&  & \multicolumn{4}{c}{Signal $(s_0)$} & \multicolumn{4}{c}{Noise $(3)$} \\
			Method & Item & \multicolumn{2}{c}{$p=100$} & \multicolumn{2}{c}{$p=500$} & \multicolumn{2}{c}{$p=100$} & \multicolumn{2}{c}{$p=500$} \\
			&  & $s_0=3$ & $s_0=10$ & $s_0=3$ & $s_0=10$ & $s_0=3$ & $s_0=10$ & $s_0=3$ & $s_0=10$ \\[5pt] \hline
			\multirow{4}{*}{HDIGEE (Min)} & Bias & -0.005 & -0.007 & -0.017 & -0.018 & -0.003 & 0.003 & -0.002 & 0.001 \\ 
			& Cov & 0.915 & 0.899 & 0.875 & 0.861 & 0.952 & 0.917 & 0.955 & 0.950 \\ 
			& Len & 0.186 & 0.182 & 0.169 & 0.167 & 0.181 & 0.181 & 0.169 & 0.165 \\ 
			& EmpSE & 0.052 & 0.053 & 0.051 & 0.051 & 0.046 & 0.051 & 0.043 & 0.042 \\[5pt]
			\multirow{4}{*}{HDEE (Min)}   & Bias & -0.005 & -0.008 & -0.009 & -0.014 & 0.000 & 0.016 & -0.001 & 0.004 \\ 
			& Cov & 0.742 & 0.677 & 0.622 & 0.604 & 0.765 & 0.665 & 0.722 & 0.698 \\ 
			& Len & 0.094 & 0.091 & 0.082 & 0.079 & 0.091 & 0.091 & 0.079 & 0.077 \\ 
			& EmpSE & 0.042 & 0.045 & 0.044 & 0.045 & 0.038 & 0.041 & 0.039 & 0.036 \\[5pt]
   \hline
	\end{tabular}
	\begin{tablenotes}
	\item HDIGEE, the proposed high-dimensional inference for generalized estimating equations; HDEE, the modified version of high-dimensional inference for estimating equations  by \citet{neykov2018unified}; Bias, estimation bias; Cov, coverage probability of 95\% confidence interval; Len, length of confidence interval; EmpSE, empirical standard error. 
\end{tablenotes}
	\end{threeparttable}
	\label{tab:add_sim_linear_n100m5_un}
\end{table}


\begin{table}[h!]
	\centering
	\caption{Binary outcome scenario with $(n, m) = (100, 5)$ and autoregressive correlation: estimation bias, coverage probability and length of the 95\% confidence interval, and empirical standard error for all signal and three randomly chosen noise variables, averaged over 200 replications}
	\begin{threeparttable}
			\begin{tabular}{llrrrrrrrr}
  \hline
		&  & \multicolumn{4}{c}{Signal $(s_0)$} & \multicolumn{4}{c}{Noise $(3)$} \\
		Method & Item & \multicolumn{2}{c}{$p=100$} & \multicolumn{2}{c}{$p=500$} & \multicolumn{2}{c}{$p=100$} & \multicolumn{2}{c}{$p=500$} \\
		&  & $s_0=3$ & $s_0=10$ & $s_0=3$ & $s_0=10$ & $s_0=3$ & $s_0=10$ & $s_0=3$ & $s_0=10$ \\[5pt] \hline
		\multirow{4}{*}{HDIGEE (Min)} & Bias & -0.029 & -0.044 & -0.055 & -0.107 & -0.001 & 0.006 & -0.001 & 0.007 \\ 
		& Cov & 0.873 & 0.845 & 0.843 & 0.700 & 0.943 & 0.922 & 0.960 & 0.955 \\ 
		&  Len & 0.413 & 0.430 & 0.371 & 0.361 & 0.412 & 0.422 & 0.372 & 0.361 \\ 
		&  EmpSE & 0.125 & 0.140 & 0.118 & 0.118 & 0.109 & 0.118 & 0.094 & 0.091 \\[5pt]
		\multirow{4}{*}{HDEE (Min)} &   Bias & -0.040 & -0.073 & -0.044 & -0.106 & -0.002 & 0.036 & 0.001 & 0.017 \\ 
		&  Cov & 0.677 & 0.558 & 0.543 & 0.378 & 0.763 & 0.723 & 0.673 & 0.690 \\ 
		&  Len & 0.207 & 0.215 & 0.175 & 0.172 & 0.206 & 0.214 & 0.173 & 0.172 \\ 
		&  EmpSE & 0.098 & 0.111 & 0.101 & 0.101 & 0.089 & 0.094 & 0.086 & 0.078 \\[5pt] \hline
\end{tabular}
\begin{tablenotes}
\item HDIGEE, the proposed high-dimensional inference for generalized estimating equations; HDEE, the modified version of high-dimensional inference for estimating equations  by \citet{neykov2018unified}; Bias, estimation bias; Cov, coverage probability of 95\% confidence interval; Len, length of confidence interval; EmpSE, empirical standard error. 
\end{tablenotes}
\end{threeparttable}
\label{tab:add_sim_logit_n100m5_ar1}
\end{table}


\begin{table}[h!]
	\centering
	\caption{Binary outcome scenario with $(n, m) = (100, 5)$ and unstructured correlation: estimation bias, coverage probability and length of the 95\% confidence interval, and empirical standard error for all signal and three randomly chosen noise variables, averaged over 200 replications}
	\begin{threeparttable}
				\begin{tabular}{llrrrrrrrr}
   \hline
			&  & \multicolumn{4}{c}{Signal $(s_0)$} & \multicolumn{4}{c}{Noise $(3)$} \\
			Method & Item & \multicolumn{2}{c}{$p=100$} & \multicolumn{2}{c}{$p=500$} & \multicolumn{2}{c}{$p=100$} & \multicolumn{2}{c}{$p=500$} \\
			&  & $s_0=3$ & $s_0=10$ & $s_0=3$ & $s_0=10$ & $s_0=3$ & $s_0=10$ & $s_0=3$ & $s_0=10$ \\[5pt] \hline
			\multirow{4}{*}{HDIGEE (Min)} & Bias & -0.028 & -0.043 & -0.053 & -0.104 & -0.005 & 0.007 & 0.002 & 0.004 \\ 
			&  Cov & 0.868 & 0.837 & 0.840 & 0.695 & 0.935 & 0.925 & 0.958 & 0.947 \\ 
			&  Len & 0.406 & 0.428 & 0.369 & 0.357 & 0.406 & 0.421 & 0.366 & 0.360 \\ 
			&  EmpSE & 0.125 & 0.140 & 0.114 & 0.120 & 0.109 & 0.119 & 0.090 & 0.091 \\[5pt]
			\multirow{4}{*}{HDEE (Min)} &   Bias & -0.036 & -0.072 & -0.042 & -0.104 & -0.003 & 0.035 & 0.001 & 0.014 \\ 
			&  Cov & 0.642 & 0.560 & 0.577 & 0.387 & 0.753 & 0.713 & 0.697 & 0.692 \\ 
			&  Len & 0.206 & 0.214 & 0.175 & 0.170 & 0.203 & 0.212 & 0.171 & 0.169 \\ 
			&  EmpSE & 0.099 & 0.112 & 0.097 & 0.103 & 0.090 & 0.094 & 0.084 & 0.078 \\[5pt] \hline
	\end{tabular}
\begin{tablenotes}
\item HDIGEE, the proposed high-dimensional inference for generalized estimating equations; HDEE, the modified version of high-dimensional inference for estimating equations  by \citet{neykov2018unified}; Bias, estimation bias; Cov, coverage probability of 95\% confidence interval; Len, length of confidence interval; EmpSE, empirical standard error. 
\end{tablenotes}
	\end{threeparttable}
    \label{tab:add_sim_logit_n100m5_un}
\end{table}

\subsection{The second simulation scenario mimicking the longitudinal proteomic data for COVID-19 severity}

For the second scenario of Section 5.1, we have presented the simulation results for binary outcomes with the true model size $s_0=3$. Table~\ref{fig:res_sim_covidprot_s10_3se} shows the simulation results for a denser model with $s_0 = 10$.

\begin{figure}[ht!]
    \centering
    \includegraphics[width=0.9\textwidth]{art/fig_sim_covidprot_s10_un_ar1_3SE.pdf}
    \caption{The second scenario mimicking  the real longitudinal COVID-19 proteomic data, with 659 measurements from $n=305$ subjects and $s_0=10$: estimation bias (Bias), coverage probability (Cov), model-based standard error (SE), empirical standard error (EmpSE) and mean squared error (MSE) for $s_0$ signals and 3 randomly chosen noise variables on the left and right of the vertical lines, respectively, average over 200 simulations}
    \label{fig:res_sim_covidprot_s10_3se}
\end{figure}

\subsection{The third simulation scenario mimicking the riboflavin production data analyzed by \citet{buhlmann2014high}}

In the third scenario of Section 5.1, we carry out additional simulations with continuous outcomes that mimic the scale and complexity of the riboflavin production data analyzed by \citet{buhlmann2014high}. The number of observations $m_i$ varies among different subjects from 2 to 6, totaling $N = 111$ observations from $n = 28$ strains of \textit{B. subtilis}. After filtering out the genes whose expression levels have coefficient of variation less than 0.1, we focus on the most variable $p = 267$ genes, reduced from the original 4,088 genes.
We set the true $\bbeta^0$ with $s_0 = 3$ or $10$ randomly chosen nonzero coefficients, out of $p=267$, with values realized from $\mathrm{Uniform}(0.5, 1.5)$. With these coefficients, we then simulate $N= 111$ continuous observations from $n=28$ groups. Since each cluster in the real data has 2 to 6 measurements, we take the upper left corner submatrix of $\bR^0$ below with corresponding dimensions to be the true correlation matrix for each cluster: 
	\begin{equation*} \footnotesize
		\bR^0 = \left(  
		\begin{array}{cccccc}
			1.00 & 0.50 & 0.45 & 0.40 & 0.35 & 0.30  \\ 
			0.50 & 1.00 & 0.50 & 0.45 & 0.40 & 0.35  \\ 
			0.45 & 0.50 & 1.00 & 0.50 & 0.45 & 0.40  \\ 
			0.40 & 0.45 & 0.50 & 1.00 & 0.50 & 0.45  \\ 
			0.35 & 0.40 & 0.45 & 0.50 & 1.00 & 0.50  \\ 
			0.30 & 0.35 & 0.40 & 0.45 & 0.50 & 1.00 \\ 
		\end{array}  \right).
	\end{equation*}
The estimate $\hatR$ is obtained using AR(1) working correlation.

For this simulation scenario, we have presented the results with $s_0 = 3$ in the main text. Figure~\ref{fig:sim_ribo_s10} shows the simulation results with $s_0=10$.

 	\begin{figure}[ht]
		\centering
  \includegraphics[width=0.9\textwidth]{art/sim_riboflavinGrouped_s10_un_ar1_new3SE.pdf}
		\caption{Regression coefficient estimates, empirical coverage probabilities and lengths for 95\% confidence intervals averaged over 200 replications, in the simulation setting derived from the real riboflavin data with $s_0=10$. Points with enlarged symbols and darker shades of colors represent the estimates for truly nonzero coefficients. 
        }
		\label{fig:sim_ribo_s10}
	\end{figure}


\section{Additional results from the longitudinal proteomic analysis of COVID-19 patient plasma samples}
\label{supp:sec:add_covid_res}

A full list of gene names, Olink IDs, UniProt IDs, estimated coefficients by lasso and the proposed method, model-based standard errors by the proposed method, and adjusted p-values for the significant proteins is delayed below (Table \ref{tab:covid_prot_add}).

\begin{table}[ht]
\centering
\caption{Additional results from the longitudinal proteomic analysis of COVID-19 patient plasma samples}
\label{tab:covid_prot_add}
\small
\begin{threeparttable}
\begin{tabular}{rllllrrrr}
  \hline
 & Assay & OlinkID & UniProt & Panel & Lasso & Beta & SE & Q-value \\ 
  \hline
1 & FLT3LG & OID20661 & P49771 & INFLAMMATION & 0.355 & 0.433 & 0.077 & $1.28 \times 10^{-5}$ \\ 
  2 & MUC13 & OID20866 & Q9H3R2 & NEUROLOGY & 0.266 & 0.556 & 0.114 & $3.81\times 10^{-4}$ \\ 
  3 & CALB1 & OID21306 & P05937 & ONCOLOGY & 0.326 & 0.670 & 0.151 & $2.39 \times 10^{-3}$ \\ 
  4 & C19orf12 & OID20804 & Q9NSK7 & NEUROLOGY & -0.175 & -0.404 & 0.095 & $4.30 \times 10^{-3}$ \\ 
  5 & IL15RA & OID20498 & Q13261 & INFLAMMATION & 0.297 & 0.461 & 0.112 & $5.87 \times 10^{-3}$ \\ 
  6 & MEP1B & OID20168 & Q16820 & CARDIOMETABOLIC & 0.271 & 0.500 & 0.129 & $1.37 \times 10^{-2}$ \\ 
  7 & HMOX1 & OID20217 & P09601 & CARDIOMETABOLIC & -0.194 & -0.573 & 0.154 & $1.70 \times 10^{-2}$ \\ 
  8 & TCL1B & OID20058 & O95988 & CARDIOMETABOLIC & -0.179 & -0.259 & 0.070 & $1.70 \times 10^{-2}$ \\ 
  9 & CD22 & OID20637 & P20273 & INFLAMMATION & -0.155 & -0.542 & 0.143 & $1.70 \times 10^{-2}$ \\ 
  10 & CDHR5 & OID20193 & Q9HBB8 & CARDIOMETABOLIC & -0.230 & -0.463 & 0.127 & $2.13 \times 10^{-2}$ \\ 
  11 & SPINK6 & OID21450 & Q6UWN8 & ONCOLOGY & -0.394 & -0.368 & 0.102 & $2.29 \times 10^{-2}$ \\ 
  12 & SIRPA & OID20304 & P78324 & CARDIOMETABOLIC & -0.194 & -0.457 & 0.136 & $4.64 \times 10^{-2}$ \\ 
  13 & WFIKKN1 & OID20939 & Q96NZ8 & NEUROLOGY & -0.179 & -0.308 & 0.092 & $4.64 \times 10^{-2}$ \\ 
  14 & CR2 & OID20393 & P20023 & CARDIOMETABOLIC & -0.194 & -0.433 & 0.130 & $4.77 \times 10^{-2}$ \\ 
   \hline
\end{tabular}
\begin{tablenotes}
    \item \footnotesize{Lasso: estimated coefficients by lasso; Beta: estimated coefficients by the proposed method HDIGEE; SE: model-based standard error by the proposed method HDIGEE.}
\end{tablenotes}
\end{threeparttable}
\end{table}


\section{Technical proofs}
\label{supp:sec:proof}

For the convenience of the readers, here we repeat the assumptions used in the main text.

	\begin{assumption} \label{assump:bound_covs}
		The covariates $\bX_i$, as well as $\bX_i \bbeta^0$, are almost surely uniformly bounded; that is, there exists constants $K, K^{\prime} > 0$ such that $\| \bX_i \|_{\infty} \le K$, and  $\| \bX_i \bbeta^0 \|_{\infty} \le K^{\prime}$, $i = 1, \ldots, n$. Also, the standardized error terms, $\varepsilon_{ij} (\bbeta^0) = G_{ij}^{-1/2}(\bbeta^0) \left\{ Y_{ij} - \mu_{ij}(\bbeta^0)\right\}$, are sub-expotential.
	\end{assumption}
	
	\begin{assumption}  \label{assump:bdd_xomega}
		$\| \bX_i \bomega^0 \|_{\infty}$ is uniformly bounded for all $i = 1, \ldots, n$, almost surely.
	\end{assumption}
	
	\begin{assumption} \label{assump:var_fun}
		The mean and variance functions, $\mu(\zeta)$ and $v(\mu)$, are differentiable, and their derivatives are denoted by $\dot{\mu}(\zeta) = d \mu / d \zeta$ and $\dot{v}(\mu) = d v / d \mu$, respectively. There exist constants $\delta^{\prime}, \delta^{\prime\prime} > 0$ and $K_1,~ K_2, ~ c_{L} > 0$ such that 
		\[
		\begin{array}{c}
			\max_{\zeta_0 \in \{\bx_{ij}^\T \bbeta^0 \} } \sup_{ \{ \zeta: |\zeta - \zeta_0| \le \delta^{\prime}\} } \max \left\{  | \mu(\zeta) |, | \dot{\mu}(\zeta) |, 1/ | {\mu}(\zeta) | \right\} \le K_1,\\
			\max_{\mu_0 \in \{\mu(\bx_{ij}^\T \bbeta^0) \} } \sup_{ \{ \mu: |\mu - \mu_0| \le \delta^{\prime\prime} \} } \max \left\{ v(\mu), | \dot{v}(\mu) |, 1/ |{v}(\mu)| \right\} \le K_2, \\
			\max_{\mu_0 \in \{\mu(\bx_{ij}^\T \bbeta^0) \} } \sup_{  \{ ( \mu, \bar{\mu} ): |\mu - \mu_0| \le \delta^{\prime\prime}, |\bar{\mu} - \mu_0| \le \delta^{\prime\prime} \} } | v(\mu) - v(\bar{\mu}) | \le c_{L} | \mu - \bar{\mu} |.
		\end{array}
		\]
	\end{assumption}
	
	\begin{assumption} \label{assump:bdd_eigen_covs}
		There exists a constant $c > 0$ such that the minimum eigenvalue of the covariance matrix $\lambda_{\min} \left( \E(\bX_1^\T \bX_1) \right) \ge c$.
	\end{assumption}

	\begin{assumption}
		\label{assump:cor_mat}
		Let $r_n = \| \hatR - \barR \|$ denote the rate of convergence from the estimated correlation matrix $\hatR$ to its theoretical limit $\barR$. Then  
		$
		r_n = \oP \left[ \left\{ \max \left(s_0, s^{\prime}\| \baromega^0 \|_1 \right) \left\{ \log(p) \right\}^{1/2} \| \baromega^0 \|_1 \right\}^{-1} \right].
		$
		Both $\barR$ and $\bR_0$, the true correlation matrix for $\bY_i$,  have their eigenvalues bounded from above and away from zero.
	\end{assumption}

\subsection{Proofs of technical lemmas}

Throughout the proofs, we will use $C$, $C^{\prime}$ and $C^{\prime\prime}$ for generic absolute constants, and their specific meanings vary from case to case.


We first provide a useful lemma for the representation of the derivative of $\bPsi(\bbeta)$, denoted as $\partial \bPsi(\bbeta) / \partial \bbeta^\T$. This is essentially the same as Lemma 3.2 of \citet{wang2011gee}, and thus we will omit the derivation.

\begin{lemma} \label{lemma:deriv_psi}
	The derivative of $\bPsi(\bbeta)$ can be rewritten as
	\[
	\displaystyle \frac{\partial \bPsi(\bbeta)}{\partial \bbeta^\T} = - \bS(\bbeta) + \bE_n(\bbeta) + \bF_n(\bbeta),
	\]
	where
	\[
	\begin{array}{rcl}
		\bS(\bbeta) & = & \displaystyle \frac{1}{n} \sum_{i=1}^n \bX_i^\T \bG_i^{1/2}(\bbeta) \hatR^{-1} \bG_i^{1/2}(\bbeta) \bX_i,  \\
		\bE_n(\bbeta) & = & \displaystyle - \frac{1}{2n} \sum_{i=1}^n \sum_{j=1}^m \dot{v}(\mu_{ij}(\bbeta)) G_{ij}^{-1/2}(\bbeta) (Y_{ij} - \mu_{ij}(\bbeta)) \bX_i^\T \bG_i^{1/2}(\bbeta) \hatR^{-1} \bare_j \bare_j^\T\bX_i, \\
		\bF_n(\bbeta) & = & \displaystyle \frac{1}{2n} \sum_{i=1}^n \sum_{j=1}^m \dot{v}(\mu_{ij}(\bbeta)) G_{ij}^{1/2}(\bbeta) \bx_{ij} \bx_{ij}^\T \bare_j^\T \hatR^{-1} \bG_i^{-1/2} (\bbeta) (\bY_i - \bmu_i(\bbeta)).
	\end{array}
	\]
	Here $\bare_j$ is a $m$-dimensional unit vector with the $j$th element being one and the others being zero. 
	
\end{lemma}


Lemma \ref{lemma:feasible_sol_omega0} studies the convergence rate of $	\| \bS(\hbeta) \baromega^0 - \bxi \|_{\infty}$. This result determines the desirable rate for the tuning parameter $\lambda^{\prime}$; with a sufficiently large $\lambda^{\prime}$, $\baromega^0$  is admissible  to the constraint in the constrained $\ell_1$ minimization problem (4).

\begin{lemma} \label{lemma:feasible_sol_omega0}
	Under Assumptions \ref{assump:bound_covs}--\ref{assump:cor_mat}, we have
	\[
	\| \bS(\hbeta) \baromega^0 - \bxi \|_{\infty} = \OP \left\{ \| \baromega^0 \|_1 \left( s_0 \lambda + r_n \right) \right\}.
	\]
	Thus, it suffices to take $\lambda^{\prime} \asymp \| \baromega^0 \|_1 ( s_0 \lambda + r_n \| )$.
\end{lemma}


\begin{proof}[Proof of Lemma \ref{lemma:feasible_sol_omega0}]
	Since $\bS(\hbeta) \baromega^0 - \bxi = \{ \bS(\hbeta) - \bS^0 \} (\bS^0)^{-1} \bxi = \{ \bS(\hbeta) - \bS^0 \}  \baromega^0$,
	\begin{equation*}
		\begin{array}{rl}
			\| \bS(\hbeta) \baromega^0 - \bxi \|_{\infty} & \le \| \bS(\hbeta) - \bS^0 \|_{\infty} \| \baromega^0 \|_1.
		\end{array}
	\end{equation*}
	We then focus on $$\| \bS(\hbeta) - \bS^0 \|_{\infty} \le \| \bS(\hbeta) - \barS(\hbeta) \|_{\infty} + \| \barS(\hbeta) - \barS(\bbeta^0) \|_{\infty} + \| \barS(\bbeta^0) - \bS^0 \|_{\infty}.$$
	
	For the first term in the upper bound of $\| \bS(\hbeta) - \bS^0 \|_{\infty}$, 
	\[
	\begin{array}{rl}
		\| \bS(\hbeta) - \barS(\hbeta) \|_{\infty} & = \max_{1 \le j, k \le p} \left| \displaystyle \frac{1}{n} \sum_{i=1}^n \bx_{i[j]}^\T \bG_i^{1/2}(\hbeta) [\hatR^{-1} - \barR^{-1}] \bG_i^{1/2} (\hbeta) \bx_{i[k]} \right|  \\
		& \le \max_{1 \le j, k \le p} \displaystyle \frac{1}{n} \sum_{i=1}^n \| \bG_i^{1/2}(\hbeta) \bx_{i[j]} \|_2 \| \hatR^{-1} - \barR^{-1} \| \| \bG_i^{1/2}(\hbeta) \bx_{i[k]} \|_2 \\
		& \le \max_{1 \le j, k \le p} \displaystyle \frac{1}{n} \sum_{i=1}^n \| \bG_i(\hbeta) \|  \| \hatR^{-1} - \barR^{-1} \| \| \bx_{i[k]} \|_2 \| \bx_{i[j]} \|_2 \\
		& = \OP(1)  \| \hatR^{-1} - \barR^{-1} \|,
	\end{array}
	\]
	where the last equality holds due to Assumptions \ref{assump:bound_covs} and \ref{assump:var_fun},   that $\bx_{i[j]}$ and $\bx_{i[k]} \in \mR^m$ are fixed dimensional vectors, and that Lemma 1 guarantees $\hbeta$ converges to $\bbeta^0$. Since $\| \hatR^{-1} - \barR^{-1} \| = \|  \hatR^{-1} ( \barR - \hatR )  \barR^{-1} \|  \le \| \hatR^{-1} \| \times \| \hatR - \barR \| \times \| \barR^{-1} \|$, $\| \hatR^{-1} - \barR^{-1} \| \le \OP(r_n)$ and $\| \bS(\hbeta) - \barS(\hbeta) \|_{\infty} = \OP(r_n)$.
	
	For the second term in the upper bound of $\| \bS(\hbeta) - \bS^0 \|_{\infty}$, 
	\begin{equation*}
		\begin{array}{rl}
			\barS(\hbeta) - \barS(\bbeta^0)  & = \displaystyle \frac{1}{n} \sum_{i=1}^n \bX_i^\T \bG_i^{1/2}(\hbeta) \barR^{-1} \bG_i^{1/2}(\hbeta) \bX_i - \displaystyle \frac{1}{n} \sum_{i=1}^n \bX_i^\T \bG_i^{1/2}(\bbeta^0) \barR^{-1} \bG_i^{1/2}(\bbeta^0) \bX_i   \\
			&  = \left\{ \displaystyle \frac{1}{n} \sum_{i=1}^n \bX_i^\T \bG_i^{1/2}(\hbeta) \barR^{-1} \bG_i^{1/2}(\hbeta) \bX_i - \displaystyle \frac{1}{n} \sum_{i=1}^n \bX_i^\T \bG_i^{1/2}(\hbeta) \barR^{-1} \bG_i^{1/2}(\bbeta^0) \bX_i \right\}  \\
			& ~ + \left\{ \displaystyle \frac{1}{n} \sum_{i=1}^n \bX_i^\T \bG_i^{1/2}(\hbeta) \barR^{-1} \bG_i^{1/2}(\bbeta^0) \bX_i - \displaystyle \frac{1}{n} \sum_{i=1}^n \bX_i^\T \bG_i^{1/2}(\bbeta^0) \barR^{-1} \bG_i^{1/2}(\bbeta^0) \bX_i \right\} \\
			& = I_{n1} + I_{n2}.
		\end{array}
	\end{equation*}
	Then,
	\begin{equation*}
		\begin{array}{rl}
			\| I_{n1} \|_{\infty} & = \| n^{-1} \sum_{i=1}^n \bX_i^\T \bG_i^{1/2}(\hbeta) \barR^{-1} [ \bG_i(\hbeta) - \bG_i^{1/2}(\bbeta^0) ] \bX_i \|_{\infty}  \\
			& = \max_{1 \le j, k \le p} \| n^{-1} \sum_{i=1}^n \bx_{i[j]}^\T \bG_i^{1/2}(\hbeta) \barR^{-1} [ \bG_i(\hbeta) - \bG_i^{1/2}(\bbeta^0) ] \bx_{i[k]} \|_{\infty} \\
			& \le \max_{1 \le j, k \le p} n^{-1} \sum_{i=1}^n \| \bG_i^{1/2}(\hbeta) \bx_{i[j]} \|_1 \| \barR^{-1} \|_{\infty} \|   [ \bG_i(\hbeta) - \bG_i^{1/2}(\bbeta^0) ] \bx_{i[k]} \|_1.
		\end{array}
	\end{equation*}
	Due to Assumptions \ref{assump:bound_covs} and \ref{assump:var_fun}, $\| \bG_i^{1/2}(\hbeta) \bx_{i[j]} \|_1 = \OP(mK)$, and $\| \{ \bG_i^{1/2}(\hbeta) - \bG_i^{1/2}(\bbeta^0) \} \bx_{i[k]} \|_1 = \OP(m K^2 \| \hbeta - \bbeta^0 \|_1)$. Since $\| \barR^{-1} \|_{\infty} = {O}(1)$, then by Lemma 1, $\| I_{n1} \|_{\infty} = \OP(s_0 \lambda)$. Similarly, we can show that $\| I_{n2} \|_{\infty} = \OP(s_0 \lambda)$. Hence, the second term $\| \barS(\hbeta) - \barS(\bbeta^0) \|_{\infty} = \OP(s_0 \lambda)$.
	
	We rewrite the third term in the upper bound of $\| \bS(\hbeta) - \bS^0 \|_{\infty}$ as
	\begin{equation*}
		\barS(\bbeta^0) - \bS^0  = \displaystyle \frac{1}{n} \sum_{i=1}^n \left[ \bX_i^\T \bG_i^{1/2}(\bbeta^0) \barR^{-1} \bG_i^{1/2}(\bbeta^0) \bX_i - \E \left\{ \bX_i^\T \bG_i^{1/2}(\bbeta^0) \barR^{-1} \bG_i^{1/2}(\bbeta^0) \bX_i \right\} \right].
	\end{equation*}
	For the $(j,k)$th element in the above matrix, since 
	\begin{align*}
		| \bx_{i[j]}^\T \bG_i^{1/2}(\bbeta^0) \barR^{-1} \bG_i^{1/2}(\bbeta^0) \bx_{i[k]} | & \le \| \bG_i^{1/2}(\bbeta^0) \bx_{i[j]} \|_1 \| \bG_i^{1/2}(\bbeta^0) \bx_{i[k]} \|_1 \| \barR^{-1} \|_{\infty} \\
		& \le (m K_1^{1/2} K)^2 \| \barR^{-1} \|_{\infty},
	\end{align*}
	we have 
	\begin{equation*}
		\begin{array}{c}
			| \bx_{i[j]}^\T \bG_i^{1/2}(\bbeta^0) \barR^{-1} \bG_i^{1/2}(\bbeta^0) \bx_{i[k]} - \E\{ \bx_{i[j]}^\T \bG_i^{1/2}(\bbeta^0) \barR^{-1} \bG_i^{1/2}(\bbeta^0) \bx_{i[k]} \} |   \le 2 m^2 K_1 K^2 \| \barR^{-1} \|_{\infty},  \\
			\E[ \bx_{i[j]}^\T \bG_i^{1/2}(\bbeta^0) \barR^{-1} \bG_i^{1/2}(\bbeta^0) \bx_{i[k]} - \E\{ \bx_{i[j]}^\T \bG_i^{1/2}(\bbeta^0) \barR^{-1} \bG_i^{1/2}(\bbeta^0) \bx_{i[k]} \} ]   = 0.
		\end{array}
	\end{equation*}
	By Hoeffding's inequality (see, e.g., Lemma 14.11 in \citealt{buhlmann2011statistics}), for any $t > 0$, 
	\[
	\begin{array}{rl}
		\PP \left( \displaystyle \left|  [\barS(\bbeta^0) - \bS^0]_{jk} \right| > t \right) &  \le 2 \exp[ - n t^2/ \{ 2 (2 m^2 K_1 K^2 \| \barR^{-1} \|_{\infty})^2 \} ] \\
		& = 2 \exp\{ - n t^2/ ( 8 m^4 K_1^2 K^4 \| \barR^{-1} \|_{\infty}^2 ) \}.
	\end{array}
	\]
	Then, $$\PP \left( \| \barS(\bbeta^0) - \bS^0 \|_{\infty} > t \right) \le \sum_{j,k} \PP \left( \displaystyle \left|  [\barS(\bbeta^0) - \bS^0]_{jk} \right| > t \right) \le 2 p^2 \exp\{ - n t^2/ ( 8 m^4 K_1^2 K^4 \| \barR^{-1} \|_{\infty}^2 ) \}.$$ 
	Since $m$ and $\| \barR^{-1} \|_{\infty}$ are bounded by absolute constants, 
	\[
	\| \barS(\bbeta^0) - \bS^0 \|_{\infty} = \OP[ \{ \log(p) / n \}^{1/2} ].
	\]
	
	Combining the three terms in $\bS(\hbeta) - \bS^0$ and taking $\lambda \asymp \{ \log(p)/n \}^{1/2}$, we have
	\begin{equation} \label{eq:hatS_to_S0_max}
		\| \bS(\hbeta) - \bS^0 \|_{\infty} = \OP(s_0 \lambda + r_n).
	\end{equation}
	Finally,
	$
	\| \bS(\hbeta) \baromega^0 - \bxi \|_{\infty} = \OP \{  \| \baromega^0 \|_1 ( s_0 \lambda + r_n ) \}.
	$
\end{proof}


Before studying the convergence rate of the projection direction $\widehat{\bomega}$ in Lemma \ref{lemma:hat_omega_rate}, we introduce the following lemma that guarantees the compatibility condition holds for $\bS(\hbeta)$ with large probability.

\begin{lemma} \label{lemma:compat_omega}
	Let 
	\[
	\kappa_D (s^{\prime}) = \inf \left\{ 
	\displaystyle \frac{(s^{\prime})^{1/2} \{ \bnu ^\T \bS(\hbeta) \bnu \}^{1/2}}{\| \bnu_{\calS^{\prime}} \|_1}: ~ \bnu \in \mR^p, \bnu \ne 0, \| \bnu_{(\calS^{\prime})^c} \|_1 \le L \| \bnu_{\calS^{\prime}} \|_1 \mathrm{~ for ~ some ~} L > 0
	\right\}.
	\]
	Assume $s^{\prime}  ( s_0 \lambda + r_n ) = o(1)$. Under Assumptions \ref{assump:bound_covs}--\ref{assump:cor_mat}, $\kappa_D(s^{\prime}) \ge \kappa$ for some constant $\kappa > 0$ with probability going to one.
\end{lemma}


\begin{proof}[Proof of Lemma \ref{lemma:compat_omega}]
	
	Since $\| \bnu_{\mathcal{S}^{\prime}} \|_1 \le (s^{\prime})^{1/2} \| \bnu_{\mathcal{S}^{\prime}} \|_2 \le (s^{\prime})^{1/2} \| \bnu \|_2$,
	\[
	\kappa_D^2(s^{\prime}) \ge \inf \displaystyle \left\{ \frac{\bnu ^\T \bS(\hbeta) \bnu}{\| \bnu \|_2^2}: ~ \bnu \in \mR^p, \bnu \ne 0, \| \bnu_{(\calS^{\prime})^c} \|_1 \le L \| \bnu_{\calS^{\prime}} \|_1 \mathrm{~ for ~ some ~} L > 0 \right\},
	\]
	and
	\[
	\begin{array}{rcl}
		| \bnu ^\T \{ \bS(\hbeta) - \bS^0 \} \bnu | & \le & \| \bnu \|_1^2 \| \bS(\hbeta) - \bS^0 \|_{\infty}  \\
		& \le & (L+1)^2 \| \bnu_{\mathcal{S}^{\prime}} \|_1^2 \| \bS(\hbeta) - \bS^0 \|_{\infty} \\
		& \le & (L+1)^2 s^{\prime} \| \bnu \|_2^2 \| \bS(\hbeta) - \bS^0 \|_{\infty}.
	\end{array}
	\]
	Because we have shown $\| \bS(\hbeta) - \bS^0 \|_{\infty} = \OP(s_0 \lambda + r_n)$ in \eqref{eq:hatS_to_S0_max}, then 
	$$| \bnu ^\T \{ \bS(\hbeta) - \bS^0 \} \bnu | / \| \bnu \|_2^2 \le (L+1)^2 s^{\prime} \| \bS(\hbeta) - \bS^0 \|_{\infty} = \OP\{ (L+1)^2 s^{\prime} (s_0 \lambda +r_n) \} = \oP(1).$$ 
	Therefore, 
	\[
	\displaystyle \frac{\bnu ^\T \bS(\hbeta) \bnu}{\| \bnu \|_2^2} = \frac{\bnu ^\T \bS^0 \bnu}{\| \bnu \|_2^2} + \frac{\bnu ^\T \{\bS(\hbeta) - \bS^0 \} \bnu}{\| \bnu \|_2^2} = \frac{\bnu ^\T \bS^0 \bnu}{\| \bnu \|_2^2} + \oP(1), 
	\]
	and with probability going to one, 
	\[
	\displaystyle \frac{\bnu ^\T \bS(\hbeta) \bnu}{\| \bnu \|_2^2} \ge \frac{1}{2} \frac{\bnu ^\T \bS^0 \bnu}{\| \bnu \|_2^2} \ge \frac{1}{2} \lambda_{\min}(\bS^0).
	\]
	It suffices to take $\kappa = \{ \lambda_{\min}(\bS^0) / 2 \}^{1/2}$, and based on Assumptions \ref{assump:bound_covs}, \ref{assump:var_fun}, \ref{assump:bdd_eigen_covs} and \ref{assump:cor_mat}, it is easy to see that the smallest eigenvalue of $\bS^0$ is bounded away from zero.
\end{proof}


Next, we study the convergence of the projection direction $\widehat{\bomega}$ for a general target $\theta^0 = \bxi^\T \bbeta^0$, where $\| \bxi \|_2 = 1$ without loss of generality. In reality, we only need $\| \bxi \|_2$ to be bounded. Recall that we solve the constrained $\ell_1$ minimization problem
\[
\tilde{\bomega} = \argmin_{\bomega \in \mR^p} \{ \| \bomega \|_1: ~ \| \bS(\hbeta) \bomega - \bxi \|_{\infty} \le \lambda^{\prime} \},
\]
and then rescale $\tilde{\bomega}$ to obtain $\hat{\bomega} = \tilde{\bomega} / \{ \tilde{\bomega}^\T \bS(\hbeta) \tilde{\bomega} \}$.

\begin{lemma} \label{lemma:hat_omega_rate}
	Let $\tilde{\bomega} = \argmin \{ \| \bomega \|_1: \| \bS(\hbeta) \bomega - \bxi \|_{\infty} \le \lambda^{\prime} \}$. Suppose Assumptions \ref{assump:bound_covs}--\ref{assump:cor_mat} hold and $s^{\prime} \lambda^{\prime} \asymp s^{\prime} \| \baromega^0 \|_1 (s_0 \lambda + r_n) = o(1)$.  Then 
	\[
	\| \tilde{\bomega} - \baromega^0 \|_1 = \OP(s^{\prime} \lambda^{\prime}).
	\]
	Since $\{ \tilde{\bomega}^\T \bS(\hbeta) \tilde{\bomega} \}^{-1} = \OP(1)$, it is valid to define $\hat{\bomega} = \tilde{\bomega} / \{ \tilde{\bomega}^\T \bS(\hbeta) \tilde{\bomega} \}$ with probability going to one, and
	\[
	\| \hat{\bomega} - \bomega^0 \|_1 = \OP(s^{\prime} \lambda^{\prime} \| \baromega^0 \|_1 ).
	\]
\end{lemma}

\begin{proof}[Proof of Lemma \ref{lemma:hat_omega_rate}]
	
	We first study the convergence rate of the non-normalized projection direction $\tilde{\bomega}$.
	By the construction of $\tilde{\bomega}$,  on the event $\mathcal{A} = \{ \| \bS(\hbeta) \baromega^0 - \bxi \|_{\infty} \le \lambda^{\prime} \}$,
	\begin{equation} \label{eq:omega_l1_ineq1}
		\| \tilde{\bomega} \|_1 \le  \| \baromega^0 \|_1.
	\end{equation}
	Denote the difference $\hat\bDelta = \tilde{\bomega} - \baromega^0$. By triangle inequality, we have $|\bar{\omega}_j^0| - |\tilde{\omega}_j| \le |\widehat{\Delta}_j| = |\tilde{\omega}_j - \bar{\omega}_j^0|$, and then
	\begin{equation} \label{eq:omega_l1_ineq2}
		\| \baromega^0 \|_1 = \sum_{j \in \calS^{\prime}} | \bar{\omega}_j^0 | \ge \| \tilde{\bomega} \|_1 = \sum_{j \in \calS^{\prime}} |\tilde{\omega}_j| + \sum_{j \in (\calS^{\prime})^c} | \tilde{\omega}_j |
		\ge \left( \sum_{j \in \calS^{\prime}} | \bar{\omega}_j^0 | -  \sum_{j \in \calS^{\prime}} | \widehat{\Delta}_j | \right) + \sum_{j \in (\calS^{\prime})^c} | \tilde{\omega}_j |,
	\end{equation}
	where the first inequality holds due to \eqref{eq:omega_l1_ineq1}. Then, \eqref{eq:omega_l1_ineq2} results in the following inequality:
	\[
	\sum_{j \in \calS^{\prime}} | \bar{\omega}_j^0 | \ge \sum_{j \in \calS^{\prime}} | \bar{\omega}_j^0 | -  \sum_{j \in \calS^{\prime}} | \widehat{\Delta}_j | + \sum_{j \in (\calS^{\prime})^c} | \tilde{\omega}_j |,
	\]
	which implies that
	\begin{equation} \label{eq:L_eq_1}
		\| \hat{\bDelta}_{\mathcal{S}^{\prime}} \|_1 =   \sum_{j \in \calS^{\prime}} | \widehat{\Delta}_j | \ge  \sum_{j \in (\calS^{\prime})^c} | \tilde{\omega}_j |  = \sum_{j \in (\calS^{\prime})^c} | \tilde{\omega}_j - \bar{\omega}_j^0 |  = \| \widehat{\bm{\Delta}}_{(\calS^{\prime})^c} \|_1.
	\end{equation}
	
	Next, we will invoke the compatility condition in Lemma \ref{lemma:compat_omega} to show the rate of $\| \hat{\bDelta} \|_1$. Consider $\hat{\bDelta}^\T \bS(\hbeta) \hat{\bDelta}$ that appears in the compatibility condition of Lemma \ref{lemma:compat_omega} with $\bnu = \hat{\bDelta}$. By definition,
	\begin{equation}
		\begin{array}{rl}
			\hat{\bDelta}^\T \bS(\hbeta) \hat{\bDelta}    & =  \hat{\bDelta}^\T \bS(\hbeta) (\tilde{\bomega} - \baromega^0) \\
			& = \hat{\bDelta}^\T ( \bS(\hbeta) \tilde{\bomega} - \bxi + \bxi - \bS(\hbeta) \baromega^0 ). \\
		\end{array}
	\end{equation}
	Then, on the event $\mathcal{A}$, $| \hat{\bDelta}^\T \bS(\hbeta) \hat{\bDelta} | \le \| \hat{\bDelta} \|_1 \{ \| \bS(\hbeta) \tilde{\bomega} - \bxi \|_{\infty} + \| \bS(\hbeta) \baromega^0 - \bxi \|_{\infty} \} \le 2 \| \hat{\bDelta} \|_1 \lambda^{\prime}$.
	
	By Lemma \ref{lemma:compat_omega}, $\PP ( \kappa_D(s^{\prime}) \ge \kappa )$ goes to one as $n \to \infty$. Due to \eqref{eq:L_eq_1}, without loss of generality, we choose $L=1$ in Lemma \ref{lemma:compat_omega}. On the event $\{ \kappa_D(s^{\prime}) \ge \kappa \}$, we have
	$
	\| \hat{\bm{\Delta}}_{\calS^{\prime}} \|_1 \lesssim (s^{\prime})^{1/2} ( \hat{\bDelta}^\T \bS(\hbeta) \hat{\bDelta} )^{1/2} \lesssim (s^{\prime} \lambda^{\prime}  \| \hat{\bDelta} \|_1 )^{1/2}.
	$
	Combined with \eqref{eq:L_eq_1}, this implies that 
	\[
	\| \hat{\bDelta} \|_1 = \| \hat{\bDelta}_{\calS^{\prime}} \|_1 + \| \hat{\bDelta}_{(\calS^{\prime})^c} \|_1 \le 2 \| \hat{\bDelta}_{\calS^{\prime}} \|_1 \lesssim (s^{\prime} \lambda^{\prime}  \| \hat{\bDelta} \|_1 )^{1/2}.
	\]
	Therefore, $\| \hat{\bDelta} \|_1 = \| \tilde{\bomega} - \baromega^0 \|_1 = \OP(s^{\prime} \lambda^{\prime})$.

	Next, we study the rate of $\| \homega - \bomega^0 \|_1$ for the normalized projection direction $\homega$. By definition,
	\begin{equation} \label{eq:diff_homega_omega0}
		\homega - \bomega^0 = \displaystyle \frac{\tilde{\bomega}}{\tilde{\bomega}^\T \bS(\hbeta) \tilde{\bomega}} - \frac{\bar{\bomega}^0}{(\bar{\bomega}^0)^\T \bS^0 \bar{\bomega}^0} = \frac{ \{ (\bar{\bomega}^0)^\T \bS^0 \bar{\bomega}^0 \} \tilde{\bomega}  -  \{ \tilde{\bomega}^\T \bS(\hbeta) \tilde{\bomega} \} \bar{\bomega}^0 }{ \{ \tilde{\bomega}^\T \bS(\hbeta) \tilde{\bomega} \} \{ (\bar{\bomega}^0)^\T \bS^0 \bar{\bomega}^0 \} }.
	\end{equation}
	We will show that the denominator of \eqref{eq:diff_homega_omega0}, i.e., $\{ \tilde{\bomega}^\T \bS(\hbeta) \tilde{\bomega} \} \{ (\bar{\bomega}^0)^\T \bS^0 \bar{\bomega}^0 \}$, is bounded away from zero with large probability, and that $\| \homega - \bomega^0 \|_1$ is in the same order as the numerator of \eqref{eq:diff_homega_omega0}, which is $\OP(s^{\prime} \lambda^{\prime} \| \bar{\bomega} ^0 \|_1)$.
	
	In the denominator of \eqref{eq:diff_homega_omega0}, $| \tilde{\bomega}^\T \bS(\hbeta) \tilde{\bomega} - (\bar{\bomega}^0)^\T \bS^0 \bar{\bomega}^0 | = \OP(s^{\prime} \lambda^{\prime}) = \oP(1)$. To see this, we have
	\[
	\begin{array}{rl}
		& | \tilde{\bomega}^\T \bS(\hbeta) \tilde{\bomega} - (\bar{\bomega}^0)^\T \bS^0 \bar{\bomega}^0 | \\
		\le & | (\tilde{\bomega} - \bar{\bomega}^0)^\T \bS(\hbeta) \tilde{\bomega} | + | (\bar{\bomega}^0)^\T \bS(\hbeta) (\tilde{\bomega} - \bar{\bomega}^0) | + | (\bar{\bomega}^0)^\T \{ \bS(\hbeta) - \bS^0 \} \bar{\bomega}^0 | \\
		\le & \| \tilde{\bomega} - \bar{\bomega}^0 \|_1 \| \bS(\hbeta) \tilde{\bomega} \|_{\infty} + \| \tilde{\bomega} - \bar{\bomega}^0 \|_1 \| \bS(\hbeta) \bar{\bomega}^0 \|_{\infty} + | (\bar{\bomega}^0)^\T \{ \bS(\hbeta) - \bS^0 \} \bar{\bomega}^0 | \\
		\le & \| \tilde{\bomega} - \bar{\bomega}^0 \|_1 (\| \bxi \|_{\infty} + \lambda^{\prime}) + \| \tilde{\bomega} - \bar{\bomega}^0 \|_1 \OP(\| \bxi \|_{\infty} + \lambda^{\prime}) + | (\bar{\bomega}^0)^\T \{ \bS(\hbeta) - \bS^0 \} \bar{\bomega}^0 |   \\
		= & \OP(s^{\prime} \lambda^{\prime}) + | (\bar{\bomega}^0)^\T \{ \bS(\hbeta) - \bS^0 \} \bar{\bomega}^0 |.
	\end{array}
	\]
	Due to Assumption \ref{assump:bdd_xomega}, we have $\| \bX_i \bar{\bomega}^0 \|_{\infty}$ is uniformly bounded for all $i = 1, \ldots, n$, and similar to showing $\| \bS(\hbeta) - \bS^0 \|_{\infty} = \OP(s_0 \lambda + r_n)$ in Lemma \ref{lemma:feasible_sol_omega0}, we see that $| (\bar{\bomega}^0)^\T \{ \bS(\hbeta) - \bS^0 \} \bar{\bomega}^0 | = \OP(s_0 \lambda + r_n) \lesssim \OP(s^{\prime} \lambda^{\prime})$. Hence, $\tilde{\bomega}^\T \bS(\hbeta) \tilde{\bomega} = (\bar{\bomega}^0)^\T \bS^0 \bar{\bomega}^0 + \oP(1)$. Since $0 < \lambda_{\max}^{-1}(\bS^0) \le (\bar{\bomega}^0)^\T \bS^0 \bar{\bomega}^0 = \bxi^\T (\bS^0)^{-1} \bxi \le \lambda_{\min}^{-1}(\bS^0)$, it is easy to see that $\tilde{\bomega}^\T \bS(\hbeta) \tilde{\bomega}$ is strictly positive and bounded away from zero with large probability, and that $\{ \tilde{\bomega}^\T \bS(\hbeta) \tilde{\bomega} \}^{-1} \{ (\bar{\bomega}^0)^\T \bS^0 \bar{\bomega}^0 \}^{-1} = \OP(1)$. We then examine the numerator of \eqref{eq:diff_homega_omega0}, where
	\[
	\begin{array}{rl}
		& \| \{ (\bar{\bomega}^0)^\T \bS^0 \bar{\bomega}^0 \} \tilde{\bomega}  -  \{ \tilde{\bomega}^\T \bS(\hbeta) \tilde{\bomega} \} \bar{\bomega}^0 \|_1 \\
		\le & \| \{ (\bar{\bomega}^0)^\T \bS^0 \bar{\bomega}^0 \} \tilde{\bomega} - \{ (\bar{\bomega}^0)^\T \bS^0 \bar{\bomega}^0 \} \bar{\bomega}^0 \|_1 + \| \bar{\bomega}^0 \|_1 | \{ (\bar{\bomega}^0)^\T \bS^0 \bar{\bomega}^0 \} -  \{ \tilde{\bomega}^\T \bS(\hbeta) \tilde{\bomega} \} | \\
		= & (\bar{\bomega}^0)^\T \bS^0 \bar{\bomega}^0 \| \tilde{\bomega} - \bar{\bomega}^0 \|_1 +  \| \bar{\bomega}^0 \|_1 | \{ (\bar{\bomega}^0)^\T \bS^0 \bar{\bomega}^0 \} -  \{ \tilde{\bomega}^\T \bS(\hbeta) \tilde{\bomega} \} | \\
		= & {O}(1) \OP(s^{\prime} \lambda^{\prime}) + \OP(s^{\prime} \lambda^{\prime}) \| \bar{\bomega}^0 \|_1 \\
		= & \OP(s^{\prime} \lambda^{\prime} \| \bar{\bomega}^0 \|_1 ).
	\end{array}
	\]
	Hence, $\| \homega - \bomega^0 \|_1 = \OP(s^{\prime} \lambda^{\prime} \| \baromega^0 \|_1)$.
\end{proof}


The following lemma determines $\| \barPsi(\bbeta^0) \|_{\infty} = \OP[ \{ \log(p)/n \}^{1/2} ]$ for the surrogate estimating function $\barPsi(\bbeta^0)$ using concentration inequalities, whose rate is useful in proving the asymptotic normality of the proposed de-biased estimator for $\theta^0$. Although the components of the standardized residuals $\beps_i (\bbeta^0) =  \bG_i^{-1/2}(\bbeta^0)(\bY_i - \bmu_i(\bbeta^0))$ are assumed as sub-exponential in Assumption \ref{assump:bound_covs}, sub-Gaussian error terms are a special case of sub-exponential errors \citep{vershynin2018high} and also widely used;  we will include a separate short discussion in the proof on this issue. 

\begin{lemma} \label{lemma:psi_infty}
	Let the standardized residuals $\beps_i (\bbeta^0) =  \bG_i^{-1/2}(\bbeta^0)(\bY_i - \bmu_i(\bbeta^0)) = (\varepsilon_{i1}(\bbeta^0), \ldots, \varepsilon_{im}(\bbeta^0))^\T $. Under Assumptions \ref{assump:bound_covs}, \ref{assump:var_fun} and \ref{assump:cor_mat},  $\| \barPsi(\bbeta^0) \|_{\infty} = \OP [ \{ \log(p)/n \}^{1/2} ]$.
\end{lemma}

\begin{proof}[Proof of Lemma \ref{lemma:psi_infty}]
	
	Recall that $\barPsi(\bbeta^0) = n^{-1} \sum_{i=1}^n \bX_i^\T \bG_i^{1/2}(\bbeta^0) \barR^{-1} \beps_i(\bbeta^0)$, and for $j= 1, \ldots, p$, the $j$th element of $\bar{\Psi}(\bbeta^0)$,  $\bar{\Psi}_j(\bbeta^0) = n^{-1} \sum_{i=1}^n \bx_{i[j]}^\T \bG_i^{1/2}(\bbeta^0) \barR^{-1} \beps_i(\bbeta^0)$. Note that the expectation \\
	$\E\{ \bx_{i[j]}^\T \bG_i^{1/2}(\bbeta^0) \barR^{-1} \beps_i(\bbeta^0) \} = 0$ due to the correct specification of the first-order moment of $Y_i$, and by Assumptions \ref{assump:bound_covs}, \ref{assump:var_fun} and \ref{assump:cor_mat}, $| \bx_{i[j]}^\T \bG_i^{1/2}(\bbeta^0) \barR^{-1} \beps_i(\bbeta^0) | \le C \| \beps_i(\bbeta^0) \|_2 $. By the definition of $\psi_q$ norm ($q = 1, 2$), 
	\[
	\| \bx_{i[j]}^\T \bG_i^{1/2}(\bbeta^0) \barR^{-1} \beps_i(\bbeta^0) \|_{\psi_q} \le C \| ~  \| \beps_i(\bbeta^0) \|_2 \|_{\psi_q} \le C \sum_{j=1}^m \| \varepsilon_{ij}(\bbeta^0) \|_{\psi_q} , ~ q=1, 2.
	\]
	So $\bx_{i[j]}^\T \bG_i^{1/2}(\bbeta^0) \barR^{-1} \beps_i(\bbeta^0)$ is sub-exponential (or sub-Gaussian) if $\varepsilon_{ij}(\bbeta^0)$'s are sub-exponential (or sub-Gaussian).
	
	{\textit{Case I. When $\varepsilon_{ij}(\bbeta^0)$'s are sub-exponential}}
	
	By Bernstein's inequality (see Theorem 2.8.2 of \citealt{vershynin2018high}), for any $t > 0$, 
	\[ 
	\begin{array}{rl}
		& \PP ( | \bar{\Psi}_j(\bbeta^0) | \ge t ) \\
		= & \PP ( | \sum_{i=1}^n \bx_{i[j]}^\T \bG_i^{1/2}(\bbeta^0) \barR^{-1} \beps_i(\bbeta^0) | \ge nt ) \\
		\le  &  2 \exp \left\{ -C \min \left( \displaystyle \frac{n^2 t^2}{\sum_{i=1}^n \| \bx_{i[j]}^\T \bG_i^{1/2}(\bbeta^0) \barR^{-1} \beps_i(\bbeta^0) \|_{\psi_1}^2 }, \frac{nt}{\max_i \| \bx_{i[j]}^\T \bG_i^{1/2}(\bbeta^0) \barR^{-1} \beps_i(\bbeta^0) \|_{\psi_1}} \right) \right\}. \\
	\end{array}
	\]
	Since $\| \bx_{i[j]}^\T \bG_i^{1/2}(\bbeta^0) \barR^{-1} \beps_i(\bbeta^0) \|_{\psi_1} \le M^{\prime\prime}$ for some $M^{\prime\prime} > 0$ and all $j= 1, \ldots, p$, when $t = C^{\prime} \{ \log(p)/n \}^{1/2}$ for some large  enough $C^{\prime}>0$, $\PP ( \| \barPsi(\bbeta^0) \|_{\infty} \ge t ) \le \sum_{j=1}^p \PP ( | \bar{\Psi}_j(\bbeta^0) | \ge t ) \le O(p^{-C}) $ as $n \to \infty$. Hence, we have $\| \barPsi(\bbeta^0) \|_{\infty} = \OP [ \{\log(p)/n\}^{1/2} ]$.
	
	{\textit{Case II. When $\varepsilon_{ij}(\bbeta^0)$'s are sub-Gaussian}}
	
	By Hoeffding's inequality (see Theorem 2.6.3 of \citealt{vershynin2018high}), for any $t > 0$,
	\[ 
	\begin{array}{rcl}
		\PP ( | \bar{\Psi}_j(\bbeta^0) | \ge t ) & = & \PP ( | \sum_{i=1}^n \bx_{i[j]}^\T \bG_i^{1/2}(\bbeta^0) \barR^{-1} \beps_i(\bbeta^0) | \ge nt ) \\
		& \le & 2 \exp\{ -C n t^2 / M_j^2 \} \\
		& \le & 2 \exp\{ -C n t^2 / (M^{\prime})^2 \},
	\end{array}
	\]
	where $M_j = \| \bx_{1[j]}^\T \bG_1^{1/2}(\bbeta^0) \barR^{-1} \beps_1(\bbeta^0) \|_{\psi_2} \le M^{\prime}$ for some $M^{\prime} > 0$. And then,
	\[
	\PP ( \| \barPsi(\bbeta^0) \|_{\infty} \ge t ) \le \sum_{j=1}^p \PP ( | \bar{\Psi}_j(\bbeta^0) | \ge t ) \le 2 p \exp\{ -C n t^2 / (M^{\prime})^2 \}.
	\]
	Let $t = C^{\prime} \{ \log(p)/n \}^{1/2}$ for some large enough constant $C^{\prime} > 0$, then $\PP ( \| \barPsi(\bbeta^0) \|_{\infty} \ge t ) \le O(p^{-C})$ for some $C>0$ and we have $\| \barPsi(\bbeta^0) \|_{\infty} = \OP [ \{ \log(p)/n \}^{1/2} ]$.
\end{proof}


\begin{lemma} \label{lemma:leading_normal}
	Assume $\lambda \asymp \{ \log(p) / n \}^{1/2}$, $\lambda^{\prime} \asymp \| \baromega^0 \|_1 (s_0 \lambda +  r_n)$, and further, $\max(s_0, s^{\prime} \| \baromega^0 \|_1) \lambda^{\prime} \{ \log(p) \}^{1/2} \to 0$. Under Assumptions \ref{assump:bound_covs}--\ref{assump:cor_mat}, 
	$
	{n}^{1/2} \Psi^P(\theta^0) / \{ (\bomega^0)^\T \bV^0 \bomega^0 \}^{1/2} 
	$
	converges to $N(0,1)$ in distribution as $n \to \infty$.
\end{lemma}


\begin{proof}[Proof of Lemma \ref{lemma:leading_normal}]
	
	For convenience, let $\tbeta = \hbeta + \omegahat (\theta^0 - \bxi^\T \hbeta) = \hbeta + \omegahat \bxi^\T (\bbeta^0 - \hbeta)$. Then ${n}^{1/2} \Psi^P(\theta^0) = {n}^{1/2} \omegahat^\T \bPsi(\tbeta)$.
	We decompose ${n}^{1/2} \omegahat^\T \bPsi(\tbeta)$ as 
	\[
	\begin{array}{rcl}
		{n}^{1/2} \Psi^P(\theta^0) & = & {n}^{1/2} \homega^\T \bPsi(\tbeta)  \\
		& = & {n}^{1/2} (\bomega^0)^\T \barPsi(\bbeta^0) + {n}^{1/2} (\bomega^0)^\T \{ \bPsi(\bbeta^0) - \barPsi(\bbeta^0) \} \\
		& & + \left[ {n}^{1/2} \{ \homega^\T \bPsi(\tbeta) - (\bomega^0)^\T \bPsi(\tbeta) \} + {n}^{1/2} (\bomega^0)^\T \{ \bPsi(\tbeta) - \bPsi(\bbeta^0) \} \right], 
	\end{array}
	\]
	and define $I_1 = {n}^{1/2} (\bomega^0)^\T \barPsi(\bbeta^0)$, $I_2 = {n}^{1/2} (\bomega^0)^\T \{ \bPsi(\bbeta^0) - \barPsi(\bbeta^0) \}$, and $I_3 =  {n}^{1/2} \{ \homega^\T \bPsi(\tbeta) - (\bomega^0)^\T \bPsi(\tbeta) \} + {n}^{1/2} (\bomega^0)^\T \{ \bPsi(\tbeta) - \bPsi(\bbeta^0) \}$.
	We next examine each of these three terms.\\[0.05cm]
	
	\noindent {\textit{Part (i). $I_1 = {n}^{1/2} (\bomega^0)^\T \barPsi(\bbeta^0)$} is asymptotically normal.}
	
	Noting that
	\[
	{n}^{1/2} (\bomega^0)^\T \barPsi(\bbeta^0) = \displaystyle n^{-1/2} \sum_{i=1}^n (\bomega^0)^\T \bX_i^\T \bG_i^{1/2}(\bbeta^0) \barR^{-1} \bG_i^{-1/2}(\bbeta^0) (\bY_i - \bmu_i(\bbeta^0)),
	\]
	where the standardized residuals $\beps_i(\bbeta^0)  = \bG_i^{-1/2}(\bbeta^0) (\bY_i - \bmu_i(\bbeta^0))$, we define $$\eta_i = n^{-1/2} (\bomega^0)^\T \bX_i^\T \bG_i^{1/2}(\bbeta^0) \barR^{-1} \beps_i(\bbeta^0),$$ $i = 1, \ldots, n$. It can be verified that
	\[
	\begin{array}{rcl}
		\E (\eta_i) & = & 0,  \\
		\Var(\eta_i) & = & \E \left\{ \displaystyle \frac{1}{n} (\bomega^0)^\T \bX_i^\T \bG_i^{1/2}(\bbeta^0) \barR^{-1} \bR_0 \barR^{-1} \bG_i^{1/2}(\bbeta^0) \bX_i \bomega^0 \right\} = \displaystyle \frac{1}{n} (\bomega^0)^\T \bV^0 \bomega^0.
	\end{array}
	\]
	Let the variance $\nu_n^2 = \Var(\sum_{i=1}^n \eta_i) = (\bomega^0)^\T \bV^0 \bomega^0$. By Assumptions \ref{assump:bound_covs}, \ref{assump:bdd_eigen_covs}, \ref{assump:var_fun} and \ref{assump:cor_mat}, the eigenvalues of $\bV^0$ and $\bS^0$ are all bounded from both below and above, and, given $\| \bxi \|_2=1$, we have
	\begin{equation} \label{eq:bdd_vn}
		\nu_n^2 \ge \displaystyle  \lambda_{\min}(\bV^0) \| \bomega^0 \|_2^2 =  \lambda_{\min}(\bV^0) \{ \bxi^\T (\bS^0)^{-1} \bxi \}^{-2} \| (\bS^0)^{-1} \bxi \|_2^2 \ge C > 0.
	\end{equation}
	We next verify that the Lindeberg condition holds, i.e.
	\begin{equation} \label{eq:lindeberg}
		\displaystyle \frac{1}{\nu_n^2} \sum_{i=1}^n \E [ \eta_i^2   1\{ |\eta_i| > \epsilon \nu_n \} ] = \frac{n}{\nu_n^2}  \E [ \eta_1^2   1\{ |\eta_1| > \epsilon \nu_n \} ] \to 0
	\end{equation}
	for any $\epsilon > 0$. 
	By Cauchy-Schwarz inequality and Assumptions \ref{assump:bdd_xomega}, \ref{assump:var_fun} and \ref{assump:cor_mat}, we have 
	\[ \begin{array}{l}
		\eta_1^2 \le \displaystyle n^{-1} \| \beps_1(\bbeta^0) \|_2^2 \cdot \| \barR^{-1} \bG_1^{1/2}(\bbeta^0) \bX_1 \bomega^0 \|_2^2 \\ 
		\qquad \qquad \le \displaystyle n^{-1} \| \beps_1(\bbeta^0) \|_2^2 \{ \|\barR^{-1}\| \times \|\bG_1^{1/2}(\bbeta^0) \| \times \| \bX_1 \bomega^0 \|_2 \}^2 \le C  \| \beps_1(\bbeta^0) \|_2^2 / n.
	\end{array}
	\]
	Since $\E \| \beps_1(\bbeta^0) \|_2^2 = \trace \E \{ \beps_1(\bbeta^0) \beps_1^\T(\bbeta^0) \} = {O}(1)$, $\| \beps_1(\bbeta^0) \|_2^2 = \OP(1)$.  Combining the above inequality and \eqref{eq:bdd_vn}, we have
	$
	| \eta_1 / \nu_n | = \OP(n^{-1/2}) ~ \mathrm{and} ~ 1\{ |\eta_1| > \epsilon \nu_n \} = \oP(1).
	$
	Also, $n \eta_1^2 / \nu_n^2  \le C^{\prime} \| \beps_1(\bbeta^0) \|_2^2$, which is integrable.
	By the dominated convergence theorem, the Lindeberg condition \eqref{eq:lindeberg} is satisfied, and the conclusion  holds that  the term $I_1$ is asymptotically normal in the sense that
	\[
	\displaystyle \frac{{n}^{1/2} (\bomega^0)^\T \barPsi(\bbeta^0)}{ \{ (\bomega^0)^\T \bV^0 \bomega^0 \}^{1/2} } = \frac{\sum_{i=1}^n \eta_i}{\nu_n}  
	\]
	converges to $N(0,1)$ in distribution as $n \to \infty$. \\[0.05cm]

	\noindent {\textit{Part (ii). $I_2 = {n}^{1/2} (\bomega^0)^\T \{ \bPsi(\bbeta^0) - \barPsi(\bbeta^0) \} = \oP(1)$.}}
	
	We write $\beps_i(\bbeta^0) = (\varepsilon_{i1}(\bbeta^0), \cdots, \varepsilon_{im}(\bbeta^0))^\T$ as in Lemma \ref{lemma:psi_infty} and $\bQ=(q_{j_1 j_2})_{m \times m}=\hatR^{-1} - \barR^{-1}$, where $q_{j_1 j_2}$ is the $(j_1, j_2)$th element of $\bQ = \hatR^{-1} - \barR^{-1}$. Then
	\[
	\begin{array}{rcl}
		I_2 & =  &  {n}^{1/2} (\bomega^0)^\T  \displaystyle \frac{1}{n} \sum_{i=1}^n \bX_i^\T \bG_i^{1/2}(\bbeta^0) \bQ \beps_i(\bbeta^0) \\
		& =  & \displaystyle n^{-1/2} \sum_{i=1}^n \sum_{j_1 = 1}^m \sum_{j_2 = 1}^m  q_{j_1 j_2} G_{i j_1}^{1/2}(\bbeta^0) \bx_{i j_1}^\T \bomega^0 \varepsilon_{i j_2}(\bbeta^0) \\
		& = & \displaystyle n^{-1/2}  \sum_{j_1 = 1}^m \sum_{j_2 = 1}^m  q_{j_1 j_2} \sum_{i=1}^n G_{i j_1}^{1/2}(\bbeta^0) \bx_{i j_1}^\T \bomega^0 \varepsilon_{i j_2}(\bbeta^0).
	\end{array}
	\]
	Since 
	$$\E \left\{ \left| \displaystyle \sum_{i=1}^n G_{i j_1}^{1/2}(\bbeta^0) \bx_{i j_1}^\T \bomega^0 \varepsilon_{i j_2}(\bbeta^0) \right|^2 \right\} = \sum_{i=1}^n \E \{ G_{i j_1}(\bbeta^0) (\bx_{i j_1}^\T \bomega^0)^2 \varepsilon_{i j_2}^2(\bbeta^0) \} = {O}(n),$$
	under Assumptions \ref{assump:bdd_xomega} and \ref{assump:var_fun}, we have $\sum_{i=1}^n G_{i j_1}^{1/2}(\bbeta^0) \bx_{i j_1}^\T \bomega^0 \varepsilon_{i j_2}(\bbeta^0) = \OP({n}^{1/2})$. Then, $$|I_2| \le \displaystyle \frac{1}{n^{1/2}} \sum_{j_1=1}^m \sum_{j_2=1}^m |q_{j_1 j_2}| \times \left| \sum_{i=1}^n G_{i j_1}^{1/2}(\bbeta^0) \bx_{i j_1}^\T \bomega^0 \varepsilon_{i j_2}(\bbeta^0) \right| = \OP(\| \hatR^{-1} - \barR^{-1} \|) = \oP(1).$$

	\noindent {\textit{Part (iii). $I_3 = {n}^{1/2} \{ \homega - (\bomega^0) \}^\T \bPsi(\tbeta) + {n}^{1/2} (\bomega^0)^\T \{ \bPsi(\tbeta) - \bPsi(\bbeta^0) \} = \oP(1)$}.}

	By Taylor expansion,
	\[
	\bPsi(\tbeta) =  \bPsi(\bbeta^0) + \displaystyle \left. \frac{\partial \bPsi}{\partial \bbeta^\T } \right|_{\bbeta^*} (\tbeta - \bbeta^0),
	\]
	where $\bbeta^{*}$ is an intermediate vector between $\tilde{\bbeta}$ and $\bbeta^0$. Then
	\[
	\begin{array}{rcl}
		I_3 & = & {n}^{1/2} \{ \homega - (\bomega^0) \}^\T \bPsi(\tbeta) + {n}^{1/2} (\bomega^0)^\T \{ \bPsi(\tbeta) - \bPsi(\bbeta^0) \}  \\
		& = & {n}^{1/2}  \{ \homega - (\bomega^0) \}^\T \left\{ \bPsi(\bbeta^0) + \displaystyle \left. \frac{\partial \bPsi}{\partial \bbeta^\T } \right|_{\bbeta^*} (\tbeta - \bbeta^0) \right\} + {n}^{1/2} (\bomega^0)^\T \displaystyle \left. \frac{\partial \bPsi}{\partial \bbeta^\T } \right|_{\bbeta^*} (\tbeta - \bbeta^0) \\
		& = & {n}^{1/2} \{ \homega - (\bomega^0) \}^\T \bPsi(\bbeta^0) + {n}^{1/2} \homega^\T  \displaystyle \left. \frac{\partial \bPsi}{\partial \bbeta^\T } \right|_{\bbeta^*} (\tbeta - \bbeta^0) \\
		& = & I_{31} + I_{32}.
	\end{array}
	\]
	To show $I_{31} = \oP(1)$, first note that
	\[
	\begin{array}{rl}
		& \| \bPsi(\bbeta^0) - \barPsi(\bbeta^0) \|_{\infty} \\
		= & \displaystyle \left\|   \frac{1}{n} \sum_{i=1}^n \bX_i^\T \bG_i^{1/2}(\bbeta^0) \bQ \beps_i(\bbeta^0) \right\|_{\infty} \\
		= & \displaystyle \max_{1 \le k \le p} \left|  \frac{1}{n} \sum_{i=1}^n \sum_{j_1=1}^m \sum_{j_2=1}^m q_{j_1 j_2} x_{i[k]j_1} G_{i j_1}^{1/2}(\bbeta^0) \varepsilon_{i j_2}(\bbeta^0) \right|  \\
		\le & \displaystyle \max_{1 \le k \le p} \sum_{j_1=1}^m \sum_{j_2=1}^m |q_{j_1 j_2}| \times \frac{1}{n} \left| \sum_{i=1}^n x_{i[k]j_1} G_{i j_1}^{1/2}(\bbeta^0) \varepsilon_{i j_2}(\bbeta^0) \right|,
	\end{array}
	\]
	where $x_{i[k]j_1}$ is the $j_1$th element of $\bx_{i[k]}$. Since $\E | \sum_{i=1}^n  x_{i[k]j_1} G_{i j_1}^{1/2}(\bbeta^0) \varepsilon_{i j_2}(\bbeta^0) |^2 = {O}(n)$ uniformly for all $k, j_1$ and $j_2$, $|  \sum_{i=1}^n x_{i[k]j_1} G_{i j_1}^{1/2}(\bbeta^0) \varepsilon_{i j_2}(\bbeta^0) | = \OP({n}^{1/2})$. Hence,
	\[
	\| \bPsi(\bbeta^0) - \barPsi(\bbeta^0) \|_{\infty} = \OP(n^{-1/2} \| \bQ \|) = \OP(n^{-1/2} \| \hatR^{-1} - \barR^{-1} \|) \le \OP(n^{-1/2} r_n),
	\]
	where the last inequality is due to the fact that $\| \hatR^{-1} - \barR^{-1} \| \le \| \hatR^{-1} \| \times \| \hatR - \barR \| \times \| \barR^{-1} \| \le \OP(1) \| \hatR - \barR \| = \OP(r_n)$.
	Then, since $\| \bPsi(\bbeta^0) - \barPsi(\bbeta^0) \|_{\infty} = \OP(n^{-1/2} r_n)$,
	\[
	\begin{array}{rcl}
		| I_{31} | & \le & {n}^{1/2} \| \homega - \bomega^0 \|_1  \| \bPsi(\bbeta^0) \|_{\infty}  \\
		& \le & {n}^{1/2} \| \homega - \bomega^0 \|_1 \cdot \{ \| \barPsi(\bbeta^0) \|_{\infty} + \| \bPsi(\bbeta^0) - \barPsi(\bbeta^0) \|_{\infty}  \} \\
		& = & {n}^{1/2} \OP(s^{\prime} \lambda^{\prime} \| \baromega^0 \|_1) \left\{ \OP [ \{ \log(p)/n \}^{1/2} ] + \OP(n^{-1/2} r_n) \right\} \\
		& = & \OP \left(s^{\prime} \lambda^{\prime} \| \baromega^0 \|_1 \{ \log(p) \}^{1/2} \right) =  \oP(1).
	\end{array}
	\]

	Next, according to Lemma \ref{lemma:deriv_psi}, $I_{32}$ can be decomposed as 
	\[ 
	\begin{array}{rcl}
		I_{32} & = &  - {n}^{1/2} \homega^\T \bS(\bbeta^*)(\tbeta - \bbeta^0) + {n}^{1/2} \homega^\T \bE_n(\bbeta^*)(\tbeta - \bbeta^0) + {n}^{1/2} \homega^\T \bF_n(\bbeta^*)(\tbeta - \bbeta^0) \\
		& = & I_{321} + I_{322} + I_{323}. 
	\end{array}
	\]
	By the definition of $\tbeta$, $\tbeta - \bbeta^0 = (\bI - \homega \bxi^\T) (\hbeta - \bbeta^0)$. Thus,
	\begin{align} 
		- I_{321}  & =  ~ {n}^{1/2} \homega^\T \bS(\bbeta^*)(\tbeta - \bbeta^0) \notag  \\
		&  =   ~ {n}^{1/2} \homega^\T \bS(\bbeta^*) (\bI - \homega \bxi^\T) (\hbeta - \bbeta^0) \notag \\
		& = ~ {n}^{1/2} \left\{ \bS(\bbeta^*) \homega - \homega^\T \bS(\bbeta^*) \homega \bxi \right\}^\T  (\hbeta - \bbeta^0) \notag \\
		& = ~ {n}^{1/2} \left[ \bS(\hbeta) \homega - \homega^\T \bS(\hbeta) \homega \bxi + \left\{ \bS(\bbeta^*) - \bS(\hbeta) \right\} \homega + \homega^\T \left\{ \bS(\hbeta) - \bS(\bbeta^*) \right\} \homega \bxi  \right]^\T (\hbeta - \bbeta^0) \notag \\
		& = ~ {n}^{1/2} \left\{ \bS(\hbeta) \homega - \homega^\T \bS(\hbeta) \homega \bxi \right\}^\T (\hbeta - \bbeta^0) + {n}^{1/2} \homega^\T \left\{ \bS(\bbeta^*) - \bS(\hbeta) \right\}^\T (\hbeta - \bbeta^0) \notag \\ 
		& \qquad + {n}^{1/2} \homega^\T \left\{ \bS(\hbeta) - \bS(\bbeta^*) \right\} \homega \bxi^\T (\hbeta - \bbeta^0). \label{eq:decomp_I321}
	\end{align}
	By the definition of $\homega$ in Lemma \ref{lemma:hat_omega_rate}, the first term in \eqref{eq:decomp_I321} satisfies
	\[
	\begin{array}{rcl}
		\left| {n}^{1/2} \{ \bS(\hbeta) \homega - \homega^\T \bS(\hbeta) \homega \bxi \}^\T (\hbeta - \bbeta^0) \right|  & = & \left|  \displaystyle \frac{{n}^{1/2}}{\tilde{\bomega}^\T \bS(\hbeta) \tilde{\bomega}} \{ \bS(\hbeta) \tilde{\bomega} - \bxi \}^\T (\hbeta - \bbeta^0) \right|  \\
		& \le   &   \displaystyle \frac{{n}^{1/2}}{\tilde{\bomega}^\T \bS(\hbeta) \tilde{\bomega}} \| \bS(\hbeta) \tilde{\bomega} - \bxi \|_{\infty} \| \hbeta - \bbeta^0 \|_{1}  \\
		& = & \OP({n}^{1/2} \lambda^{\prime} s_0 \lambda) = \oP(1), \\
	\end{array}
	\]
	where in the last equality, $\lambda \asymp \{ \log(p)/n \}^{1/2}$ and $\max(s_0, s^{\prime} \| \baromega^0 \|_1) \lambda^{\prime} \{\log(p)\}^{1/2} \to 0$ imply that $s_0 \lambda^{\prime} \{\log(p)\}^{1/2} \to 0$.
	For the second term in \eqref{eq:decomp_I321}, we have
	\begin{align} \label{eq:decomp_I321_second}
		& | {n}^{1/2} \homega^\T \{ \bS(\bbeta^*) - \bS(\hbeta) \}^\T (\hbeta - \bbeta^0) | \notag \\   
		\le & {n}^{1/2} \left| \displaystyle \frac{1}{n} \sum_{i=1}^n \homega^\T \bX_i^\T \{ \bG_i^{1/2}(\bbeta^*) - \bG_i^{1/2}(\hbeta) \} \hatR^{-1} \bG_i^{1/2}(\bbeta^*) \bX_i(\hbeta - \bbeta^0) \right|  \notag \\
		& + {n}^{1/2} \left| \displaystyle \frac{1}{n} \sum_{i=1}^n \homega^\T \bX_i^\T \bG_i^{1/2}(\hbeta) \hatR^{-1} \{ \bG_i^{1/2}(\bbeta^*) - \bG_i^{1/2}(\hbeta) \}  \bX_i(\hbeta - \bbeta^0) \right|.
	\end{align}
	By Lemma \ref{lemma:hat_omega_rate} and Assumption \ref{assump:bdd_xomega}, $\| \bX_i \homega \|_{\infty} \le \| \bX_i \|_{\infty} \| \homega - \bomega^0 \|_1 + \| \bX_i \bomega^0 \|_{\infty} = \OP(1)$.
	Since $\bbeta^*$ is an intermediate value between $\tbeta$ and $\bbeta^0$, $|\bx_{ij}^\T (\bbeta^* - \hbeta)| \le \max( |\bx_{ij}^\T(\tbeta - \hbeta)|, |\bx_{ij}^\T(\bbeta^0 - \hbeta)| )$. Because $|\bx_{ij}^\T(\tbeta - \hbeta)| = | \bx_{ij}^\T \homega \bxi^\T (\bbeta^0 - \hbeta) | \le | \bx_{ij}^\T \homega | \| \bxi \|_{2} \| \hbeta - \bbeta^0 \|_2 = \OP(\| \hbeta - \bbeta^0 \|_2)$, we have $| \bx_{ij}^\T (\bbeta^* - \hbeta) | = \OP[ \max \{ \| \hbeta - \bbeta^0 \|_2, |\bx_{ij}^\T (\bbeta^0 - \hbeta)| \} ]$.
	Moreover, since $G_{ij}(\bbeta) = v (\mu_{ij}(\bbeta)) = v(\mu(\bx_{ij}^\T \bbeta))$ is continuously differentiable in $\bx_{ij}^\T \bbeta$, by Assumption \ref{assump:var_fun}, $| G_{ij}^{1/2}(\bbeta^*) - G_{ij}^{1/2}(\hbeta) | = \OP( | \bx_{ij}^\T (\bbeta^* - \hbeta) |)$. Now, writing $\hatR^{-1}_{\cdot j}$ as the $j$th column of $\hatR^{-1}$, since  $\| \bX_i \homega \|_{\infty} = \OP(1)$ and $\| \hatR^{-1} \| = \OP(1)$,
	\begin{align}
		& \left[ G_{ij}^{1/2}(\bbeta^*) \homega^\T \bX_i^\T \{ \bG_i^{1/2}(\bbeta^*) - \bG_i^{1/2}(\hbeta) \} \hatR^{-1}_{\cdot j} \right]^2 \notag \\ 
		\le &  G_{ij}(\bbeta^*) \| \bX_i \homega \|_2^2 \| \{ \bG_i^{1/2}(\bbeta^*) - \bG_i^{1/2}(\hbeta) \} \hatR^{-1}_{\cdot j} \|_2^2 \notag \\
		\le & G_{ij}(\bbeta^*) \| \bX_i \homega \|_2^2 \sum_{k=1}^m \left\{ G_{ik}^{1/2}(\bbeta^*) - G_{ik}^{1/2}(\hbeta) \right\}^2 (\hatR^{-1})_{kj}^2 \notag \\
		= & \OP \left( \sum_{k=1}^m | \bx_{ik}^\T(\bbeta^* - \hbeta) |^2 \right) \notag \\
		\le & \OP \left[ \sum_{k=1}^m \left\{ \| \hbeta - \bbeta^0 \|_2^2 + | \bx_{ik}^\T (\bbeta^0 - \hbeta) |^2 \right\} \right]. \label{eq:tmp_linear_pred}   
	\end{align}
	Thus, by Cauchy-Schwarz inequality, for the first term in \eqref{eq:decomp_I321_second}, we have
	\[
	\begin{array}{rl}
		& {n}^{1/2} \left| \displaystyle \frac{1}{n} \sum_{i=1}^n \homega^\T \bX_i^\T \{ \bG_i^{1/2}(\bbeta^*) - \bG_i^{1/2}(\hbeta) \} \hatR^{-1} \bG_i^{1/2}(\bbeta^*) \bX_i(\hbeta - \bbeta^0) \right|  \\
		\le & \displaystyle {n}^{1/2}  \left[ \sum_{i=1}^n \sum_{j=1}^m  \frac{1}{n} \left\{ \homega^\T \bX_i^\T \{ \bG_i^{1/2}(\bbeta^*) - \bG_i^{1/2}(\hbeta) \} G_{ij}^{1/2}(\bbeta^*) \hatR^{-1}_{\cdot j} \right\}^2  \right]^{1/2} \times \\
		& \left[ \displaystyle \sum_{i=1}^n \sum_{j=1}^m \frac{1}{n} \{ \bx_{ij}^\T (\hbeta - \bbeta^0) \}^2  \right]^{1/2} \\
		= & {n}^{1/2} \OP \left( m^2 \| \hbeta - \bbeta^0 \|_2^2 + m \displaystyle \sum_{i=1}^n \sum_{j=1}^m \frac{1}{n} \{ \bx_{ij}^\T (\hbeta - \bbeta^0) \}^2  \right)^{1/2} \times \OP(s_0 \lambda^2)^{1/2} \\
		= & {n}^{1/2} \OP(s_0 \lambda^2)^{1/2} \OP(s_0 \lambda^2)^{1/2} \\
		= & \OP \left( s_0 \log(p) / {n}^{1/2}  \right) = \oP(1), \\
	\end{array}
	\]
	where in the last equality, $\max(s_0, s^{\prime} \| \baromega^0 \|_1) \lambda^{\prime} \{ \log(p) \}^{1/2} \to 0$ in the assumption of this lemma implies $s_0 \lambda \{\log(p)\}^{1/2} \asymp s_0 \log(p) / {n}^{1/2} \to 0$ with $\lambda \asymp \{ \log(p)/n \}^{1/2}$.
	Similarly, 
	\[  
	{n}^{1/2} \left| \displaystyle \frac{1}{n} \sum_{i=1}^n \homega^\T \bX_i^\T \bG_i^{1/2}(\hbeta) \hatR^{-1} \{ \bG_i^{1/2}(\bbeta^*) - \bG_i^{1/2}(\hbeta) \}  \bX_i(\hbeta - \bbeta^0) \right| = \oP(1).
	\]
	Hence, with the inequality of \eqref{eq:decomp_I321_second}, the second term in \eqref{eq:decomp_I321}, i.e. ${n}^{1/2} \homega^\T \{ \bS(\bbeta^*) - \bS(\hbeta) \} (\hbeta - \bbeta^0) = \oP(1)$.
	
	For the third term in \eqref{eq:decomp_I321}, we have
	\[
	\begin{array}{rl}
		& {n}^{1/2}  | \homega^\T \{ \bS(\hbeta) - \bS(\bbeta^*) \} \homega \bxi^\T (\hbeta - \bbeta^0) | \\
		= & {n}^{1/2} \displaystyle \left|  \frac{1}{n} \sum_{i=1}^n \homega^\T \bX_i^\T \{ \bG_{i}^{1/2}(\bbeta^*) - \bG_i^{1/2}(\hbeta) \} \hatR^{-1} \bG_i^{1/2}(\bbeta^*) \bX_i \homega \bxi^\T (\hbeta - \bbeta^0) \right|  \\
		& + {n}^{1/2} \displaystyle \left|  \frac{1}{n} \sum_{i=1}^n \homega^\T \bX_i^\T \bG_i^{1/2}(\hbeta) \hatR^{-1} \{ \bG_{i}^{1/2}(\bbeta^*) - \bG_i^{1/2}(\hbeta) \} \bX_i \homega \bxi^\T (\hbeta - \bbeta^0) \right|. \\
	\end{array}
	\]
	With an argument similar to \eqref{eq:tmp_linear_pred} for showing the second term in \eqref{eq:decomp_I321}, i.e.  ${n}^{1/2} \homega^\T \{ \bS(\bbeta^*) - \bS(\hbeta)  \}^\T (\hbeta - \bbeta^0)$, is $\oP(1)$, we have
	\begin{equation} \label{eq:third_part_I321}
		\begin{array}{rl}
			& {n}^{1/2} \displaystyle \left|  \frac{1}{n} \sum_{i=1}^n \homega^\T \bX_i^\T \{ \bG_{i}^{1/2}(\bbeta^*) - \bG_i^{1/2}(\hbeta) \} \hatR^{-1} \bG_i^{1/2}(\bbeta^*) \bX_i \homega \bxi^\T (\hbeta - \bbeta^0) \right|  \\
			\le & \displaystyle {n}^{1/2} \left[ \sum_{i=1}^n \sum_{j=1}^m  \frac{1}{n} \left\{ \homega^\T \bX_i^\T \{ \bG_i^{1/2}(\bbeta^*) - \bG_i^{1/2}(\hbeta) \} G_{ij}^{1/2}(\bbeta^*) \hatR^{-1}_{\cdot j} \right\}^2  \right]^{1/2} \times \\
			& \left[ \displaystyle \sum_{i=1}^n \sum_{j=1}^m \frac{1}{n} \{ \bx_{ij}^\T \homega \bxi^\T (\hbeta - \bbeta^0) \}^2 \right]^{1/2} \\
			\le & {n}^{1/2} \OP \left( m^2 \| \hbeta - \bbeta^0 \|_2^2 + m \displaystyle \sum_{i=1}^n \sum_{j=1}^m \frac{1}{n} \{ \bx_{ij}^\T (\hbeta - \bbeta^0) \}^2  \right)^{1/2} \cdot \OP \left(\| \hbeta - \bbeta^0 \|_2 \right) \\
			= & {n}^{1/2} \OP(s_0^{1/2} \lambda) \cdot \OP(s_0^{1/2} \lambda) \\
			= & \OP(s_0 \log(p) / {n}^{1/2}) = \oP(1),
		\end{array}
	\end{equation}
	where the second inequality above results from $| \bx_{ij}^\T \homega \bxi^\T (\hbeta - \bbeta^0) | \le | \bx_{ij}^\T \homega |  \| \bxi \|_2 \| \hbeta - \bbeta^0 \|_2 = \OP(\| \hbeta - \bbeta^0 \|_2)$.
	Also, we can use the same techniques to show that
	\[
	{n}^{1/2} \displaystyle \left|  \frac{1}{n} \sum_{i=1}^n \homega^\T \bX_i^\T \bG_i^{1/2}(\hbeta) \hatR^{-1} \{ \bG_{i}^{1/2}(\bbeta^*) - \bG_i^{1/2}(\hbeta) \} \bX_i \homega \bxi^\T (\hbeta - \bbeta^0) \right| = \oP(1).
	\]
	Hence, ${n}^{1/2}  | \homega^\T \{ \bS(\hbeta) - \bS(\bbeta^*) \} \homega \bxi^\T (\hbeta - \bbeta^0) | = \oP(1)$. This completes the proof for the statement that $I_{321} = \oP(1)$.
	
	{Next, we show that $I_{322} = {n}^{1/2} \homega^\T \bE_n(\bbeta^*)(\tbeta - \bbeta^0)  = \oP(1)$. For technical reasons, here we assume that $\hbeta$ and $\hat{\bR}$ are estimated using data that are independent of $\{\bY_i, \bX_i \}_{i=1}^n$. This may be achieved by data splitting, i.e. suppose we have $2n$ observations, and the half indexed by $i = n+1, \ldots, 2n$ are used to obtain the estimates $\hbeta$ and $\hat{\bR}^{-1}$, and the half indexed by $i = 1, \ldots, n$ are used as the observed data when constructing the projected estimating equation $\Psi^P(\theta)$ for inference.}
	
	We introduce the decomposition of $\bE_n(\bbeta)$ into four terms as follows. For any $\bbeta \in \mR^p$,
	\begin{align} \label{eq:decomp_En}
		\bE_n(\bbeta) & =  - \displaystyle \frac{1}{2n} \sum_{i=1}^n \sum_{j=1}^m \dot{v}(\mu_{ij}(\bbeta)) G_{ij}^{-1/2}(\bbeta) (y_{ij} - \mu_{ij}(\bbeta)) \bX_i^\T \bG_i^{1/2}(\bbeta) \hatR^{-1} \bare_j \bare_j^\T \bX_i  \nonumber \\
		& =  - \displaystyle \frac{1}{2n} \sum_{i=1}^n \sum_{j=1}^m \dot{v}(\mu_{ij}(\bbeta^0)) G_{ij}^{-1/2}(\bbeta^0) (y_{ij} - \mu_{ij}(\bbeta^0)) \bX_i^\T \bG_i^{1/2}(\bbeta^0) \hatR^{-1} \bare_j \bare_j^\T \bX_i  \nonumber \\
		&  - \displaystyle \frac{1}{2n} \sum_{i=1}^n \sum_{j=1}^m \dot{v}(\mu_{ij}(\bbeta^0)) G_{ij}^{-1/2}(\bbeta^0) (y_{ij} - \mu_{ij}(\bbeta^0)) \bX_i^\T \{ \bG_i^{1/2}(\bbeta) - \bG_i^{1/2}(\bbeta^0) \} \hatR^{-1} \bare_j \bare_j^\T \bX_i  \nonumber \\
		&  - \displaystyle \frac{1}{2n} \sum_{i=1}^n \sum_{j=1}^m \{ \dot{v}(\mu_{ij}(\bbeta)) G_{ij}^{-1/2}(\bbeta) - \dot{v}(\mu_{ij}(\bbeta^0)) G_{ij}^{-1/2}(\bbeta^0) \} (y_{ij} - \mu_{ij}(\bbeta^0)) \bX_i^\T \bG_i^{1/2}(\bbeta) \nonumber \\
		&  \qquad \qquad \times \hatR^{-1} \bare_j \bare_j^\T \bX_i \nonumber \\
		&  - \displaystyle \frac{1}{2n} \sum_{i=1}^n \sum_{j=1}^m \dot{v}(\mu_{ij}(\bbeta)) G_{ij}^{-1/2}(\bbeta) \{ \mu_{ij}(\bbeta^0) - \mu_{ij}(\bbeta) \} \bX_j^\T \bG_i^{1/2}(\bbeta) \hatR^{-1} \bare_j \bare_j^\T \bX_i \nonumber \\
		& =  \bE_{n1}(\bbeta^0) + \bE_{n2}(\bbeta) + \bE_{n3}(\bbeta) + \bE_{n4}(\bbeta). 
	\end{align}
	Then, in order to show $I_{322} = \oP(1)$, we will continue to show that ${n}^{1/2} \homega^\T \bE_{n1}(\bbeta^0)(\tilde{\bbeta} - \bbeta^0) = \oP(1)$ in Goal (a) below, and ${n}^{1/2} \homega^\T \bE_{nk}(\bbeta^*)(\tilde{\bbeta} - \bbeta^0) = \oP(1), ~ k=2,3,4$ in Goal (b) below.
	Recall that $\bbeta^*$ lies between $\tilde{\bbeta}$ and $\bbeta^0$, and  $\tilde{\bbeta} - \bbeta^0 = (\bI - \homega \bxi^\T) (\hbeta - \bbeta^0)$, and then
	\begin{equation} \label{eq:diff_xbeta_bar}
		\begin{array}{rcl}
			| \bx_{ij}^\T (\tilde{\bbeta} - \bbeta^0) | & = & | (\bx_{ij}^\T - \bx_{ij}^\T \homega \bxi^\T) (\hbeta - \bbeta^0) |  \\
			& \le & \| \bx_{ij} - \bx_{ij}^\T \homega \bxi \|_{\infty} \| \hbeta - \bbeta^0 \|_1 \\
			& \le & ( \| \bx_{ij} \|_{\infty} + |\bx_{ij}^\T \homega| \| \bxi \|_{\infty} ) \| \hbeta - \bbeta^0 \|_1 \\
			& \le & (K + \OP(1) \| \bxi \|_{\infty}) \| \hbeta - \bbeta^0 \|_1 \\
			& = & \OP(\| \hbeta - \bbeta^0 \|_1).
		\end{array}
	\end{equation}

	\medskip
	{
		\noindent {\textit{Goal (a): To show ${n}^{1/2} \homega^\T \bE_{n1}(\bbeta^0) (\tilde{\bbeta} - \bbeta^0) = \oP(1)$.}}
	}
	
	We first consider ${n}^{1/2} \homega^\T \bE_{n1}(\bbeta^0)(\tilde{\bbeta} - \bbeta^0)$ and rewrite
	\begin{equation} \label{eq:normal_decomp_En1_op1}
		\begin{array}{rl}
			& {n}^{1/2} \homega^\T \bE_{n1}(\bbeta^0) (\tilde{\bbeta} - \bbeta^0) \\
			= & - \displaystyle \frac{1}{2} \sum_{j=1}^m \sum_{i=1}^n \displaystyle n^{-1/2} \dot{v}(\mu_{ij}(\bbeta^0)) \homega^\T \bX_i^\T \bG_i^{1/2}(\bbeta^0) \hatR^{-1} \bare_j \bare_j^\T \bX_i (\tilde{\bbeta} - \bbeta^0) \varepsilon_{ij}(\bbeta^0),
		\end{array}
	\end{equation}
	where $\varepsilon_{ij}(\bbeta^0) = G_{ij}^{-1/2}(\bbeta^0) (y_{ij} - \mu_{ij}(\bbeta^0))$. Let $$h_{ij} = h_{ij} (\hbeta, \homega, \hatR, \bX_i) = \dot{v}(\mu_{ij}(\bbeta^0)) \homega^\T \bX_i^\T \bG_i^{1/2}(\bbeta^0) \hatR^{-1} \bare_j \bare_j^\T \bX_i (\tilde{\bbeta} - \bbeta^0),$$ and note that conditional on $(\hbeta, \homega, \hatR, \bX_i)$, $h_{ij}$ can be viewed as a constant. We observe that  the estimation for $\homega$ only depends on $\{\bX_i\}_{i=1}^n$ and $\hbeta$ through $\bS(\hbeta)$, so the conditional statements in the following Case I and Case II given $\hbeta, \homega, \hatR$ and $\{ \bX_i \}_{i=1}^n$ are equivalent to being conditional on $\hbeta, \hatR$ and $\{ \bX_i \}_{i=1}^n$. For explicit expression, we keep $\homega$ in the conditional statements.

	{\textit{Case I. When $\varepsilon_{ij}(\bbeta^0)$'s are sub-exponential}}
	
	Given $(\hbeta, \homega, \hatR, \{  \bX_i \}_{i=1}^n)$, by Bernstein's inequality (see Theorem 2.8.2 of \citet{vershynin2018high}), we have the following concentration inequality for $j= 1, \cdots, m$,
	\begin{align*}
		& \PP \left( \left| \sum_{i=1}^n \displaystyle n^{-1/2} h_{ij} \varepsilon_{ij}(\bbeta^0) \right| 
		\ge t  \mid  \hbeta, \homega, \hatR, \{  \bX_i \}_{i=1}^n  \right) \\
		\le & 2 \exp \left\{ - \displaystyle C \cdot \min \left( \frac{n t^2}{(L_j^{\prime})^2 \sum_{i=1}^n h_{ij}^2}, \frac{{n}^{1/2} t}{L_j^{\prime} \max_i |h_{ij}|} \right) \right\},
	\end{align*}
	where $C>0$ is some constant and $L_j^{\prime} = \| \varepsilon_{1j}(\bbeta^0) \|_{\psi_1}$ is finite. Because $| \bx_{ij}^\T \homega |  = \OP(1), \| \hatR^{-1} \| = \OP(1)$, and $\| \hbeta - \bbeta^0 \|_1 = \OP(s_0 \lambda)$. For any $\delta > 0$, there exists a constant $M > 0$ such that
	\[
	\PP \left[ \left\{ \| \bX_i \homega \|_2 \le M \right\} \cap \left\{ \| \hatR^{-1} \| \le M \right\} \cap \left\{ \| \hbeta - \bbeta^0 \|_1 / (s_0 \lambda) \le M \right\} \right] \ge 1 - \delta/4.
	\] 
	On the event $\mathcal{J} = \{ \| \bX_i \homega \|_2 \le M \} \cap \{ \| \hatR^{-1} \| \le M \} \cap \{ \| \hbeta - \bbeta^0 \|_1 / (s_0 \lambda) \le M \}$, due to Assumption \ref{assump:var_fun} and the fact that
	\begin{equation} \label{eq:ineq_h_ij}
		\begin{array}{l}
			|h_{ij}| \le C \| \bX_i \homega \|_2 \| \hatR^{-1} \| | \bx_{ij}^\T  (\tilde{\bbeta} - \bbeta^0) | \\
			\qquad \qquad \le C \| \bX_i \homega \|_2 \| \hatR^{-1} \| \cdot \left( \| \bx_{ij} \|_{\infty} + |\bx_{ij}^\T \homega|  \| \bxi \|_{\infty} \right) \| \hbeta - \bbeta^0 \|_1,
		\end{array}
	\end{equation}
	the two terms within the minimum function satisfy
	\[
	\begin{array}{rcl}
		\displaystyle \frac{n t^2}{(L_j^{\prime})^2 \sum_{i=1}^n h_{ij}^2} & \gtrsim & \displaystyle \frac{t^2}{ (L_j^{\prime})^2 M^6 (K + M \| \bxi \|_{\infty})^2 (s_0 \lambda)^2 }, \\
		& & \\
		\displaystyle \frac{{n}^{1/2} t}{L_j^{\prime} \max_i |h_{ij}|} & \gtrsim  & \displaystyle \frac{{n}^{1/2} t}{L_j^{\prime} M^3 (K + M \| \bxi \|_{\infty}) s_0 \lambda}.
	\end{array}
	\]
	Taking $t = C^{\prime} s_0 \lambda$ with a large enough constant $C^{\prime} > 0$, we have that on the event $\mathcal{J}$, when $n$ is large enough,
	\[
	\exp \left\{ - \displaystyle \frac{C n t^2}{(L_j^{\prime})^2 \sum_{i=1}^n h_{ij}^2}  \right\}  \le  \delta/4, \quad \mathrm{and} \quad
	\exp \left\{  - \displaystyle \frac{C {n}^{1/2} t}{L_j^{\prime} \max_i |h_{ij}|} \right\}  \le  \delta/4.
	\]
	Then, 
	\[
	\begin{array}{rl}
		& \PP \left( \left| \displaystyle \sum_{i=1}^n  n^{-1/2} h_{ij} \varepsilon_{ij}(\bbeta^0) \right| \ge t  \right) \\ 
		\le & 2 \E \exp \left\{ - \displaystyle C \cdot \min \left( \frac{n t^2}{(L_j^{\prime})^2 \sum_{i=1}^n h_{ij}^2}, \frac{{n}^{1/2} t}{L_j^{\prime} \max_i |h_{ij}|} \right) \right\}  \\
		= & 2 \E \left[ \exp \left\{ - \displaystyle C \cdot \min \left( \frac{n t^2}{(L_j^{\prime})^2 \sum_{i=1}^n h_{ij}^2}, \frac{{n}^{1/2} t}{L_j^{\prime} \max_i |h_{ij}|} \right) \right\} \cdot 1_{\mathcal{J}} \right] \\
		& \qquad + 2 \E \left[ \exp \left\{ - \displaystyle C \cdot \min \left( \frac{n t^2}{(L_j^{\prime})^2 \sum_{i=1}^n h_{ij}^2}, \frac{{n}^{1/2} t}{L_j^{\prime} \max_i |h_{ij}|} \right) \right\} \cdot 1_{\mathcal{J}^c} \right] \\
		\le & 2 \times \delta/4 \cdot \PP(\mathcal{J}) + 2 \PP(\mathcal{J}^c) \\
		\le & \delta.
	\end{array}
	\]
	Hence, $\sum_{i=1}^n \displaystyle n^{-1/2} h_{ij} \varepsilon_{ij}(\bbeta^0) = \OP(s_0 \lambda) = \oP(1)$. And since $m$ is finite, by \eqref{eq:normal_decomp_En1_op1}, we have ${n}^{1/2} \homega^\T \bE_{n1}(\bbeta^0)(\tilde{\bbeta} - \bbeta^0) = \oP(1)$. \\[0.05cm]

	{\textit{Case II. When $\varepsilon_{ij}(\bbeta^0)$'s are sub-Gaussian}}
	
	This is a special case of $\varepsilon_{ij}(\bbeta^0)$'s being sub-exponential, and since sub-Gaussianality is also very common, we briefly discuss the proof under this scenario. Given $(\hbeta, \homega, \hatR, \{  \bX_i \}_{i=1}^n)$, by the general Hoeffding's inequality (see Theorem 2.6.3 of \citet{vershynin2018high}), we have the following concentration inequality for $j= 1, \ldots, m$,
	\[
	\PP \left( \left| \sum_{i=1}^n \displaystyle n^{-1/2} h_{ij} \varepsilon_{ij}(\bbeta^0) \right| \ge t \mid \hbeta, \homega, \hatR, \{  \bX_i \}_{i=1}^n \right) \le 2 \exp\left( - \displaystyle \frac{Cnt^2}{L_j^2 \sum_{i=1}^n h_{ij}^2}  \right),
	\]
	where $C>0$ is some constant and $L_j = \| \varepsilon_{1j}(\bbeta^0) \|_{\psi_2}$ is finite. 
	On the event $\mathcal{J} = \{ \| \bX_i \homega \|_2 \le M \} \cap \{ \| \hatR^{-1} \| \le M \} \cap \{ \| \hbeta - \bbeta^0 \|_1 / (s_0 \lambda) \le M \}$, according to \eqref{eq:ineq_h_ij} and Assumption \ref{assump:var_fun},
	\[
	2 \exp\left( - \displaystyle \frac{Cnt^2}{L_j^2 \sum_{i=1}^n h_{ij}^2}  \right) \le 2 \exp \left( - \displaystyle \frac{C'nt^2}{L_j^2 n M^6 (K + M \| \bxi \|_{\infty})^2 (s_0 \lambda)^2} \right).
	\]
	Then, taking $t = C^{\prime\prime} s_0 \lambda$ with a large enough constant $C^{\prime\prime} > 0$ such that 
	\[
	\exp \left( - \displaystyle \frac{C^{\prime} t^2}{L_j^2  M^6 (K + M \| \bxi \|_{\infty})^2 (s_0 \lambda)^2} \right) \le \delta /4,
	\]
	we have
	\begin{equation*}
		\begin{array}{rl}
			& \PP \left( \left| \displaystyle \sum_{i=1}^n  n^{-1/2} h_{ij} \varepsilon_{ij}(\bbeta^0) \right| \ge t \right) \\
			\le &  \E \left\{ 2 \exp\left( - \displaystyle \frac{Cnt^2}{L_j^2 \sum_{i=1}^n h_{ij}^2}  \right) \right\}  \\
			= & \E \left\{ 2 \exp\left( - \displaystyle \frac{Cnt^2}{L_j^2 \sum_{i=1}^n h_{ij}^2}  \right)  1_{\mathcal{J}} \right\} + \E \left\{ 2 \exp\left( - \displaystyle \frac{Cnt^2}{L_j^2 \sum_{i=1}^n h_{ij}^2}  \right)  1_{\mathcal{J}^c} \right\} \\
			\le & 2 \exp \left( - \displaystyle \frac{C^{\prime} t^2}{L_j^2  M^6 (K + M \| \bxi \|_{\infty})^2 (s_0 \lambda)^2} \right) \PP(\mathcal{J}) + 2 \PP(\mathcal{J}^c) \\
			\le & 2 \times \delta/4 + 2 \times \delta/4 = \delta.
		\end{array}
	\end{equation*}
	Hence, $\sum_{i=1}^n \displaystyle n^{-1/2} h_{ij} \varepsilon_{ij}(\bbeta^0) = \OP(s_0 \lambda) = \oP(1)$. And since $m$ is finite, by \eqref{eq:normal_decomp_En1_op1}, we have ${n}^{1/2} \homega^\T \bE_{n1}(\bbeta^0)(\tilde{\bbeta} - \bbeta^0) = \oP(1)$.

	\medskip
	{
		\noindent {\textit{Goal (b): To show ${n}^{1/2} \homega^\T \bE_{nk}(\bbeta^*) (\tilde{\bbeta} - \bbeta^0) = \oP(1)$, for $k= 2,3, 4$.}}
	}
	
	We next consider ${n}^{1/2} \homega^\T \bE_{n2}(\bbeta^*)(\tilde{\bbeta} - \bbeta^0)$ and note that
	\begin{align} \label{eq:normal_decomp_En2_op1}
		& | {n}^{1/2} \homega^\T \bE_{n2}(\bbeta^*) (\tilde{\bbeta} - \bbeta^0) | \notag \\
		= & \left| \displaystyle \frac{1}{2} \sum_{i=1}^n \sum_{j=1}^m  \displaystyle n^{-1/2} \dot{v}(\mu_{ij}(\bbeta^0)) \homega^\T \bX_i^\T \{ \bG_i^{1/2}(\bbeta^*) - \bG_i^{1/2}(\bbeta^0) \} \hatR^{-1} \bare_j \bare_j^\T \bX_i (\tilde{\bbeta} - \bbeta^0) \varepsilon_{ij}(\bbeta^0) \right| \notag \\
		\le & \displaystyle \frac{1}{2 {n}^{1/2}} \sum_{i=1}^n \sum_{j=1}^m |\dot{v}(\mu_{ij}(\bbeta^0))|  \| \bX_i \homega \|_2 \| \bG_i^{1/2}(\bbeta^*) - \bG_i^{1/2}(\bbeta^0) \|  \times \notag \\ 
		& \qquad  \qquad  \qquad \| \hatR^{-1} \| \times | \bx_{ij}^\T (\tilde{\bbeta} - \bbeta^0) | \times | \varepsilon_{ij}(\bbeta^0) |.
	\end{align}
	By Assumption \ref{assump:var_fun}, $|\dot{v}(\mu_{ij}(\bbeta^0))|$ is bounded. We also have $\| \bX_i \homega \|_2 = \OP(1),  \| \hatR^{-1} \| = \OP(1)$ and $|\bx_{ij}^\T (\tilde{\bbeta} - \bbeta^0)| = \OP(\| \hbeta - \bbeta^0 \|_1) = \OP(s_0 \lambda)$. Since $\bG_i(\bbeta)$ is a diagonal matrix and $\bx_{ij}^\T\bbeta^*$ lies between $\bx_{ij}^\T\tilde{\bbeta}$ and $\bx_{ij}^\T\bbeta^0$, $\| \bG_i^{1/2}(\bbeta^*) - \bG_i^{1/2}(\bbeta^0) \| = \max_{1 \le j \le m} | G_{ij}^{1/2} (\bbeta^*) - G_{ij}^{1/2}(\bbeta^0) | = \OP( \max_{j} |\bx_{ij}^\T(\bbeta^* - \bbeta^0)|) \le \OP( \max_{j} |\bx_{ij}^\T(\tilde{\bbeta} - \bbeta^0)|) = \OP(\| \hbeta - \bbeta^0 \|_1)$, where the last equality is due to \eqref{eq:diff_xbeta_bar}.
	
	{\textit{Case I. When $\varepsilon_{ij}(\bbeta^0)$'s are sub-exponential}}
	
	Since $\E | \varepsilon_{ij}(\bbeta^0) | \le \| \varepsilon_{1j}(\bbeta^0) \|_{\psi_1} \le \max_{1 \le j \le m} \| \varepsilon_{1j}(\bbeta^0) \|_{\psi_1} = \mathcal{O}(1)$, $| \varepsilon_{ij}(\bbeta^0) | = \OP(1)$. 
	
	{\textit{Case II. When $\varepsilon_{ij}(\bbeta^0)$'s are sub-Gaussian}}
	
	Since $\E | \varepsilon_{ij}(\bbeta^0) | \le \| \varepsilon_{1j}(\bbeta^0) \|_{\psi_2} \le \max_{1 \le j \le m} \| \varepsilon_{1j}(\bbeta^0) \|_{\psi_2} = \mathcal{O}(1)$, $| \varepsilon_{ij}(\bbeta^0) | = \OP(1)$.

	Therefore, in both Case I and Case II, for \eqref{eq:normal_decomp_En2_op1}, we have
	\[  
	\left| {n}^{1/2} \homega^\T \bE_{n2}(\bbeta^*) (\tilde{\bbeta} - \bbeta^0) \right| \le \OP \left( n^{-1/2} \times nm \times \| \hbeta - \bbeta^0 \|_1^2 \right) = \OP \left( {n}^{1/2} s_0^2 \lambda^2 \right).
	\]
	By the assumption that $\max(s_0, s^{\prime} \| \baromega^0 \|_1) \lambda^{\prime} \{ \log(p) \}^{1/2} \to 0$ with $\lambda^{\prime} \asymp \| \baromega^0 \|_1 (s_0 \lambda + \| \hatR^{-1} - \barR^{-1} \|)$, we have $s_0^2 \lambda \{ \log(p) \}^{1/2} \to 0$. Since $\lambda \asymp \{ \log(p)/n \}^{1/2}$ for the penalized estimator $\hbeta$, ${n}^{1/2} s_0^2 \lambda^2 \asymp s_0^2 \lambda \{ \log(p) \}^{1/2}$. Hence, ${n}^{1/2} \homega^\T \bE_{n2}(\bbeta^*) (\tilde{\bbeta} - \bbeta^0) = \oP(1)$.
	
	Similar to showing ${n}^{1/2} \homega^\T \bE_{n2}(\bbeta^*) (\tilde{\bbeta} - \bbeta^0) = \oP(1)$, we can show that ${n}^{1/2} \homega^\T \bE_{nk}(\bbeta^*) (\tilde{\bbeta} - \bbeta^0) = \oP(1)$ for $k=3,4$. Thus, $I_{322} = \oP(1)$.
	
	For $I_{323} = {n}^{1/2} \homega^\T \bF_n(\bbeta^*)(\tilde{\bbeta} - \bbeta^0)$, we can show $I_{323} = \oP(1)$ using a decomposition of $\bF_n(\bbeta^*)$ similar to that of $\bE_n(\bbeta^*)$, so we omit the detailed proof here. This completes the proof that $I_{3} = I_{31} + I_{321} + I_{322} + I_{323} = \oP(1)$. Therefore, by Slutsky's Theorem,
	\[
	\displaystyle \frac{{n}^{1/2} \Psi^P(\theta^0)}{\{ (\bomega^0)^\T \bV^0 \bomega^0 \}^{1/2}} = \frac{{n}^{1/2} (\bomega^0)^\T \bar{\bPsi}(\bbeta^0)}{\{ (\bomega^0)^\T \bV^0 \bomega^0 \}^{1/2}} + \oP(1) 
	\]
	converges to $N(0,1)$ in distribution, as $n \to \infty$.
\end{proof}


\subsection{Proof of Theorem 1}

Finally, the detailed proof of the main result, Theorem 1, is provided below. 


\begin{proof}[Proof of Theorem 1]
	
	By the first-order Taylor expansion of $\Psi^P (\hat{\theta})$ around $\theta^0$, 
	\begin{align} \label{eq:taylor_onestep}
		{n}^{1/2} ( \tilde{\theta} - \theta^0 ) & = {n}^{1/2} \left\{  ( \hat{\theta} - \theta^0 ) + \displaystyle \frac{\Psi^P(\hat{\theta})}{\homega^\T \bS (\hbeta) \homega}  \right\} \notag \\
		& = \displaystyle \frac{{n}^{1/2} \Psi^P(\theta^0) }{\homega^\T \bS (\hbeta) \homega} + {n}^{1/2} \left\{    1 + \frac{\dot{\Psi}^P(\theta^*)}{\homega^\T \bS (\hbeta) \homega} \right\} ( \hat{\theta} - \theta^0 ),
	\end{align}
	where $\theta^*$ lies between $\hat{\theta}$ and $\theta^0$. In Lemma \ref{lemma:hat_omega_rate}, we have shown that $\tilde{\bomega}^\T \bS(\hbeta) \tilde{\bomega} = (\baromega^0)^\T \bS^0 \baromega^0 + \oP(1)$. By the definition of $\homega$, $\homega^\T \bS(\hbeta) \homega = \{ \tilde{\bomega}^\T \bS(\hbeta) \tilde{\bomega} \}^{-1} = \{ (\baromega^0)^\T \bS^0 \baromega^0 \}^{-1} + \oP(1) = (\bomega^0)^\T \bS^0 \bomega^0 + \oP(1)$. By Lemma \ref{lemma:leading_normal}, ${n}^{1/2} \Psi^P(\theta^0) / \{ (\bomega^0)^\T \bV^0 \bomega^0 \}^{1/2}$ converges to $N(0, 1)$ in distribution as $n \to \infty$. Hence, 
	\[
	{n}^{1/2} \displaystyle \frac{\Psi^P(\theta^0)}{\homega^\T \bS(\hbeta) \homega} \times \frac{(\bomega^0)^\T \bS^0 \bomega^0}{ \{ (\bomega^0)^\T \bV^0 \bomega^0 \}^{1/2} }
	\]
	should converge to $N(0, 1)$ in distribution as $n \to \infty$.

	Next, we show that ${n}^{1/2} [ 1 + \{ \homega^\T \bS (\hbeta) \homega \}^{-1} \dot{\Psi}^P(\theta^*) ] (\hat{\theta} - \theta^0) = \oP(1)$. Recall that $\theta^*$ here lies between $\hat{\theta} = \bxi^\T \hbeta$ and $\theta^0 = \bxi^\T \bbeta^0$. Then, we rewrite
	\[
	\begin{array}{rcl}
		\left| {n}^{1/2} \left[ 1 + \{ \homega^\T \bS (\hbeta) \homega \}^{-1} \dot{\Psi}^P(\theta^*) \right] (\hat{\theta} - \theta^0) \right| & = & \left| {n}^{1/2} \displaystyle \frac{\homega^\T \bS (\hbeta) \homega + \dot{\Psi}^P(\theta^*)}{\homega^\T \bS (\hbeta) \homega} \bxi^\T (\hbeta - \bbeta^0) \right|.   \\
	\end{array}
	\]
	For convenience, denote $\bbeta^{**} = \hbeta + \homega (\theta^* - \bxi^\T \hbeta)$, and then
	\[
	\begin{array}{rl}
		& \homega^\T \bS (\hbeta) \homega + \dot{\Psi}^P(\theta^*)  \\
		= & \displaystyle \homega^\T \left\{ \bS(\hbeta) +  \left. \frac{\partial \bPsi}{\partial \bbeta^\T} \right|_{\bbeta^{**}} \right\} \homega \\
		= & \homega^\T \left\{ \bS(\hbeta) - \bS(\bbeta^{**}) + \bE_n(\bbeta^{**}) +  \bF_n(\bbeta^{**}) \right\} \homega. 
	\end{array}
	\]
	Since $\theta^*$ lies between $\hat{\theta} = \bxi^\T \hbeta$ and $\theta^0 = \bxi^\T \bbeta^0$, we write $\theta^* = t \bxi^\T \hbeta + (1 - t) \bxi^\T \bbeta^0$ for some $0 \le t \le 1$. Then $\bbeta^{**} - \hbeta = - (1-t) \homega \bxi^\T (\hbeta - \bbeta^0)$, and $| \bx_{ij}^\T ( \bbeta^{**} - \hbeta ) | \le | \bx_{ij}^\T \homega | \times  | \bxi^\T (\hbeta - \bbeta^0) | \le | \bx_{ij}^\T \homega | \cdot \| \bxi \|_2 \| \hbeta - \bbeta^0 \|_2 = \OP(\| \hbeta - \bbeta^0 \|_2)$.
	Similar to showing \eqref{eq:third_part_I321} in the proof of Lemma \ref{lemma:leading_normal}, we can show that
	\[
	\begin{array}{rl}
		& {n}^{1/2}  | \homega^\T \{ \bS(\bbeta^{**}) - \bS(\hbeta)   \} \homega \bxi^\T (\hbeta - \bbeta^0) | \\
		= & {n}^{1/2} \displaystyle \left|  \frac{1}{n} \sum_{i=1}^n \homega^\T \bX_i^\T \{ \bG_{i}^{1/2}(\bbeta^{**}) - \bG_i^{1/2}(\hbeta) \} \hatR^{-1} \bG_i^{1/2}(\bbeta^{**}) \bX_i \homega \bxi^\T (\hbeta - \bbeta^0) \right|  \\
		& + {n}^{1/2} \displaystyle \left|  \frac{1}{n} \sum_{i=1}^n \homega^\T \bX_i^\T \bG_i^{1/2}(\hbeta) \hatR^{-1} \{ \bG_{i}^{1/2}(\bbeta^{**}) - \bG_i^{1/2}(\hbeta) \} \bX_i \homega \bxi^\T (\hbeta - \bbeta^0) \right| \\
		= &  \oP(1).
	\end{array}
	\]
	We continue to prove that ${n}^{1/2} \homega^\T \bE_{n}(\bbeta^{**}) \homega \bxi^\T ( \hbeta - \bbeta^0 ) = \oP(1)$. Given the decomposition of $\bE_{n}(\bbeta) = \bE_{n1}(\bbeta^0) + \sum_{k=2}^4 \bE_{nk}(\bbeta)$ in \eqref{eq:decomp_En}, we first prove ${n}^{1/2} \homega^\T \bE_{n1}(\bbeta^{0}) \homega \bxi^\T ( \hbeta - \bbeta^0 ) = \oP(1)$, followed by the arguments for ${n}^{1/2} \homega^\T \bE_{nk}(\bbeta^{**}) \homega \bxi^\T ( \hbeta - \bbeta^0 ) = \oP(1), ~ k=2,3,4$. By Assumption \ref{assump:var_fun}, $| \dot{v}(\mu_{ij}(\bbeta^0)) |$ and $\| \bG_i^{1/2}(\bbeta^0) \|$ are both bounded. By Assumption \ref{assump:bdd_xomega} and Lemma \ref{lemma:hat_omega_rate}, $\| \bX_i \homega \|_{\infty} = \OP(1)$. The standardized residual $\varepsilon_{ij}(\bbeta^0) = \OP(1)$. By Assumption \ref{assump:cor_mat}, $\| \hatR^{-1} \| = \OP(1)$. In fact, since $|\bxi^\T (\hbeta - \bbeta)| = \oP(1)$, it suffices to show $ \homega^\T \bE_{n1}(\bbeta^0) \homega = \oP(n^{-1/2}) $. The same data splitting argument, where $(\hbeta, \hatR)$ are estimated independently as shown in the proof of Lemma \ref{lemma:leading_normal}, is again invoked here. For $\homega^\T \bE_{n1}(\bbeta^0) \homega$, we can rewrite $\homega^\T  \bE_{n1}(\bbeta^0) \homega = - (2n)^{-1}  \sum_{j=1}^m \sum_{i=1}^n b_{ij} \varepsilon_{ij}(\bbeta^0)$, where $$ b_{ij} = b_{ij}(\hbeta, \hatR, \homega, \bX_i) = \dot{v}(\mu_{ij}(\bbeta^0)) \homega^\T \bX_i^\T \bG_i^{1/2}(\bbeta^0) \hatR^{-1} \bare_j \bare_j^\T \bX_i \homega.$$
	And $ | b_{ij}|  \le  | \dot{v}(\mu_{ij}(\bbeta^0)) | ~ \|  \bX_i \homega \|_2 \| \bG_i^{1/2}(\bbeta^0) \| \times \| \hatR^{-1} \| \times  | \bx_{ij}^\T \homega |.$
	
	{\textit{Case I. When $\varepsilon_{ij}(\bbeta^0)$'s are sub-exponential}}
	
	Given $(\hbeta, \homega, \hatR, \{ \bX_i \}_{i=1}^n)$, by Bernstein's inequality (see Theorem 2.8.2 of \citet{vershynin2018high}),we have the following concentration inequality for $j = 1, \ldots, m$,
	\[
	\PP \left( \left| \displaystyle \sum_{i=1}^n   b_{ij} \varepsilon_{ij}(\bbeta^0) \right| \ge t \mid \hbeta, \homega, \hatR, \{  \bX_i \}_{i=1}^n \right) \le 2 \exp \left\{ - \displaystyle C \min \left( \frac{ t^2}{(L_j^{\prime})^2 \sum_{i=1}^n b_{ij}^2}, \frac{ t}{L_j^{\prime} \max_i |b_{ij}|} \right) \right\},
	\]
	where $C>0$ is some constant and $L_j^{\prime} = \| \varepsilon_{1j}(\bbeta^0) \|_{\psi_1} = {O}(1)$. For any $\delta >0$, there exists a constant $M>0$ such that $\PP(\mathcal{J}^{\prime}) \ge 1-\delta/4$, where $\mathcal{J}^{\prime} = \{ \| \bX_i \homega \|_2 \le M, i=1, \ldots, n \} \cap \{ \| \hatR^{-1} \| \le M \}$. On the event $\mathcal{J}^{\prime} = \{ \| \bX_i \homega \|_2 \le M, i=1, \ldots, n \} \cap \{ \| \hatR^{-1} \| \le M \}$,
	\[
	\begin{array}{rcl}
		\displaystyle \frac{ t^2}{(L_j^{\prime})^2 \sum_{i=1}^n b_{ij}^2} & \gtrsim & \displaystyle \frac{t^2}{(L_j^{\prime})^2 n M^6},  \\
		\displaystyle  \frac{ t}{L_j^{\prime} \max_i |b_{ij}|} & \gtrsim & \displaystyle \frac{t}{L_j^{\prime} M^3}.
	\end{array}
	\]
	Let $t = C^{\prime} {n}^{1/2}$ with a large enough constant $C^{\prime}>0$ such that, on the event $\mathcal{J}^{\prime}$, we have 
	\[
	\exp \left\{ - \displaystyle C  \min \left( \frac{ t^2}{(L_j^{\prime})^2 \sum_{i=1}^n b_{ij}^2}, \frac{ t}{L_j^{\prime} \max_i |b_{ij}|} \right) \right\} \le \delta/4.
	\]
	Then, 
	\[
	\begin{array}{rl}
		\PP \left( \left| \displaystyle \sum_{i=1}^n   b_{ij} \varepsilon_{ij}(\bbeta^0) \right| \ge t \right)   &
		\le  2 \E \left[ 1_{\mathcal{J}^{\prime}} \exp \left\{ - \displaystyle C  \min \left( \frac{ t^2}{(L_j^{\prime})^2 \sum_{i=1}^n b_{ij}^2}, \frac{ t}{L_j^{\prime} \max_i |b_{ij}|} \right) \right\} \right] \\
		& \quad + 2 \E \left[ 1_{(\mathcal{J}^{\prime})^c} \exp \left\{ - \displaystyle C  \min \left( \frac{ t^2}{(L_j^{\prime})^2 \sum_{i=1}^n b_{ij}^2}, \frac{ t}{L_j^{\prime} \max_i |b_{ij}|} \right) \right\} \right] \\
		& \le 2 \times \delta/4 \times \PP(\mathcal{J}^{\prime}) + 2 \times \PP \{ (\mathcal{J}^{\prime})^c \} \\
		& \le \delta.
	\end{array}
	\]
	Hence, $\sum_{i=1}^n b_{ij} \varepsilon_{ij}(\bbeta^0) = \OP({n}^{1/2})$ and $\homega^\T \bE_{n1} (\bbeta^0) \homega = \OP(n^{-1/2}) = \oP(1)$.
	
	{\textit{Case II. When $\varepsilon_{ij}(\bbeta^0)$'s are sub-Gaussian}}
	
	Given $(\hbeta, \homega, \hatR, \{ \bX_i \}_{i=1}^n)$, which again is equivalent to given $(\hbeta,  \hatR, \{ \bX_i \}_{i=1}^n)$, by the general Hoeffding’s inequality (see Theorem 2.6.3 of \citet{vershynin2018high}),we have the following concentration inequality for $j = 1, \cdots, m$,
	\[
	\PP \left( \left| \sum_{i=1}^n b_{ij} \varepsilon_{ij}(\bbeta^0) \right| \ge t ~ | ~ \hbeta, \homega, \hatR, \{ \bX_i \}_{i=1}^n \right) \le 2 \exp \left( - \displaystyle \frac{C t^2}{L_j^2 \sum_{i=1}^n b_{ij}^2} \right),
	\]
	where $L_j = \| \varepsilon_{1j}(\bbeta^0) \|_{\psi_2} = {O}(1)$. For any $\delta > 0$, there exists a constant $M>0$ such that $\PP(\{ \| \bX_i \homega \|_2 \le M, i=1, \ldots, n \} \cap \{ \| \hatR^{-1} \| \le M \} ) \ge 1- \delta/4$. On the event $\mathcal{J}^{\prime} = \{ \| \bX_i \homega \|_2 \le M, i=1, \cdots, n \} \cap \{ \| \hatR^{-1} \| \le M \}$,
	\[
	\PP \left( \left| \sum_{i=1}^n b_{ij} \varepsilon_{ij}(\bbeta^0) \right| \ge t ~ | ~ \hbeta, \homega, \hatR, \{ \bX_i \}_{i=1}^n \right)  \le 2 \exp \left( - \displaystyle \frac{C t^2}{L_j^2 n M^6 } \right).
	\]
	Let $t = C^{\prime} {n}^{1/2}$ for a large enough constant $C^{\prime} > 0$ such that $\exp \{ - C t^2 / (L_j^2 n M^6) \} \le \delta/4$. Then
	\[
	\begin{array}{rl}
		& \PP \left( \left| \sum_{i=1}^n b_{ij} \varepsilon_{ij}(\bbeta^0) \right| \ge t \right)  \\
		\le &  2 \E \left\{ 1_{\mathcal{J}^{\prime}} \exp \left( \displaystyle - \frac{Ct^2}{L_j^2 \sum_{i=1}^n b_{ij}^2} \right) \right\} + 2 \E \left\{ 1_{(\mathcal{J}^{\prime})^c} \exp \left( \displaystyle - \frac{Ct^2}{L_j^2 \sum_{i=1}^n b_{ij}^2} \right) \right\} \\
		\le & 2 \times \delta/4 + 2 \PP\{(\mathcal{J}^{\prime})^c\} \\
		\le & \delta.
	\end{array}
	\]
	So $\sum_{i=1}^n b_{ij} \varepsilon_{ij}(\bbeta^0) = \OP({n}^{1/2})$ and $\homega^\T \bE_{n1} (\bbeta^0) \homega = \OP(n^{-1/2}) = \oP(1)$. 
	
	Now we prove that ${n}^{1/2} \homega^\T \bE_{n2}(\bbeta^{**}) \homega \bxi^\T (\hbeta - \bbeta^0) = \oP(1)$. By the definition of $\bbeta^{**}$, $\bbeta^{**} - \bbeta^0 = \{ \bI - (1-t) \homega \bxi^\T \} (\hbeta - \bbeta^0)$ for some $0 \le t \le 1$. By Assumption \ref{assump:var_fun}, $| G_{ij}^{1/2}(\bbeta^{**}) - G_{ij}^{1/2}(\bbeta^0) | = \OP(|\bx_{ij}^\T (\hbeta - \bbeta^0)|) \le \OP (\| \hbeta - \bbeta^0 \|_1)$. Hence,
	\[
	\begin{array}{rl}
		& | {n}^{1/2} \homega^\T \bE_{n2}(\bbeta^{**}) \homega \bxi^\T (\hbeta - \bbeta^0) | \\
		= & \left|   \displaystyle  \frac{1}{2 {n}^{1/2} } \sum_{i=1}^n \sum_{j=1}^m \dot{v}(\mu_{ij}(\bbeta^0)) \varepsilon_{ij}(\bbeta^0) \homega^\T \bX_i^\T \{ \bG_i^{1/2}(\bbeta^{**}) - \bG_i^{1/2}(\bbeta^0) \} \hatR^{-1} \bare_j \bare_j^\T \bX_i  \homega \bxi^\T (\hbeta - \bbeta^0) \right|  \\
		\le & \displaystyle \frac{1}{2 {n}^{1/2}} \sum_{i=1}^n \sum_{j=1}^m | \dot{v}(\mu_{ij}(\bbeta^0)) | ~ | \varepsilon_{ij}(\bbeta^0) | ~ \| \bX_i \homega \|_2 \| \bG_i^{1/2}(\bbeta^{**}) - \bG_i^{1/2}(\bbeta^0) \| ~ \times \\
		& \qquad \| \hatR^{-1} \| ~ | \bx_{ij}^\T \homega | ~ \| \bxi \|_{2} \| \hbeta - \bbeta^0 \|_2 \\
		\le & C {n}^{1/2} \OP( \| \hbeta - \bbeta^0 \|_1 \| \hbeta - \bbeta^0 \|_2   ) \\
		= & \OP(s_0^{3/2} \log(p) / {n}^{1/2}) = \oP(1).
	\end{array}
	\]
	Due to the assumption that $\max(s_0, s^{\prime} \| \baromega^0 \|_1) \lambda^{\prime} \{ \log(p) \}^{1/2} \to 0$ with $\lambda^{\prime} \asymp \| \baromega^0 \|_1 (s_0 \lambda + r_n \|)$,  it is implied that $s_0^{3/2} \log(p) / {n}^{1/2} \to 0$. And hence, ${n}^{1/2} \homega^\T \bE_{n2}(\bbeta^{**}) \homega \bxi^\T (\hbeta - \bbeta^0) = \oP(1).$ 
	
	Similarly, we can show ${n}^{1/2} \homega^\T \bE_{nk}(\bbeta^{**}) \homega \bxi^\T (\hbeta - \bbeta^0) = \oP(1)$ for $k= 3, 4$. So ${n}^{1/2} \homega^\T \bE_{n}(\bbeta^{**})  \homega \bxi^\T (\hbeta - \bbeta^0) = \oP(1)$. With similar arguments, it is easy to see that ${n}^{1/2} \homega^\T  \bF_{n}(\bbeta^{**})  \homega \bxi^\T (\hbeta - \bbeta^0) = \oP(1)$ and we will omit the proof here. This shows that ${n}^{1/2} [ 1 + \{ \homega^\T \bS (\hbeta) \homega \}^{-1} \dot{\Psi}^P(\theta^*) ] (\hat{\theta} - \theta^0) = \oP(1)$.
	
	By Taylor expansion \eqref{eq:taylor_onestep} and Slutsky's theorem, the conclusion of this theorem holds.
\end{proof}


\section{Further discussion on some technical conditions}
\label{supp:sec:cond}

\subsection{Restricted eigenvalue condition on $\ddot{\ell}_n (\bbeta^0)$}

		{Establishing the rates of convergence for the penalized quasi log-likelihood estimator $\hbeta$ is pivotal for our theory. One commonly adopted requirement for this task is the restricted eigenvalue condition, the earlier discussion on which can be found in \citet{raskutti2010restricted} and \citet{buhlmann2011statistics} for high-dimensional linear regression with random designs. In our setting of random designs, Assumptions~\ref{assump:bound_covs}, \ref{assump:var_fun}, \ref{assump:bdd_eigen_covs} and \ref{assump:cor_mat} imply that the following restricted eigenvalue condition on $\ddot{\ell}_n(\bbeta^0)  = \partial \ell^2(\bbeta) / \partial \bbeta \partial \bbeta^\T$, which is identical to Assumption~2.2 in \citet{fang2020test}, holds with probability going to one:
		For any set $\mathcal{S} \subseteq \{ 1, \ldots, p \}$ with $|\mathcal{S}| = s_0$, 
		\[
		\mathrm{RE}(\tau, \mathcal{S}, \ddot{\ell}_n(\bbeta^0)) = \inf \left\{ \displaystyle \frac{\bnu^\T \ddot{\ell}_n(\bbeta^0) \bnu}{\| \bnu_{\mathcal{S}} \|_2^2} : ~ \bnu \in \mR^p, \bnu \ne 0, \| \bnu_{\mathcal{S}^c} \|_1 \le \tau \| \bnu_{\mathcal{S}} \|_1 \right\} \ge \tau_0
		\]
		for some constants $\tau \ge 1$ and $\tau_0 > 0$. Detailed verification can be found below. Hence, the estimation and prediction error rates of the penalized quasi log-likelihood estimator $\hbeta$ can be established, as stated in Lemma 1 in the main text. We omit the proof of Lemma 1, and interested readers may refer to  \citet{van2012quasi} and \citet{fang2020test}.}

The following proposition states that under the current assumptions in Theorem 1, the restricted eigenvalue condition holds for $\ddot{\ell}_n (\bbeta^0)$ with large probability. Its proof is somewhat similar to that of Lemma 6.17 in \citet{buhlmann2011statistics}. Hence, the estimation and the prediction error rates for $\hbeta$ are legitimately established as in Lemma 1. Note that with a canonical link function $g(\cdot)$, as considered in Section 2 of the main text, $\dot{\mu}(\eta) = v(\eta)$ for $\eta \in \mR$, and then 
\[
\displaystyle \frac{\partial^2 \ell_n (\bbeta)}{\partial \bbeta \partial \bbeta^\T} = \displaystyle \frac{1}{n} \sum_{i=1}^n \sum_{j=1}^m v(\bx_{ij}^\T \bbeta) \bx_{ij} \bx_{ij}^\T.
\]

\begin{proposition} \label{prop:re_cond}
	Under Assumptions \ref{assump:bound_covs}, \ref{assump:var_fun}, \ref{assump:bdd_eigen_covs} and \ref{assump:cor_mat}, the following restricted eigenvalue condition for $\ddot{\ell}_n (\bbeta^0)$ holds with probability going to one:
	for any set $\mathcal{S} \subseteq \{ 1, \ldots, p \}$ with $|\mathcal{S}| = s_0$, 
	\[
	\mathrm{RE}(\tau, \mathcal{S}, \ddot{\ell}_n(\bbeta^0)) = \inf \left\{ \displaystyle \frac{\bnu^\T \ddot{\ell}_n(\bbeta^0)\bnu}{\| \bnu_{\mathcal{S}} \|_2^2} : ~ \bnu \in \mR^p, \bnu \ne \mathbf{0}, \| \bnu_{\mathcal{S}^c} \|_1 \le \tau \| \bnu_{\mathcal{S}} \|_1 \right\} \ge \tau_0,
	\]
	for some constant $\tau_0 > 0$, where $\ddot{\ell}_n(\bbeta) = \partial \ell^2(\bbeta) / \partial \bbeta \partial \bbeta^\T$.
\end{proposition}

\begin{proof}[Proof of Proposition \ref{prop:re_cond}]
	
	First, under Assumptions \ref{assump:bound_covs} and \ref{assump:var_fun}, it is straightforward to show that the difference between $\ddot{\ell}_n (\bbeta^0)$ and $\E \{ \sum_{j=1}^m v (\bx_{1j}^\T \bbeta^0) \bx_{1j} \bx_{1j}^\T \}$ is small. To be specific, there exists a constant $C > 0$ such that $\| \sum_{j=1}^m v ( \bx_{ij}^\T \bbeta^0 ) \bx_{ij} \bx_{ij}^\T \|_{\infty} \le m C$. Applying Hoeffding's inequality, we have that for every $k, l \in \{ 1, \ldots, p \}$ and $t > 0$,
	\begin{align*}
		& \PP \left( \left| \displaystyle \frac{1}{n} \sum_{i=1}^n \sum_{j=1}^m v (\bx_{ij}^\T \bbeta^0) x_{ijk} x_{ijl} - \E \left\{ \sum_{j=1}^m v (\bx_{1j}^\T \bbeta^0) x_{1jk} x_{1jl} \right\} \right| \ge t \right) \\
		\le & 2\exp\{ - n t^2 / (2 m^2 C^2) \}.
	\end{align*}
	Then
	\begin{align*}
		& \PP \left( \left\| \displaystyle \frac{1}{n} \sum_{i=1}^n \sum_{j=1}^m v (\bx_{ij}^\T \bbeta^0) \bx_{ij} \bx_{ij}^\T - \E \left\{ \sum_{j=1}^m v (\bx_{1j}^\T \bbeta^0) \bx_{1j} \bx_{1j}^\T \right\} \right\|_{\infty} \ge t \right)  \\
		\le & 2 p^2 \exp\{ - n t^2 / (2 m^2 C^2) \}.
	\end{align*}
	Taking $t \asymp \{ \log(p)/n \}^{1/2}$, we see that $\| \ddot{\ell}_n (\bbeta^0) - \E \{ \sum_{j=1}^m v (\bx_{1j}^\T \bbeta^0) \bx_{1j} \bx_{1j}^\T \} \|_{\infty} = \OP [ \{ \log(p) / n \}^{1/2} ]$.
	
	Second, the expectation $\E \{ \sum_{j=1}^m v (\bx_{1j}^\T \bbeta^0) \bx_{1j} \bx_{1j}^\T \}$ itself satisfies the restricted eigenvalue condition. Without loss of generality, for a positive definite matrix $\bSigma_1$, if its smallest eigenvalue $\lambda_{\mathrm{min}} (\bSigma_1) \ge C$ for some constant $C > 0$, then the restricted eigenvalue condition holds for $\bSigma_1$. To see this, note that since $\| \bnu_{\calS} \|_2 \le \| \bnu \|_2$, 
	\begin{align*}
		& \inf \left\{ \displaystyle \frac{\bnu^\T \bSigma_1 \bnu}{ \| \bnu_{\calS} \|_2^2}: \bnu \in \mR^p, \bnu \ne 0, \| \bnu_{\calS^c} \|_1 \le \tau \| \bnu_{\calS} \|_1 \right\} \\
		\ge & \inf \left\{ \displaystyle \frac{\bnu^\T \bSigma_1 \bnu}{ \| \bnu \|_2^2}: \bnu \in \mR^p, \bnu \ne 0, \| \bnu_{\calS^c} \|_1 \le \tau \| \bnu_{\calS} \|_1 \right\} \\
		\ge & \inf \left\{ \displaystyle \frac{\bnu^\T \bSigma_1 \bnu}{ \| \bnu \|_2^2}: \bnu \in \mR^p, \bnu \ne 0 \right\} \\
		= & \lambda_{\mathrm{min}} (\bSigma_1) \ge C.
	\end{align*}
	By Assumptions \ref{assump:var_fun} and \ref{assump:bdd_eigen_covs}, $$\lambdamin ( \E \{ \sum_{j=1}^m v (\bx_{1j}^\T \bbeta^0) \bx_{1j} \bx_{1j}^\T \} ) \ge K_2^{-1} \lambdamin ( \E \sum_{j=1}^m \bx_{1j} \bx_{1j}^\T ) = K_2^{-1} \lambdamin(\bX_1 \bX_1^\T ) \ge c/K_2, $$ where the constant $c>0$ is the same as in Assumption \ref{assump:bdd_eigen_covs}.  And hence, the restricted eigenvalue condition holds for the expected matrix $\E \{ \sum_{j=1}^m v (\bx_{1j}^\T \bbeta^0) \bx_{1j} \bx_{1j}^\T \}$.
	
	Finally, since $\E \{ \sum_{j=1}^m v (\bx_{1j}^\T \bbeta^0) \bx_{1j} \bx_{1j}^\T \}$ satisfies the restricted eigenvalue condition with \\
	$\mathrm{RE} (\tau, \calS, \E \{ \sum_{j=1}^m v (\bx_{1j}^\T \bbeta^0) \bx_{1j} \bx_{1j}^\T \}) \ge \tau_0 = c/K_2,$ and that $\| \ddot{\ell}_n (\bbeta^0) - \E \{ \sum_{j=1}^m v (\bx_{1j}^\T \bbeta^0) \bx_{1j} \bx_{1j}^\T \} \|_{\infty} = \OP [ \{ \log(p) / n \}^{1/2} ]$, we conclude that the restricted eigenvalue condition also holds for $\ddot{\ell}_n (\bbeta^0)$ with \\ $\mathrm{RE} (\tau, \calS, \ddot{\ell}_n (\bbeta^0)) \ge \tau_0 /2 = c/(2 K_2)$ with probability going to one. Without loss of generality, suppose a positive definite matrix $\bSigma_1$ satisfied the restricted eigenvalue condition with $\tau_0$ and another positive definite matrix $\bSigma_2$ is close to $\bSigma_1$ such that $\| \bSigma_2 - \bSigma_1 \|_{\infty} \le \tilde{\lambda}$ for some small $\tilde{\lambda} > 0$. Then for $|\calS| = s_0$ and $\bnu \in \mR^p$ such that $ \| \bnu_{\calS^c} \|_1 \le \tau \| \bnu_{\calS} \|_1$,
	\begin{equation} \label{eq:tmp_re1}
		| \bnu^\T \bSigma_2 \bnu - \bnu^\T \bSigma_1 \bnu | \le \| \bSigma_2 - \bSigma_1 \|_{\infty} \| \bnu \|_1^2 \le \tilde{\lambda} \| \bnu \|_1^2.
	\end{equation}
	Since $\| \bnu \|_1 = \| \bnu_{\calS} \|_1 + \| \bnu_{\calS^c} \|_1 \le (1 + \tau) \| \bnu_{\calS} \|_1 \le (1+ \tau) {s_0}^{1/2} \| \bnu_{\calS} \|_2$, by the definition of $\mathrm{RE}(\tau, \calS, \bSigma_1)$,
	\begin{equation} \label{eq:tmp_re2}
		\| \bnu \|_1 \le (1 + \tau) {s_0}^{1/2} \| \bnu_{\calS} \|_2 \le (1+\tau) {s_0}^{1/2} \{ \bnu^\T \bSigma_1 \bnu / \tau_0 \}^{1/2}.
	\end{equation}
	Combining \eqref{eq:tmp_re1} and \eqref{eq:tmp_re2}, we have
	\[
	\left| \displaystyle \frac{\bnu^\T \bSigma_2 \bnu}{\| \bnu_{\calS} \|_2} - \frac{\bnu^\T \bSigma_1 \bnu}{\| \bnu_{\calS} \|_2} \right| \le \frac{\tilde{\lambda} (1+\tau)^2 s_0}{\tau_0} \displaystyle \frac{\bnu^\T \bSigma_1 \bnu}{\| \bnu_{\calS} \|_2}, 
	\]
	which implies that 
	\[
	\displaystyle \frac{\bnu^\T \bSigma_2 \bnu}{\| \bnu_{\calS} \|_2^2} \ge \frac{1}{2} \frac{\bnu^\T \bSigma_1 \bnu}{\| \bnu_{\calS} \|_2^2}
	\]
	when $\tilde{\lambda} (1+\tau)^2 s_0 / \tau_0 \le 1/2$. Now with $\tilde{\lambda} \asymp \{ \log(p)/n \}^{1/2}$, $\bSigma_1 = \E \{ \sum_{j=1}^m v (\bx_{1j}^\T \bbeta^0) \bx_{1j} \bx_{1j}^\T \}$ and $\bSigma_2 = \ddot{\ell}_n (\bbeta^0)$, the final conclusion stands with probability going to one.
\end{proof}

\subsection{Estimation of the working correlation matrix $\hatR$}

In Assumption \ref{assump:cor_mat}, it is required that the estimated working correlation matrix $\hatR$ satisfies
\begin{equation} \label{eq:rate_hatR_assump5}
	\| \hatR - \barR \| = \oP[ \{ \max(s_0, s^{\prime}\| \baromega^0 \|_1 ) \{ \log(p) \}^{1/2} \| \baromega^0 \|_1 \}^{-1} ],
\end{equation}
where $\barR \in \mR^{m \times m}$ is some fixed positive definite matrix. 

One example to consider is the moment estimator for $\hatR$ under unstructured working correlation, which is similar to  \citet{balan2005asymptotic}, i.e.,
\[
\hatR = \displaystyle \frac{1}{n} \sum_{i=1}^{n} \bG_i^{-1/2} (\hbeta) (\bY_i - \bmu_i(\hbeta)) (\bY_i - \bmu_i(\hbeta))^\T \bG_i^{-1/2} (\hbeta),
\]
and in this case $\barR = \bR_0$. 

\begin{proposition} \label{prop:hatR}
	For the moment estimator $\hatR$ under unstructured working correlation as defined above, under Assumptions \ref{assump:bound_covs}, \ref{assump:var_fun}, \ref{assump:bdd_eigen_covs} and \ref{assump:cor_mat} (except the requirement \eqref{eq:rate_hatR_assump5} in Assumption \ref{assump:cor_mat}), we have
	\[
	\| \hatR - \bR_0 \| = \OP [ \{ s_0 \log(p)/n \}^{1/2} ].
	\]
\end{proposition}

By Proposition \ref{prop:hatR}, we need
\begin{align} \label{eq:rate_needed}
	& {s_0 \log(p)/n}^{1/2}  \{ \max(s_0, s^{\prime}\| \baromega^0 \|_1 ) \{ \log(p) \}^{1/2} \| \baromega^0 \|_1 \} \notag \\
	= &  \| \baromega^0 \|_1 \max (s_0^{3/2}, s_0^{1/2} s^{\prime} \| \baromega^0 \|_1)  \log(p) / {n}^{1/2} \to 0
\end{align}
in order to satisfy \eqref{eq:rate_hatR_assump5}, the rate requirement in Assumption \ref{assump:cor_mat}. Suppose $\| \baromega^0 \|_1 = {O}(1)$ in \eqref{eq:rate_needed}, then a necessary condition for \eqref{eq:rate_needed} is that $s_0^3 \{ \log(p) \}^2 /n \to 0$. This requirement is stronger than the model sparsity requirement in \citet{van2014asymptotically}, i.e. $s_0^2 \{ \log(p) \}^2 /n \to 0$, for de-sparsified lasso in generalized linear models without repeated measurements. The requirement \eqref{eq:rate_needed} resembles but is slightly weaker than Assumption 4(c) in \citet{yu2021confidence}, which studies confidence intervals for high-dimensional Cox models.

\begin{proof}[Proof of Proposition \ref{prop:hatR}]
	
	This proof follows similar arguments in Example 2 of \citet{wang2011gee}.
	First, we define an intermediate matrix
	\[
	\bR^* = \displaystyle \frac{1}{n} \sum_{i=1}^{n} \bG_i^{-1/2} (\bbeta^0) (\bY_i - \bmu_i(\bbeta^0)) (\bY_i - \bmu_i(\bbeta^0))^\T \bG_i^{-1/2} (\bbeta^0).
	\]
	From the central limit theorem, $\| \bR^* - \bR_0 \| = \OP(n^{-1/2})$. Then, the rest of the task is to characterize the rate of $\| \hatR - \bR^* \|$. We introduce an additional definition of Frobenius norm for a matrix $\bA$, $\| \bA \|_{\rmF} = (\sum_i \sum_j A_{ij}^2)^{1/2}$.
	
	Note that the element difference
	\begin{align*}
		| \hat{R}_{kj} - R^*_{kj} | & \le \left|  \displaystyle \frac{1}{n} \sum_{i=1}^n \displaystyle \frac{(Y_{ik} - \mu_{ik}(\hbeta))(Y_{ij} - \mu_{ij}(\hbeta)) - (Y_{ik} - \mu_{ik}(\bbeta^0))(Y_{ij} - \mu_{ij}(\bbeta^0))}{G_{ik}(\bbeta^0) G_{ij}(\bbeta^0)}  \right|  \\
		& \quad + \left| \displaystyle \frac{1}{n} \sum_{i=1}^n \frac{(Y_{ik} - \mu_{ik}(\hbeta))(Y_{ij} - \mu_{ij}(\hbeta))}{ \{ G_{ik}(\bbeta^0) G_{ij}(\bbeta^0) \}^{1/2} } \cdot \delta_{ijk} \right| \\
		& = I_{kj,1} + I_{kj,2},
	\end{align*}
	where $\delta_{ijk} = { \{ G_{ik}(\bbeta^0) G_{ij}(\bbeta^0) \}^{1/2} / \{ G_{ik}(\hbeta) G_{ij}(\hbeta) \}^{1/2} } - 1$. Then, by Cauchy inequality,
	\[
	\| \hatR - \bR^* \|_{\rmF}^2 \le \sum_{k=1}^m \sum_{j=1}^m ( I_{kj,1} + I_{kj,2} )^2 \le 2 \sum_{k=1}^m \sum_{j=1}^m I_{kj,1}^2 + 2 \sum_{k=1}^m \sum_{j=1}^m I_{kj,2}^2 = I_{n1} + I_{n2}.
	\]
	Now we consider $I_{n1}$. For $I_{kj,1}$, using triangular inequality, we have
	\begin{align*}
		I_{kj,1} & \le \displaystyle \frac{1}{n} \sum_{i=1}^n \frac{| ( \mu_{ik}(\bbeta^0) - \mu_{ik}(\hbeta) ) ( \mu_{ij}(\bbeta^0) - \mu_{ij}(\hbeta) ) |}{ \{ G_{ik}(\bbeta^0) G_{ij}(\bbeta^0) \}^{1/2} }
		+ \frac{1}{n} \sum_{i=1}^n \frac{| ( \mu_{ik}(\bbeta^0) - \mu_{ik}(\hbeta) ) ( Y_{ij} - \mu_{ij}(\bbeta^0) ) |}{ \{ G_{ik}(\bbeta^0) G_{ij}(\bbeta^0) \}^{1/2} } \\
		& \quad + \displaystyle \frac{1}{n} \sum_{i=1}^n \frac{| ( \mu_{ij}(\bbeta^0) - \mu_{ij}(\hbeta) ) ( Y_{ik} - \mu_{ik}(\bbeta^0) ) |}{ \{ G_{ik}(\bbeta^0) G_{ij}(\bbeta^0) \}^{1/2} }  = I_{kj,11} + I_{kj,12} + I_{kj,13}. \\
	\end{align*}
	Then $I_{n1} \le 6 \sum_{k,j} I_{kj,11}^2 + 6 \sum_{k,j} I_{kj,12}^2 + 6 \sum_{k,j} I_{kj,13}^2 = I_{n11} + I_{n12} + I_{n13}$. By Assumption \ref{assump:var_fun}, $|\mu_{ij}(\bbeta^0) - \mu_{ij}(\hbeta)| \le K_1 | \bx_{ij}^\T (\bbeta^0 - \hbeta) |$. And by Cauchy's inequality, and Assumption \ref{assump:var_fun} that $G_{ik}(\bbeta^0)$'s are lower bounded,
	\begin{align*}
		I_{kj,11}^2 & \le \left\{ \displaystyle \frac{1}{n} \sum_{i=1}^n \frac{ (\mu_{ik}(\bbeta^0) - \mu_{ik}(\hbeta))^2 }{G_{ik}(\bbeta^0)}  \right\}  \left\{ \displaystyle \frac{1}{n} \sum_{i=1}^n \frac{(\mu_{ij}(\bbeta^0) - \mu_{ij}(\hbeta))^2}{G_{ij}(\bbeta^0)}  \right\} \\
		& \le C \left\{ \displaystyle \frac{1}{n} \sum_{i=1}^n  |  \bx_{ik}^\T (\bbeta^0 - \hbeta)|^2  \right\} \left\{ \displaystyle \frac{1}{n} \sum_{i=1}^n  |  \bx_{ij}^\T (\bbeta^0 - \hbeta)|^2  \right\},
	\end{align*}
	which gives $$I_{n11} = \displaystyle  6 \sum_{k,j} I_{kj,11}^2 \le 6 C \left\{ \frac{1}{n} \sum_{i=1}^n \sum_{k=1}^m  |  \bx_{ik}^\T (\bbeta^0 - \hbeta)|^2  \right\} \left\{ \frac{1}{n} \sum_{i=1}^n \sum_{j=1}^m  |  \bx_{ij}^\T (\bbeta^0 - \hbeta)|^2  \right\}.$$ By the results in Lemma 1, $I_{n11} \le \OP [ \{ s_0 \log(p) /n \}^2 ]$. Because $\varepsilon_{ij}(\bbeta^0) = Y_{ij} - \mu_{ij}(\bbeta^0) = \OP(1)$, similarly, one can show that $I_{n12} = \OP(s_0 \log(p) / n)$ and $I_{n13} = \OP(s_0 \log(p) / n)$. And hence, $I_{n1} \le I_{n11} + I_{n12} + I_{n13} = \OP(s_0 \log(p) / n)$. 
	
	As for $I_{n2}$, by Cauchy's inequality,
	\begin{align*}
		I_{n2} \le 2 \sum_{k=1}^m \sum_{j=1}^m \left\{ \displaystyle \frac{1}{n} \sum_{i=1}^n \frac{ (Y_{ik} - \mu_{ik}(\hbeta))^2 (Y_{ij} - \mu_{ij}(\hbeta))^2 }{ G_{ik}(\bbeta^0) G_{ij}(\bbeta^0) } \right\}  \left\{ \displaystyle \frac{1}{n} \sum_{i=1}^n \delta_{ijk}^2 \right\}.
	\end{align*}
	Again, since, by the mean value theorem and Assumption \ref{assump:var_fun},  $| \mu_{ij}(\bbeta^0) - \mu_{ij}(\hbeta) | \le K_1 | \bx_{ij}^\T (\bbeta^0 - \hbeta) |$,
	\[
	| Y_{ij} - \mu_{ij}(\hbeta) |^2 \le \{ | Y_{ij} - \mu_{ij}(\bbeta^0) | + K_1 | \bx_{ij}^\T (\hbeta - \bbeta^0) | \}^2 \le 2 \varepsilon_{ij}^2(\bbeta^0) + 2 K_1^2 | \bx_{ij}^\T (\hbeta - \bbeta^0)  |^2.
	\]
	Then in the upper bound for $I_{n2}$, 
	\begin{align*}
		& \displaystyle \frac{1}{n} \sum_{i=1}^n (Y_{ik} - \mu_{ik}(\hbeta))^2 (Y_{ij} - \mu_{ij}(\hbeta))^2 \\
		\le & \displaystyle \frac{2}{n} \sum_{i=1}^n \{ \varepsilon_{ik}^2(\bbeta^0) + K_1^2 | \bx_{ik}^\T (\bbeta^0 - \hbeta) |^2 \}   \{ \varepsilon_{ij}^2(\bbeta^0) + K_1^2 | \bx_{ij}^\T (\bbeta^0 - \hbeta) |^2 \} \\
		\le  &  \displaystyle \frac{2}{n} \sum_{i=1}^n \left\{ \varepsilon_{ik}^2(\bbeta^0) \varepsilon_{ij}^2(\bbeta^0) + K_1^2 K^2 \| \hbeta - \bbeta^0 \|_1^2 \varepsilon_{ij}^2(\bbeta^0) + \right. \\
		& \quad \left. K_1^2 K^2 \| \hbeta - \bbeta^0 \|_1^2 \varepsilon_{ik}^2(\bbeta^0) + K_1^4 K^4 \| \hbeta - \bbeta^0 \|_1^4 \right\} \\
		= &  \displaystyle \frac{2}{n} \sum_{i=1}^n \left\{ \varepsilon_{ik}^2(\bbeta^0) \varepsilon_{ij}^2(\bbeta^0) \right\} + \OP(\| \hbeta - \bbeta^0 \|_1^2) = \OP(1),
	\end{align*}
	where the second inequality is due to Assumption \ref{assump:bound_covs} that $\| \bx_{ij} \|_{\infty} \le K$ and $| \bx_{ij}^\T (\hbeta - \bbeta^0 ) | \le \| \bx_{ij} \|_{\infty} \| \hbeta - \bbeta^0 \|_1 \le K \| \hbeta - \bbeta^0 \|_1$. Note that 
	\[
	\delta_{ijk} = \displaystyle \frac{ \{ G_{ik}(\bbeta^0) G_{ij}(\bbeta^0) \}^{1/2} - \{ G_{ik}(\hbeta) G_{ij}(\hbeta) \}^{1/2} }{ \{ G_{ik}(\hbeta) G_{ij} (\hbeta) \}^{1/2} }.
	\]
	By Taylor expansion, for some $\tilde{\bbeta}^{(ijk)}$ between $\hbeta$ and $\bbeta^0$, 
	\begin{align*}
		\{G_{ik}(\hbeta) G_{ij}(\hbeta) \}^{1/2} & = \{G_{ik}(\bbeta^0) G_{ij}(\bbeta^0) \}^{1/2} + \\ 
		& \quad \displaystyle \frac{  \dot{v}(\bx_{ik}^\T \tilde{\bbeta}^{(ijk)}) \bx_{ik} G_{ij}(\tilde{\bbeta}^{(ijk)}) + \dot{v}(\bx_{ij}^\T\tilde{\bbeta}^{(ijk)}) \bx_{ij} G_{ik}(\tilde{\bbeta}^{(ijk)})  }{ 2 \{G_{ik}(\tilde{\bbeta}^{(ijk)}) G_{ij}(\tilde{\bbeta}^{(ijk)}) \}^{1/2} } \sigma^2 (\hbeta - \bbeta^0).
	\end{align*}
	In the Taylor expansion above, the fact that $G_{ij}(\bbeta) = \sigma^2 v(\bx_{ij}^\T \bbeta)$ is used.
	By Assumption \ref{assump:var_fun} and Lemma 1, $\{ G_{ik}(\hbeta) G_{ij} (\hbeta)  \}^{-1/2} = \OP(1)$, and in the numerator of $\delta_{ijk}$,
	\[
	\left|  \{ G_{ik}(\bbeta^0) G_{ij}(\bbeta^0) \}^{1/2} - \{ G_{ik}(\hbeta) G_{ij}(\hbeta) \}^{1/2} \right| \le \OP(1)  \left\{ | \bx_{ik}^\T (\hbeta - \bbeta^0) | + | \bx_{ij}^\T (\hbeta - \bbeta^0) | \right\}.
	\]
	Hence, 
	\begin{align*}
		I_{n2} & \le \OP(1) n^{-1} \sum_{i=1}^n \sum_{k=1}^m \sum_{j=1}^m \delta_{ijk}^2 \\
		& \le  \OP(1) n^{-1} \sum_{i=1}^n \sum_{k=1}^m \sum_{j=1}^m \left\{ | \bx_{ik}^\T (\hbeta - \bbeta^0) |^2 + | \bx_{ij}^\T (\hbeta - \bbeta^0) |^2 \right\} \\
		& =  \OP(s_0 \log(p) / n).
	\end{align*}
	Since $m$ is a fixed integer not growing with the sample size $n$, 
	\[
	\| \hatR - \bR^* \|_{\rmF}^2 \le I_{n1} + I_{n2} = \OP(s_0 \log(p) / n) ~ ~ \mathrm{and} ~ ~ \| \hatR - \bR^* \|_{\rmF} = \OP [ \{ s_0 \log(p) / n \}^{1/2} ].
	\]
	As $\| \bR^* - \bR_0 \| = \OP(n^{-1/2})$, 
	\[
	\| \hatR - \bR_0 \| \le \| \hatR - \bR^* \| + \| \bR^* - \bR_0 \|
	\le \| \hatR - \bR^* \|_{\rmF} + \| \bR^* - \bR_0 \| = \OP [ \{ s_0 \log(p) / n \}^{1/2} ].
	\]
\end{proof}

\section{Outline of the modified inference procedure for high-dimensional estimating equations}
\label{supp:sec:hdee}

In the simulation studies, the proposed method HDIGEE is compared to a modified version of the inferential method originally designed for high-dimensional estimating equations (HDEE), developed by \citet{neykov2018unified}. The original framework described by \citet{neykov2018unified} entails linear regression via Dantzig selector, instrumental variables regression, graphical models, discriminant analysis and vector autoregressive models; however, it does not include correlated data via estimating equations as an application.  In this section, we outline the modification we have made to accommodate correlated data for HDEE. 

Suppose we are interested in inference on the $j$th element of $\bbeta^0$, i.e. $\beta^0_j$. After obtaining the initial estimator $\hbeta$, a projected estimating function of $\bbeta$ along the direction $\hat{\bnu}$ is constructed as
\[
\phi(\bbeta) = \hat{\bnu}^\T \Psi (\bbeta),
\]
where the direction $\hat{\bnu}$ is defined as the solution to the following optimization problem
\[
\hat{\bnu} = \argmin \{  \| \bnu \|_1:  \| - \bS(\hbeta) \bnu - \be_j \|_{\infty} \le \eta^{\prime} \}
\] 
for some tuning parameter $\eta^{\prime} > 0$, and $\be_j$ is the unit vector with the $j$th element being one. Here, $\Psi (\bbeta)$ is identical to the definition in the main text and $- \bS(\hbeta)$ is approximately the derivative of $\Psi(\bbeta)$ evaluated at $\bbeta = \hbeta$. 

Without loss of generality, we rewrite $\hbeta = (\hat{\beta}_j, \hat{\bbeta}_{-j})$, where $\hat{\bbeta}_{-j}$ is the estimated coefficients without the $j$th element. Then, the final estimator for the parameter of interest $\beta_j^0$ is the Z-estimator $\tilde{\beta}_j$ that is the root of the equation $\phi(\beta_j, \hat{\bbeta}_{-j}) = 0$. Since \citet{neykov2018unified} used a fixed tuning parameter $\eta^{\prime}$ without discussing more realistic data-driven choices, we implemented cross-validation procedures analogous to one used for our proposed method.

\bibliographystyle{apalike} 
\bibliography{references}